\setlist{nosep}
\newcommand{\Symbol}[1]{}
\newcommand{\Index}[1]{\index{#1}}
\title{Cycles of Well-Linked Sets I: an Elementary Bound for Directed Cycle Packing\thanks{The results in this manuscript were also presented in Milani's PhD thesis \cite{Milani2024}. A preliminary version of this paper was published at \emph{FOCS 2024}\cite{hkmm24cows}.}}
\date{}
\DeclareRobustCommand{\authorthing}{
	\begin{center}
	    Meike Hatzel\thanks{Partially supported by the Institute for Basic Science (IBS-R029-C1).}
            \smallskip\\[0em] Technical University Darmstadt, Germany\\[0em]
	    \href{mailto:research@meikehatzel.com}{research@meikehatzel.com}\\[0em]
	    \bigskip
	    Stephan Kreutzer\smallskip\\[0em]
        Technische Universität Berlin, Germany\\[0em]
        \href{mailto:stephan.kreutzer@tu-berlin.de}{stephan.kreutzer@tu-berlin.de}
	    \\[0em]
	    \bigskip
			Marcelo Garlet Milani\thanks{Supported by Japan Science and Technology Agency (JST) as part of Adopting Sustainable Partnerships for Innovative Research Ecosystem (ASPIRE), Grant Number JPMJAP2302.}\smallskip
			\\[0em]
 			National Institute of Informatics, Tokyo, Japan
			\\[0em]
			\href{mailto:research@mgarletmilani.com}{research@mgarletmilani.com}
	    \\[0em]
	    \bigskip
	    Irene Muzi\smallskip\\[0em]
            Universität Hamburg, Germany\\[0em]
            \href{mailto:irene.muzi@gmail.com}{irene.muzi@gmail.com}
          \end{center}}
\author{\authorthing}
\newenvironment{cenv}{\begin{list}{}{            \setlength{\labelwidth}{1.5em}      \setlength{\leftmargin}{\labelwidth}      \addtolength{\leftmargin}{\labelsep}      \setlength{\listparindent}{0em}      \setlength{\topsep}{10pt}      \setlength{\itemsep}{5pt}      \setlength{\parsep}{0pt}    }
  }{
  \end{list}
}
\newcounter{claimcounter}
\newcommand{\DrawPicture}[1]{}
\newcommand{\DeferTask}[1]{}
\renewcommand{\DeferTask}[1]{}
\renewcommand{\DrawPicture}[1]{}
\renewcommand*{\Task}[2][mediumpriority]{}
\renewcommand*{\TaskPerson}[3][]{}
\begin{document}
\setlength{\abovedisplayskip}{2pt}
\setlength{\belowdisplayskip}{2pt}
\setlength{\abovedisplayshortskip}{0pt}
\setlength{\belowdisplayshortskip}{0pt}
\maketitle

\begin{abstract}

In 1996, Reed, Robertson, Seymour and Thomas [Combinatorica 1996] proved Younger's Conjecture, which states that, for all directed graphs $D$, there exists a function $f$ such that, if $D$ does not contain $k$ disjoint cycles, then $D$ contains a feedback vertex set, i.e.~a subset of vertices whose deletion renders the graph acyclic, of size bounded by $f(k)$.
However, the function obtained by Reed, Robertson, Seymour and Thomas in their paper is enormous and, in fact, not even elementary.
We prove the first elementary upper bound for the function \(f\) above, showing it is upper-bounded by a power tower of height 8.

Our proof is inspired by the breakthrough result of Chekuri and Chuzhoy [J.~ACM 2016], who proved a polynomial bound for the Excluded Grid Theorem for undirected graphs.
We translate a key concept of their proof to directed graphs by introducing \emph{paths of well-linked sets (PWS)}, and show that any digraph of large directed \treewidth contains a large PWS, which in turn contains a large fence.

We believe that the theoretical tools developed in this work may find applications beyond the results above, in a similar way as the path-of-sets-system framework due to Chekuri and Chuzhoy [J.~ACM 2016] did for undirected graphs (see, for example, Hatzel, Komosa, Pilipczuk and Sorge [Discret.~Math.~Theor.~Comput.~Sci.~2022], Chekuri and Chuzhoy [SODA 2015] and Chuzhoy and Nimavat [arXiv 2019]).
Indeed, in a follow-up paper, we apply this framework to improve the bounds of the Directed Grid Theorem.

		\end{abstract}

\clearpage

\newpage

\tableofcontents

\newpage

\section{Introduction}

	The Erd\H{o}s-P\'{o}sa Theorem~\cite{posa1965independent} is one of the first results in structural graph theory considering the relationship between \say{packing} multiple copies of some structure in a graph and the size of a set \say{covering} all such copies.
	More precisely, they proved that every undirected graph contains $k$ pairwise vertex-disjoint cycles or it contains a feedback vertex set, i.e.~a subset of vertices whose deletion renders the graph acyclic, of size bounded by $\Oh(k\log{k})$.
	They also prove that the function they obtained is tight up to constant factors.

	The \emph{packing-covering duality} above inspired a long line of research on similar results for several combinatorial objects.
	We refer the reader to \cite{RaymondT17} for a survey on the topic and to~\cite{AhnGHK25,MoussetNSW17,BatenburgHJR19} for more recent results.
	For a (to the best of our knowledge) complete and up-to-date list, see \cite{raymondEPlist}.

	The validity of an analogous property for directed cycles was conjectured by Younger~\cite{younger1973} in 1973 and proven by Reed, Robertson, Seymour and Thomas~\cite{reed1996packing} in 1996.
    However, the function obtained by~\cite{reed1996packing} is very far from the bound found by Erd\H{o}s and P\'{o}sa for undirected cycles and is, in fact, not even elementary. 
	
	The main result of this paper is a new proof of the result above due to~\cite{reed1996packing}, but providing instead an elementary bound for the dependency between the maximum number of disjoint directed cycles and the size of a minimum feedback vertex set.
	More precisely, we prove the following, where \(\bound{statement:elementary-younger}{f}{k}\) is a \emph{power tower} of height 8 which we define later.

	\begin{restatable}{theorem}{younger}
		\label{statement:elementary-younger}
										Let \(D\) be a digraph.
		Then \(D\) has \(k\) pairwise vertex-disjoint cycles or there is some \(X \subseteq \V{D}\) of size at most \(\bound{statement:elementary-younger}{f}{k}\) such that \(D - X\) is acyclic.
	\end{restatable}

	The proof in \cite{posa1965independent} relies on a theorem due to \cite[Theorem 4]{ep1962} which states that \textsl{there is a constant \(c_3\) such that every graph on \(n\) vertices and \(n + \ell\) edges contains \(c_3 \cdot \ell / \log \ell\) pairwise edge-disjoint cycles.}
	The proof of this statement, in turn, relies on another result from the same paper stating that \textsl{every graph of minimum degree 3 contains a cycle of at most \(\Oh(\log n)\) edges.}
	These statements are, however, not true for digraphs, as one can take an acyclic orientation of a clique as a counter-example to the former and the \emph{cylindrical grid} of order \(k\) as a counter-example for the latter.
	Hence, already showing the existence of some (not necessarily elementary) function \(\bound{statement:elementary-younger}{f}{}\) requires different techniques from those used in the undirected setting.

	To overcome the issues above, \cite{reed1996packing} used a directed grid structure called a \emph{fence}\footnote{For subtle technical reasons, a fence \((\mathcal{P}, \mathcal{Q})\) in our notation corresponds to a fence \((\mathcal{Q}, \mathcal{P})\) in the notation of \cite{reed1996packing,kawarabayashi2022directed}.} to reroute paths in a digraph in order to close cycles.
	More precisely, they prove the following, where
	\(\bound{statement:fence-plus-back-linkage-implies-cycles}{q}{k}\) and
	\(\bound{statement:fence-plus-back-linkage-implies-cycles}{r}{k}\) are polynomials defined later.

	\begin{restatable}[{\cite[Statement (3.1)]{reed1996packing}}]{lemma}{fenceToCycles}
		\label{statement:fence-plus-back-linkage-implies-cycles}
		Let \(k, p, q, r \geq 1\) be integers.
		Let \((\mathcal{P}, \mathcal{Q})\) be a \((p, q)\)-fence in a digraph \(D\) where \(q \geq \bound{statement:fence-plus-back-linkage-implies-cycles}{q}{k}\) and \(p \geq r\), and let \(\mathcal{R}\) be an \(\End{\mathcal{P}}\)-\(\Start{\mathcal{P}}\) linkage of order \(r \geq \bound{statement:fence-plus-back-linkage-implies-cycles}{r}{k}\).
		Then, \(D\) contains \(k\) pairwise vertex-disjoint cycles.
	\end{restatable}

	Unfortunately, their method of obtaining a fence results in a non-elementary bound.
	Additionally, they used a recursive argument.
	As any super-polynomial step in such a recursion yields a non-elementary bound in the end, and as not all of our bounds are polynomial, we cannot use recursive arguments in the same way as they did.
    Since we cannot follow the structure of their proof, we instead draw inspiration from~\cite{chekuri2016polynomial,amiri2016erdos, amiri2017,kawarabayashi2015directed}.

	The starting point of our proof is the following result due to \cite{amiri2016erdos, amiri2017}.
	
	\begin{lemma}[{\cite[Lemma 4.2]{amiri2016erdos}, \cite[Lemma 4.3.1]{amiri2017}}]
		\label{lem:akkw2}
		Let $D$ be a digraph with $\dtw{D} \leq w$. For each strongly connected digraph $H$, the digraph $D$ either has $k$ disjoint copies of $H$ as a topological minor, or contains a set $T$ of at most $k \cdot (w+1)$ vertices such that $H$ is not a topological minor of $D - T$. 
	\end{lemma}

	By setting \(H = \Ck{2}\) in the statement above, we obtain that digraphs of small directed \treewidth either contain many pairwise vertex-disjoint cycles or admit a feedback vertex set of small size.
	Hence, for our results, it suffices to show that digraphs of sufficiently large directed \treewidth contain many pairwise vertex-disjoint cycles.

	To this end, we follow the structure of the proof by~\cite{kawarabayashi2015directed}, which was split into roughly three parts:
	First, they obtain a grid-like structure called a \emph{web}.
	Then, they construct a fence from this web where the \say{end} of the fence is well-linked back to its \say{beginning}.
	Finally, they use such a \emph{back-linkage} to close the cycles of the cylindrical grid.
	We note that~\cite{reed1996packing} also obtained their fence through a web, and, while the bounds of~\cite{kawarabayashi2015directed} are also non-elementary for all three parts, the overall structure of the latter proof is not recursive, making it more suitable for improvements towards elementary bounds.
	Hence, our first technical contribution is obtaining a web in a given digraph whose directed \treewidth is large enough but still upper-bounded by an elementary function of the size of the web (\cref{section:constructing-web}).

	To obtain a fence from the web, we draw inspiration from the \emph{path-of-sets system} framework due to Chekuri and Chuzhoy~\cite{chekuri2016polynomial}, which played an important role in their proof of a polynomial bound for the (undirected) Grid Theorem.
	In rough terms, a path-of-sets system describes a sequence of highly connected sets organised in a path-like structure, a concept which can be adapted to the directed setting.
	In order to handle all the cases that appear in the directed setting, we need to consider two types of highly connected sets, namely \emph{well-linked} and \emph{order-linked} sets.
    This, in turn, leads us to our definitions of \emph{paths of well-linked sets}, \emph{paths of order-linked sets} and \emph{cycles of well-linked sets}.
	These three concepts naturally capture the connectivity properties provided by \emph{fences}, \emph{acyclic grids} and \emph{cylindrical grids}, respectively.

	As undirected connectivity is considerably different from its directed counterpart, it is unsurprising that most of the proof techniques of~\cite{chekuri2016polynomial} cannot be easily adapted to the directed setting.
	Thus, in order to obtain the connectivity properties required above, we develop a framework based on the concept of \emph{temporal digraphs} (see, e.g.~\cite{CasteigtsHMZ20,Molter20} for an overview of results on temporal graphs).
	Informally, a temporal digraph is a sequence of digraphs (on potentially different vertex or arc sets) where each digraph represents the state of the temporal digraph at a given point in time.
	Walks in temporal digraphs need to respect the time at which each arc is available.
	In our setting, temporal digraphs arise naturally when we consider disjoint paths intersecting a sequence of disjoint subgraphs in the same order.

	One of our main contributions in this context is introducing a novel concept called \(H\)\emph{-routings} for digraphs and temporal digraphs, which is a weaker relation than strong immersions or butterfly minors for digraphs.
	That is, if a digraph \(D\) contains \(H\) as a strong immersion or as a butterfly minor, then it also contains an \(H\)-routing.
	A central part of our proof consists in establishing a relation between connectivity in digraphs and the existence of certain \(H\)-routings in temporal digraphs.
	Using the framework above, we obtain one of our main results (the meaning of \(A(S_0)\) and \(B(S_\ell)\) is given later in~\cref{def:path-of-well-linked-sets}).
	
	\begin{customthm}{9.11}{thm:high_dtw_to_POSS_plus_back-linkage}
				Every digraph $D$ with $\dtw{D} \geq \bound{short:thm:high_dtw_to_POSS_plus_back-linkage}{t}{w, \ell} \in \PowerTower{7}{\Polynomial{25}{w, \ell}}$ contains a path of well-linked sets $(\mathcal{S} = (S_0,$ $S_1,$ $\dots,$ $S_{\ell}),\mathscr{P})$ of width $w$ and length $\ell$ such that $B(S_\ell)$ is well-linked to $A(S_0)$ in $D$.
	\end{customthm}
	
	Since we also show that a large path of well-linked sets contains a large fence, by using \cref{statement:fence-plus-back-linkage-implies-cycles,thm:high_dtw_to_POSS_plus_back-linkage}, we can solve the large directed \treewidth case of~\cref{statement:elementary-younger}, completing the proof of our main result.

	\Cref{short:thm:high_dtw_to_POSS_plus_back-linkage} also plays an important role in obtaining an elementary bound for the Directed Grid Theorem.
	Intuitively, a digraph consisting of a fence \(F\) with an internally disjoint linkage \(\mathcal{L}\) from the \say{end} of \(F\) back to its beginning is very close to a cylindrical grid.
	In \cite{COSSII}, we prove that one can indeed find such a disjoint linkage \(\mathcal{L}\) starting from a path of well-linked sets as obtained in \cref{short:thm:high_dtw_to_POSS_plus_back-linkage}.

	Our modular approach facilitates the transfer of the intermediate results in our proof to other settings.
	Well-linked sets play an important role in several results in the theory of digraphs (for example, in~\cite{reed1996packing,johnson2001directed,edwards2017half,kawarabayashi2015directed}), and our framework provides additional tools for obtaining such sets.

	From an algorithmic perspective, our intermediate concepts allow us to divide the problem of finding an acyclic grid in a digraph into subproblems such as finding long walks in temporal digraphs, constructing \(H\)-routings and obtaining well-linked sets.
	These subproblems can be studied independently of each other, simplifying the process of identifying bottlenecks and computational obstacles for the original problem.
 		
	The paper is organised as follows.
	\Cref{sec:preliminaries,sec:grids} contain preliminary definitions used throughout the paper.
	In~\cref{sec:proof overview}, we provide an overview of the proof.
	We construct a web in a digraph of large directed \treewidth in \cref{section:constructing-web}, improving the corresponding step of the proof of~\cite{kawarabayashi2015directed} from a non-elementary to an elementary bound.
	Our framework on temporal digraphs is introduced in \cref{sec:temporal}, where we also obtain the \(H\)-routings from which we construct our order-linked and well-linked sets.
	In \cref{sec:order-linked,sec:well-linked}, we introduce the concepts of paths of order-linked sets and paths of well-linked sets, respectively, and show how to obtain the corresponding grid type from each of them.
	In~\cref{sec:constructing-pows}, we apply the framework developed in the sections above in order to construct a path of well-linked sets, which we then use to obtain the disjoint cycles, completing the proof of our main result, \cref{statement:elementary-younger}.
	Finally, we present our final remarks and open questions in \cref{sec:conclusion}.

\section{Preliminaries}
\label{sec:preliminaries}

In this section, we establish our notation and recall standard concepts and results from the literature used throughout the paper.

\paragraph{Sequences, sets and functions.}
Given sequences $S_1 \coloneqq (x_1, x_2, \dots, x_{j})$ and $S_2 \coloneqq (y_1, y_2, \dots$, $y_{k})$, we write $S_1 \cdot S_2$ for the sequence $S_3 \coloneqq (x_1, x_2, \dots, x_{j}, y_1, y_2, \dots, y_{k})$. 
We say that $S_1 \cdot S_2$ is a \emph{decomposition} of $S_3$.
The following is a well-known theorem about sequences of numbers due to Erd\H{o}s and Szekeres.
\begin{theorem}[\cite{erdosszekeres1935}]
	\label{thm:erdos_szekeres}
	Let $r,s \in\N$.
	Every sequence of distinct numbers of length at least $\Brace{r-1}\Brace{s-1}+1$ contains a monotonically increasing subsequence of length $r$ or a monotonically decreasing subsequence of length $s$.
\end{theorem}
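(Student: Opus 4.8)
The plan is to use the classical pigeonhole argument. Write the sequence as $(x_1, \dots, x_n)$ with $n = \Brace{r-1}\Brace{s-1}+1$, and to each index $i$ assign a label $\Brace{a_i, b_i} \in \N \times \N$, where $a_i$ is the length of a longest monotonically increasing subsequence of $(x_1, \dots, x_n)$ that ends at $x_i$, and $b_i$ is the length of a longest monotonically decreasing subsequence ending at $x_i$. Both quantities are well defined and at least $1$.

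The key step is to prove that the labelling $i \mapsto \Brace{a_i, b_i}$ is injective. Given indices $i < j$, the hypothesis that the numbers are distinct forces either $x_i < x_j$ or $x_i > x_j$. In the first case, appending $x_j$ to a longest increasing subsequence ending at $x_i$ produces an increasing subsequence ending at $x_j$ of length $a_i + 1$, hence $a_i < a_j$; in the second case the symmetric argument yields $b_i < b_j$. In either case the labels of $i$ and $j$ differ.

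To finish, I would argue by contradiction. Suppose the sequence contains no increasing subsequence of length $r$ and no decreasing subsequence of length $s$. Then $1 \le a_i \le r-1$ and $1 \le b_i \le s-1$ for every $i$, so all $n$ labels lie in a set of cardinality $\Brace{r-1}\Brace{s-1}$. By injectivity $n \le \Brace{r-1}\Brace{s-1}$, contradicting $n = \Brace{r-1}\Brace{s-1}+1$. This proves the theorem.

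I do not expect any genuine obstacle: the only point needing care is the injectivity claim, and in particular the explicit use of the distinctness hypothesis to exclude the case $x_i = x_j$, which would otherwise break the dichotomy. An alternative would be to deduce the statement from Dilworth's theorem applied to the partial order defined by $x_i \preceq x_j$ iff $i \le j$ and $x_i \le x_j$, but the direct pigeonhole argument above is shorter and entirely self-contained, so that is the route I would take.
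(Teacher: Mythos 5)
Your proof is correct: the pigeonhole argument via the pair of labels $(a_i, b_i)$, with the injectivity dichotomy resting on distinctness of the $x_i$, is the standard and entirely sound proof of the Erdős--Szekeres theorem. Note, however, that the paper does not prove this statement at all — it is cited directly from the literature (\cite{erdosszekeres1935}) as a known classical result, so there is no in-paper proof to compare against; your self-contained argument simply fills in what the authors took for granted.
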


An \emph{ordered set} is a sequence $A = \Brace{a_1, \ldots, a_{k}}$ such that all elements of $A$ are distinct.
The order $\leq_A : A \times A$ induced by $A$ is defined by $a_i \leq_A a_j$ for all $1 \leq i \leq j \leq k$.
An \emph{ordered subset} $A' \subseteq A$ then is just a subsequence of $A$, that is, the order of the elements is preserved.
If we obtain an ordered set $A'$ from a set $A$ by fixing an order, we call $A'$ an \emph{ordering} of $A$.

\paragraph{Power towers and polynomials}
Let \(d\) be an integer and \(V = \{x_1, \ldots, x_k\}\) a set of variables.
A \emph{polynomial} of degree \(d\) over \(V\) is a function \(p(x_{1}, x_{2}, \ldots, x_{k})\) of the form \(p(x_{1}, x_{2}, \ldots, x_{k}) = \sum_{i=1}^n (c_i \prod_{j=1}^k x_j^{e_{j,i}})\), where for each \(1 \leq i \leq n\) and each \(1 \leq j \leq k\) we have that \(c_i \in \Reals\), \(e_{j,i} \in \Naturals\) and \(\sum_{j=1}^k e_{j,i} \leq d\).
We write \(\Polynomial{d}{x_{1}, x_{2}, \ldots, x_{k}}\) for the set of all functions \(f\) for which there is a polynomial \(p\) of degree \(d\) over the variable set \(x_{1}, x_{2}, \ldots, x_{k}\) such that \(f(x_{1}, x_{2}, \ldots, x_{k}) \in \Oh(p(x_{1}, x_{2}, \ldots, x_{k}))\).

We define \emph{power towers} as follows.
Given an integer \(h\) and a set of functions \(F\) over a set of variables \(V\), we define a set of functions \(\PowerTower{h}{F}\) recursively as follows.
We set \(\PowerTower{0}{F} = F\) and define \(\PowerTower{h}{F}\) as  \(\Set{f : \Reals^{\Abs{V}} \to \Reals \mid f \in \Oh(2^{g(V)}), g \in \PowerTower{h-1}{F}}\) for \(h > 1\).
If \(F = \Polynomial{d}{V}\), we say that a function \(f \in \PowerTower{h}{F}\) is a \emph{power tower} of height \(h\).

\paragraph{Graphs and digraphs.} 
We denote by $E(G)$ the edge/arc set of a graph $G$, directed or not, and by $V(G)$ its vertex set. We often use $G$ for undirected and $D$ for directed graphs (often referred to as \emph{digraphs}). 

Let $D$ be a digraph. 
Given a set of vertices $X \subseteq \V{D}$, we write $D - X$ for the digraph $(Y \coloneqq \V{D} \setminus X$, $\A{D} \cap (Y \times Y))$. 
Similarly, given a set of arcs $F \subseteq \A{D}$, we write $D - F$ for the digraph $(\V{D}, \A{D} \setminus F)$.

If $D$ is a digraph and $v \in V(D)$, then  $\InN{D}{v} \coloneqq \{ u \in V \mid (u,v) \in \A{D}\}$ is the set of \emph{in-neighbours} and $\OutN{D}{v} \coloneqq \{ u \in V \mid (v,u) \in \A{D}\}$ the set of \emph{out-neighbours} of $v$.
By $\Indeg{D}{v} \coloneqq \Abs{\InN{}{v}}$ we denote the \emph{in-degree} of $v$ and by $\Outdeg{D}{v} \coloneqq \Abs{\OutN{}{v}}$  its out-degree.
When working with a set or another structure $X$ containing digraphs, we write $\ToDigraph{X}$ to mean the digraph obtained by taking the union of all digraphs in $X$.

\paragraph{Paths and walks.}
A \emph{walk} of length $\ell$ in a digraph $D$ is a sequence of
vertices \(W \coloneqq \Brace{v_0, v_1, \dots, v_{\ell}}\) such that
$\Brace{v_i, v_{i+1}} \in \A{D}$, for all $0 \leq i < \ell$.
We write $\Start{W}$ for $v_0$ and $\End{W}$ for $v_\ell$ and say that
$W$ is a $v_0$-$v_\ell$-walk.

A walk $W \coloneqq \Brace{v_0, v_1, \dots, v_{\ell}}$ is called a \emph{path} if no vertex appears twice in it and it is called a \emph{cycle} if $v_0 = v_\ell$ and $v_i \neq v_j$ for all $0 \leq i < j < \ell$. 

We often identify a walk $W$ in $D$ with the corresponding subgraph and write $V(W)$ and $E(W)$ for the set of vertices and arcs appearing on it.

Given two walks $W_1 \coloneqq (x_1, x_2, \dots, x_{j})$ and $W_2 \coloneqq (y_1, y_2, \dots, y_{k})$ with $\End{W_1} = \Start{W_2}$, we make use of the concatenation notation for sequences and write $W_1 \cdot W_2$ for the walk $W_3 \coloneqq (x_1, x_2, \dots, x_{j}, y_2, y_3, \dots, y_{k})$.
We say that $W_1 \cdot W_2$ is a decomposition of $W_3$.
If $W_1$ or $W_2$ is an empty sequence, then the result of $W_1 \cdot W_2$ is the other walk (or the empty sequence if both walks are empty).

Let $P$ be a path in a digraph $D$ and let $X$ be a set of vertices with $\V{P} \cap X \neq \emptyset$. 
We consider the vertices $p_1,\dots,p_m$ of $P$ ordered by their occurrence on $P$.
Let $i$ be the highest index such that $p_i \in X$ and let $j$ be the smallest index such that $p_j \in X$.
We call $p_i$ the \emph{last vertex of $P$ in $X$} or, depending on the perspective, the last element of $X$ on $P$, and $p_j$ the \emph{first vertex of $P$ in $X$} or the first vertex of $X$ on $P$.

\paragraph{Specific digraphs.}
We denote the digraph of a path on $k$ vertices by $\Pk{k}$.
For the \emph{bidirected path on $k$} vertices, we write $\biPk{k} \coloneqq (\Set{u_1, u_2, \dots, u_{k}}, \{(u_i, u_j) \mid 1 \leq i, j \leq k \text{ and } \Abs{i - j} = 1\})$.
The \emph{cycle on $k$ vertices} is given by $\Ck{k} \coloneqq (\{u_0, u_1, \dots, u_{k - 1}\}, \{(u_{i}, u_{i+1 \mod k}) \mid 0 \leq i < k\})$.
Finally, we write $\biK{k} \coloneqq (\{u_1, u_2, \dots, u_{k}\}, \{(u_i, u_j) \mid 1 \leq i, j \leq k \text{ and } i \neq j\})$ for the \emph{complete digraph on $k$ vertices}.

\paragraph{Connectivity.} A digraph $D$ is said to be \emph{strongly connected} if for every $u,v \in V$ there is a $u$-$v$-path \textbf{and} a $v$-$u$-path in $D$.
We say $D$ is \emph{unilateral} if for every $u,v \in V$ there is a $u$-$v$-path \textbf{or} a $v$-$u$-path in $D$.
Finally, $D$ is \emph{weakly-connected} if the underlying undirected graph of $D$ is connected.

A \emph{feedback vertex set} of $D$ is a set $X \subseteq V(D)$ such that $D - X$ is acyclic.

\paragraph{Linkages and separators.}
Let $A, B \subseteq V(D)$.
An $A$-$B$-walk is a walk $W$ that starts in $A$ and ends in $B$.
A set $X \subseteq \V{D}$ is an \emph{$A$-$B$-separator} if there are no $A$-$B$-paths in $D - X$. 

A \emph{linkage} in $D$ is a set $\LLL$ of pairwise vertex-disjoint paths.
The \emph{order} $|\LLL|$ of $\LLL$ is the number of paths it contains.

An \emph{$A$-$B$-linkage}  of order $k$ is a linkage $\LLL \coloneqq \{L_1, L_2, \dots, L_{k}\}$ such that $\Start{L_i} \in A$ and $\End{L_i} \in B$ for all $1 \leq i \leq k$.
We write $\Start{\mathcal{L}}$ for the set $\{\Start{L_i} \mid L_i \in \mathcal{L}\}$ and $\End{\mathcal{L}}$ for the set $\{\End{L_i} \mid L_i \in \mathcal{L}\}$.
We also extend the notation for path concatenation to linkages.
Given linkages $\mathcal{P} = \Set{P_1, P_2, \dots, P_{k}}$ and $\mathcal{Q} = \Set{Q_1, Q_2, \dots, Q_{k}}$ such that $\End{\mathcal{P}} = \Start{\mathcal{Q}}$, we write $\mathcal{P} \cdot \mathcal{Q}$ for the linkage $\Set{P_a \cdot Q_b \mid P_a \in \mathcal{P}, Q_b \in \mathcal{Q} \text{ and } \End{P_a} = \Start{Q_b}}$. 

It is often convenient to use a linkage $\mathcal{L}$ as a function $\mathcal{L}: \Start{\mathcal{L}} \to \End{\mathcal{L}}$. 
The expression $\Fkt{\mathcal{L}}{a} = b$ then means that $\mathcal{L}$ contains a path starting in $a$ and ending in $b$. 

We frequently use the following classical result by Menger~\cite{menger}.
\begin{theorem}[Menger's Theorem~\cite{menger}]
	\label{thm:menger}
	Let $D$ be a digraph, $A,B \subseteq \V{D}$ with $\Abs{A} = \Abs{B}$.
	There is an $A$-$B$-linkage of size $k$ in $D$ if and only if every $A$-$B$-separator has size at~least~$k$.
\end{theorem}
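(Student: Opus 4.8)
The plan is to prove the two implications separately. The forward direction is a one-line counting argument, and essentially all the content sits in the converse, for which I would run the classical induction on the number of arcs (one could instead reduce to max-flow/min-cut via vertex splitting, but the self-contained induction is cleaner here). First, observe that ``an $A$-$B$-linkage of size $k$'' can be replaced throughout by ``of size at least $k$'', since discarding paths from a larger linkage leaves one of order exactly $k$; the hypothesis $\Abs{A} = \Abs{B}$ plays no role. For ($\Rightarrow$): if $\Set{L_1, \dots, L_k}$ is an $A$-$B$-linkage and $X$ is an $A$-$B$ separator, then each $L_i$ is an $A$-$B$-path, hence cannot survive in $D - X$, so $\V{L_i} \cap X \neq \emptyset$; pairwise disjointness of the $L_i$ gives $\Abs{X} \geq k$.

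For ($\Leftarrow$) I would prove, by induction on $\Abs{\A{D}}$, that if every $A$-$B$ separator of $D$ has size at least $k$ then $D$ has an $A$-$B$-linkage of order at least $k$ (the cases $k \leq 0$ being trivial). If $\A{D} = \emptyset$, the $A$-$B$-paths are exactly the one-vertex paths on $A \cap B$, and $A \cap B$ is an $A$-$B$ separator, so $\Abs{A \cap B} \geq k$ and those paths form the linkage. Otherwise fix an arc $e = (u,v)$. If every $A$-$B$ separator of $D - e$ still has size at least $k$, apply induction to $D - e$. So assume $D - e$ has an $A$-$B$ separator $S$ with $\Abs{S} \leq k - 1$. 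Then $S$ is not an $A$-$B$ separator of $D$, so $D - S$ contains an $A$-$B$-path, and since $S$ separates $A$ from $B$ in $D - e$, that path must traverse $e$; hence $u, v \notin S$, the initial part of the path shows $u$ is reachable from $A$ in $(D - e) - S$, and the final part shows $v$ reaches $B$ in $(D-e) - S$. Moreover $S \cup \Set{u}$ is an $A$-$B$ separator of $D$ --- any $A$-$B$-path of $D$ avoiding $u$ avoids $e$, hence lies in $D - e$ and meets $S$ --- so $\Abs{S \cup \Set{u}} \geq k$, whence $\Abs{S} = k - 1$.

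The heart of the argument is then to split $D - e$ along $S$ and recurse on both sides with the endpoints of $e$ attached appropriately. Let $U_A$ be the set of vertices reachable from $A$ in $(D - e) - S$ and $U_B$ the set of vertices reaching $B$ in $(D - e) - S$; since $S$ separates $A$ from $B$ in $D - e$, the sets $U_A, U_B, S$ are pairwise disjoint, $A \subseteq U_A \cup S$, $B \subseteq U_B \cup S$, $u \in U_A$ and $v \in U_B$. Put $D_1 \coloneqq D[U_A \cup S] - e$ with terminal sets $A$ and $S \cup \Set{u}$, and $D_2 \coloneqq D[U_B \cup S] - e$ with terminal sets $S \cup \Set{v}$ and $B$; both omit $e$, so each has strictly fewer arcs than $D$. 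I would then check that every $A$-$(S \cup \Set{u})$ separator $T$ of $D_1$ has size at least $k$: otherwise $T$ is not an $A$-$B$ separator of $D$, so some $A$-$B$-path $P$ of $D$ avoids $T$, and the segment of $P$ from its start up to the first point at which $P$ meets $S$ or crosses $e$ stays within $U_A \cup S$, uses no $e$, and terminates in $S \cup \Set{u}$, yielding an $A$-$(S \cup \Set{u})$-path in $D_1 - T$ --- a contradiction; symmetrically for $D_2$. By the inductive hypothesis, $D_1$ contains an $A$-$(S \cup \Set{u})$-linkage of order $k$ and $D_2$ an $(S \cup \Set{v})$-$B$-linkage of order $k$, and since both middle terminal sets have size exactly $k$, each such linkage uses every vertex of its middle set exactly once (and, by disjointness of the linkage, meets that set only at endpoints). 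Gluing, for each $s \in S$, the $A$-$s$-path of the first linkage to the $s$-$B$-path of the second, and separately gluing the $A$-$u$-path, the arc $e$, and the $v$-$B$-path, produces $k$ pairwise vertex-disjoint $A$-$B$-paths of $D$ (disjointness holds because the first linkage lies in $U_A \cup S$, the second in $U_B \cup S$, and they overlap only at the chosen vertices of $S$).

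The main obstacle is the bookkeeping in this split step: ensuring $U_A, U_B, S$ really partition the relevant vertices, transferring the ``every separator has size at least $k$'' property to $D_1$ and $D_2$, and verifying that the reassembled paths are genuinely vertex-disjoint. Everything else --- the edge-deletion dichotomy, the $\A{D} = \emptyset$ base case, and the reduction to ``order at least $k$'' --- is routine.
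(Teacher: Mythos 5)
Your proof is correct, but note that the paper itself does not prove this statement at all: it is quoted as a classical result of Menger and used as a black box throughout, so there is no in-paper argument to compare yours against. What you give is the standard self-contained proof by induction on the number of arcs (the G\"oring-style argument), and all the delicate points are handled properly: you correctly observe that the hypothesis $\Abs{A}=\Abs{B}$ is irrelevant and strengthen the induction to arbitrary terminal sets (which is in fact necessary, since the middle terminal sets $S\cup\Set{u}$ and $S\cup\Set{v}$ in the recursion need not have the size of $A$ or $B$); the transfer of the ``every separator has size at least $k$'' property to $D_1$ and $D_2$ via the prefix/suffix of an $A$-$B$-path avoiding $T$ is sound, because that prefix avoids $e$ and $S$ until it first hits $S\cup\Set{u}$ and hence lives in $D[U_A\cup S]$; and the gluing step is justified, since $\Abs{S\cup\Set{u}}=\Abs{S\cup\Set{v}}=k$ forces each of the two sub-linkages to meet its middle terminal set exactly in the $k$ endpoints, so the concatenated paths are genuinely vertex-disjoint ($U_A$, $U_B$ and $S$ being pairwise disjoint). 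The only thing your write-up buys beyond the paper is self-containedness; the paper's choice to cite Menger rather than reprove it is of course also legitimate.
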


Throughout the paper, we frequently work with a special kind of linkage that we define next.

\begin{definition}[minimal linkage]
	\label{def:H-minimal}
	Let $D$ be a digraph, $H\subseteq D$ be a subgraph and $\LLL$ be a linkage of order $k$.
    $\LLL$ is \emph{minimal with respect to $H$}, or \emph{$H$-minimal}, if for all arcs $e \in \bigcup_{L\in \LLL} \E{L} \setminus \E{H}$ there is no $\Start{\LLL}$-$\End{\LLL}$-linkage of order $k$ in the graph $(\LLL \cup H) - e$.
\end{definition}

Given a linkage $\mathcal{L}$ in a digraph $D$ and a subgraph $H \subseteq D$, we can always obtain a linkage $\mathcal{L}'$ with same order and same endpoints as $\mathcal{L}$ which is $H$-minimal by iteratively removing arcs $e \in \A{\mathcal{L}} \setminus \A{H}$ for which a $\Start{\mathcal{L}}$-$\End{\mathcal{L}}$-linkage of order $\Abs{\mathcal{L}}$ avoiding $e$ exists.

Minimal linkages were used extensively in~\cite{kawarabayashi2015directed}.
The idea is that, when constructing paths of an $H$-minimal linkage $\LLL$, we always prefer to use arcs of $H$ over arcs not in $E(H)$.
This implies the following property, which we exploit frequently in our proofs.

\begin{definition}[weak minimality]
	\label{def:weak_minimality}
	A linkage $\mathcal{L}$ in a digraph $D$ is \emph{weakly $k$-minimal} with respect to a subgraph $H$ of $D$ if for every $L_1 \cdot e \cdot L_2 \in \mathcal{L}$ with $e \in \E{\mathcal{L}} \setminus \E{H}$ there is a $\V{L_1}$-$\V{L_2}$-separator of size at most $k-1$ in $\Brace{\mathcal{L} \cup H} - e$. 
\end{definition}

\begin{observation}
	\label{obs:H-minimal-implies-weakly-minimal}
	Let $H$ be a subgraph of a digraph $D$ and let $\mathcal{L}$ be a linkage which is $H$-minimal.
	Then $\mathcal{L}$ is weakly $\Abs{\mathcal{L}}$-minimal with respect to $H$.
\end{observation}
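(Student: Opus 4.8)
The plan is to prove the relevant instance of \cref{def:weak_minimality} by contradiction, applying \cref{thm:menger} twice. Write $k = \Abs{\mathcal{L}}$ and fix any $L \in \mathcal{L}$ together with a decomposition $L = P_1 \cdot e \cdot P_2$ where $e = (u,v) \in \E{\mathcal{L}} \setminus \E{H}$, so that $u = \End{P_1}$ and $v = \Start{P_2}$; put $D' \coloneqq \Brace{\mathcal{L} \cup H} - e$. We must exhibit a $\V{P_1}$-$\V{P_2}$-separator of size at most $k-1$ in $D'$. Assume, for contradiction, that every such separator has size at least $k$; in particular $\Abs{\V{P_1}}, \Abs{\V{P_2}} \geq k$, and \cref{thm:menger} yields a $\V{P_1}$-$\V{P_2}$-linkage $\mathcal{M}$ of order $k$ in $D'$.

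The main claim is that $D'$ then contains a $\Start{\mathcal{L}}$-$\End{\mathcal{L}}$-linkage of order $k$; since $e \in \E{\mathcal{L}} \setminus \E{H}$, this contradicts the $H$-minimality of $\mathcal{L}$ (\cref{def:H-minimal}) and completes the proof. By \cref{thm:menger} it suffices to show that every $\Start{\mathcal{L}}$-$\End{\mathcal{L}}$-separator $S$ in $D'$ has $\Abs{S} \geq k$. Suppose not, and let $\Abs{S} \leq k-1$. The $k-1$ paths in $\mathcal{L} \setminus \Set{L}$ are pairwise disjoint $\Start{\mathcal{L}}$-$\End{\mathcal{L}}$-paths present in $D'$ (none of them uses the arc $e$, since $e$ lies on $L$ and $\mathcal{L}$ is a linkage), so $S$ must meet each of them. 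This forces $\Abs{S} = k-1$ with exactly one vertex of $S$ on each path of $\mathcal{L} \setminus \Set{L}$, and hence $S \cap \V{L} = \emptyset$, because $\V{L}$ is disjoint from every other path of the linkage.

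The decisive step is now a reroute through $L$. Since $\Abs{\mathcal{M}} = k > \Abs{S}$, some $M \in \mathcal{M}$ satisfies $\V{M} \cap S = \emptyset$; say $M$ runs from $a \in \V{P_1}$ to $b \in \V{P_2}$. Following $P_1$ from $\Start{L}$ to $a$, then $M$ from $a$ to $b$, then $P_2$ from $b$ to $\End{L}$, gives a $\Start{L}$-$\End{L}$-walk in $D'$ that avoids $S$: the subpaths of $P_1$ and $P_2$ used lie on $L$, which is disjoint from $S$, and $M$ is disjoint from $S$ by choice. Hence $D' - S$ contains a $\Start{L}$-$\End{L}$-path, contradicting that $S$ separates $\Start{\mathcal{L}} \ni \Start{L}$ from $\End{\mathcal{L}} \ni \End{L}$. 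This proves the main claim, and since $L$ and its decomposition were arbitrary, $\mathcal{L}$ is weakly $k$-minimal with respect to $H$.

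I do not expect a genuine obstacle: the one point that needs care is the observation that a separator $S$ with $\Abs{S} \leq k-1$ is forced to lie entirely on $\mathcal{L} \setminus \Set{L}$ and therefore avoids $\V{L}$ completely — this is exactly what makes the reroute through $P_1$, $M$, $P_2$ legal. The degenerate cases where $P_1$ or $P_2$ is a single vertex, or $k = 1$, are absorbed by the same argument without change.
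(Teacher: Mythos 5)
Your proof is correct and follows essentially the same approach as the paper's: both apply Menger twice, observe that a small $\Start{\mathcal{L}}$-$\End{\mathcal{L}}$ separator in $(\mathcal{L}\cup H)-e$ must hit each of the $k-1$ paths of $\mathcal{L}\setminus\{L\}$ and thus avoid $L$ entirely, and then reroute $L$ through the $\V{P_1}$-$\V{P_2}$-linkage path that escapes the separator. The only difference is cosmetic: the paper bounds $\Abs{S}<k$ directly from $H$-minimality and derives a contradiction with $S$ being a separator, whereas you wrap the same argument in a nested contradiction proving the existence of a $\Start{\mathcal{L}}$-$\End{\mathcal{L}}$-linkage of order $k$, which then contradicts $H$-minimality.
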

\begin{proof}
	Assume towards a contradiction that there is some $L \in \mathcal{L}$ and some $e \in \E{L} \setminus \E{H}$ such that $L$ can be decomposed into $L_1 \cdot e \cdot L_2$ and there is no $\V{L_1}$-$\V{L_2}$-separator of size less than $\Abs{\mathcal{L}}$ in $\ToDigraph{\mathcal{L} \cup H} - e$.
	By~\cref{thm:menger}, there is a $\V{L_1}$-$\V{L_2}$-linkage $\mathcal{Q}$ of order $\Abs{\mathcal{L}}$ in $\ToDigraph{\mathcal{L} \cup H} - e$.
	
	Let $S$ be a minimum $\Start{\mathcal{L}}$-$\End{\mathcal{L}}$-separator in $\ToDigraph{\mathcal{L} \cup H} - e$.
	Because $\mathcal{L}$ is $H$-minimal, we have that $\Abs{S} < \Abs{\mathcal{L}}$.
	Hence, $S$ must hit every path in $\mathcal{L} \setminus \Set{L}$ and must be disjoint from $L$.
	
	Since $\Abs{\mathcal{Q}} = \Abs{\mathcal{L}}$, there is some $Q \in \mathcal{Q}$ which is not hit by $S$.
	Hence, there is a $\Start{L}$-$\End{L}$-path in $\ToDigraph{\mathcal{L} \cup H} - e - S$, a contradiction to the assumption that $S$ is a separator. 
	Thus, $\mathcal{L}$ is weakly $\Abs{\mathcal{L}}$-minimal with respect to $H$.
\end{proof}

We close this part by recalling the definition of well-linkedness, an important property of a central concept in our proof, the path of well-linked sets.

\begin{definition}
    \label{def:well-linked}
	Let $A, B$ be sets of vertices in a digraph $D$.
  We say that \emph{$A$ is well-linked to $B$ in $D$} if for every $A' \subseteq A$ and every $B' \subseteq B$ with $\Abs{A'} = \Abs{B'}$ there is an $A'$-$B'$-linkage of order $\Abs{A'}$ in $D$.
	If \(A\) is well-linked to \(A\), then we say that \(A\) is a \emph{well-linked set}.
\end{definition}

\paragraph{Minors.}
Given a digraph $D$ and an arc $(u,v) \in \A{D}$, we say that $(u,v)$ is \emph{butterfly contractible} if $\Outdeg{}{u} = 1$ or $\Indeg{}{v} = 1$. 
The \emph{butterfly contraction} of $(u,v)$ is the operation which consists in removing $u$ and $v$ from $D$, then adding a new vertex $uv$, together with the arcs $\Set{(w, uv) \mid w \in \Indeg{D}{u}}$ and $\Set{(uv, w) \mid w \in \Outdeg{D}{v}}$. 
Note that, by definition of digraphs, we \emph{remove} duplicated arcs and loops, that is, arcs of the form $(w,w)$.
If there is a subgraph $D'$ of $D$ such that we can construct another digraph $H$ from $D'$ using butterfly contractions, then we say that $H$ is a \emph{butterfly minor of $D$}, or that \emph{$D$ contains $H$ as a butterfly minor}.

\section{Directed \Treewidth and Grids}
\label{sec:grids}

As mentioned before,~\cref{lem:akkw2} allows us to focus on the case that the given digraph has large directed \treewidth.
For this reason, in this section, we recall directed \treewidth and the dual concepts of brambles and cylindrical grids.
We also define various other forms of \say{grids} in directed graphs used in the sequel.

Directed \treewidth was originally introduced by Reed~\cite{reed1999introducing} and by Johnson, Robertson, Seymour and Thomas~\cite{johnson2001directed} (see also~\cite{JohnsonRST2001}).
Adler~\cite{adler2007directed} showed that the original definition in~\cite{johnson2001directed} of directed \treewidth is not closed under butterfly minors.
We, therefore, use the variant of directed \treewidth defined in \cite{kawarabayashi2022directed}, which is closed under taking butterfly minors.

An \emph{arborescence} $T$ is an acyclic directed graph obtained from an undirected rooted tree by orienting all edges away from the root.
That is, $T$ has a vertex $r_0$, called the root of $T$, with the property that for every  $r \in V(T)$ there is a unique directed path from $r_0$ to $r$ in $T$.
For each $r \in V(T)$, we denote the subarborescence of $T$ induced by the set of vertices in $T$ reachable from $r$ by $T_r$.
In particular, $r$ is the root of $T_r$.

\begin{definition}[{\cite[Definition 3.1]{kawarabayashi2022directed}}]
	\label{def:directed-tree-width}
	A \emph{directed \treedecomposition} of a digraph $D$ is a triple $(T, \beta, \gamma)$, where $\beta: V(T) \to 2^{V(D)}$ and $\gamma: E(T) \to 2^{V(D)}$ are functions and $T$ is an arborescence such that
	\begin{enamerate}{T}{item:directed-tree-width:last}
		\item \label{item:directed-tree-width:beta}
		      $\Set{\beta(t) : t \in V(T)}$ is a partition of $V(D)$ into (possibly empty) sets and
		\item \label{item:directed-tree-width:guard}
		      for every $e = (s, t) \in E(T)$, there is no closed directed walk in $D - \gamma(e)$ containing a vertex in $A$ and a vertex in $B$, where $A = \bigcup\Set{\beta(t ): t \in V(T_t)}$ and $B = V(D) \setminus A$.
		\label{item:directed-tree-width:last}
	\end{enamerate}
	For $t \in V(T)$ we define $\Gamma(t) \coloneqq \beta(t) \cup \bigcup\Set{\gamma(e) : e \sim t}$, where $e \sim t$ if $e$ is incident to $t$, and we define $\beta(T_t) \coloneqq \bigcup\Set{\beta(t) : t \in V(T_t)}$.
	The \emph{width} of $(T, \beta, \gamma)$ is the smallest integer $w$ such that $\Abs{\Gamma(t)} \leq w + 1$ for all $t \in V(T)$.
	The \emph{directed \treewidth}\Index{directed \treewidth} of $D$ is the smallest integer $w$ such that $D$ has a directed \treedecomposition of width $w$.
	The sets $\beta(t)$ are called the bags and the sets $\gamma(e)$ are called the guards of the directed \treedecomposition.
\end{definition}

A natural dual to directed \treedecompositions are objects called \emph{brambles}.
The concept of brambles was also introduced by~\cite{johnson2001directed}.
For the same reason as before, we use the variant of brambles defined in~\cite{kawarabayashi2015directed}.

\begin{definition}
	A \emph{bramble} in a digraph $D$ is a set $\mathcal{B}$ of strongly connected subgraphs $B \subseteq D$ such that $B \cap B' \neq \emptyset$ for all $B, B' \in \mathcal{B}$.
	
	A \emph{cover} of $\mathcal{B}$ is a set $X \subseteq \V{D}$ of vertices such that $V(B) \cap X \neq \emptyset$ for all $B \in \mathcal{B}$.
    Finally, the \emph{order} of a bramble $\mathcal{B}$ is the minimum size of a cover for $\mathcal{B}$.
    The bramble number $bn(D)$ of $D$ is the maximum order of a bramble in $D$.
\end{definition}

We also need the following relation between brambles and directed \treewidth.
It can be obtained from results due to~\cite{johnson2001directed} by converting brambles to havens and back, and the statement was proven formally by~\cite{kreutzer2014width}.

\begin{lemma}[{\cite[Corollary 6.4.24]{kreutzer2014width}}]
	\label{state:dtw-and-bramble-number}
		There are constants \(c, c'\) such that for all digraphs \(D\), \(\operatorname{bn}(D) \leq c \cdot \dtw{D} \leq c' \cdot \operatorname{bn}(D)\).
\end{lemma}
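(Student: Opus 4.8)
The plan is to prove the two inequalities separately; the statement is the directed analogue of the tree-width--bramble duality of Seymour and Thomas, and since we only need \emph{some} constants, no care need be taken to optimise them. For $\operatorname{bn}(D) \le \DTreewidth(D) + 1$ I would argue as in the undirected case, with the guards of a directed tree-decomposition playing the role of separators. Fix a directed tree-decomposition $(T, \beta, \gamma)$ of $D$ of width $w = \DTreewidth(D)$ and an arbitrary bramble $\mathcal{B}$; it suffices to produce a cover of $\mathcal{B}$ of size at most $w + 1$. The engine is property~\ref{item:directed-tree-width:guard}: since every $B \in \mathcal{B}$ is strongly connected, if $B$ avoids $\gamma(e)$ for an edge $e = (s,t)$ of $T$ then either $\V{B} \subseteq \beta(T_t)$ or $\V{B} \cap \beta(T_t) = \emptyset$, as otherwise $B$ would yield a forbidden closed walk in $D - \gamma(e)$. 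Orient each edge $e = (s,t)$ of $T$ \emph{towards $t$} if some $B \in \mathcal{B}$ with $B \cap \gamma(e) = \emptyset$ satisfies $\V{B} \subseteq \beta(T_t)$, and \emph{towards $s$} otherwise. Any orientation of a tree is acyclic and hence has a sink $t_0$, and one checks that $\Gamma(t_0)$ covers $\mathcal{B}$: if some $B \in \mathcal{B}$ were disjoint from $\Gamma(t_0)$, then the incoming orientation of the edge from $t_0$ to its parent (or the identity $\V{D} = \beta(T_{t_0})$ when $t_0$ is the root), together with property~\ref{item:directed-tree-width:guard}, would force $\V{B} \subseteq \beta(T_{t_0})$; since $B$ avoids $\beta(t_0)$, a second application of property~\ref{item:directed-tree-width:guard} to the edges from $t_0$ to its children would put $\V{B}$ entirely inside $\beta(T_{t'})$ for a single child $t'$ --- but the edge $(t_0, t')$ being oriented towards $t_0$ says exactly that no such $B$ exists. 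Hence $\operatorname{bn}(D) \le \Abs{\Gamma(t_0)} \le w + 1$.

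For the reverse inequality $\DTreewidth(D) \le c' \operatorname{bn}(D)$ I would use the contrapositive form --- if $D$ has no bramble of order $k$, then $\DTreewidth(D) < 3k$ --- which amounts to reproving the decomposition theorem of Johnson, Robertson, Seymour and Thomas~\cite{johnson2001directed} in the butterfly-closed variant of directed tree-width adopted here~\cite{kawarabayashi2022directed}. The construction is the usual recursion: maintain a ``piece'' of $D$ together with a bounded interface and, at each step, either declare the piece a leaf bag once it is small, or find a guard set of size less than $3k$ splitting the piece into sub-pieces, each attaching to an interface of bounded size, and recurse into each; if at some step no such guard set exists, the obstructing strongly connected regions can be glued into a bramble of order $k$, contradicting the hypothesis. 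A charging argument over the recursion tree bounds the width of the resulting decomposition by $3k - 2$. Equivalently, one may route this direction through havens, using that a bramble of order $k$ yields a haven of order $k$ and that the absence of a haven of order $k$ again yields directed tree-width below $3k$. Combining the two inequalities yields the statement, with the finitely many degenerate cases (such as $\DTreewidth(D) = 0$, where $D$ still carries a bramble of order $1$) and the additive constants absorbed into $c$ and $c'$.

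The step I expect to be the main obstacle is the second direction: turning the \emph{absence} of a large bramble into an explicit bounded-width directed tree-decomposition is the only genuinely constructive part of the argument, it is where the factor $3$ enters, and it is the reason the directed duality holds only up to constant factors rather than as the exact identity $\operatorname{tw}(G) = \operatorname{bn}(G) - 1$ known in the undirected setting. I would not attempt to optimise $c$ and $c'$, since the target bounds in this paper are power towers and are insensitive to them.
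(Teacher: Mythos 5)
The paper does not prove this lemma; it is invoked as a black-box citation of Corollary~6.4.24 of \cite{kreutzer2014width}, and the only consequence the paper draws from it is \cref{state:dtw-to-bramble} (high directed tree-width yields a large bramble), which is the direction $\DTreewidth(D) \le c' \operatorname{bn}(D)$. Your sketch is therefore not in competition with a proof in the paper but with the literature, and it is the standard one. Your argument for $\operatorname{bn}(D) \le \DTreewidth(D) + 1$ is essentially complete and correct: the dichotomy forced by property~\ref{item:directed-tree-width:guard} on strongly connected sets avoiding a guard, the well-definedness of the edge orientation (two bramble elements on opposite sides of $\beta(T_t)$ would be disjoint), and the sink argument all check out, including the case split between $t_0$ being the root or not. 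The reverse direction is only an outline, which you acknowledge; it is correct in spirit but it is not a proof — it compresses the genuinely hard content (the JRST-style recursive decomposition, or the haven route) into a paragraph of intent, and it is precisely the part that depends on which variant of directed tree-width is in play (you correctly flag that the paper uses the butterfly-closed variant of \cite{kawarabayashi2022directed} rather than the original JRST one, so the cited result must hold for that variant). One small point worth being precise about: the inequality $\operatorname{bn}(D) \le \DTreewidth(D) + 1$ is additive, while the lemma as stated is multiplicative and literally fails at $\DTreewidth(D) = 0$ (a nonempty acyclic digraph has $\operatorname{bn} = 1$); you notice this, but ``absorbed into $c$'' does not actually repair a strict multiplicative bound at zero — one either needs to restate the lemma with a $+1$, or restrict to $\DTreewidth(D) \ge 1$. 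Since the paper only ever uses the other direction, this does not affect anything downstream, but it is a real wrinkle in the statement as transcribed, not merely a degenerate case to wave off.
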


By combining the statement (1.1) of \cite{johnson2001directed} and Lemma 6.4.20 of \cite{kreutzer2014width}, we obtain the following.

\begin{corollary}[{\cite{johnson2001directed,kreutzer2014width}}]
	\label{state:dtw-to-bramble}
	Let \(D\) be a digraph.
	If \(\dtw{D} \geq 2k\), then \(D\) contains a bramble of order \(k\).
\end{corollary}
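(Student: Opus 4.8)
The plan is to obtain the bramble by chaining the two classical structural results named in the statement, treated as black boxes. First I would invoke statement (1.1) of \cite{johnson2001directed}, which is a min--max theorem relating directed \treewidth to \emph{havens}: a digraph whose directed \treewidth is large must admit a haven of correspondingly large order, the two parameters agreeing up to a fixed multiplicative constant and a small additive term. Applied to our digraph $D$, whose directed \treewidth is at least $2k$, this produces a haven whose order is a constant fraction of $2k$; the threshold $2k$ (rather than $k$) is precisely what buys back that constant factor, together with the mismatch between the original definition of directed \treewidth and the butterfly-minor-closed variant we use here.

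Second, I would feed that haven into Lemma~6.4.20 of \cite{kreutzer2014width}, which converts a haven of order $m$ in a digraph into a bramble of order $\Omega(m)$. Concretely, the strongly connected pieces of the bramble are assembled from the sets the haven assigns to the vertex sets arising in a minimum cover, and the consistency (``touching'') conditions of the haven guarantee both that these pieces pairwise intersect and that no small vertex set meets all of them, which is exactly what bounds the order of the resulting bramble from below. Composing the two steps and carrying the constants through shows that starting from $\DTreewidth(D) \geq 2k$ one arrives at a bramble of order at least $k$, as claimed. Qualitatively this conclusion is already subsumed by \cref{state:dtw-and-bramble-number}, which says $\operatorname{bn}(D)$ and $\DTreewidth(D)$ differ only by constant factors; the corollary is the explicit-constant form that will be convenient in the sequel.

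There is no genuine combinatorial obstacle in this argument — the content is entirely in the cited theorems. The one place that requires care is the bookkeeping: both results are quantitative, each with its own multiplicative constant and additive slack, and one must verify that they compose so that the chosen threshold $2k$ really is large enough to yield a bramble of order exactly $k$, rather than of order $\lfloor k/c \rfloor$ for some $c>1$. If the raw constants failed to line up, the remedy would be either to inflate the threshold slightly or to re-prove the haven-to-bramble step with a constant tailored to the variant of directed \treewidth used in this paper.
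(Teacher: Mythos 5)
Your proposal takes the same route as the paper, which gives no explicit proof for this corollary and simply cites the combination of statement (1.1) of \cite{johnson2001directed} with Lemma~6.4.20 of \cite{kreutzer2014width}. You correctly identify the argument skeleton (high directed tree-width yields a haven, the haven is converted to a bramble) and correctly flag that the only thing to actually check is that the constants in the two cited results compose to give the factor $2$ in the threshold $2k$; since the paper treats this as a citation-level corollary, your write-up is as much of a proof as the paper itself supplies.
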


We now define another obstruction to directed \treewidth called \emph{cylindrical grids},
which are illustrated in \cref{fig:cylindrical-grid}.
\begin{definition} \label{def:cylindrical-grid}
	A \emph{cylindrical grid} of order $k$ is a digraph $G_k$ consisting of $k$ pairwise disjoint directed cycles $C_1, C_2, \dots, C_{k}$ of length $2k$, together with a set of $2k$ pairwise vertex-disjoint paths $P_1, P_2, \dots, P_{2k}$ of length $k - 1$ such that 
	\begin{itemize}
		\item each path $P_i$ has exactly one vertex in common with each cycle $C_j$ and both endpoints of $P_i$ are in $\V{C_1} \cup \V{C_k}$, 
		\item the paths $P_1, P_2, \dots, P_{2k}$ appear on each $C_i$ in this order, and
		\item for each $1 \leq i \leq 2k$, if $i$ is odd, then the cycles $C_1, C_2, \dots, C_{k}$ occur on $P_i$ in this order and, if $i$ is even, then the cycles occur in the reverse order $C_k, C_{k-1}, \dots, C_{1}$. 
	\end{itemize}
\end{definition}

Besides cylindrical grids, several different ways of defining \say{directed grids} have been considered in the literature (for example~\cite{reed1996packing}, \cite{johnson2001directed}, \cite{kawarabayashi2015directed}).
Two of these, called \emph{acyclic grids} and \emph{fences} (see~\cref{fig:acyclic-grid-and-fence}), are used at various points in our proof. 
Since we are interested in grids in the context of minors, we define grids by linkages instead of giving explicit vertex and arc sets. 

To motivate the following definitions, let us dissect a cylindrical grid \(((C_1, \dots, C_k), (P_1, \dots,\) \(P_{2k}))\) as follows.
An important difference between cylindrical grids and grids in undirected graphs is that cylindrical grids are \emph{locally acyclic} in the following sense. 
Suppose we delete in each cycle $C_i$ the arc $e_i$ whose head is on the path $P_1$.
The dotted red lines mark these arcs in~\cref{fig:cylindrical-grid}.
The resulting digraph is acyclic and consists of two linkages: the linkage  $\{P_1, \dots, P_{2k}\}$ and the linkage $\{C_1 - e_1, \dots, C_k - e_k\}$ which contains for each cycle $C_i$ the path that remains once the arc $e_i$ is deleted.
Digraphs of this form are called \emph{fences}.
See~\cref{fig:cylindrical-grid} for a drawing of cylindrical grids illustrating how they are constructed from a fence with additional arcs closing the cycles.

\begin{definition}
	\label{def:fence}
	A \emph{$(p,q)$-fence} is a tuple $(\mathcal{P}, \mathcal{Q})$ such that
	\begin{itemize}
		\item \label{fence:P}
		      $\mathcal{P} = \Brace{P_1, P_2, \dots, P_{2p}}$ and $\mathcal{Q} = \Brace{Q_1, Q_2, \dots, Q_{q}}$ are linkages,
		\item \label{fence:P-intersection-Q}
		      for each $1 \leq i \leq 2p$ and each $1 \leq j \leq q$, the digraph $P_i \cap Q_j$ is a path (and therefore non-empty),
		\item \label{fence:P-order-Q}
		      for each $1 \leq j \leq q$, the paths $P_1 \cap Q_j, P_2 \cap Q_j, \dots, P_{2p} \cap Q_j$ appear in this order along $Q_j$, and
		\item \label{fence:Q-order-P}
		      for each $1 \leq i \leq 2p$, if $i$ is odd then the paths $P_i \cap Q_1, P_i \cap Q_2, \dots, P_i \cap Q_{q}$ appear along $P_i$ in this order, and if $i$ is even instead, then the paths $P_i \cap Q_q, P_i \cap Q_{q-1}, \dots, P_i \cap Q_{1}$ appear in this order along $P_i$.
	\end{itemize}
\end{definition}
\begin{figure}[!ht]
	\centering
		\begin{tikzpicture}
        \path[use as bounding box] (-7.9,-3.9) rectangle (7.9,3.9);
        \node (left) at (-4.1,0) {
            \resizebox{0.9\textwidth/2}{!}{\includegraphics{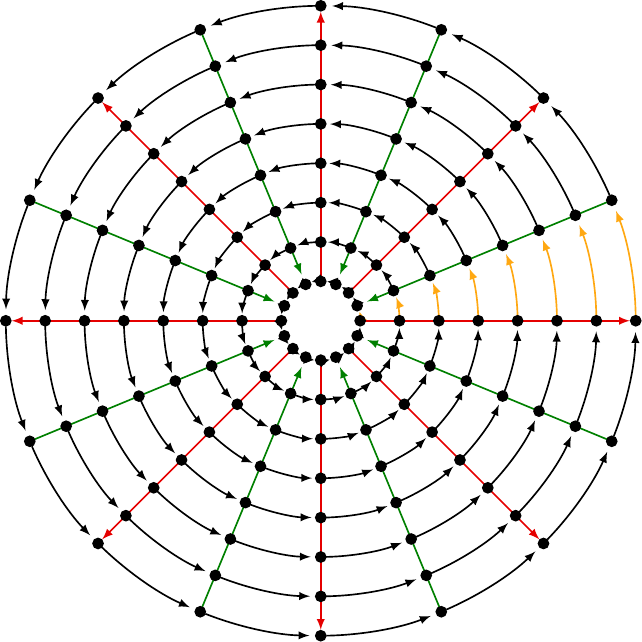}} };
        \node (right) at (4.1,0) {
            \resizebox{0.9\textwidth/2}{!}{\includegraphics{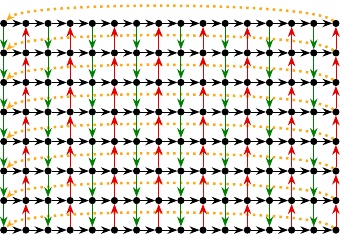}}    };
    \end{tikzpicture}
	\caption{Cylindrical grid $G_8$ of order $8$ drawn in two ways.
        The	drawing on the right illustrates how a cylindrical grid is obtained from a fence.
        The \textcolor{myOrange}{dotted orange} paths symbolise the arcs $e_i$ that close the cycles drawn solid on the left.
	}
	\label{fig:cylindrical-grid}
\end{figure}

See~\cref{fig:fence} for an illustration.
The \say{horizontal} paths, or \emph{rows}, constitute the linkage $\QQQ$ and the \emph{columns} form the linkage $\PPP$. 

A useful property of a $(p, q)$-fence $(\PPP, \QQQ)$ is that if $A \subseteq \Start{\QQQ}$ and $B \subseteq \End{\QQQ}$ are sets with $|A| = |B| \leq p$, then there is an $A$-$B$-linkage $\LLL$ of order $|A|$ in the digraph $\bigcup \PPP \cup \bigcup \QQQ$.

Now, let us further decompose the fence constructed from the cylindrical grid to obtain an even simpler form of directed grid.
In a fence, we can only route from \say{left to right} and we can route \say{upwards} as well as \say{downwards}. 
An even simpler form of a directed grid is obtained if we remove the \say{upwards} paths from a fence, i.e.~every second column.
The resulting digraph is called an \emph{acyclic grid}, illustrated in~\cref{fig:acyclic-grid}.

\begin{definition}
	\label{def:acyclic-grid}
	An \emph{acyclic $(p,q)$-grid} is a pair $(\mathcal{P}, \mathcal{Q})$ such that
	\begin{itemize}
		\item \label{acyclic-grid:P}
		      $\mathcal{P} = \Brace{P_1, P_2, \dots, P_{p}}$ and $\mathcal{Q} = \Brace{Q_1, Q_2, \dots, Q_{q}}$ are linkages,
		\item \label{acyclic-grid:P-intersection-Q}
		      for each $1 \leq i \leq p$ and each $1 \leq j \leq q$, the digraph $P_i \cap Q_j$ is a path (and therefore non-empty),
		\item \label{acyclic-grid:P-order-Q}
		      for each $1 \leq j \leq q$, the paths $P_1 \cap Q_j, P_2 \cap Q_j, \dots, P_{p} \cap Q_j$ appear in this order along $Q_j$, and
		\item \label{acyclic-grid:Q-order-P}
		      for each $1 \leq i \leq p$, the paths $P_i \cap Q_1, P_i \cap Q_2, \dots, P_i \cap Q_{q}$ appear in this order along $P_i$.
	\end{itemize}
\end{definition}

\begin{figure}[!ht]
	\centering
	\begin{subfigure}[b]{0.49\textwidth}
        \centering
		\includegraphics{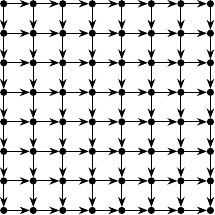}
        \caption{An acyclic $(8,8)$-grid.}
        \label{fig:acyclic-grid}
    \end{subfigure}
    \hfill
    \begin{subfigure}[b]{0.49\textwidth}
        \centering
        \includegraphics{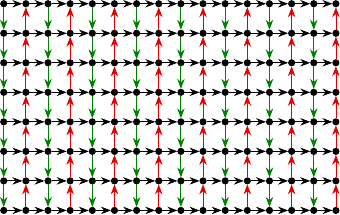}
        \caption{An $(8,8)$-fence.}
        \label{fig:fence}
    \end{subfigure}
	\caption{An acyclic grid and a fence.}
	\label{fig:acyclic-grid-and-fence}
\end{figure}

Acyclic grids only allow routing from top to bottom and from left to right. 

Another grid-like structure we define is the \emph{web}, initially introduced by Reed et al.~in~\cite{reed1996packing}.
Webs are essential in the proof of the Directed Grid Theorem in~\cite{kawarabayashi2015directed} and are equally crucial for our results.

\begin{definition}\label{def:web}
	Let $D$ be a digraph.
	Two linkages $\mathcal{H}$ and $\mathcal{V}$ in $D$ constitute an $\Brace{\Abs{\mathcal{H}},\Abs{\mathcal{V}}}$-web $\Brace{\mathcal{H},\mathcal{V}}$ if every path in $\mathcal{V}$ intersects every path in $\mathcal{H}$.
	
	The set $\Start{\mathcal{V}}$ is called the \emph{top} of the web, while the set $\End{\mathcal{V}}$ is called the \emph{bottom} of the web.
	Finally, $\Brace{\mathcal{H},\mathcal{V}}$ is \emph{well-linked} if $\End{\mathcal{V}}$ is well-linked linked to $\Start{\mathcal{V}}$ in $D$.
\end{definition}

Our proof uses an intermediate object called a \emph{semi-web}.
Unlike a web, we do not require from a semi-web \((\mathcal{H}, \mathcal{V})\) that the paths in \(\mathcal{H}\) and \(\mathcal{V}\) pairwise intersect each other.
Instead, a semi-web has a \emph{degree} which controls how much the two linkages must intersect.

\begin{definition}
	\label{def:semi-web}
    Let $D$ be a digraph.
	Two linkages $\mathcal{H}$ and $\mathcal{V}$ in $D$ form a $\Brace{\Abs{\mathcal{H}},\Abs{\mathcal{V}}}$-\emph{semi-web} $\Brace{\mathcal{H},\mathcal{V}}$ of degree \(d\) if every path in $\mathcal{V}$ intersects at least \(d\) paths in $\mathcal{H}$.

	Finally, $\Brace{\mathcal{H},\mathcal{V}}$ is \emph{well-linked} if $\End{\mathcal{V}}$ is well-linked to $\Start{\mathcal{V}}$ in $D$.
\end{definition}

An $(h,v)$-web $(\mathcal{H}, \mathcal{V})$ of avoidance $d$ from~\cite{kawarabayashi2015directed} corresponds to a $(h,v)$-semi-web $(\mathcal{H}, \mathcal{V})$ of degree \(\Abs{\mathcal{H}} \cdot \frac{d-1}{d}\) where $\mathcal{H}$ is minimal with respect to $\mathcal{V}$ in our notation.
The main reason for this modification is to avoid fractional calculations when determining the bounds of our functions.

\def\appsymb{}

\section{Proof overview}
\label{sec:proof overview}

\newcommand*{\powls}{path of well-linked sets\xspace}
\newcommand*{\powlss}{paths of well-linked sets\xspace}
\newcommand*{\pools}[1]{path of $#1$-order-linked sets\xspace}
\newcommand*{\poolss}[1]{path of $#1$-order-linked sets\xspace}
\newcommand*{\cowls}{cycle of well-linked sets\xspace}
\newcommand*{\cowlss}{cycles of well-linked sets\xspace}

Because~\cref{lem:akkw2} handles the small directed \treewidth case, we focus on the case where the given digraph has large directed \treewidth.
In particular, we are interested in finding specific types of \emph{paths of sets}, as defined below.

\subsection{Paths of sets}
\label{sec:cycles-and-paths-of-sets}
Our first main result consists in obtaining an acyclic grid whose end is well-linked back to its beginning, inside which we can easily find a fence.
To achieve this, we first construct objects with similar connectivity properties to these two types of grids without necessarily being planar.

It is a simple exercise to verify that, in a sufficiently large fence,
the ``left side'' is well-linked to the ``right side'' (see \cref{fig:fence} for the meaning of ``left'' and ``right'' used here).
Hence, the notion of well-linked sets is the connectivity property that we want in our abstraction of fences.
Acyclic grids, however, provide a different type of connectivity.
To capture this, we introduce the notion of $r$-\emph{order-linkedness},
which we formally define later in \cref{def:order-linkedness}.

We can now define the concepts of \emph{paths of well-linked sets}, which behave like fences, and of \emph{paths of $r$-order-linked sets}, which behave like acyclic grids.

\newcommand{\OLtext}[1]{\textcolor{myGreen}{#1}}
\newcommand{\WLtext}[1]{\textcolor{myLightBlue}{#1}}
\begin{customdoubledef}{}{def:path-of-order-linked-sets}{def:path-of-well-linked-sets}[path of \OLtext{$r$-order-linked}/\WLtext{well-linked} sets]
	\label{short:def:path-of-order-linked-sets}
    \label{short:def:path-of-well-linked-sets}
    A \emph{path of \OLtext{$r$-order-linked}/\WLtext{well-linked} sets} of width $w$ and length $\ell$ is a tuple $\Brace{\mathcal{S},\mathscr{P}}$ such that
    \begin{itemize}[noitemsep,topsep=0pt]
        \item $\mathcal{S}$ is a sequence of $\ell + 1$ pairwise disjoint subgraphs $\Brace{S_0,\dots,S_{\ell}},$ which are called \emph{clusters},
        
	    \item for every $0 \leq i \leq \ell$ there are disjoint \OLtext{ordered} sets $A(S_i),B(S_i) \subseteq \V{S_i}$ of size $w$ such that $A(S_i)$ is \OLtext{$r$-order-linked}/\WLtext{well-linked} to $B(S_i)$ in $S_i,$
     
        \item $\mathscr{P}$ is a sequence of $\ell$ pairwise disjoint linkages $\Brace{\mathcal{P}_0, \mathcal{P}_1, \dots, \mathcal{P}_{\ell - 1}}$ such that, for every $0 \leq i < \ell,$ $\mathcal{P}_i$ is a $B(S_i)$-$A(S_{i+1})$-linkage of order $w$ which is internally disjoint from $S_i$ and $S_{i+1}$ as well as disjoint from every $S \in \mathcal{S} \setminus \Set{S_i, S_{i+1}}.$
    \end{itemize}

	Further, a \OLtext{\pools{r}} $\Brace{\mathcal{S}, \mathscr{P}}$ is called \emph{uniform} if for all $0\leq i < \ell$ and for all $b_1,b_2 \in B(S_i)$ we have that $b_1 \leq_{B(S_i)} b_2$ implies $\mathcal{P}_i(b_1) \leq_{A(S_{i+1})} \mathcal{P}_i(b_2).$
	A \WLtext{\powls} is called \emph{strict} if every vertex in $S_i$ lies on an $A(S_i)$-$B(S_i)$-path.
\end{customdoubledef}

We obtain the following connection between paths of order-linked sets and paths of well-linked sets, allowing us to focus on obtaining paths of order-linked sets when proving our main result.
The construction is quite similar to the one used to obtain a fence from an acyclic grid, and the bounds we obtain are essentially the same.

\begin{customlem}{8.3}{proposition:order-linked to path of well-linked sets}
		\label{short:proposition:order-linked to path of well-linked sets}
		Let $\bound{short:proposition:order-linked to path of well-linked sets}{w}{w, \ell} \coloneqq w(\ell + 1)$.
		Every path of $w$-order-linked sets $(\mathcal{S}$, $\mathscr{P})$ of width at least $\bound{short:proposition:order-linked to path of well-linked sets}{w}{w, \ell}$ and length at least $\ell$ contains a path of well-linked sets $\Brace{\mathcal{S}', \mathscr{P}'}$ of width $w$ and length $\ell$.
\end{customlem}

The following two statements allow us to construct acyclic grids and fences from the above objects.
Hence, for our first main result, it suffices to show that digraphs of large directed \treewidth contain a large path of well-linked sets where the last cluster is well-linked to the first.
	
	\begin{customthm}{}{thm:order_linked_to_acyclic_grid}
				\label[thm-abbr]{thm-abbr:order_linked_to_acyclic_grid}
		There is a function \(\bound{thm-abbr:order_linked_to_acyclic_grid}{\ell}{k} \in \PowerTower{1}{\Polynomial{7}{k}}\) such that every path of $1$-order-linked sets of width at least $w = k^2 - 1$ and length at least $\bound{thm-abbr:order_linked_to_acyclic_grid}{\ell}{k}$ contains an acyclic $(k,k)$-grid.
	\end{customthm}

	\begin{customthm}{}{thm:poss-to-fence}
				\label[thm-abbr]{thm-abbr:poss-to-fence}
		There are functions \(\bound{thm:poss-to-fence}{w}{p,q} \in \Oh(p^{5} q^{5})\) and \(\bound{thm:poss-to-fence}{\ell}{p,q} \in \PowerTower{1}{\Polynomial{15}{p, q}}\) such that every path of well-linked sets $\Brace{\mathcal{S}, \mathscr{P}}$ of width at least $\bound{thm-abbr:poss-to-fence}{w}{p,q}$ and length $\ell \geq \bound{thm-abbr:poss-to-fence}{\ell}{p,q}$ contains a $(p,q)$-fence. 
		Moreover, \(\Start{\mathcal{P}} \subseteq A(S_0)\) and \(\End{\mathcal{P}} \subseteq B(S_\ell)\).
	\end{customthm}

This allows us to divide our proof into roughly two main \say{parts}.

The first part (\cref{sec:splits-and-segmentations}) consists in finding a well-linked web \((\mathcal{P}, \mathcal{Q})\) from a bramble of high order where \(\mathcal{P}\) is minimal with respect to \(\mathcal{Q}\).
We emphasise that the requirement of minimality is what makes obtaining such a web very challenging.

After obtaining the well-linked web, we can get a similar structure in which one linkage is \say{ordered} according to the other, constructing objects which are called \emph{splits} and \emph{segmentations} in~\cite{kawarabayashi2015directed}.

In the second part (\cref{sec:obtaining-a-path-of-well-linked-sets}), we construct a path of well-linked sets from the splits and segmentations obtained previously with the additional property that the $B$-set of the last cluster is well-linked to the $A$-set of the first cluster.

\begin{customthm}{9.11}{thm:high_dtw_to_POSS_plus_back-linkage}
    \label{short:thm:high_dtw_to_POSS_plus_back-linkage}
        Every digraph $D$ with $\dtw{D} \geq \bound{short:thm:high_dtw_to_POSS_plus_back-linkage}{t}{w, \ell} \in \PowerTower{7}{\Polynomial{25}{w, \ell}}$ contains a path of well-linked sets $(\mathcal{S} = (S_0,$ $S_1,$ $\dots,$ $S_{\ell}),\mathscr{P})$ of width $w$ and length $\ell$ such that $B(S_\ell)$ is well-linked to $A(S_0)$ in $D$.
\end{customthm}

Finally, we apply results from~\cite{reed1996packing,amiri2016erdos} in order to obtain our final main result.

\begin{customthm}{}{statement:elementary-younger}
	There is a function \(\bound{statement:elementary-younger}{f}{k} \in \PowerTower{8}{\Polynomial{43}{k}}\) such that the following holds.
	Let \(D\) be a digraph.
	Then \(D\) has \(k\) pairwise vertex-disjoint cycles or there is some \(X \subseteq \V{D}\) of size at most \(\bound{statement:elementary-younger}{f}{k}\) such that \(D - X\) is acyclic.
\end{customthm}

\subsection{Splits and segmentations}
\label{sec:splits-and-segmentations}

The construction of our paths of well-linked sets and paths of order-linked sets starts with \emph{splits} and \emph{segmentations},
which in turn are obtained from webs by ensuring that one linkage of the web intersects the other in an ordered fashion.

The proof of~\cite{kawarabayashi2022directed} for obtaining a web works by starting with a \emph{bramble} of high order and then showing that the existence of such a bramble implies the existence of an object called a \emph{\pathsystem}.\footnote{Despite the similarity in names, the \pathsystems we use here and the path-of-sets systems of~\cite{chekuri2016polynomial} are unrelated mathematical objects.}
The intuition behind the \pathsystem is the following.
Say we want to construct a bidirected clique as a butterfly minor.
Towards this end, we take several linkages connecting different subsets.
Of course, these linkages might not be pairwise disjoint.
What we want from our \pathsystem is the property that,
if the linkages we obtained above happen to be mostly disjoint from each other,
and if they intersect the \pathsystem in a ``clean'' way,
then we should be able to easily construct our bidirected clique
(and, in particular, find many disjoint cycles).
The interesting and more challenging part of the proof lies in showing that,
if we fail at finding such a \emph{clean \pathsystem} 
or if we fail at getting the linkages to be disjoint,
then we can obtain one of two useful structures called \emph{split} and \emph{segmentation}.

We define \pathsystems (\cref{def:path-system}), splits and segmentations (\cref{def:split-edge}) later.
\linebreak\Cref{state:path-system-to-clean-path-system} below essentially corresponds to Lemma 4.7 from~\cite{kawarabayashi2022directed}, and our proof is based on theirs.
Our proof relies on a result known as \emph{Lovász Local Lemma} \cite{EL1974,SPENCER197769},
which allowed us to improve the bounds from non-elementary to polynomial functions.

\begin{customlem}{}{state:path-system-to-clean-path-system}
		There are two polynomials \(\bound{state:path-system-to-clean-path-system}{\ell}{p_2,\ell_2,d_1} \in \Oh(\ell_{2} + d_{1} p_{2})\) and \(\bound{state:path-system-to-clean-path-system}{p}{q_1,p_2} \in \Oh(q_1 (p_2)^3)\) such that the following holds.
	Let \(d_1, q_1, p_2, \ell_2\) be integers and \(\mathcal{S} = (\mathcal{P}, \mathcal{L}, \mathcal{A})\) be an \(\ell\)-linked \pathsystem of order \(p\).
	If \(\ell \geq \bound{state:path-system-to-clean-path-system}{\ell}{p_2, \ell_2, d_1}\) and \(p \geq \bound{state:path-system-to-clean-path-system}{p}{q_1, p_2}\), then \(\ToDigraph{\mathcal{S}}\) contains one of the following
	\begin{enamerate}{C}{xitem:path-system-to-clean-path-system:last}
	\item
		a well-linked \(\Brace{\ell, q_1}\)-semi-web \(\Brace{\mathcal{P}_1, \mathcal{Q}_1}\) of degree \(d_1\) where \(\mathcal{P}_1 \in \mathcal{L}\), \(\mathcal{Q}_1 \subseteq \mathcal{P}\) and \(\mathcal{P}_1\) is minimal with respect to \(\mathcal{Q}_1\), or
			\item a clean \(\ell_2\)-linked \pathsystem \((\mathcal{P}_2, \mathcal{L}_2, \mathcal{A}_2)\) of order \(p_2\).
				\label{xitem:path-system-to-clean-path-system:last}
	\end{enamerate}
\end{customlem}

Using further results from~\cite{kawarabayashi2022directed}, we can prove the main result of this section.

\begin{customthm}{}{state:high-dtw-to-segmentation-or-split}
		There is a function \(\bound{state:high-dtw-to-segmentation-or-split}{t}{x,y,q,k} \in \PowerTower{5}{\Polynomial{5}{x,y,q,k}}\) such that the following holds.
	Let \(D\) be a digraph.
	Let \(k, x, y, q\) be integers.
    If \(\dtw{D} \geq \bound{state:high-dtw-to-segmentation-or-split}{t}{x,y,q,k}\), then \(D\) contains one of the following
	\begin{enamerate}{D}{xitem:high-dtw-to-web:last}
	\item
				a cylindrical grid of order \(k\) as a butterfly minor,
	\item 
				a $(y, q)$-split \(\Brace{\mathcal{P}', \mathcal{Q}'}\) of some pair $(\mathcal{P}_1, \mathcal{Q}_1)$ in \(D\), where \(\End{\mathcal{Q}'}\) is well-linked to \(\Start{\mathcal{Q}'}\), or
	\item
				an $(x, q)$-segmentation $(\mathcal{P}', \mathcal{Q}')$ of some pair $(\mathcal{P}_1, \mathcal{Q}_1)$ in \(D\), where \(\End{\mathcal{P}'}\) is well-linked to \(\Start{\mathcal{P}'}\).
		\label{xitem:high-dtw-to-web:last}
	\end{enamerate}
\end{customthm}

As a cylindrical grid of order \(k\) contains \(k\) pairwise vertex-disjoint cycles, the outcome~\ref{item:high-dtw-to-web:grid} above is sufficient for our results.
We still require this stronger property rather than just providing cycles, since we need it to improve the bounds of the Directed Grid Theorem in~\cite{COSSII}.

\subsection{Obtaining a path of well-linked sets}
\label{sec:obtaining-a-path-of-well-linked-sets}
 
	Several steps of our proofs revolve around linkages which intersect certain subgraphs in an ordered fashion.
	To simplify our arguments and reasoning and to avoid repetition, we abstractly model this configuration with the help of the concept of \emph{temporal digraphs}.

	A \emph{temporal digraph} is a pair $T = (V, \AAA)$ consisting of a vertex set $V$ and sequence of arc sets $\mathcal{A} = \Brace{A_1, A_2, \dots A_{\ell}}$ such that $\Layer{T}{t} \coloneqq \Brace{V, A_t}$ is a digraph for all $1 \leq t \leq \ell.$
	We also refer to $\Layer{T}{t}$ as \emph{layer $t$} of $T$ and call $t$ a \emph{\timestep}.
	The \emph{lifetime} of $T$ is given by $\Lifetime{T} \coloneqq \ell.$
	A \emph{temporal walk} of length $n$ from $v_0$ to $v_n$ in a temporal digraph $T$ is a sequence $W \coloneqq (v_0, t_0), (v_1, t_1), \ldots, (v_n, t_n)$ such that $(v_i, v_{i+1}) \in A_{t_{i}}$ and $t_i < t_{i+1} \leq \Lifetime{T}$ for all $0 \leq i \leq n - 1.$
	If such a walk exists, we say that $v_0$ \emph{temporally reaches} $v_n.$
	A temporal walk is said to be a \emph{temporal path} if no vertex appears twice in the sequence.
	Finally, we say that $W$ \emph{departs} at $t_0$ and \emph{arrives} at $t_n,$ and that $t_n - t_0$ is the \emph{duration} of $W.$

	Usage of temporal digraphs arises naturally in a directed setting.
	Consider the example given in~\cref{short:fig:routing_temporal_digraph} of a temporal digraph obtained from a linkage $\PPP \coloneqq \{ P_a, P_b, P_c\}$ passing through three digraphs $\{Q_1, Q_2, Q_3\}$.
	If we want to construct a new linkage starting and ending in a subset of the starting and endpoints of \(\mathcal{P} \coloneqq \Set{P_a, P_b, P_c}\), then as soon as a path in our new linkage visits a vertex in \(Q_2\), it can no longer use vertices from \(Q_1\), as we only have connectivity from \say{left} to \say{right}.
	Further, as we want some way of ensuring that our paths are disjoint and form a linkage, we are interested in the connectivity each \(Q_i\) provides \say{between} the paths in \(\mathcal{P}\) without intersecting other paths in \(\mathcal{P}\).
	This intuition leads us to the following definition.

\begin{figure*}[ht!]
  \centering
  \resizebox{!}{8cm}{  \includegraphics{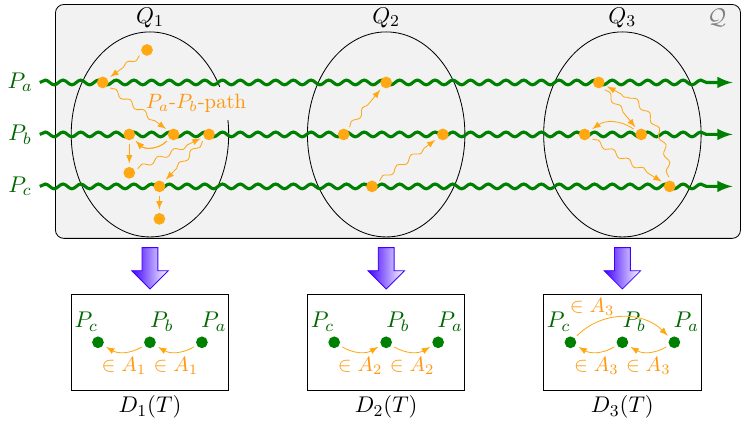}}
  \caption{The layers $D_j(T)$ of the temporal graph $T \coloneqq \Brace{V = \Set{a,b,c}, \mathcal{A} = \Set{A_1,A_2,A_3}}$ constructed from the graphs $Q_j$ displayed above as defined in~\cref{short:def:routing-temporal-digraph}.}
  \label{short:fig:routing_temporal_digraph}
  \label{fig:routing_temporal_digraph}
\end{figure*}

\begin{customdef}{5.3}{def:routing-temporal-digraph}
	\label{short:def:routing-temporal-digraph}
	Let $\mathcal{P}$ be a linkage and let $\mathcal{Q} = \Set{Q_1, Q_2, \dots, Q_{q}}$ be a set of pairwise disjoint digraphs where each path $P_i \in \mathcal{P}$ can be partitioned as $P_i^1 \cdot P_i^2 \cdot \ldots \cdot P_i^{q} = P_i$ such that $V(P_i^j) \cap \V{\mathcal{Q}} \subseteq \V{Q_j}$ for all $1 \leq j \leq q.$
	The \emph{routing temporal digraph $\Brace{V, \mathcal{A}}$ of $\mathcal{P}$ through $\mathcal{Q}$}, which we also refer to as $\RTD{\mathcal{P},\mathcal{Q}}$, is constructed as follows.
	We set $V = \mathcal{P}$ and for each $1 \leq j \leq q$ we define $A_j = \{(P_a, P_b) \mid P_a, P_b \in \mathcal{P}$ and there is a path from \(\V{P_a}\) to \(\V{P_b}\) inside \(Q_j\) which is internally disjoint from \(\mathcal{P}\)\(\}\).
\end{customdef}

In our application of routing temporal digraphs, we want to translate paths (or more general structures we construct) within a routing temporal digraph $T$ of a linkage $\PPP$ through $\QQQ \coloneqq \Set{Q_1, \dots, Q_q}$ into corresponding subgraphs of $\bigcup\PPP \cup \bigcup\QQQ$.
This is made possible by our requirement that a temporal walk in $T$ only uses at most one arc in each layer.
For, if $W$ is a temporal walk $W$ in $T$ and $e = (P_i, P_j) \in \A{W}$ is an arc in layer $A_l$, say, then we can replace $e$ by a path $L(e) \subseteq Q_l$ connecting $P_i$ to $P_j$.
As $W$ contains at most one arc per layer, the paths $L(e)$ and $L(e')$ are pairwise disjoint for distinct arcs $e \neq e'$.
Therefore, the walk $W$ naturally translates into a walk in $\bigcup \PPP\cup \bigcup \QQQ$.

As the example in~\cref{fig:routing_temporal_digraph} demonstrates, this would no longer work if temporal walks were allowed to use more than one arc per layer.
For instance, the digraph $Q_2$ induces the arcs $(P_c, P_b)$ and $(P_b, P_a)$ in the routing temporal digraph $T_2$ of $\PPP$ through $\{Q_2\}$, but the walk $(P_c, P_b, P_a)$ in $T_2$ does not correspond to any $P_c$-$P_a$-walk in $\bigcup\PPP \cup Q_2$.

\textbf{Routings.}

We define a containment relation for digraphs to better describe the kind of connectivity that a routing temporal digraph provides between the paths of the original linkage.
We point out to the reader that connectivity in temporal digraphs is not transitive.
Hence, restricting the following definition to arcs is not sufficient.

\begin{customdef}{5.4}{def:h-routings}
    \label{short:def:h-routings}
	Let $H$ be a digraph, $D$ be a \textcolor{myGreen}{(temporal)} digraph and $S\subseteq \V{D}.$
	An \emph{$H$-routing (over $S$)} is a bijection $\varphi : \V{H} \to S$ such that for each $u$-$v$-path $P$ in $H$ we can find a \textcolor{myGreen}{(temporal)} $\varphi(u)$-$\varphi(v)$-path in $D$ which is disjoint from $S \setminus \Fkt{\varphi}{\V{P}}.$
\end{customdef}

See~\cref{sec:H-routings} for a more detailed discussion of this new containment relation and how it relates to other known ones.

In order to apply the framework defined above, we simplify the notions of splits and segmentations to slightly more general structures we call \emph{ordered} and \emph{folded} webs.

\begin{customdef}{10.1}{def:ordered_web}
	\label{short:def:ordered_web}
	Let $\Brace{\mathcal{H}, \mathcal{V}}$ be an $\Brace{h,v}$-web.
	We say that $\Brace{\mathcal{H}, \mathcal{V}}$ is an \emph{ordered web} if there is an ordering of $\mathcal{V} = \Brace{V_1, V_2, \dots, V_{v}}$ for which each path $H \in \mathcal{H}$ can be decomposed into $H = H_1 \cdot H_2 \cdots H_{v}$ such that $H_i$ intersects $V_j$ if and only if $i = j$.
\end{customdef}

\begin{customdef}{10.5}{def:folded-web}
	\label{short:def:folded-web}
	An $(h,v)$-web $\Brace{\mathcal{H}, \mathcal{V}}$ is a \emph{folded web} if every $V_i \in \mathcal{V}$ can be split as $V_i^a \cdot V_i^b \coloneqq V_i$ such that both $V_i^a$ and $V_i^b$ intersect all paths of $\mathcal{H}$.
\end{customdef}

It is not difficult to show that a $(p,q)$-segmentation \((\mathcal{P}, \mathcal{Q})\) from~\cite{kawarabayashi2022directed} is also an ordered web \((\mathcal{P}, \mathcal{Q})\), and that a $(p,q)$-split \((\mathcal{P}, \mathcal{Q})\) is a folded ordered \((\mathcal{Q}, \mathcal{P})\) web.

Our main results on routing temporal digraphs are about finding \(\Pk{k}\), \(\Ck{k}\) and \(\biPk{k}\)-routings in different contexts.
First, we show that every unilateral temporal digraph with a long enough lifetime contains a walk with many vertices.
The reason why we consider temporal digraphs in which the layers are unilateral is that, given an ordered web \((\mathcal{H}, \mathcal{V})\), each layer of \(\RTD{\mathcal{H}, \mathcal{V}}\) is unilateral.

\begin{customlem}{5.10}{lemma:temporal one-way connected contains walk with many vertices}
    	\label{short:lemma:temporal one-way connected contains walk with many vertices}
    Let $\bound{short:lemma:temporal one-way connected contains walk with many vertices}{\Lifetime{}}{n,k} \coloneqq kn \sum_{i=1}^{kn}n^i$.
	Let $T$ be a temporal digraph with $n$ vertices where each layer is unilateral, and let $S \subseteq \V{T}$ be a set of size $k$.
	If $\Lifetime{T} \geq \bound{short:lemma:temporal one-way connected contains walk with many vertices}{\Lifetime{}}{n,k}$, then $T$ contains a temporal walk $W$ with $S \subseteq \V{W}$.
 \end{customlem}

We can use the walk obtained in~\cref{lemma:temporal one-way connected contains walk with many vertices} to construct a \(\Pk{k}\)-routing.

 \begin{customthm}{5.12}{theorem:one-way connected temporal digraph contains P_k routing}
    	\label{short:theorem:one-way connected temporal digraph contains P_k routing}
	\label{short:thm:one-way connected temporal digraph contains P_k routing}
	Let $\bound{short:theorem:one-way connected temporal digraph contains P_k routing}{\Lifetime{}}{n, k} \coloneqq \bound{short:lemma:temporal one-way connected contains walk with many vertices}{\Lifetime{}}{n, \frac{k^2}{2}}$. 
	Let $T$ be a temporal digraph in which every layer is unilateral.
	If $\Lifetime{T} \geq \bound{short:theorem:one-way connected temporal digraph contains P_k routing}{\Lifetime{}}{n, k}$ and $n \coloneqq \Abs{\V{T}} \geq \frac{k^2}{2} $, then there is some set $S \subseteq \V{T}$ such that $T$ contains a $\Pk{k}$-routing over $S$. 
\end{customthm}

Intuitively, a \(\Pk{k}\)-routing in a routing temporal digraph $\RTD{\mathcal{P},\mathcal{Q}}$ provides connectivity between the paths in \(\mathcal{P}\) which is similar to a column in an acyclic grid, which in turn is related to the concept of order-linkedness defined before.
This intuition is formalised below.

\begin{customlem}{6.6}{lemma:P_k-routing-implies-1-order-linked}
		\label{short:lemma:P_k-routing-implies-1-order-linked}
		Let $h,k$ be integers.
		Let $T$ be the routing temporal digraph of some linkage $\mathcal{L}$ through a sequence $\Brace{H_1, H_2, \dots, H_{h}}$ of disjoint digraphs. 
		Let $\mathcal{L}' \subseteq \mathcal{L}$ be a linkage of order at most $k$.
		If $T$ contains a $\Pk{k}$-routing on the paths $L_1, L_2, \dots, L_{k} \in \mathcal{L}'$, ordered according to their occurrence on the $\Pk{k}$-routing, then $A$ is $1$-order-linked to $B$ in $\ToDigraph{\mathcal{L} \cup \bigcup_{i=1}^{h} H_i}$, where $A = \Set{a_i \mid a_i \text{ is the first vertex of \(L_i\) on \(H_1\)}}$ and $B = \Set{b_i \mid b_i \text{ is the last vertex of \(L_i\) on \(H_h\)}}$. 
	\end{customlem}

It is not difficult to see that, given a folded ordered web \((\mathcal{H}, \mathcal{V})\), each layer of \(\RTD{\mathcal{H}, \mathcal{V}}\) is strongly connected.
Having strongly connected layers allows us to obtain \(\Ck{k}\) or \(\biPk{k}\)-routings instead.
Intuitively, due to these two types of routings providing connectivity in both directions between the paths of our linkage \(\mathcal{P}\), we can use them to obtain well-linked instead of order-linked sets.

\begin{customthm}{5.16}{theorem:strongly connected temporal digraph contains H routing}
	\label{short:theorem:strongly connected temporal digraph contains H routing}
    \label{short:thm-abbr:strongly connected temporal digraph contains H routing}
	Let $T$ be a temporal digraph such that $\Layer{T}{i}$ is strongly connected for all $1 \leq i \leq \Lifetime{T}.$
	If $\Lifetime{T} \geq \bound{short:thm-abbr:strongly connected temporal digraph contains H routing}{\Lifetime{}}{\Abs{\V{T}},k} \in \Oh(k^{11} + k^{5} \Abs{\V{T}}),$ then for every set $S \subseteq \V{T}$ with $\Abs{S} \geq \bound{short:thm-abbr:strongly connected temporal digraph contains H routing}{s}{k} \in \Oh(k^{11})$ there is a subset $S' \subseteq S$ with $\Abs{S'} = k$ such that $D$ contains an $H$-routing over $S'$ for some $H \in \Set{ \Ck{k},  \biPk{k}}.$
\end{customthm}

This allows us to infer the following~\namecref{lemma:routing-implies-well-linked}, stating that if we can find a large enough linkage giving rise to a temporal digraph with long enough lifetime, then we can always find a sublinkage whose start-vertices are well-linked to its end-vertices.

\begin{customprop}{}{lemma:routing-implies-well-linked}
    Let $k$ be an integer, $h \geq \bound{lemma:routing-implies-well-linked}{h}{k} \in \PowerTower{1}{\Polynomial{2}{k}}$, $D$ be a digraph, $\mathcal{L}$ be a linkage of order $\bound{lemma:routing-implies-well-linked}{\ell}{k} \in \Oh(k^{11})$ in $D$ and let $T$ be the routing temporal digraph of $\mathcal{L}$ through $\mathcal{H} \coloneqq \Set{H_1, \dots, H_h}$, where each $H_i$ is a subgraph of $D$.
	If each layer $\Layer{T}{i}$ is strongly connected, then there exists some $\mathcal{L}' \subseteq \mathcal{L}$ of order $k$ such that $\Start{\mathcal{L}'}$ is well-linked to $\End{\mathcal{L}'}$ in $\ToDigraph{\mathcal{L} \cup \mathcal{H}}$.
\end{customprop}

This, in turn, allows us to obtain a path of well-linked sets from a large enough folded ordered web. 
There are functions $\bound{lemma:folded-web-to-pows}{h}{w} \in \Oh(w^{11})$ and $\bound{lemma:folded-web-to-pows}{v}{w, \ell} \in \PowerTower{1}{\Polynomial{2}{w, \ell}}$ such that the following statements hold.

\begin{customlem}{}{lemma:folded-web-to-pows}
    		Let $\Brace{\mathcal{H}, \mathcal{V}}$ be a folded ordered $(h,v)$-web.
		If $h \geq \bound{lemma:folded-web-to-pows}{h}{w}$ and $v \geq \bound{lemma:folded-web-to-pows}{v}{w, \ell}$, then $\Brace{\mathcal{H}, \mathcal{V}}$ contains a path of well-linked sets $\Brace{\mathcal{S} = \Brace{ S_0, S_1, \dots, S_{\ell}}, \mathscr{P}}$ of width $w$ and length $\ell$.
		Additionally, there is a $\Start{\mathcal{H}}$-$\End{\mathcal{H}}$-linkage $\mathcal{L} = \mathcal{L}_1 \cdot \mathcal{L}_2 \cdot \mathcal{L}_3$ using only arcs of $\mathcal{H}$ such that $\mathcal{L}_2$ is an $A(S_0)$-$B(S_\ell)$-linkage of order $w$ inside $\Brace{\mathcal{S}, \mathscr{P}}$ and $\mathcal{L}_1$ and $\mathcal{L}_3$ are internally disjoint from $\Brace{\mathcal{S}, \mathscr{P}}$.
\end{customlem}

\begin{customcor}{}{state:folded-ordered-web-to-well-linked-pows}
	\label{short:state:folded-ordered-web-to-well-linked-pows}
	Let $\Brace{\mathcal{H}, \mathcal{V}}$ be a folded ordered $(h,v)$-web.
	If $h \geq \bound{lemma:folded-web-to-pows}{h}{w}$ and $v \geq \bound{lemma:folded-web-to-pows}{v}{w, \ell}$, then $\Brace{\mathcal{H}, \mathcal{V}}$ contains a path of well-linked sets $\Brace{\mathcal{S} = \Brace{ S_0, S_1, \dots, S_{\ell}}, \mathscr{P}}$ of width $w$ and length $\ell$.
	Additionally, \(A(S_0) \subseteq \Start{\mathcal{H}}\) and \(B(S_\ell) \subseteq \End{\mathcal{H}}\).
\end{customcor}

Similar techniques can be used to obtain a path of order-linked sets from a large ordered web.

\begin{customlem}{}{lemma:ordered-web-to-path-of-order-linked-sets-with-forward-linkage}
				There is a function $\bound{lemma:ordered-web-to-path-of-order-linked-sets-with-forward-linkage}{v}{w,\ell} \in \PowerTower{1}{\Polynomial{13}{\ell, w}}$ such that the following holds.
		Let $\Brace{\mathcal{H}, \mathcal{V}}$ be an ordered $(h,v)$-web where $h = \bound{lemma:ordered-web-to-path-of-order-linked-sets-with-forward-linkage}{h}{w} = w^2 - 1$ and $v \geq \bound{lemma:ordered-web-to-path-of-order-linked-sets-with-forward-linkage}{v}{w,\ell}$.
		Then $(\mathcal{H}, \mathcal{V})$ contains a path of $w$-order-linked sets $\Brace{\mathcal{S} = \Brace{ S_0, S_1, \dots, S_{\ell} }, \mathscr{P}}$ of width $w$ and length $\ell$ with the following additional properties.
		\begin{itemize}
			\item There is a $\Start{\mathcal{H}}$-$\End{\mathcal{H}}$-linkage $\mathcal{L} = \mathcal{L}_1 \cdot \mathcal{L}_2 \cdot \mathcal{L}_3$ of order $w$ contained in $\mathcal{H}$ such that $\mathcal{L}_2$ is an $A(S_0)$-$B(S_\ell)$-linkage and both $\mathcal{L}_1$ and $\mathcal{L}_3$ are internally disjoint from $\Brace{\mathcal{S}, \mathscr{P}}$.
			\item There is a linkage $\mathcal{X} \subseteq \mathcal{V}$ of order $\ell + 1$ and a bijection $\pi : \mathcal{S} \rightarrow \mathcal{X}$ such that $A(S_i) \subseteq \V{\pi(S_i)}$ and $\V{\pi(S_i)} \cap \V{(\mathcal{S}, \mathscr{P})} \subseteq \V{S_i}$ for each $0 \leq i \leq \ell$.
		\end{itemize}
	\end{customlem}

 \begin{customcor}{}{lemma:ordered-web-to-path-of-well-linked-sets-with-side-linkage}
				There are functions \(\bound{lemma:ordered-web-to-path-of-well-linked-sets-with-side-linkage}{h}{w,\ell} \in \Oh(w^{2} \ell^{2})\) and \(\bound{lemma:ordered-web-to-path-of-well-linked-sets-with-side-linkage}{v}{w,\ell} \in \PowerTower{1}{\Polynomial{25}{w, \ell}}\) such that the following holds.
		Let $\Brace{\mathcal{H},\mathcal{V}}$ be an ordered $(h,v)$-web such that $h \geq \bound{lemma:ordered-web-to-path-of-well-linked-sets-with-side-linkage}{h}{w, \ell}$ and $v \geq \bound{lemma:ordered-web-to-path-of-well-linked-sets-with-side-linkage}{v}{w, \ell}$.
		Then, there is a path of well-linked sets $\Brace{\mathcal{S} = \Brace{ S_0, S_1, \dots, S_{\ell} }, \mathscr{P}}$ of width $w$ and length $\ell$ in $\ToDigraph{\mathcal{H} \cup \mathcal{V}}$ such that $B(S_{\ell}) \subseteq \End{\mathcal{H}}$.
		Finally, there is a linkage $\mathcal{X} \subseteq \mathcal{V}$ of order $\ell + 1$ and a bijection $\pi : \mathcal{S} \rightarrow \mathcal{X}$ such that $A(S_i) \subseteq \V{\pi(S_i)}$ and $\V{\pi(S_i)} \cap \V{(\mathcal{S}, \mathscr{P})} \subseteq \V{S_i}$ for each $0 \leq i \leq \ell$.
	\end{customcor}

We can manipulate the path of well-linked sets given by~\cref{lemma:ordered-web-to-path-of-well-linked-sets-with-side-linkage} above in order to ensure that its extremities are contained in the extremities of \(\mathcal{H}\).
This will be useful in the proof of~\cref{thm:high_dtw_to_POSS_plus_back-linkage} when we need the end of the path of well-linked sets to be well-linked to its beginning.

\begin{customlem}{}{state:ordered-web-to-well-linked-pows}
		There are functions \(\bound{state:ordered-web-to-well-linked-pows}{h}{w,\ell} \in \Oh(w^{2} \ell^{2} )\) and \(\bound{state:ordered-web-to-well-linked-pows}{v}{w,\ell} \in \PowerTower{1}{\Polynomial{25}{w, \ell}}\) such that the following holds.
	Let $\Brace{\mathcal{H}, \mathcal{V}}$ be an ordered $(h,v)$-web.
	If $h \geq \bound{state:ordered-web-to-well-linked-pows}{h}{w, \ell}$ and $v \geq \bound{state:ordered-web-to-well-linked-pows}{v}{w, \ell}$, then $\Brace{\mathcal{H}, \mathcal{V}}$ contains a path of well-linked sets $\Brace{\mathcal{S} = \Brace{ S_0, S_1, \dots, S_{\ell}}, \mathscr{P}}$ of width $w$ and length $\ell$.
	Additionally, \(A(S_0) \subseteq \Start{\mathcal{H}}\) and \(B(S_\ell) \subseteq \End{\mathcal{H}}\).
\end{customlem}

We can now use the results above to prove~\cref{short:thm:high_dtw_to_POSS_plus_back-linkage}.

\begin{proof}[Proof sketch of~\cref{short:thm:high_dtw_to_POSS_plus_back-linkage}]
	By~\cref{state:high-dtw-to-segmentation-or-split}, we have one of three cases.
	If $D$ contains a cylindrical grid of order $w \ell$, it contains a \powls of width $w$ and length $\ell$.

	If $D$ contains a \((y,q)\)-split \((\mathcal{V}, \mathcal{H})\), then this split is essentially a folded ordered web \((\mathcal{H}, \mathcal{V})\).
	By~\cref{short:state:folded-ordered-web-to-well-linked-pows}, we can construct a path of well-linked sets from this split.
	Moreover, the beginning and the end of this path of well-linked sets coincide with \(\Start{\mathcal{V}}\) and \(\End{\mathcal{V}}\), respectively.
	As \(\End{\mathcal{V}}\) is well-linked to \(\Start{\mathcal{V}}\), the path of well-linked sets obtained satisfies the restrictions in the statement.

	In the last case, $D$ contains an \((x,q)\)-segmentation \((\mathcal{H}, \mathcal{V})\), which also means that \((\mathcal{H}, \mathcal{V})\) is an ordered web.
	Applying~\cref{state:ordered-web-to-well-linked-pows} to \((\mathcal{H}, \mathcal{V})\) yields a path of well-linked sets.
	Moreover, the beginning and the end of this path of well-linked sets coincide with \(\Start{\mathcal{H}}\) and \(\End{\mathcal{H}}\), respectively.
	As \(\End{\mathcal{H}}\) is well-linked to \(\Start{\mathcal{H}}\), the path of well-linked sets obtained satisfies the restrictions in the statement.
\end{proof}

To obtain our final main result, we first need a statement from \cite{reed1996packing}.

\begin{customlem}{}{statement:fence-plus-back-linkage-implies-cycles}[{\cite[Statement (3.1)]{reed1996packing}}]
		There are polynomials \(\bound{statement:fence-plus-back-linkage-implies-cycles}{p}{k}\) and \(\bound{statement:fence-plus-back-linkage-implies-cycles}{r}{k}\) such that the following holds.
	Let \(k, p, q, r \geq 1\) be integers.
	Let \((\mathcal{P}, \mathcal{Q})\) be a \((p, q)\)-fence in a digraph \(D\) where \(q \geq \bound{statement:fence-plus-back-linkage-implies-cycles}{q}{k}\) and \(p \geq r\), and let \(\mathcal{R}\) be an \(\End{\mathcal{P}}\)-\(\Start{\mathcal{P}}\) linkage of order \(r \geq \bound{statement:fence-plus-back-linkage-implies-cycles}{r}{k}\).
	Then, \(D\) contains \(k\) pairwise vertex-disjoint cycles.
\end{customlem}

Together with our previous results,~\cref{statement:fence-plus-back-linkage-implies-cycles} allows us to handle the case where the digraph has large directed \treewidth.

\begin{customlem}{}{statement:high-dtw-implies-cycles}
		There is a function \(\bound{statement:high-dtw-implies-cycles}{t}{k} \in \PowerTower{8}{\Polynomial{43}{k}}\) such that the following holds.
	If a digraph \(D\) has \(\DTreewidth{D} \geq \bound{statement:high-dtw-implies-cycles}{t}{k}\), then \(D\) contains \(k\) pairwise vertex-disjoint cycles.
\end{customlem}

For the small directed \treewidth case, we need the following statement from~\cite{amiri2016erdos}.

\begin{customlem}{}{lem:akkw2}[{\cite[Lemma 4.2]{amiri2016erdos}}]
        Let $D$ be a digraph with $\dtw{D} \leq w$.
    For every strongly connected digraph $H$, the digraph $D$ either has $k$ disjoint copies of $H$ as a topological minor, or contains a set $T$ of at most $k \cdot (w+1)$ vertices such that $H$ is not a topological minor of $D - T$. 
\end{customlem}

We can now conclude with our final main result.

\begin{customthm}{}{statement:elementary-younger}
	There is a function \(\bound{statement:elementary-younger}{f}{k} \in \PowerTower{8}{\Polynomial{13}{k}}\) such that the following holds.
	Let \(D\) be a digraph.
	Then \(D\) has \(k\) pairwise vertex-disjoint cycles or there is some \(X \subseteq \V{D}\) of size at most \(\bound{statement:elementary-younger}{f}{k}\) such that \(D - X\) is acyclic.
\end{customthm}
\begin{proof}
	Let \(t_1 = \bound{statement:high-dtw-implies-cycles}{t}{k}\).
	Let \(H = \Ck{2}\).
	If \(\dtw{D} \leq t_1 - 1\), then, by~\cref{lem:akkw2}, \(D\) contains \(k\) vertex-disjoint copies of \(H\) as a topological minor or \(D\) contains a set \(X \subseteq \V{D}\) of size at most \(kt_1\) such that \(D - X\) does not contain \(H\) as a topological minor.
	By choice of \(H\), both cases imply the initial statement, and we are done.

	Otherwise, \(\dtw{D} \geq t_1\).
	By~\cref{statement:high-dtw-implies-cycles}, \(D\) contains \(k\) pairwise vertex-disjoint cycles.
\end{proof}

\section{Constructing splits and segmentations}
\label{section:constructing-web}

The construction of our paths of well-linked sets and paths of order-linked sets starts with \emph{splits} and \emph{segmentations} (see~\cref{def:split-edge}), which add more structure to webs by ensuring that one linkage of the web intersects the other in an ordered fashion.
We repeat below the definition of splits and segmentations by Kawarabayashi and Kreutzer~\cite{kawarabayashi2022directed}.

\begin{definition}[{\hspace{1sp}\cite[Definitions 5.6 and 5.7]{kawarabayashi2022directed}}]
	\label{def:split-edge}
    Let $\PPP$ and $\QQQ^\star$ be linkages and let $\QQQ\subseteq \QQQ^\star$ be a sublinkage of order $q$.
    Let $r\geq 0$.
	\begin{enamerate}{X}{xlast-item-split-edge}
    \item \label{def:split} 
        An \emph{$(r, q')$-split of $(\PPP, \QQQ)$ (with respect to $\QQQ^\star$)} is a pair $(\PPP', \QQQ')$ of linkages of order $r = |\PPP'|$ and $q'=|\QQQ'|$ with $\QQQ'\subseteq \QQQ$ such that there is a path $P \in \PPP$ and edges $e_1, \dots, e_{r-1}\in E(P)\setminus E(\QQQ^\star)$ such that $P = P_1e_1P_2\dots e_{r-1} P_r$ and $\PPP' \coloneqq (P_1, \dots, P_r)$ and every $Q\in \QQQ'$ can be divided into subpaths $Q_1, \dots, Q_r$ such that $Q = Q_1e'_1\dots e'_{r-1}Q_r$, for suitable edges $e'_1, \dots, e'_{r-1} \in E(Q)$, and $\emptyset \neq V(Q)\cap V(P_i) \subseteq V(Q_{r+1-i})$, for all $1\leq i \leq r$.
	\item \label{def:other-segmentation}
		A subset $\QQQ' \subseteq \QQQ$ of order $q'$ is called a $q'$-\emph{segmentation} of $P \in \mathcal{P}$ (with respect to $\QQQ^\star$) if there are edges $e_1, \dots, e_{q'-1}\in \A{P} \setminus \A{\QQQ^\star}$ with $P = P_1e_1 \dots P_{q'-1}e_{q'-1}P_{q'}$, for suitable subpaths $P_1, \dots, P_{q'}$, such that $\QQQ'$ can be ordered as $(Q_1, \dots, Q_{q'})$ and $V(Q_i)\cap V(P)\subseteq V(P_i)$.
	\item \label{def:segmentation}
        An \emph{$(r, q')$-segmentation} (with respect to \(\mathcal{Q}^\star\)) is a pair $(\PPP', \QQQ')$ where $\PPP'$ is a linkage of order $r$ and $\QQQ'$ is a linkage of order $q'$ such that $\QQQ'$ is a $q'$-segmentation (with respect to \(\mathcal{Q}^\star\)) of every path $P_i$ into segments $P^i_1e_1P^i_2\dots e_{q'-1}P^i_{q'}$.
    \item \label{def:ordered_segmentation}
        A segmentation $(\PPP', \QQQ')$ is \emph{ordered} if for all $P_i \in \PPP'$ the order $\Brace{Q_1,\dots,Q_{q'}}$ given by the $q'$-segmentation of $P_i$ is the same.
        We say that $(\PPP', \QQQ')$ is an (ordered) $(r, q')$-segmentation of $(\PPP, \QQQ)$ if $\QQQ' \subseteq \QQQ$ and every path in $\PPP'$ is a subpath of a path in $\PPP$.\label{xlast-item-split-edge}
  \end{enamerate}
  An $(r, q)$-split $(\PPP, \QQQ)$ or an $(r, q)$-segmentation $(\PPP, \QQQ)$ is well-linked if $\End{\mathcal{Q}}$ is well-linked to $\Start{\mathcal{Q}}$.
\end{definition}

One can obtain splits and segmentations from weakly-minimal webs using the following result.
We observe that \(\bound{lemma:y-split or x-segmentation}{q}{p,q,x,y,c} \in \PowerTower{2}{\Polynomial{4}{p,q,x,y,c}}\).
\begin{lemma}[{\cite[Lemma~5.13]{kawarabayashi2015directed}}]
	\label{lemma:y-split or x-segmentation}
	Let $p,q,q',r,s,c,x,y \geq 0$ be integers such that $p \geq x$ and $q' \geq \bound{lemma:y-split or x-segmentation}{q}{p,q,x,y,c} \coloneqq (pq(q + c))^{2^{(x - 1)y + 1}}$\boundDef{lemma:y-split or x-segmentation}{q}{p,q,x,y,c}.
	If $D$ contains a $(p, q')$-web $\mathcal{W} \coloneqq (\mathcal{P}, \mathcal{Q})$ where $\mathcal{P}$ is weakly $c$-minimal with respect to $\mathcal{Q}$, then $D$ contains one of the following:
	\begin{enamerate}{S}{item:y-split or x-segmentation:segmentation}
	\item \label{item:y-split or x-segmentation:split}
		a $(y,q)$-split $(\mathcal{P}', \mathcal{Q}')$ of $(\mathcal{P}, \mathcal{Q})$, or
	\item \label{item:y-split or x-segmentation:segmentation}
		an $(x,q)$-segmentation $(\mathcal{P}', \mathcal{Q}')$ of $(\mathcal{P}, \mathcal{Q})$.
	\end{enamerate} 
	Furthermore, if $\mathcal{W}$ is well-linked in $D$, then so is $(\mathcal{P}', \mathcal{Q}')$.
\end{lemma}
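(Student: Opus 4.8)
The plan is to prove the statement by induction on $x$, assembling the desired $(x,q)$-segmentation one horizontal path at a time and falling back to a split whenever this assembly breaks down. At every stage we carry along a sub-web $(\mathcal{P}_0,\mathcal{Q}_0)$ of $(\mathcal{P},\mathcal{Q})$ that is still weakly $c$-minimal; this property is preserved under the two operations we use on the web, namely deleting vertical paths and restricting paths to subpaths, because both only shrink the separating sets asked for in \cref{def:weak_minimality}. We also bookkeep how the number $|\mathcal{Q}_0|$ of available vertical paths decreases along the recursion.

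For the inductive step I would fix a horizontal path $P$ not yet used for the segmentation and study the \emph{traces} on $P$ of the vertical paths in $\mathcal{Q}_0$, where the trace of $Q$ is the subpath of $P$ between the first and the last vertex of $P$ lying on $Q$. Iterating up to $y$ times, I would at each round order the current vertical paths by the starting point of their trace and apply \cref{thm:erdos_szekeres} to the sequence of endpoints of their traces. This gives, in each round, either (i) a large subfamily whose traces are pairwise disjoint — in which case, after separating consecutive traces by private edges of $P$, we obtain a $(1,q_0)$-segmentation of $P$ alone, recurse on $\mathcal{P}_0\setminus\{P\}$ with every retained vertical path replaced by its subpath inside its trace, and merge the resulting $(x-1,q)$-segmentation with $P$ — or (ii) a large subfamily whose traces are nested, or form an overlapping chain, all of them containing a common subpath $W$ of $P$ that now admits one more private cut edge than in the previous round.

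If alternative (ii) occurs in $y$ consecutive rounds, then $W$ contains $y-1$ private cut edges splitting $P$ into pieces $P_1,\dots,P_y$, every remaining vertical path meets all of these pieces, and weak $c$-minimality bounds by $c-1$ the number of times any single vertical path can cross any one of these cuts. Pigeonholing over the boundedly many resulting crossing patterns and passing to a subfamily then makes every one of $q$ retained vertical paths visit $P_1,\dots,P_y$ in the order reverse to their order on $P$, which is exactly a $(y,q)$-split of $(\mathcal{P},\mathcal{Q})$. The furthermore part is immediate: $\mathcal{Q}'$ is always a sublinkage of $\mathcal{Q}$, so $\Start{\mathcal{Q}'}\subseteq\Start{\mathcal{Q}}$ and $\End{\mathcal{Q}'}\subseteq\End{\mathcal{Q}}$, and hence well-linkedness of $\End{\mathcal{Q}}$ to $\Start{\mathcal{Q}}$ in $D$ transfers to $(\mathcal{P}',\mathcal{Q}')$.

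The main obstacle is the split branch: converting ``many vertical paths through a common window'' into the rigid reverse-ordered zigzag required by \cref{def:split-edge}, which is precisely where weak $c$-minimality and the crossing-pattern pigeonhole do the work, and tracking the bound through all of this. Each use of \cref{thm:erdos_szekeres} costs a square root of $|\mathcal{Q}_0|$, that is, a factor $2$ in the exponent; each of the $x-1$ recursive levels performs up to $y$ of them, while the choice of horizontal path, the crossing-pattern pigeonhole, and the final pruning to exactly $q$ vertical paths contribute the remaining polynomial factors in $p$, $q$ and $c$. Multiplying these bounds out yields the stated threshold $q'\ge (pq(q+c))^{2^{(x-1)y+1}}$.
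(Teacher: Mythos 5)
The paper does not reprove this lemma; it imports it as Lemma~5.13 of Kawarabayashi and Kreutzer, as the citation in the statement's header indicates, so there is no in-paper proof to compare against. The in-paper analogue whose proof you can actually inspect is \cref{lemma:web-to-split-or-segmentation-at-end}, which proceeds by iterating the one-path primitive \cref{lemma:segmentation-or-2-split} (itself also cited), letting either a segmentation or a split grow with each round rather than running a direct Erd\H{o}s--Szekeres argument.

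Your overall shape --- induction on $x$, up to $y$ inner rounds before committing a new horizontal path, with the split as the fall-back branch --- matches that iteration and is plausibly the right skeleton. But two steps as written do not go through. First, applying \cref{thm:erdos_szekeres} to trace endpoints after sorting by trace starts does not produce pairwise disjoint traces in the monotone branch: increasing starts and increasing ends still permit arbitrary overlap, as with $[1,3]$ and $[2,4]$. Extracting a pairwise-disjoint subfamily from such a chain is a greedy interval argument with a different and generally worse loss, and Erd\H{o}s--Szekeres only buys you the nested (decreasing) branch. The cited proof sidesteps this entirely: the one-path lemma works with a single private cut edge $e$ of $P$ and counts the vertical paths crossing it. Second, the claim that weak $c$-minimality ``bounds by $c-1$ the number of times any single vertical path can cross any one of these cuts'' misstates \cref{def:weak_minimality}: what is bounded is the size of a $\V{P_1}$-$\V{P_2}$ separator after removing $e$, hence by Menger the number of \emph{pairwise vertex-disjoint} alternative routes around the cut, not per-path crossing multiplicities. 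Converting that bound into the rigid reverse-ordered structure a $(y,q)$-split demands --- each $Q$ decomposed into $Q_1\cdots Q_y$ with $Q_j$ meeting only $P_{y+1-j}$ --- is the heart of the lemma, and your ``crossing-pattern pigeonhole'' is left exactly where the real argument has to live.
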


With a simple pigeon-hole principle argument, it is possible to construct an ordered segmentation from a segmentation.
The proof provided below is based on some steps of the proof of Lemma 5.19 from~\cite{kawarabayashi2015directed}.
\begin{observation}
    \label{lemma:ordered x-segmentation}
    Let $(\mathcal{P}, \mathcal{Q})$ be a $(p, q)$-segmentation.
    If $p \geq \bound{lemma:ordered x-segmentation}{p}{k, q} \coloneqq (k - 1) q! + 1$\boundDef{lemma:ordered x-segmentation}{p}{k, q}, then there is $\mathcal{P}' \subseteq \mathcal{P}$ such that $(\mathcal{P}', \mathcal{Q})$ is an ordered $(k, q)$-segmentation.
\end{observation}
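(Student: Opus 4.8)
The plan is a direct pigeonhole argument over the $q!$ possible orderings that a $q$-segmentation can impose on $\QQQ$.

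First I would record the data that the hypothesis hands us. Since $(\PPP, \QQQ)$ is a $(p,q)$-segmentation, for every path $P_i \in \PPP$ the linkage $\QQQ$ is a $q$-segmentation of $P_i$ (with respect to the ambient $\QQQ^\star$). By \ref{def:other-segmentation}, this means there are segmenting edges $e^i_1, \dots, e^i_{q-1} \in \A{P_i} \setminus \A{\QQQ^\star}$ with a decomposition $P_i = P^i_1 e^i_1 \cdots e^i_{q-1} P^i_q$ and an ordering $(Q_{\pi_i(1)}, \dots, Q_{\pi_i(q)})$ of $\QQQ$ — that is, a permutation $\pi_i$ of $\{1, \dots, q\}$ — such that $V(Q_{\pi_i(j)}) \cap V(P_i) \subseteq V(P^i_j)$ for all $j$. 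If more than one such ordering is valid for a given $P_i$, I simply fix one. There are exactly $q!$ possible values of $\pi_i$.

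Second, since $|\PPP| = p \geq (k-1)q! + 1$, the pigeonhole principle produces a single permutation $\pi$ and a subset $\PPP' \subseteq \PPP$ with $|\PPP'| = k$ such that $\pi_i = \pi$ for every $P_i \in \PPP'$: otherwise each of the $q!$ classes would contain at most $k-1$ paths, giving $p \leq (k-1)q!$, a contradiction. Third, I would check that $(\PPP', \QQQ)$ is an ordered $(k,q)$-segmentation of $(\PPP, \QQQ)$: we have $\QQQ \subseteq \QQQ$ and every path of $\PPP'$ is a path of $\PPP$, and for each $P_i \in \PPP'$ the segmentation data $e^i_1, \dots, e^i_{q-1}$ and $P^i_1, \dots, P^i_q$ fixed above witness that $\QQQ$ is a $q$-segmentation of $P_i$; hence $(\PPP', \QQQ)$ is a $(k,q)$-segmentation of $(\PPP, \QQQ)$ in the sense of \ref{def:segmentation} and \ref{def:ordered_segmentation}. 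It is \emph{ordered} because, by construction, the common ordering of $\QQQ$ induced by the $q$-segmentation of $P_i$ is $(Q_{\pi(1)}, \dots, Q_{\pi(q)})$ for all $P_i \in \PPP'$.

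The argument is entirely elementary. The only step requiring any care — and it is minor — is the bookkeeping observation that, although the ordering witnessing a $q$-segmentation of a single path need not be unique, we may fix one such ordering per path before applying pigeonhole; the pigeonhole step then delivers exactly the common ordering across all of $\PPP'$ that the definition of an ordered segmentation asks for. I do not expect any substantial obstacle.
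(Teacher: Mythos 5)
Your proof is correct and is essentially the same pigeonhole argument the paper uses: fix, for each $P_i$, one ordering of $\QQQ$ witnessing the $q$-segmentation, observe there are at most $q!$ such orderings, and apply pigeonhole to extract $k$ paths sharing a common ordering. Your write-up simply spells out the bookkeeping (fixing a permutation per path, verifying the definition of ordered segmentation) that the paper leaves implicit.
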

\begin{proof}
	For each $P_i \in \mathcal{P}$ there is an ordering $\mathcal{Q}_i$ of $\mathcal{Q}$ witnessing that $\mathcal{Q}$ is a $q$-segmentation of $P_i$.
	There are at most $q!$ distinct orderings $\mathcal{Q}_i$.
	Hence, by the pigeon-hole principle, there is some $\mathcal{P}' \subseteq \mathcal{P}$ of size $k$ such that $\mathcal{Q}_i = \mathcal{Q}_j$ for all $P_i, P_j \in \mathcal{P}'$.
\end{proof}

Since the bounds for~\cref{lemma:y-split or x-segmentation} are already elementary, in the remainder of this section we focus on obtaining a web while ensuring all the functions that arise are elementary.

We thereby improve upon the results of~\cite{kawarabayashi2022directed}, who show that digraphs of large directed \treewidth contain a large \emph{well-linked web} or a large cylindrical grid.
The bounds they obtain in this step are, however, non-elementary.
In particular, their proof uses an \emph{iterated Ramsey} argument, and so the bounds obtained are a power tower whose height depends on $h, v$ and $k$ (where $h, v$ and $k$ are defined as in~\cref{thm:dtw_to_web}).
\begin{theorem}[{\cite[Theorem 4.2 + Lemma 3.6 + Lemma 4.10]{kawarabayashi2022directed}}] 
    \label{thm:dtw_to_web}
    Let $h,v,k \in \N$.
    There exists a function $\bound{thm:dtw_to_web}{t}{}: \N \times \N \times \N \to \N$ such that every digraph $D$ with $\dtw{D} \geq \bound{thm:dtw_to_web}{t}{h,v,k}$ contains a cylindrical grid of order $k$ as a butterfly minor or a $\Brace{h,v}$-web $\Brace{\mathcal{H},\mathcal{V}}$ where $\End{\mathcal{V}} \cup \Start{\mathcal{V}}$ is a well-linked set in $D$ and $\mathcal{H}$ is $\mathcal{V}$-minimal.
\end{theorem}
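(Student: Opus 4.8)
The plan is to obtain the web by descending through two intermediate obstructions — first a bramble of large order, then a large well-linked set — and to concentrate essentially all of the effort on the construction of the web itself.

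\textbf{Step 1: from directed tree-width to a well-linked set.} By \cref{state:dtw-to-bramble}, if $\dtw{D} \geq 2N$ then $D$ contains a bramble $\mathcal{B}$ of order $N$, where $N$ is chosen enormously large in terms of $h$, $v$, and $k$. From $\mathcal{B}$ I would extract a well-linked set $W \subseteq \V{D}$ with $\Abs{W} \in \Omega(N)$, in the sense of \cref{def:well-linked}, by the standard argument: fix a minimum cover $Z$ of $\mathcal{B}$, so $\Abs{Z} = N$; for any two disjoint equal-size $Z_1, Z_2 \subseteq Z$, the fact that $\mathcal{B}$ admits no cover of size $<N$ forces, via Menger's theorem (\cref{thm:menger}) together with an uncrossing argument on a minimum $Z_1$-$Z_2$ separator, a half-integral $Z_1$-$Z_2$-linkage of large order, which by \cref{lemma:half_integral_to_integral_linkage} yields an integral one of half that order; discarding a bounded fraction of $Z$ then leaves an honestly well-linked set $W$. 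This step loses only a polynomial factor.

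\textbf{Step 2: from a well-linked set to a well-linked web.} Split $W$ into two equal halves $W = W^{+} \cup W^{-}$. Since $W$ is well-linked there is a $W^{+}$-$W^{-}$-linkage $\mathcal{V}_0$ of order $\Abs{W}/2$; make it minimal and observe that $\Start{\mathcal{V}_0} \cup \End{\mathcal{V}_0} \subseteq W$, so whatever sublinkage $\mathcal{V} \subseteq \mathcal{V}_0$ we end up keeping automatically has a well-linked endpoint set. It remains to build a linkage $\mathcal{H}$ of order $h$ that crosses $v$ of the paths of $\mathcal{V}_0$ and is minimal with respect to $\mathcal{V}$ (\cref{def:H-minimal}). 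I would construct $\mathcal{H}$ one row at a time: having already built $i$ rows, use the well-linkedness of $W$ to re-route a fresh large linkage through the portion of $D$ not yet used, then apply a Ramsey / Erd\H{o}s--Szekeres argument (\cref{thm:erdos_szekeres}) to extract a large sublinkage whose intersection pattern with the $i$ previous rows and with $\mathcal{V}_0$ is monotone — it meets a controlled set of them in a controlled order — and recurse on the untouched remainder; each round adds one new crossing row while paying one exponential blow-up, so after $O(\max\{h, v\})$ rounds the desired $(h,v)$-web appears, and the resulting bound for the tree-width is a power tower whose height grows with $h$, $v$, and $k$. Whenever a round cannot be carried out while keeping the rows disjoint and respecting the monotone pattern, the conflicting paths together with the rows already built contain many nested ``back-and-forth'' detours; closing these up using the strong connectivity of the members of $\mathcal{B}$ exhibits a cylindrical grid of order $k$ as a butterfly minor (\cref{def:cylindrical-grid}), which is the first alternative of the statement. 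A final cleanup step — iteratively deleting edges of $\mathcal{H}$ not in $\mathcal{V}$ and re-routing inside $\ToDigraph{\mathcal{H} \cup \mathcal{V}}$, carrying the monotone pattern along, or reading off a grid if this fails — makes $\mathcal{H}$ genuinely $\mathcal{V}$-minimal and finishes the argument.

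\textbf{Main obstacle.} The whole difficulty sits in the row-by-row construction of Step 2: simultaneously ensuring that every new row crosses all of $\mathcal{V}_0$, that the rows stay pairwise disjoint, and that $\mathcal{H}$ can still be taken $\mathcal{V}$-minimal is exactly what forces the iterated Ramsey argument and hence the non-elementary bound. Dispensing with this iteration — via the temporal-digraph framework and cycles of well-linked sets developed later — is the point of the rest of the paper; here the statement is used only as a black box.
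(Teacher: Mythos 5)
The statement you set out to prove is not actually proved in this paper: it is quoted verbatim from \cite{kawarabayashi2022directed} (\textquotedblleft Theorem 4.2 + Lemma 3.6 + Lemma 4.10\textquotedblright) precisely so the authors can contrast its non-elementary bound with the elementary one achieved by their replacement, \cref{state:high-dtw-to-segmentation-or-split}. So there is no in-paper argument to compare against, only the paper's description of the original strategy and of the machinery the paper imports and modifies from \cite{kawarabayashi2022directed}.

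Against that machinery, your sketch is directionally right at both ends but has a real gap in the middle. The first move (bramble of large order via \cref{state:dtw-to-bramble}, then a large well-linked set via Menger) is correct in spirit, and so is the moral that an iterated Ramsey / Erd\H{o}s--Szekeres loop is what pays the tower. But the original proof does not build the horizontal linkage $\mathcal{H}$ \textquotedblleft one row at a time directly from the well-linked set.\textquotedblright{} It first passes to an $\ell$-linked \emph{path-system} (\cref{def:path-system}, obtained from the bramble via \cref{state:bramble-contains-linked-path-system}), then \emph{cleans} it so that the pairwise linkages $L_{i,j}$ avoid all other anchor paths $P_s$ — and that cleaning step is where the iterated Ramsey lives — and only afterwards converts the clean path-system into a web or a bidirected clique (compare \cref{state:clean-path-system-to-web-or-clique} and \cref{state:clean-and-disjoint-path-system-to-clique}). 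A direct row-by-row scheme has no mechanism to control how a freshly re-routed row interacts with \emph{all} previously built rows simultaneously; the whole point of the path-system formalism is to quarantine these conflicts into the pairwise $L_{i,j}$'s so Ramsey / pigeonhole can select a mutually consistent family. Your account of where the cylindrical grid comes from (\textquotedblleft nested back-and-forth detours closed up by strong connectivity of the bramble\textquotedblright) is also not how it happens: the grid alternative arises by packing a bidirected clique $\biK{2k}$ (\cref{state:clean-and-disjoint-path-system-to-clique}) and then observing that $\biK{2k}$ contains the order-$k$ cylindrical grid as a butterfly minor. Finally, note that the paper's own contribution at this point is to avoid the iterated Ramsey entirely: \cref{state:path-system-to-clean-path-system} cleans the path-system in a single polynomial step using the Lov\'asz Local Lemma (\cref{state:lovasz-local-lemma}), which is exactly why the replacement theorem \cref{state:high-dtw-to-segmentation-or-split} has an elementary bound.
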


To achieve an elementary bound, we utilise a result known as \emph{Lovász Local Lemma},
obtaining even a polynomial bound for a previously non-elementary step.
In total, however, the gap between directed \treewidth and the size of the ordered web we can guarantee is upper bounded by a super-polynomial function.
Recall that \(e = 2.71828\ldots\) is \emph{Euler's number}.

\begin{lemma}[Lovász Local Lemma \cite{EL1974,SPENCER197769}]
	\label{state:lovasz-local-lemma}
	Consider a set \(\mathcal{E}\) of events such that for each \(A \in \mathcal{E}\)
	\begin{itemize}
		\item \(\Pr{A} \leq p < 1\), and
		\item \(A\) is mutually independent of a set of all but at most \(d\) other events.
	\end{itemize}
	If \(ep(d + 1) < 1\), then with positive probability, none of the events in \(\mathcal{E}\) occur.
\end{lemma}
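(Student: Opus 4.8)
The plan is to derive this symmetric statement from the general (asymmetric) form of the Lovász Local Lemma, whose proof is a short induction on the size of a conditioning set. First I would establish the general form: \emph{if one can assign to every $A \in \mathcal{E}$ a real $x_A \in [0,1)$ such that $\Pr{A} \le x_A \prod_{B \in D_A}(1 - x_B)$, where $D_A$ is the set of (at most $d$) events on which $A$ actually depends, then $\Pr{\bigcap_{A \in \mathcal{E}}\overline{A}} \ge \prod_{A \in \mathcal{E}}(1 - x_A) > 0$.} Granting this, the lemma follows by setting $x_A \coloneqq \tfrac{1}{d+1}$ for all $A$ (the trivial case $d = 0$, in which $\mathcal{E}$ is mutually independent, being handled separately): from the elementary bound $\bigl(1 - \tfrac{1}{d+1}\bigr)^{d} \ge e^{-1}$, which is just $d\ln(1+\tfrac1d) \le 1$, together with the hypothesis $e\,p\,(d+1) < 1$, we obtain $x_A \prod_{B \in D_A}(1 - x_B) \ge \tfrac{1}{d+1}\bigl(1 - \tfrac{1}{d+1}\bigr)^d \ge \tfrac{1}{e(d+1)} > p \ge \Pr{A}$, so the general form applies.

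The heart of the general form is the claim that \emph{for every $\mathcal{S} \subseteq \mathcal{E}$ and every $A \in \mathcal{E}\setminus\mathcal{S}$ one has $\Pr{A \mid \bigcap_{B \in \mathcal{S}}\overline{B}} \le x_A$}, which I would prove by induction on $\Abs{\mathcal{S}}$, checking along the way (simultaneously) that every conditioning event that appears has positive probability, so that all conditional probabilities are well defined. The base case $\mathcal{S} = \emptyset$ is immediate. For the inductive step, split $\mathcal{S}$ into $\mathcal{S}_1 \coloneqq \mathcal{S} \cap D_A$ and $\mathcal{S}_2 \coloneqq \mathcal{S}\setminus D_A$ and write
\[
	\Pr{A \mid \bigcap_{B \in \mathcal{S}}\overline{B}}
	\;=\; \frac{\Pr{\, A \cap \bigcap_{B \in \mathcal{S}_1}\overline{B} \mid \bigcap_{C \in \mathcal{S}_2}\overline{C}\,}}{\Pr{\,\bigcap_{B \in \mathcal{S}_1}\overline{B} \mid \bigcap_{C \in \mathcal{S}_2}\overline{C}\,}}\,.
\]
The numerator is at most $\Pr{A \mid \bigcap_{C \in \mathcal{S}_2}\overline{C}} = \Pr{A} \le x_A \prod_{B \in D_A}(1 - x_B)$, since $A$ is independent of the events in $\mathcal{S}_2$. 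For the denominator, enumerate $\mathcal{S}_1 = \Set{B_1, \dots, B_r}$; the chain rule rewrites it as $\prod_{\ell=1}^{r}\bigl(1 - \Pr{B_\ell \mid \overline{B_1}\cap\dots\cap\overline{B_{\ell-1}}\cap\bigcap_{C \in \mathcal{S}_2}\overline{C}}\bigr)$, and each conditioning set there is a proper subset of $\mathcal{S}$ not containing $B_\ell$, so the induction hypothesis bounds every factor below by $1 - x_{B_\ell}$; since $\mathcal{S}_1 \subseteq D_A$ this gives denominator $\ge \prod_{B \in D_A}(1 - x_B)$. Dividing yields $\Pr{A \mid \bigcap_{B \in \mathcal{S}}\overline{B}} \le x_A$.

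Finally, enumerating $\mathcal{E} = \Set{A_1,\dots,A_m}$ in any order and applying the chain rule once more, $\Pr{\bigcap_{i=1}^m \overline{A_i}} = \prod_{i=1}^m\bigl(1 - \Pr{A_i \mid \overline{A_1}\cap\dots\cap\overline{A_{i-1}}}\bigr) \ge \prod_{i=1}^m(1 - x_{A_i}) > 0$ by the claim, which completes the general form and hence the lemma. I expect the only fiddly part to be the bookkeeping inside the inductive step: making sure that $A$ really is independent of the $\mathcal{S}_2$-conditioning (so the numerator collapses to $\Pr{A}$), that $\mathcal{S}_1 \subseteq D_A$ (so its product of the $(1 - x_B)$ dominates the full product over $D_A$), and that no conditioning event ever has probability $0$ along the induction; everything else reduces to the one-line estimate $\tfrac{1}{d+1}\bigl(1-\tfrac{1}{d+1}\bigr)^d \ge \tfrac{1}{e(d+1)}$.
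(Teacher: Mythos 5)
The paper does not prove this statement; it is cited directly from the literature (Erd\H{o}s--Lov\'asz 1974 and Spencer 1977), so there is no paper-internal argument to compare against. Your outline is the standard proof of the symmetric Lov\'asz Local Lemma: derive it from the general (asymmetric) form with $x_A \coloneqq \tfrac{1}{d+1}$ using $\bigl(1 - \tfrac{1}{d+1}\bigr)^d \ge e^{-1}$, and prove the general form by the usual induction on $\Abs{\mathcal{S}}$ after splitting the conditioning set into $\mathcal{S}\cap D_A$ and $\mathcal{S}\setminus D_A$. This is correct; the three points you flag as requiring care (mutual independence collapsing the numerator, $\mathcal{S}_1 \subseteq D_A$ dominating the product, and positivity of all conditioning events along the induction) are indeed the only places where care is needed, and your handling of each is sound.
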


While~\cref{state:lovasz-local-lemma} above is not constructive, \cite{MoserT10} provided a randomised algorithm and \cite{CGH2013deterministic} provided a deterministic algorithm for finding an assignment of the random variables which avoids all events $\mathcal{E}$.

The proof of~\cite{kawarabayashi2022directed} for obtaining a web works by starting with a \emph{bramble} of high order and then showing that the existence of such a bramble implies the existence of an object called a \emph{\pathsystem}.
We repeat the definition of \pathsystem below (see~\cref{fig:clean-path-system} for an illustration).

\begin{definition}[\pathsystem \cite{kawarabayashi2022directed}]
	\label{def:path-system}
	Let $D$ be a digraph and let $\ell, p \geq 1$.
	An \emph{$\ell$-linked \pathsystem of order $p$} is a sequence $\SSS \coloneqq (\PPP, \LLL, \AAA)$, where
	\begin{itemize}
		\item  $\AAA \coloneqq \big(A_i^{\variablestyle{in}}, A_i^{\variablestyle{out}}\big)_{1\leq i\leq p}$ such that $|A_i^{\variablestyle{in}}| = |A_i^{\variablestyle{out}}| = \ell$ and $A \coloneqq \bigcup_{1\leq i \leq p} A_i^{\variablestyle{in}} \cup A_i^{\variablestyle{out}} \subseteq V(D)$ is a well-linked set of order $2 \ell p$, for all $1 \leq i \leq p$,
		\item $\PPP \coloneqq (P_1, \dots, P_p)$ is a sequence of pairwise vertex-disjoint paths and for all $1\leq i \leq p$, $A_i^{\variablestyle{in}}, A_i^{\variablestyle{out}} \subseteq V(P_i)$ and all $v\in A^{\variablestyle{in}}_i$ occur on $P_i$ before any $v'\in A_i^{\variablestyle{out}}$ and the first vertex of $P_i$ is in $A_i^{\variablestyle{in}}$ and the last vertex of $P_i$ is in $A_i^{\variablestyle{out}}$ and
		\item $\LLL \coloneqq (L_{i,j})_{1\leq i\not= j \leq p}$ is a sequence of linkages such that for all $1\leq i\not=j\leq p$, $L_{i,j}$ is a linkage of order $\ell$ from $A_i^{\variablestyle{out}}$ to $A_j^{\variablestyle{in}}$.
	\end{itemize}
	The \pathsystem $\SSS$ is \emph{clean} if $Q \cap P_s = \emptyset$ holds for all pairwise distinct $i,j,s \in \Set{1, 2, \ldots, p}$ and all $Q \in L_{i,j}$.
\end{definition}

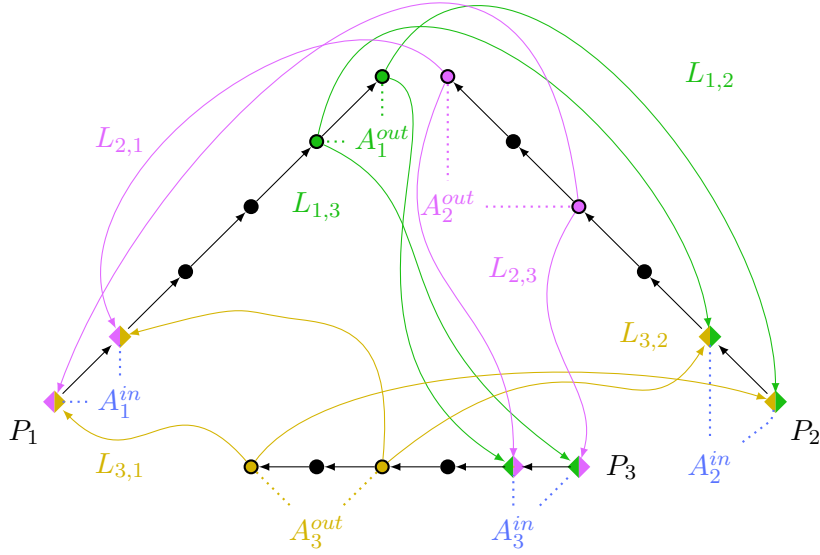
\begin{figure}[!ht]
	\centering
	\vspace{-3cm}	
	\begin{tikzpicture}[scale=0.75]
		\node[rectangle]
	(n0) at (1.15, 2.3) [align = left]{\color{blue}{$A_1^{\variablestyle{in}}$}};
\node[rectangle]
	(n11) at (5.75, 6.9) [align = left]{\color{green}{$A_1^{\variablestyle{out}}$}};
\node[rectangle]
	(n14) at (6.9, 5.75) [align = left]{\color{purple}{$A_2^{\variablestyle{out}}$}};
\twocolouredVertex{n15}{0,2.3}{purple}{yellow}
\node (n15-label) at ($(n15)+(225:0.75)$) {\(P_1\)};
\twocolouredVertex{n16}{1.15,3.45}{purple}{yellow}
\node[vertex, fill = black]
	(n23) at (2.3, 4.6){};
\node[vertex,  fill = black]
	(n27) at (3.45, 5.75){};
\node[vertex,  fill = green]
	(n28) at (4.6, 6.9){};
\node[vertex,  fill = green]
	(n29) at (5.75, 8.05){};
\twocolouredVertex{n1}{12.65,2.3}{yellow}{green}
\node (n1-label) at ($(n1)+(-45:0.75)$) {\(P_2\)};
\twocolouredVertex{n2}{11.5,3.45}{yellow}{green}
\node[vertex,  fill = black]
	(n3) at (10.35, 4.6){};
\node[vertex,  fill = purple]
	(n4) at (9.2, 5.75){};
\node[vertex,  fill = black]
	(n5) at (8.05, 6.9){};
\node[vertex,  fill = purple]
	(n6) at (6.9, 8.05){};
\node[vertex,  fill = yellow]
	(n7) at (3.45, 1.15){};
\node[vertex,  fill = black]
	(n8) at (4.6, 1.15){};
\node[vertex,  fill = yellow]
	(n9) at (5.75, 1.15){};
\node[vertex,  fill = black]
	(n10) at (6.9, 1.15){};
\twocolouredVertex{n12}{8.05,1.15}{green}{purple}
\twocolouredVertex{n13}{9.2,1.15}{green}{purple}
\node (n13-label) at ($(n13)+(0:0.75)$) {$P_3$};
\node[]
	(n17) at (4.6, 5.75) [align = left]{\color{green}{$L_{1,3}$}};
\node[]
	(n18) at (11.5, 8.05) [align = left]{\color{green}{$L_{1,2}$}};
\node[]
	(n19) at (1.15, 1.15) [align = left]{\color{yellow}{$L_{3,1}$}};
\node[]
	(n20) at (10.35, 3.45) [align = left]{\color{yellow}{$L_{3,2}$}};
\node[]
	(n21) at (1.15, 6.9) [align = left]{\color{purple}{$L_{2,1}$}};
\node[]
	(n22) at (8.05, 4.6) [align = left]{\color{purple}{$L_{2,3}$}};
\node[rectangle,  fill = white]
	(n24) at (11.5, 1.15) [align = left]{\color{blue}{$A_2^{\variablestyle{in}}$}};
\node[rectangle,  fill = white]
	(n25) at (8.05, 0) [align = left]{\color{blue}{$A_3^{\variablestyle{in}}$}};
\node[rectangle,  fill = white]
	(n26) at (4.6, 0) [align = left]{\color{yellow}{$A_3^{\variablestyle{out}}$}};
\path[-latex,  draw = black]
	(n15) to (n16);
\path[-latex,  draw = black]
	(n16) to (n23);
\path[-latex,  draw = black]
	(n23) to (n27);
\path[-latex,  draw = black]
	(n27) to (n28);
\path[-latex,  draw = black]
	(n28) to (n29);
\path[-latex,  draw = black]
	(n13) to (n12);
\path[-latex,  draw = black]
	(n12) to (n10);
\path[-latex,  draw = black]
	(n10) to (n9);
\path[-latex,  draw = black]
	(n9) to (n8);
\path[-latex,  draw = black]
	(n8) to (n7);
\path[-latex,  draw = black]
	(n1) to (n2);
\path[-latex,  draw = black]
	(n2) to (n3);
\path[-latex,  draw = black]
	(n3) to (n4);
\path[-latex,  draw = black]
	(n4) to (n5);
\path[-latex,  draw = black]
	(n5) to (n6);
\path[-latex,  draw = green]
	(n28) .. controls (7.078, 5.763) and (5.423, 3.885) .. (n13);
\path[-latex,  draw = green]
	(n29) .. controls (7.255, 7.669) and (4.463, 3.39) .. (n12);
\path[-latex,  draw = yellow]
	(n9) .. controls (5.951, 3.334) and (5.252, 3.64) .. (4.319, 3.752) .. controls (3.139, 3.893) and (3.543, 4.105) .. (n16);
\path[-latex,  draw = yellow]
	(n7) .. controls (2.019, 2.828) and (1.484, 0.823) .. (n15);
\path[-latex,  draw = green]
	(n28) .. controls (5.632, 11.672) and (11.083, 6.733) .. (n2);
\path[-latex,  draw = green]
	(n29) .. controls (7.729, 11.8) and (12.643, 5.854) .. (n1);
\path[-latex,  draw = yellow]
	(n9) .. controls (9.316, 4.13) and (10.213, 1.374) .. (n2);
\path[-latex,  draw = yellow]
	(n7) .. controls (4.824, 3.05) and (10.1, 3.116) .. (n1);
\path[-latex,  draw = purple]
	(n4) .. controls (7.795, 3.562) and (9.65, 2.699) .. (n13);
\path[-latex,  draw = purple]
	(n6) .. controls (5.262, 4.291) and (8.068, 4.393) .. (n12);
\path[-latex,  draw = purple]
	(n6) .. controls (4.979, 10.148) and (-0.259, 6.25) .. (n16);
\path[-latex,  draw = purple]
	(n4) .. controls (8.294, 13.366) and (1.695, 7.08) .. (n15);
\path[ draw = blue, dotted, thick]
	(n2) to (n24);
\path[ draw = blue, dotted, thick]
	(n24) to (12.576, 1.972) to (n1);
\path[ draw = yellow, dotted, thick]
	(n7) to (n26);
\path[ draw = yellow, dotted, thick]
	(n9) to (n26);
\path[ draw = blue, dotted, thick]
	(n25) to (n12);
\path[ draw = blue, dotted, thick]
	(n25) to (n13);
\path[ draw = blue, dotted, thick]
	(n15) to (n0);
\path[ draw = blue, dotted, thick]
	(n0) to (n16);
\path[ draw = green, dotted, thick]
	(n11) to (n28);
\path[ draw = green, dotted, thick]
	(n11) to (n29);
\path[ draw = purple, dotted, thick]
	(n6) to (n14);
\path[ draw = purple, dotted, thick]
	(n14) to (n4);

	\end{tikzpicture}
	\caption{A clean \(2\)-linked \pathsystem of order 3.}
	\label{fig:clean-path-system}
\end{figure}

A \pathsystem can be obtained from a bramble using the following lemma.
We define \(\bound{state:bramble-contains-linked-path-system}{k}{\ell, p} = (4 \ell p)(2 \ell p + 1)\)\boundDef{state:bramble-contains-linked-path-system}{k}{\ell, p} and observe that \(\bound{state:bramble-contains-linked-path-system}{k}{\ell,p} \in \Oh( \ell^{2} p^{2})\).
\begin{lemma}[{\cite[Lemma 4.6]{kawarabayashi2022directed}}]
	\label{state:bramble-contains-linked-path-system}
	Let \(D\) be a digraph and let \(\ell, p \geq 1\).
	If \(D\) contains a bramble of order \(\bound{state:bramble-contains-linked-path-system}{k}{\ell, p}\), then \(D\) contains an \(\ell\)-linked \pathsystem \(\mathcal{S}\) of order \(p\).
\end{lemma}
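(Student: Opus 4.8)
The plan is to read the distinguished well-linked set $A$ of the path system straight off the bramble and then to assemble the paths $\mathcal{P}$ and the linkages $\mathcal{L}$ one piece at a time, routing every new piece through the (still enormous) part of the bramble that has not yet been touched. With this approach the bound $(4\ell p)(2\ell p+1)$ is naturally the product of the number $2\ell p=|A|$ of vertices that must be woven into the system and the amount $\approx 4\ell p$ of bramble order one wants to keep in reserve for a single routing step, the factor roughly $2$ reflecting the passage from congestion-$2$ routings to integral ones via \cref{lemma:half_integral_to_integral_linkage}.

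First I would invoke the standard fact that a bramble of order $w$ contains a well-linked set of order $\Omega(w)$; since the bramble hypothesised here has order far exceeding $2\ell p$, this produces a well-linked set $A$ in $D$ with $|A|=2\ell p$. As $2\ell p$ is negligible next to the bramble order, deleting $A$ leaves a sub-bramble $\mathcal{B}'$, all of whose (strongly connected, pairwise intersecting) members avoid $A$, still of order at least $(4\ell p)(2\ell p+1)-2\ell p$. Partitioning $A$ arbitrarily into $2p$ ordered blocks $A_1^{in},A_1^{out},\dots,A_p^{in},A_p^{out}$ of size $\ell$ each fixes the set $A$ and the pairs $(A_i^{in},A_i^{out})$ demanded by \cref{def:path-system}; it then only remains to produce the paths $P_i$ and the linkages $L_{i,j}$.

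To build $P_i$, write $A_i^{in}=(x_1^i,\dots,x_\ell^i)$ and $A_i^{out}=(y_1^i,\dots,y_\ell^i)$ and realise $P_i$ as a concatenation of $2\ell-1$ subpaths threading $x_1^i,\dots,x_\ell^i,y_1^i,\dots,y_\ell^i$ in this order, each subpath routed through a fresh piece of $\mathcal{B}'$ — two of whose members, being strongly connected and intersecting, join the two designated endpoints — and therefore disjoint from everything built before. For the linkages, the key observation is that the cleanness requirement in \cref{def:path-system} asks only for a linkage of order $\ell$ from $A_i^{out}$ to $A_j^{in}$ that avoids the paths $P_s$ with $s\notin\{i,j\}$, and imposes nothing on the $L_{i,j}$ relative to one another or to $P_i,P_j$; so these $p(p-1)$ linkages may be found independently. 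Since $\bigcup_{s\notin\{i,j\}}P_s$ is a small vertex set, the part of $\mathcal{B}'$ missing it is still a large bramble in $D-\bigcup_{s\notin\{i,j\}}P_s$, hence there is no small separator between $A_i^{out}$ and $A_j^{in}$ there; \cref{thm:menger} then yields the $\ell$-linkage $L_{i,j}$ directly (with \cref{lemma:half_integral_to_integral_linkage} available to clean up any congestion that arises from routing through shared bramble material).

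The main obstacle, and the reason the bound is quadratic rather than linear in $\ell p$, is the global disjointness accounting: the $P_i$ must be pairwise disjoint and each $L_{i,j}$ must avoid every other $P_s$, so each routing step is carried out only after deleting all previously built material, and one must guarantee that $\mathcal{B}'$ — whose order drops by one per deleted vertex — never shrinks below the $\approx 4\ell p$ needed to certify the next routing. This forces one to keep the intermediate objects short (in particular each $P_i$ must use only boundedly many vertices, so that deleting $p-2$ of them still leaves $\mathcal{B}'$ large) and to charge the cost of each of the $\le 2\ell p$ incorporation steps against the $\approx 4\ell p$-per-step budget; verifying that this accounting closes under the precise value $(4\ell p)(2\ell p+1)$ is the technical heart of the argument. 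A secondary but genuine subtlety is that $P_i$ has to be an honest path visiting its designated vertices in the prescribed linear order, so one cannot take a single walk through one bramble element — which is exactly why the construction of $P_i$ must proceed segment by segment.
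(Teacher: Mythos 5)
Your proposal has two genuine gaps that stop the construction before it gets off the ground. First, the routing from $A$ into $\mathcal{B}'$ is nowhere established. You remove $A$ from the bramble so every member of $\mathcal{B}'$ avoids $A$; but the subpath of $P_i$ from $x_j^i$ to $x_{j+1}^i$ that you want to build has both endpoints in $A$ and is supposed to run through ``two members of $\mathcal{B}'$ that join the designated endpoints.'' Those members do not contain $x_j^i$ or $x_{j+1}^i$, so they cannot join them, and nothing you have said supplies a directed path from $x_j^i$ into $\mathcal{B}'$ and from $\mathcal{B}'$ back to $x_{j+1}^i$. Well-linkedness of $A$ and the existence of a large bramble $\mathcal{B}'$ avoiding $A$ are, by themselves, compatible with $A$ and $\bigcup\mathcal{B}'$ lying in distinct strong components of $D$. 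Second, partitioning $A$ ``arbitrarily'' into $A_i^{in},A_i^{out}$ up front imposes an ordering constraint on $P_i$ that an arbitrary partition has no reason to admit: \cref{def:path-system} requires every vertex of $A_i^{in}$ to precede every vertex of $A_i^{out}$ along $P_i$, and well-linkedness of $A$ only yields disjoint linkages between equal-sized subsets of $A$ --- it does not give a single directed path visiting $2\ell$ prescribed vertices of $A$ with one prescribed block entirely before the other.

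Both gaps point to the same fix: do not fix the partition first and then hope to realise it. One must instead produce the well-linked set $A$ in a form that is already interwoven with the bramble (for instance from a minimum cover, so that the bramble elements passing through each vertex of $A$ provide the access that the directed routing needs), build the $p$ pairwise disjoint paths, arranging that each meets at least $2\ell$ vertices of $A$, and only afterwards read off $A_i^{in}$ and $A_i^{out}$ as the first and last $\ell$ vertices of $A$ along $P_i$ in the order in which $P_i$ visits them, so that the ordering requirement holds by construction rather than by hope. One smaller remark: the lemma asks only for an $\ell$-linked path system, not a clean one, so your paragraph about routing each $L_{i,j}$ around $\bigcup_{s\notin\{i,j\}}P_s$ is aimed at a stronger statement than required --- once $A$ is well-linked the linkages $L_{i,j}$ of order $\ell$ exist for free, and cleanness is handled separately in \cref{state:path-system-to-clean-path-system}.
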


\subsection{Semi-webs}

Our proof uses an intermediate object called a \emph{semi-web}.
Unlike a web, we do not require from a semi-web \((\mathcal{P}, \mathcal{Q})\) that the paths in \(\mathcal{P}\) and \(\mathcal{Q}\) pairwise intersect each other.
Instead, a \emph{semi-web} has a degree controlling the intersections between \(\mathcal{P}\) and \(\mathcal{Q}\).

A web can be obtained from a semi-web using~\cref{state:semi-web-to-web} below.
Towards this end, we use~\cref{state:preserve-minimality-deletion,state:preserve-minimality-disjoint}, which summarise some basic properties of a linkage \(\mathcal{L}\) which is minimal with respect to another linkage \(\mathcal{P}\).

\begin{observation}[{\cite[Lemma 2.14]{kawarabayashi2022directed}}]
	\label{state:preserve-minimality-deletion}
	Let \(D\) be a digraph.
	Let \(\mathcal{P}, \mathcal{L}\) be linkages such that \(\mathcal{L}\) is minimal with respect to \(\mathcal{P}\).
	Then \(\mathcal{L}\) is minimal with respect to \(\mathcal{P}'\) for every \(\mathcal{P}' \subseteq \mathcal{P}\).
\end{observation}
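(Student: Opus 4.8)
The plan is to replace \cref{def:H-minimal} by a more workable ``extremal'' formulation of minimality and then exploit the fact that the paths inside a linkage are pairwise vertex‑disjoint.

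\emph{Step 1 (reformulation).} I would first record that a linkage \(\mathcal{L}\) is minimal with respect to a subgraph \(H\) of \(D\) if and only if \emph{every} \(\Start{\mathcal{L}}\)--\(\End{\mathcal{L}}\)-linkage \(\mathcal{M}\) of order \(\Abs{\mathcal{L}}\) with \(\mathcal{M}\subseteq\ToDigraph{\mathcal{L}\cup H}\) uses every edge of \(\E{\mathcal{L}}\setminus\E{H}\). Indeed, if some \(e\in\E{\mathcal{L}}\setminus\E{H}\) were missing from such an \(\mathcal{M}\), then \(\mathcal{M}\) would be a linkage of order \(\Abs{\mathcal{L}}\) in \((\mathcal{L}\cup H)-e\), contradicting minimality; conversely, a linkage witnessing non‑minimality has exactly this form. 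Applying this to \(H=\ToDigraph{\mathcal{P}'}\), the proposition reduces to: every \(\Start{\mathcal{L}}\)--\(\End{\mathcal{L}}\)-linkage \(\mathcal{M}\) of order \(\Abs{\mathcal{L}}\) contained in \(\ToDigraph{\mathcal{L}\cup\mathcal{P}'}\) uses all of \(\E{\mathcal{L}}\setminus\E{\mathcal{P}'}\).

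\emph{Step 2 (the easy part, and the obstacle).} Since \(\mathcal{P}'\subseteq\mathcal{P}\), such an \(\mathcal{M}\) also lies in \(\ToDigraph{\mathcal{L}\cup\mathcal{P}}\), so the reformulation applied to \(\mathcal{P}\) already gives \(\E{\mathcal{L}}\setminus\E{\mathcal{P}}\subseteq\E{\mathcal{M}}\). The edges not covered this way are exactly those in \((\E{\mathcal{L}}\cap\E{\mathcal{P}})\setminus\E{\mathcal{P}'}\), i.e.\ edges of \(\mathcal{L}\) lying on some path \(P\in\mathcal{P}\) with \(P\notin\mathcal{P}'\). This is the only nontrivial point, and I expect it to be the main obstacle: because \(\mathcal{L}\) and \(\mathcal{P}\) may share edges, discarding paths of \(\mathcal{P}\) genuinely removes edges from \(\ToDigraph{\mathcal{L}\cup\mathcal{P}}\) that were not ``deletable'' for \(\mathcal{P}\)-minimality, so plain subgraph monotonicity does not suffice.

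\emph{Step 3 (forced segments).} Fix such an edge \(e=(u,v)\), let \(L\in\mathcal{L}\) and \(P\in\mathcal{P}\) be the paths through it (so \(P\notin\mathcal{P}'\)), and let \(L''\) be the maximal subpath of \(L\) with \(e\in\E{L''}\subseteq\E{P}\). Since the paths of the linkage \(\mathcal{P}\) are pairwise vertex‑disjoint, no vertex of \(P\), hence no vertex of \(L''\), lies on a path of \(\mathcal{P}'\); therefore in \(G\coloneqq\ToDigraph{\mathcal{L}\cup\mathcal{P}'}\) every vertex of \(L''\) is incident only to its \(L\)-edges, so each internal vertex of \(L''\) has a unique in‑edge and a unique out‑edge in \(G\) (its neighbours on \(L''\)). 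The edge of \(L\) immediately before \(L''\), if it exists, lies neither on \(P\) (by maximality of \(L''\)) nor on any other path of \(\mathcal{P}\) (its head is a vertex of \(P\)), hence it lies in \(\E{\mathcal{L}}\setminus\E{\mathcal{P}}\subseteq\E{\mathcal{M}}\) and carries a path of \(\mathcal{M}\) to the first vertex of \(L''\); and if \(L''\) starts at \(\Start{L}\in\Start{\mathcal{L}}\) instead, a path of \(\mathcal{M}\) starts there. Either way some path of \(\mathcal{M}\) reaches the first vertex of \(L''\), which is an internal vertex of \(L\) and hence not a terminal endpoint of \(\mathcal{M}\); the uniqueness of the in‑/out‑edges along \(L''\) then forces this path to traverse \(L''\) in full, in particular to use \(e\). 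Thus \(e\in\E{\mathcal{M}}\), and together with Step 2 this yields \(\E{\mathcal{L}}\setminus\E{\mathcal{P}'}\subseteq\E{\mathcal{M}}\), so by Step 1 \(\mathcal{L}\) is minimal with respect to \(\mathcal{P}'\). The remaining work is routine bookkeeping: checking that the first and internal vertices of \(L''\) lie outside \(\Start{\mathcal{L}}\cup\End{\mathcal{L}}\) where needed, and handling the degenerate cases in which \(L''\) equals \(L\) or is a prefix or a suffix of \(L\).
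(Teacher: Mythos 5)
The paper states this as an Observation with a citation and gives no proof of its own, so there is no in-paper argument to compare against. Your proof is correct and supplies the argument the paper leaves implicit. Step~1 is the standard contrapositive reading of the definition (with $H=\ToDigraph{\mathcal{P}'}$); Step~2 dispatches all of $\E{\mathcal{L}}\setminus\E{\mathcal{P}}$ by subgraph monotonicity; and Step~3 correctly isolates the single fact that actually matters, namely that the paths of $\mathcal{P}$ are pairwise vertex-disjoint, so every vertex of the maximal segment $L''$ lies on $P$ alone among $\mathcal{P}$ and on $L$ alone among $\mathcal{L}$, hence in $\ToDigraph{\mathcal{L}\cup\mathcal{P}'}$ these vertices have no incident edges other than the $L$-edges. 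Your argument that some path of $\mathcal{M}$ must reach $\Start{L''}$ ready to continue --- either because $\Start{L''}=\Start{L}\in\Start{\mathcal{M}}$, or because the $L$-edge entering $\Start{L''}$ belongs to $\E{\mathcal{L}}\setminus\E{\mathcal{P}}\subseteq\E{\mathcal{M}}$ by Step~2, and $\Start{L''}$ is an internal vertex of $L$ hence not in $\End{\mathcal{M}}$ --- then forces that path through the whole of $L''$ and in particular through $e$. The degenerate cases you defer ($L''$ a prefix, suffix, or all of $L$) genuinely are routine and covered by the same reasoning.
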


\begin{observation}
	\label{state:preserve-minimality-disjoint}
	Let \(\mathcal{P}\), \(\mathcal{Q}\) be two linkages such that \(\mathcal{P}\) is minimal with respect to \(\mathcal{Q}\).
	Let \(P \in \mathcal{P}\).
	If \(P\) does not intersect any path in \(\mathcal{Q}\), then \(\mathcal{P} \setminus \{P\}\) is minimal with respect to \(\mathcal{Q}\).
\end{observation}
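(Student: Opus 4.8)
The plan is to apply \cref{def:H-minimal} directly and argue by contradiction. Write $\mathcal{P}' \coloneqq \mathcal{P} \setminus \{P\}$ and $k \coloneqq \Abs{\mathcal{P}}$, so $\Abs{\mathcal{P}'} = k - 1$; for $k \leq 1$ the linkage $\mathcal{P}'$ has no edges and the claim is vacuous, so assume $k \geq 2$. Suppose $\mathcal{P}'$ is \emph{not} minimal with respect to $\mathcal{Q}$. Then there is an edge $e \in \A{\mathcal{P}'} \setminus \A{\mathcal{Q}}$ together with a $\Start{\mathcal{P}'}$-$\End{\mathcal{P}'}$-linkage $\mathcal{R}$ of order $k - 1$ in $\ToDigraph{\mathcal{P}' \cup \mathcal{Q}} - e$. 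The goal is to augment $\mathcal{R}$ by the removed path $P$ and obtain a linkage that violates the minimality of $\mathcal{P}$ for the very same edge $e$.

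First I would collect the relevant disjointness facts about $P$. Since $\mathcal{P}$ is a linkage and $P \in \mathcal{P}$, we have $\V{P} \cap \V{\mathcal{P}'} = \emptyset$, and by hypothesis $\V{P} \cap \V{\mathcal{Q}} = \emptyset$; as $\mathcal{R}$ is a subgraph of $\ToDigraph{\mathcal{P}' \cup \mathcal{Q}}$ this yields $\V{P} \cap \V{\mathcal{R}} = \emptyset$, so $\mathcal{R} \cup \{P\}$ is again a linkage, now of order $k$. Next, $e$ lies on a path of $\mathcal{P}'$, which is vertex-disjoint from $P$, hence $e \notin \A{P}$; therefore $P$, and with it all of $\mathcal{R} \cup \{P\}$, is a subgraph of $\ToDigraph{\mathcal{P} \cup \mathcal{Q}} - e$. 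Finally, every path of $\mathcal{R}$ starts in $\Start{\mathcal{P}'} \subseteq \Start{\mathcal{P}}$ and ends in $\End{\mathcal{P}'} \subseteq \End{\mathcal{P}}$, while $P$ starts in $\Start{\mathcal{P}}$ and ends in $\End{\mathcal{P}}$, so $\mathcal{R} \cup \{P\}$ is a $\Start{\mathcal{P}}$-$\End{\mathcal{P}}$-linkage of order $k$ in $\ToDigraph{\mathcal{P} \cup \mathcal{Q}} - e$.

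Since $e \in \A{\mathcal{P}'} \setminus \A{\mathcal{Q}} \subseteq \A{\mathcal{P}} \setminus \A{\mathcal{Q}}$, the linkage $\mathcal{R} \cup \{P\}$ contradicts the $\mathcal{Q}$-minimality of $\mathcal{P}$, and hence $\mathcal{P}'$ is minimal with respect to $\mathcal{Q}$. I do not expect a genuine obstacle here: the argument is a one-line augmentation, and the only points worth a moment's care are that a witnessing edge $e$ for $\mathcal{P}'$ is still an admissible witnessing edge for $\mathcal{P}$ (because $\mathcal{P}' \subseteq \mathcal{P}$ and $e \notin \A{\mathcal{Q}}$), and that re-inserting $P$ does not conflict with having deleted $e$ (because $e$ is not an edge of $P$).
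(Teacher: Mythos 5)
The paper states this observation without proof, so there is no reference argument to compare against; on its own terms, your proof is correct and complete. The contradiction/augmentation argument is the natural one: you correctly isolate the two points that actually require justification — that a witnessing edge $e$ for $\mathcal{P}'$ remains an admissible witness for $\mathcal{P}$ (since $e \in \A{\mathcal{P}'} \setminus \A{\mathcal{Q}} \subseteq \A{\mathcal{P}} \setminus \A{\mathcal{Q}}$), and that adding $P$ back preserves the disjointness of the linkage and avoids $e$ (since $P$ is disjoint from $\mathcal{P}'$ by the linkage property and from $\mathcal{Q}$ by hypothesis, hence disjoint from $\mathcal{R} \subseteq \ToDigraph{\mathcal{P}' \cup \mathcal{Q}}$, and $e \notin \A{P}$). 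This is exactly the level of detail an "Observation" warrants.
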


We adapt the following statement from~\cite{kawarabayashi2022directed} to our notation, while also fixing some minor mistakes in their proof.
We first define
\begin{align*}
			\boundDefAlign{state:semi-web-to-web}{q}{p', q, k}
	\bound{state:semi-web-to-web}{q}{p', q, k} & = q k (p')^k.
\end{align*}

\begin{lemma}[{\cite[Lemma 4.10]{kawarabayashi2022directed}}]
	\label{state:semi-web-to-web}
	Let \(\Brace{\mathcal{P}', \mathcal{Q}'}\) be a \(\Brace{p',q'}\)-semi-web of degree \(k\) in a digraph \(D\) such that \(\mathcal{P}'\) is minimal with respect to \(\mathcal{Q}'\).
	If \(q' \geq \bound{state:semi-web-to-web}{q}{p', q, k}\), then \(D\) contains a well-linked \(\Brace{p_1,q}\)-web \(\Brace{\mathcal{P}, \mathcal{Q}}\) where
	\(\mathcal{P}\) is minimal with respect to \(\mathcal{Q}\),
	\(\mathcal{P} \subseteq \mathcal{P}'\),
	\(\mathcal{Q} \subseteq \mathcal{Q}'\) and
	\(p_1 \geq k\).
\end{lemma}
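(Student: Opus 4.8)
The plan is to build the web by a pigeonhole argument on small ``cores'' of the rows of the semi-web, and then to thin both linkages carefully so that the column linkage stays minimal with respect to the rows that survive. First I would dispose of the well-linkedness in the conclusion: the web I produce will satisfy \(\mathcal{Q} \subseteq \mathcal{Q}'\) as a sub-linkage, hence \(\Start{\mathcal{Q}} \subseteq \Start{\mathcal{Q}'}\) and \(\End{\mathcal{Q}} \subseteq \End{\mathcal{Q}'}\), so \(\End{\mathcal{Q}}\) is well-linked to \(\Start{\mathcal{Q}}\) as soon as \(\End{\mathcal{Q}'}\) is well-linked to \(\Start{\mathcal{Q}'}\). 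From there on I only need to produce the plain web together with the minimality.

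For the pigeonhole step: for each \(Q \in \mathcal{Q}'\) the degree condition supplies at least \(k\) paths of \(\mathcal{P}'\) that \(Q\) intersects, so fix a set \(c(Q) \subseteq \mathcal{P}'\) consisting of exactly \(k\) of them. There are at most \(\binom{p'}{k} \le (p')^k\) possible values of \(c(Q)\), so since \(q' \ge q k (p')^k\) the pigeonhole principle yields a \(k\)-element set \(\mathcal{P}^* \subseteq \mathcal{P}'\) and a sub-linkage \(\mathcal{Q}_1 \subseteq \mathcal{Q}'\) with \(\Abs{\mathcal{Q}_1} \ge q'/(p')^k \ge q k\) such that \(c(Q) = \mathcal{P}^*\) for all \(Q \in \mathcal{Q}_1\). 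By the choice of the cores every path of \(\mathcal{Q}_1\) meets every path of \(\mathcal{P}^*\), so \((\mathcal{P}^*, \mathcal{Q}_1)\) is a \((k, \Abs{\mathcal{Q}_1})\)-web, and the same holds for any sub-linkage pair of it.

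The remaining, and genuinely delicate, part is making the column linkage minimal. By \cref{state:preserve-minimality-deletion}, \(\mathcal{P}'\) being minimal with respect to \(\mathcal{Q}'\) implies it is minimal with respect to every sub-linkage of \(\mathcal{Q}'\), in particular with respect to \(\mathcal{Q}_1\) and all of its sub-linkages. The obstacle is that shrinking the columns from \(\mathcal{P}'\) down to \(\mathcal{P}^*\) need not preserve minimality: \cref{state:preserve-minimality-disjoint} lets us discard a column only when it is disjoint from the current row linkage, whereas a column in \(\mathcal{P}' \setminus \mathcal{P}^*\) may well meet some row of \(\mathcal{Q}_1\). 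The remedy is to thin the rows as well. I would pass to \(\mathcal{Q}^* := \Set{ Q \in \mathcal{Q}_1 : Q \text{ is disjoint from every path of } \mathcal{P}' \setminus \mathcal{P}^* }\); for such rows the paths of \(\mathcal{P}' \setminus \mathcal{P}^*\) are disjoint from \(\ToDigraph{\mathcal{Q}^*}\), so iterating \cref{state:preserve-minimality-disjoint} (after \cref{state:preserve-minimality-deletion}) shows that \(\mathcal{P}^*\) is minimal with respect to \(\mathcal{Q}^*\), while \((\mathcal{P}^*, \mathcal{Q}^*)\) is still a web. If \(\mathcal{Q}^*\) retains at least \(q\) rows we are finished: discard rows down to exactly \(q\) (minimality is preserved under this by \cref{state:preserve-minimality-deletion}), take \(\mathcal{P} := \mathcal{P}^*\) and \(p_1 := k \ge k\). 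If instead too many rows were lost, then more than \(\Abs{\mathcal{Q}_1} - q\) rows of \(\mathcal{Q}_1\) each meet some path of \(\mathcal{P}' \setminus \mathcal{P}^*\); splitting these rows by a common witness path and enlarging \(\mathcal{P}^*\) by it gives a strictly larger common core, and the extra factor \(k\) in the bound \(q k (p')^k\) is exactly the slack that lets this bookkeeping end with at least \(q\) rows whose relevant intersection pattern is confined to the (enlarged) core. The hard part of the whole argument is precisely this last point — reconciling the web property, which wants the kept columns to hit \emph{all} the kept rows, with minimality, which wants the discarded columns to miss \emph{all} the kept rows — and it is the only step where the bare pigeonhole bound does not already suffice.
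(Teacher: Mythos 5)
The proposal correctly identifies the central tension (the web property wants kept columns to hit \emph{all} kept rows; minimality preservation wants discarded columns to miss \emph{all} kept rows), but the chosen pigeonhole target does not resolve it, and the sketched iterative repair does not fit the stated budget. The paper pigeonholes on the \emph{full} set \(f(Q) = \Set{P \in \mathcal{P}' \mid \V{P}\cap\V{Q}=\emptyset}\) of columns that \(Q\) misses, rather than on a \(k\)-element core \(c(Q)\) of columns that \(Q\) hits. With a sublinkage \(\mathcal{Q}\) having a \emph{common} missed set \(X\), setting \(\mathcal{P} := \mathcal{P}'\setminus X\) gives both required properties at once: every \(Q\in\mathcal{Q}\) hits every \(P\in\mathcal{P}\) (web), and every path in \(X\) is vertex-disjoint from all of \(\mathcal{Q}\), so \cref{state:preserve-minimality-deletion} followed by iterated \cref{state:preserve-minimality-disjoint} yields minimality. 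No iteration over cores is needed, and \(p_1 = \Abs{\mathcal{P}} = p'-\Abs{X} \geq k\) follows directly from the degree hypothesis.

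Your choice of \(c(Q)\) records strictly less information: it tells you \(k\) columns that \(Q\) hits, but nothing about which other columns \(Q\) hits or misses. Consequently the rows of \(\mathcal{Q}_1\) may each hit many columns outside \(\mathcal{P}^*\), and the restriction \(\mathcal{Q}^* = \Set{Q \in \mathcal{Q}_1 : Q \text{ disjoint from } \mathcal{P}'\setminus\mathcal{P}^*}\) can be empty — this is precisely the set of rows whose missed set equals \(\mathcal{P}'\setminus\mathcal{P}^*\), which the pigeonhole on \(c(Q)\) was never designed to collect. The repair loop you outline then pigeonholes the \(>\Abs{\mathcal{Q}_1}-q\) bad rows over the \(p'-\Abs{\mathcal{P}^*}\) candidate witness columns in \(\mathcal{P}'\setminus\mathcal{P}^*\), so the surviving row count drops by a factor of roughly \(p'-k\) per step, and in the worst case the loop runs up to \(p'-k\) times before the discarded columns vanish. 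That would require a bound on the order of \(q\cdot(p'-k)!\), not \(qk(p')^k\). In the paper's applications \(p'\) is much larger than \(k\) (there \(p' = p'(r) = f(r+1)+x-1\) while \(k=x\)), so even a single iteration already overshoots the factor-\(k\) slack you invoke. The well-linkedness reduction at the start of your argument is fine, but the core pigeonhole needs to be replaced by the one on the missed set.
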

\begin{proof}
	Define a function \(f\) as \(f(Q) = \Set{P \in \mathcal{P}' \mid \V{P} \cap \V{Q} = \emptyset}\) for each \(Q \in \mathcal{Q}'\).
	By assumption, \(\Abs{f(Q)} \leq p' - k\) for each \(Q \in \mathcal{Q}'\).

	As there are \(\binom{p'}{\Abs{f(Q)}}\) choices for each \(f(Q)\), and as \(\sum_{i=0}^k \binom{p'}{p'-k} = \sum_{i=0}^k \binom{p'}{k} \leq k (p')^k \), by the pigeon-hole principle there is some \(\mathcal{Q} \subseteq \mathcal{Q}'\) of order \(q\) such that \(X \coloneqq f(Q_a) = f(Q_b)\) holds for all \(Q_a, Q_b \in \mathcal{Q}\).
	
	Let \(\mathcal{P} = \mathcal{P}' \setminus X\) and let \(p_1 = \Abs{\mathcal{P}}\).
	Note that \(p_1 \geq k\).
	By~\cref{state:preserve-minimality-deletion,state:preserve-minimality-disjoint}, \(\mathcal{P}\) is minimal with respect to \(\mathcal{Q}\).
	Hence, \(\Brace{\mathcal{P}, \mathcal{Q}}\) is a \((p_1, q)\)-web where \(\mathcal{P}\) is minimal with respect to \(\mathcal{Q}\), as desired.
\end{proof}

\Cref{state:path-system-to-clean-path-system} essentially corresponds to Lemma 4.7 from~\cite{kawarabayashi2022directed}, and our proof is based on theirs.
The idea is to attempt to construct a semi-web from some linkage \(L_{a,b} \in \mathcal{L}\) and some \(P \in \mathcal{P}\).
If we do not find any semi-web, it means that the linkages in \(\mathcal{L}\) are mostly disjoint from \(\mathcal{P}\).
We then use this observation to argue that, for each pair \(P_i, P_j \in \mathcal{P}\), there are only few other paths \(P_r \in \mathcal{P}\) which are \say{bad} for the choice \(P_i, P_j\), that is, we cannot easily construct a clean \pathsystem if we take \(P_i, P_j\) and \(P_r\).
This allows us to construct our \say{bad} events in order to apply Lovász Local Lemma.

We define
\begin{align*}
	\Func{d'}{p_2} & = 3(p_2)^2/2 - 15p_2/2 + 10,
	\\[0em]
	\boundDefAlign{state:path-system-to-clean-path-system}{\ell}{p_2, \ell_2, d_1}
	\bound{state:path-system-to-clean-path-system}{\ell}{p_2, \ell_2, d_1} & = \ell_2 + (p_2 - 2)d_1,
	\\[0em]
	\boundDefAlign{state:path-system-to-clean-path-system}{p}{q_1, p_2}
	\bound{state:path-system-to-clean-path-system}{p}{q_1, p_2} & = (2e(\Func{d'}{p_2} + 1)q_1 + 1)p_2.
\end{align*}
Note that \(\bound{state:path-system-to-clean-path-system}{\ell}{p_2,\ell_2,d_1} \in \Oh(\ell_{2} + d_{1} p_{2})\) and \(\bound{state:path-system-to-clean-path-system}{p}{q_1,p_2} \in \Oh(q_1 (p_2)^3)\).
\begin{lemma}
	\label{state:path-system-to-clean-path-system}
	Let \(d_1, q_1, p_2, \ell_2\) be integers.
	Let \(\mathcal{S} = (\mathcal{P}, \mathcal{L}, \mathcal{A})\) be an \(\ell\)-linked \pathsystem of order \(p\).
	If \(\ell \geq \bound{state:path-system-to-clean-path-system}{\ell}{p_2, \ell_2, d_1}\) and \(p \geq \bound{state:path-system-to-clean-path-system}{p}{q_1, p_2}\), then \(\ToDigraph{\mathcal{S}}\) contains one of the following
	\begin{enamerate}{C}{item:path-system-to-clean-path-system:last}
	\item a well-linked \(\Brace{\ell, q_1}\)-semi-web \(\Brace{\mathcal{P}_1, \mathcal{Q}_1}\) of degree \(d_1\) where \(\mathcal{P}_1 \in \mathcal{L}\), \(\mathcal{Q}_1 \subseteq \mathcal{P}\) and \(\mathcal{P}_1\) is minimal with respect to \(\mathcal{Q}_1\), or
		\label{item:path-system-to-clean-path-system:web}
	\item a clean \(\ell_2\)-linked \pathsystem \((\mathcal{P}_2, \mathcal{L}_2, \mathcal{A}_2)\) of order \(p_2\).
		\label{item:path-system-to-clean-path-system:clean}
		\label{item:path-system-to-clean-path-system:last}
	\end{enamerate}
\end{lemma}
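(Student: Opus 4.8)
The plan is to begin by replacing each linkage $L_{i,j} \in \mathcal{L}$ by a linkage with the same endpoints that is minimal with respect to $\ToDigraph{\mathcal{P}}$ in the sense of \cref{def:H-minimal}. The standard construction recalled after \cref{def:H-minimal} only deletes edges of $L_{i,j}$ and reroutes along $\ToDigraph{\mathcal{P}}$, so the new $L_{i,j}$ still lies inside $\ToDigraph{\mathcal{S}}$ and is still a linkage of order $\ell$ from $A_i^{out}$ to $A_j^{in}$; hence we may assume from now on that every $L_{i,j}$ is minimal with respect to $\ToDigraph{\mathcal{P}}$. Call a path $P_s \in \mathcal{P}$ \emph{heavy for an ordered pair $(a,b)$}, where $s \notin \{a,b\}$, if $P_s$ meets at least $d_1$ paths of $L_{a,b}$. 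The argument splits on whether some ordered pair has many heavy paths.

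Suppose first that there is an ordered pair $(a,b)$ with at least $q_1$ heavy paths. Set $\mathcal{P}_1 \coloneqq L_{a,b}$ and let $\mathcal{Q}_1$ consist of any $q_1$ of these heavy paths. Then $\Abs{\mathcal{P}_1} = \ell$, $\Abs{\mathcal{Q}_1} = q_1$, and every path of $\mathcal{Q}_1$ meets at least $d_1$ paths of $\mathcal{P}_1$, so $(\mathcal{P}_1, \mathcal{Q}_1)$ is an $(\ell, q_1)$-semi-web of degree $d_1$; moreover $\mathcal{P}_1 \in \mathcal{L}$, $\mathcal{Q}_1 \subseteq \mathcal{P}$, and $\mathcal{P}_1$ is minimal with respect to $\mathcal{Q}_1$ by \cref{state:preserve-minimality-deletion}. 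Since $\Start{\mathcal{Q}_1}$ and $\End{\mathcal{Q}_1}$ are subsets of the well-linked set $A$, $\End{\mathcal{Q}_1}$ is well-linked to $\Start{\mathcal{Q}_1}$, so this semi-web is well-linked and \cref{item:path-system-to-clean-path-system:web} holds.

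Otherwise every ordered pair has fewer than $q_1$ heavy paths, and we produce a clean sub-path-system using the Lovász Local Lemma (\cref{state:lovasz-local-lemma}). Partition $\{1, \dots, p\}$ into $p_2$ blocks of equal size $m \coloneqq \lfloor p/p_2 \rfloor$; the lower bound on $p$ guarantees that $m$ is large enough for the inequality $ep(d+1)<1$ below. Independently for each block choose a uniformly random representative, obtaining indices $i_1, \dots, i_{p_2}$. For each triple of blocks introduce a \emph{bad event} stating that the three chosen representatives $X, Y, Z$ are such that one of them is heavy for an ordered pair of the other two — equivalently, that some one of the six linkages among $P_X, P_Y, P_Z$ cannot be rerouted to avoid the third cluster after discarding fewer than $d_1$ of its paths. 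Since each ordered pair has fewer than $q_1$ heavy thirds and there are at most $m^2$ ordered pairs inside two prescribed blocks, a bad event has probability $O(q_1/m)$; it depends only on the three representatives of its block-triple, so it is mutually independent of every bad event indexed by a disjoint block-triple, and the number of block-triples meeting a fixed one is $\binom{p_2}{3}-\binom{p_2-3}{3}=\Func{d'}{p_2}$. Consequently the hypothesis $ep(d+1)<1$ of \cref{state:lovasz-local-lemma} holds with $d+1 \leq \Func{d'}{p_2}$, so with positive probability no block-triple is bad; fix such a choice of representatives.

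It remains to clean up. As no triple among the chosen clusters is bad, for every ordered pair $i_a \neq i_b$ among them and every other chosen cluster $P_{i_c}$ the linkage $L_{i_a,i_b}$ meets $P_{i_c}$ in fewer than $d_1$ of its paths; deleting from $L_{i_a,i_b}$ all paths that meet some chosen $P_{i_c}$ with $c \notin \{a,b\}$ therefore removes fewer than $(p_2-2)d_1 \leq \ell-\ell_2$ paths and leaves a linkage of order at least $\ell_2$ avoiding all other chosen clusters. Truncating the paths $P_{i_a}$ and rerouting these surviving linkages along subpaths of the $P_{i_a}$ so that their endpoints lie in fixed $\ell_2$-element subsets of the corresponding $A$-sets — possible because only boundedly many vertices of each $A_{i_a}^{in}$ and $A_{i_a}^{out}$ are ever used, and because subsets of the well-linked set $A$ are again well-linked — yields a clean $\ell_2$-linked path-system of order $p_2$, which is \cref{item:path-system-to-clean-path-system:clean}. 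I expect the main difficulty to be this final case: choosing the bad events so that the Lovász Local Lemma applies with dependency degree $\Func{d'}{p_2}$, and performing the bookkeeping in the cleaning step so that the $A$-sets of the new path-system can be made consistent across all $p_2(p_2-1)$ ordered pairs while spending only the budget $\ell-\ell_2 \geq (p_2-2)d_1$.
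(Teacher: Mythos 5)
Your proof follows the same architecture as the paper's: first replace each $L_{i,j}$ by a $\mathcal{P}$-minimal one, then split on whether some pair has many ``heavy'' companion paths (yielding the well-linked semi-web via \cref{state:preserve-minimality-deletion}), and otherwise use the Lovász Local Lemma over random block-representatives followed by a cleanup step. This is indeed the intended argument, and your choice to phrase heaviness per \emph{ordered} pair with threshold $q_1$ is an inessential variant of the paper's symmetric $\gamma$ with threshold $2q_1$ (the paper recovers your form by pigeon-hole).

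Two bookkeeping points are worth raising. First, you bundle all six ``someone heavy for an ordered pair of the others'' cases into one bad event per unordered block-triple, which gives $\Pr \leq 6q_1/m$ against dependency degree $\Func{d'}{p_2} = \binom{p_2}{3}-\binom{p_2-3}{3}$. With the lemma's stated block size $m \approx 2eq_1(\Func{d'}{p_2}+1)$ this makes $ep(d+1) \approx 3$, not $<1$; so either split into three events per triple as in the paper (which nominally lowers the probability to $2q_1/m$, though a careful recount of the dependency degree then picks the factor of three back up), or explicitly enlarge the $p$-bound. This is a constant-factor issue rather than a conceptual gap, but it does mean your proof as written does not close with the stated bound on $p$.

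Second, your worry about reconciling the $A$-sets in the cleanup step is legitimate and applies to the paper's own proof as well: deleting $<(p_2-2)d_1$ paths from each $\mathcal{L}_{i_a,i_b}$ leaves linkages whose endpoint sets in $A_{i_a}^{out}$ need not agree across the different $b$, so simply inheriting $\mathcal{A}_2$ from $\mathcal{A}$ does not by itself produce a family of $\ell_2$-linkages sharing a common size-$\ell_2$ out-set. One clean way to close this is to intersect over all $b$, which shrinks by up to $(p_2-1)(p_2-2)d_1$ and hence requires a quadratically larger $\ell$-budget than the lemma states; the rerouting-along-$P_{i_a}$ device you sketch is delicate because the surviving paths of $\mathcal{L}_{i_a,i_b}'$ may re-enter $P_{i_a}$ at interior vertices. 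So this is a genuine loose end you have correctly identified, but it is not a defect specific to your proposal.
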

\begin{proof}
	Let \(d = 3(p_2)^2/2 - 15p_2/2 + 10\) and \(p_2' = \ceil{2eq_1(d + 1)}\) + 1.

	First, if there is some \(\mathcal{L}_{s,t} \in \mathcal{L}\) such that \(\mathcal{L}_{s,t}\) is not minimal with respect to \(\mathcal{P}\), then replace such a linkage with another \(\Start{\mathcal{L}_{s,t}}\)-\(\End{\mathcal{L}_{s,t}}\)-linkage of the same order which is minimal with respect to \(\mathcal{P}\).
	This does not alter the fact that \(\mathcal{S}\) is an \(\ell\)-linked \pathsystem of order \(p\).
	Further, if \(p_2 = 1\), we can trivially obtain a clean \(\ell_2\)-linked \pathsystem satisfying \ref{item:path-system-to-clean-path-system:clean} by taking any path in \(\mathcal{P}\) and setting \(\mathcal{L} = \emptyset\).
	Hence, we can assume that \(p_2 \geq 2\).

	Define a function \(\gamma\) as follows.
	For each distinct \(P_s, P_t \in \mathcal{P}\) we set (see~\cref{fig:gamma-set} for an illustration) 
	\begin{align*}
		\gamma(P_s, P_t) = \{P \in \mathcal{P} \setminus \Set{P_s, P_t} \ST &~ P \text{ intersects at least \(d_1\) paths of \(\mathcal{L}_{s,t}\) or}
		\\[0em] &~ P \text{ intersects at least \(d_1\) paths of \(\mathcal{L}_{t,s}\)}\}.
	\end{align*}

\begin{figure}[!ht]
	\begin{minipage}[b]{0.5\textwidth}
		\centering
		\begin{tikzpicture}[scale=0.75]
			\node[vertex, label = above:{\color{black}{$P_s$}}, line width = 1, fill = black]
	(n0) at (0, 8.05){};
\node[vertex, line width = 1, fill = green]
	(n11) at (1.15, 8.05){};
\node[vertex, line width = 1, fill = green]
	(n16) at (3.45, 6.9){};
\node[vertex, line width = 1, fill = green]
	(n25) at (4.6, 5.75){};
\node[vertex, line width = 1, fill = green]
	(n26) at (2.3, 5.75){};
\node[vertex, line width = 1, fill = green]
	(n27) at (5.75, 6.9){};
\node[vertex, line width = 1, fill = green]
	(n33) at (6.9, 8.05){};
\node[vertex, line width = 1, fill = red]
	(n34) at (1.15, 1.15){};
\node[vertex, line width = 1, fill = red]
	(n36) at (2.3, 0){};
\node[vertex, line width = 1, fill = red]
	(n1) at (4.6, 2.3){};
\node[vertex, line width = 1, fill = red]
	(n2) at (3.45, 2.3){};
\node[vertex, line width = 1, fill = black]
	(n3) at (0, 6.9){};
\node[vertex, line width = 1, fill = black]
	(n4) at (0, 5.75){};
\node[vertex, line width = 1, fill = black]
	(n5) at (0, 4.6){};
\node[vertex, line width = 1, fill = black]
	(n6) at (0, 3.45){};
\node[vertex, line width = 1, fill = blue]
	(n7) at (6.9, 1.15){};
\node[vertex, line width = 1, fill = blue]
	(n8) at (4.6, 4.6){};
\node[vertex, line width = 1, fill = blue]
	(n9) at (1.15, 6.9){};
\node[vertex, line width = 1, fill = blue]
	(n10) at (2.3, 4.6){};
\node[vertex, line width = 1, fill = blue]
	(n12) at (6.9, 4.6){};
\node[vertex, line width = 1, fill = black]
	(n13) at (0, 2.3){};
\node[vertex, line width = 1, fill = black]
	(n14) at (0, 1.15){};
\node[vertex, line width = 1, fill = black]
	(n15) at (0, 0){};
\node[vertex, label = above:{\color{black}{$P_t$}}, line width = 1, fill = black]
	(n17) at (8.05, 8.05){};
\node[vertex, line width = 1, fill = black]
	(n18) at (8.05, 6.9){};
\node[vertex, line width = 1, fill = black]
	(n19) at (8.05, 5.75){};
\node[vertex, line width = 1, fill = black]
	(n20) at (8.05, 4.6){};
\node[vertex, line width = 1, fill = black]
	(n21) at (8.05, 3.45){};
\node[vertex, line width = 1, fill = black]
	(n22) at (8.05, 2.3){};
\node[vertex, line width = 1, fill = black]
	(n23) at (8.05, 1.15){};
\node[vertex, line width = 1, fill = black]
	(n24) at (8.05, 0){};
\node[]
	(n28) at (4.6, 8.05) [align = left]{\color{green}{$L_{s,t}$}};
\node[]
	(n29) at (5.75, 0) [align = left]{\color{red}{$L_{t,s}$}};
\node[vertex, line width = 1, fill = blue]
	(n30) at (2.3, 2.3){};
\node[vertex, line width = 1, fill = blue]
	(n31) at (2.3, 1.15){};
\node[vertex, line width = 1, fill = blue]
	(n32) at (1.15, 2.3){};
\node[]
	(n35) at (3.45, 3.45) [align = left]{\color{blue}{$\gamma(P_s, P_t) = \gamma(P_t, P_s)$}};
\path[latex-, line width = 1, draw = black]
	(n0) to (n3);
\path[latex-, line width = 1, draw = black]
	(n3) to (n4);
\path[latex-, line width = 1, draw = black]
	(n4) to (n5);
\path[latex-, line width = 1, draw = black]
	(n5) to (n6);
\path[latex-, line width = 1, draw = black]
	(n6) to (n13);
\path[latex-, line width = 1, draw = black]
	(n13) to (n14);
\path[latex-, line width = 1, draw = black]
	(n14) to (n15);
\path[-latex, line width = 1, draw = red]
	(n22) .. controls (7.148, 2.693) and (6.332, 2.431) .. (n1);
\path[-latex, line width = 1, draw = red]
	(n1) .. controls (3.174, 2.092) and (4.088, -0.232) .. (n36);
\path[-latex, line width = 1, draw = red]
	(n36) to (n15);
\path[-latex, line width = 1, draw = red]
	(n23) .. controls (4.893, 0.381) and (1.002, 1.616) .. (n13);
\path[-latex, line width = 1, draw = red]
	(n24) .. controls (5.325, 1.27) and (5.284, 0.531) .. (n2);
\path[-latex, line width = 1, draw = red]
	(n2) .. controls (2.147, 3.053) and (1.106, 1.725) .. (n34);
\path[-latex, line width = 1, draw = red]
	(n34) .. controls (1.134, 0.281) and (0.345, 0.878) .. (n14);
\path[-latex, line width = 1, draw = green]
	(n0) .. controls (0.551, 8.433) and (0.768, 8.15) .. (n11);
\path[-latex, line width = 1, draw = green]
	(n11) .. controls (3.356, 7.195) and (6.867, 6.997) .. (n33);
\path[-latex, line width = 1, draw = green]
	(n33) .. controls (6.966, 8.617) and (7.501, 8.135) .. (n17);
\path[-latex, line width = 1, draw = green]
	(n4) .. controls (1.087, 5.398) and (1.215, 5.31) .. (n26);
\path[-latex, line width = 1, draw = green]
	(n26) .. controls (3.821, 6.854) and (5.267, 6.917) .. (n27);
\path[-latex, line width = 1, draw = green]
	(n27) .. controls (7.658, 6.705) and (7.27, 5.741) .. (n19);
\path[-latex, line width = 1, draw = green]
	(n3) .. controls (1.626, 6.014) and (2.191, 7.164) .. (n16);
\path[-latex, line width = 1, draw = green]
	(n16) .. controls (4.248, 6.69) and (3.674, 5.729) .. (n25);
\path[-latex, line width = 1, draw = green]
	(n25) .. controls (5.748, 5.817) and (6.355, 7.121) .. (n18);
\path[latex-, line width = 1, draw = black]
	(n24) to (n23);
\path[latex-, line width = 1, draw = black]
	(n23) to (n22);
\path[latex-, line width = 1, draw = black]
	(n22) to (n21);
\path[latex-, line width = 1, draw = black]
	(n21) to (n20);
\path[latex-, line width = 1, draw = black]
	(n20) to (n19);
\path[latex-, line width = 1, draw = black]
	(n19) to (n18);
\path[latex-, line width = 1, draw = black]
	(n18) to (n17);
\path[-latex, line width = 1, draw = blue]
	(n7) .. controls (5.97, 1.824) and (5.424, 1.196) .. (n1);
\path[-latex, line width = 1, draw = blue]
	(n1) .. controls (4.21, 2.754) and (3.937, 3.027) .. (n2);
\path[-latex, line width = 1, draw = blue]
	(n2) .. controls (3.092, 1.511) and (2.343, 1.639) .. (n30);
\path[-latex, line width = 1, draw = blue]
	(n31) .. controls (2.683, 0.937) and (2.637, 0.363) .. (n36);
\path[-latex, line width = 1, draw = blue]
	(n36) .. controls (1.599, -0.665) and (1.806, 0.654) .. (n34);
\path[-latex, line width = 1, draw = blue]
	(n34) .. controls (0.663, 1.414) and (0.462, 1.928) .. (n32);
\path[-latex, line width = 1, draw = blue]
	(n8) to (n25);
\path[-latex, line width = 1, draw = blue]
	(n25) .. controls (4.632, 6.459) and (5.43, 6.22) .. (n27);
\path[-latex, line width = 1, draw = blue]
	(n27) .. controls (6.031, 7.612) and (5.975, 8.075) .. (n33);
\path[-latex, line width = 1, draw = blue]
	(n33) .. controls (7.863, 8.046) and (7.215, 5.554) .. (n12);
\path[-latex, line width = 1, draw = blue]
	(n9) .. controls (0.637, 7.209) and (0.672, 7.69) .. (n11);
\path[-latex, line width = 1, draw = blue]
	(n11) .. controls (2.887, 8.682) and (3.455, 7.5) .. (n16);
\path[-latex, line width = 1, draw = blue]
	(n16) .. controls (3.344, 6.225) and (2.365, 6.709) .. (n26);
\path[-latex, line width = 1, draw = blue]
	(n26) to (n10);

		\end{tikzpicture}
	\end{minipage}
	\begin{minipage}[b]{0.49\textwidth}
        \begin{center}
		\caption{Illustration of the set \(\gamma(P_s, P_t)\) used in the proof of~\cref{state:path-system-to-clean-path-system}, given in blue.
		This set consists of the paths of \(\mathcal{P}\) which intersect many paths in at least one of the linkages between \(P_s\) and \(P_t\).\label{fig:gamma-set}}
        \end{center}
        \hfill
	\end{minipage}
\end{figure}
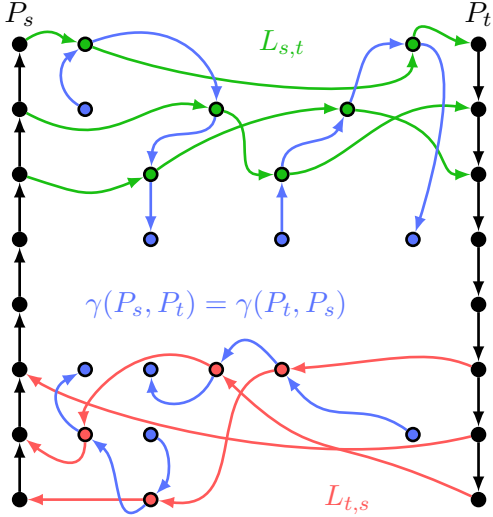

	If there is a pair of distinct \(P_s, P_t \in \mathcal{P}\) such that \(\Abs{\gamma(P_s, P_t)} \geq 2q_1\), then we construct our pair \((\mathcal{P}_1, \mathcal{Q}_1)\) as follows.
	By the pigeon-hole principle, there is a choice of \(\mathcal{P}_1 \in \Set{\mathcal{L}_{s,t}, \mathcal{L}_{t,s}}\) and a set \(\mathcal{Q}_1 \subseteq \gamma(P_s, P_t)\) of size \(q_1\) such that every \(Q \in \mathcal{Q}_1\) intersects at least \(d_1\) paths of \(\mathcal{P}_1\).
	By assumption, \(\mathcal{P}_1\) is minimal with respect to \(\mathcal{Q}_1\).
	Furthermore, \(\End{\mathcal{Q}_1}\) is well-linked to \(\Start{\mathcal{Q}_1}\).
	Hence, \((\mathcal{P}_1, \mathcal{Q}_1)\) satisfies~\ref{item:path-system-to-clean-path-system:web}.

	We now assume that \(\Abs{\gamma(P_s, P_t)} < 2q_1\) holds for all distinct \(P_s, P_t \in \mathcal{P}\).
	We construct a set \(\mathcal{P}_2\) as follows.

	First, distribute the elements of \(\mathcal{P}\) arbitrarily into \(p_2\) disjoint sets \(\mathcal{X}_1, \ldots, \mathcal{X}_{p_2}\), each of size \(p_2'\).
	For each \(\mathcal{X}_i\), define a random variable \(x_i\) which corresponds to sampling one element of \(\mathcal{X}_i\) from a uniform distribution.

	For each three distinct \(a,b,c \in \Set{1,2, \ldots, p_2}\), let \(A_{a,b,c}\) be the event that \(x_a \in \gamma(x_b, x_c)\).
	
	Since the event \(A_{a,b,c}\) depends only on the values of \(x_a, x_b\) and \(x_c\), we know that \(A_{a,b,c}\) is independent from \(A_{a',b',c'}\) if \(\Set{a,b,c} \cap \Set{a',b',c'} = \emptyset\).
	Hence, \(A_{a,b,c}\) is independent from all but at most \(\binom{p_2}{3} - \binom{p_2 - 3}{3} = d\) other events.

	We now bound the value of \(\Pr{A_{a,b,c}}\).
	As there are \((p_2')^3\) distinct choices for the tuple \((x_a, x_b, x_c)\) and for each choice of \(x_b, x_c\) there are at most \(2q_1\) choices of \(x_a\) such that \(x_a \in \gamma(x_b, x_c)\), we have that \(\Pr{A_{a,b,c}} \leq (2q_1 (p_2')^2)/(p_2')^3 = 2q_1/p_2'\).
	
	Because \(p_2' \geq e \cdot (d + 1) \cdot 2q_1 + 1\), from~\cref{state:lovasz-local-lemma} we know that the probability that none of the events \(A_{a,b,c}\) occur is positive.
	That is, there is some choice of \(x_{1}, x_{2}, \ldots, x_{p_2}\) such that \(x_a \not \in \gamma(x_b, x_c)\) for all three distinct \(a,b,c \in \Set{1,2,\ldots, p_2}\).
	We set \(\mathcal{P}_2 = \Set{x_{1}, x_{2}, \ldots, x_{p_2}}\).

	For each distinct \(P_s, P_t \in \mathcal{P}_2\) define \(\mathcal{L}_{s,t}' = \{L \in \mathcal{L}_{s,t} \ST{} \text{for all \(P \in \mathcal{P}_2 \setminus \Set{P_s, P_t}\) we have}\) \(\V{P} \cap L = \emptyset\}\).

	By choice of \(\mathcal{P}_2\), we have that \(P_r \not\in \gamma(P_s, P_t)\) holds for all pairwise distinct \(P_s, P_t, P_r \in \mathcal{P}_2\).
	Hence, \(\Abs{\mathcal{L}_{s,t}'} \geq \Abs{\mathcal{L}_{s,t}} - (p_2 - 2)d_1 \geq \ell_2\).

	Finally, choose \(\mathcal{A}_2\) as the elements \(A_i^{in}, A_i^{out}\) of \(\mathcal{A}\) satisfying \(P_i \in \mathcal{P}_2\).
	Clearly, \((\mathcal{P}_2, \mathcal{L}_2, \mathcal{A}_2)\) is a clean \(\ell_2\)-linked \pathsystem of order \(p_2\), satisfying \ref{item:path-system-to-clean-path-system:clean}.
\end{proof}

We attempt to obtain a bidirected clique from a clean \pathsystem by first iteratively trying to obtain disjoint paths inside \(\mathcal{L}\) pairwise connecting the paths of \(\mathcal{P}\).
In case this construction fails, we obtain a well-linked semi-web.

\subsection{Splits or segmentations with elementary bounds}

As seen before in~\cref{state:semi-web-to-web}, in order to obtain a web from a semi-web \(\Brace{\mathcal{P}, \mathcal{Q}}\) of degree \(d\), we require that \(\mathcal{Q}\) is much larger than \(\mathcal{P}\).
Unfortunately, it is not possible to directly use the results of \cite{kawarabayashi2022directed} to obtain the required web, as the sizes of the linkages \(\mathcal{P}\) and \(\mathcal{Q}\) provided by their statements do not match.
Instead, we need to modify the proof of~\cite[Lemma 4.8]{kawarabayashi2022directed}, ensuring that in each step of the iteration described at the end of the previous subsection we obtain a sufficiently large gap between \(\mathcal{Q}\) and \(\mathcal{P}\) so that we can apply~\cref{state:semi-web-to-web}.
Similarly, we again require a large gap between \(\mathcal{Q}\) and \(\mathcal{P}\) when obtaining a split or segmentation from a web using~\cref{lemma:y-split or x-segmentation}.
Hence, we need to \say{pay} the function from both~\cref{lemma:y-split or x-segmentation} and~\cref{state:semi-web-to-web} during each step of the iteration in our proof.

In order to avoid repetition, we only present here the part of the proof from~\cite{kawarabayashi2022directed} which we modify, which is~\cref{state:clean-path-system-to-web-or-clique} below.
The remainder of the proof of~\cite[Lemma 4.8]{kawarabayashi2022directed} is given by~\cref{state:clean-and-disjoint-path-system-to-clique}.

\begin{lemma}[{\cite[proof of Lemma 4.8]{kawarabayashi2022directed}}]
	\label{state:clean-and-disjoint-path-system-to-clique}
	Let \(k\) be an integer and let \(\mathcal{S} = \Brace{\mathcal{P}, \mathcal{L}, \mathcal{A}}\) be a clean 1-linked \pathsystem of order \(p\) in a digraph \(D\) such that all paths in \(\mathcal{L}\) are pairwise vertex-disjoint.
	If \(p \geq \bound{state:clean-and-disjoint-path-system-to-clique}{p}{k} \coloneqq 3k\), then \(D\) contains a bidirected clique of order \(k\) as a butterfly minor.
\end{lemma}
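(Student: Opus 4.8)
The plan is to construct, inside \(\ToDigraph{\mathcal{S}}\), a model of \(\biK{k}\) as a butterfly minor. I will use the standard fact that a subgraph \(T\) of a digraph can be contracted to a single vertex by a sequence of butterfly contractions, with every arc entering \(T\) and every arc leaving \(T\) preserved, provided \(T = T_{\mathrm{in}} \cup T_{\mathrm{out}}\) where \(T_{\mathrm{in}}\) is an in-branching, \(T_{\mathrm{out}}\) an out-branching, the two meet only in their common root \(r\), every arc entering \(T\) from outside ends in \(T_{\mathrm{in}}\), and every arc leaving \(T\) starts in \(T_{\mathrm{out}}\). Hence it suffices to find pairwise vertex-disjoint subgraphs \(\mathcal{B}_1, \dots, \mathcal{B}_k\) of this shape together with, for each ordered pair \(i \neq j\), a path \(R_{i,j}\) from \(\mathcal{B}_i\) to \(\mathcal{B}_j\) that is internally disjoint from \(\mathcal{B}_1 \cup \dots \cup \mathcal{B}_k\) and from every other connector, and whose endpoints lie on the out-branching of \(\mathcal{B}_i\) and on the in-branching of \(\mathcal{B}_j\); contracting each \(\mathcal{B}_i\) to a point and then each \(R_{i,j}\) to an arc produces \(\biK{k}\).

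The first step is to trim the linkages of \(\mathcal{L}\): for each ordered pair \(a \neq b\) I would replace \(L_{a,b}\) by the subpath \(\hat{L}_{a,b}\) running from the last vertex of \(L_{a,b}\) on \(P_a\) to the first vertex of \(P_b\) encountered afterwards. Being a subpath of \(L_{a,b}\), the path \(\hat{L}_{a,b}\) is still clean (disjoint from all \(P_s\) with \(s \notin \{a,b\}\)), and by the choice of its endpoints it is internally disjoint from \(P_a \cup P_b\). Since the members of \(\mathcal{L}\) are pairwise disjoint apart from their endpoints in \(A\), the paths \(\hat{L}_{a,b}\) are pairwise internally disjoint — this is exactly where the hypothesis that the \(\mathcal{L}\)-paths are disjoint is used.

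Since \(p \geq 3k\), I would partition the index set into \(k\) triples \(\{a_i, b_i, c_i\}\), \(i \in \{1, \dots, k\}\). The three paths of triple \(i\) glue, via \(\hat{L}_{a_i,b_i}\) and \(\hat{L}_{b_i,c_i}\), into one directed path running from the first vertex of \(P_{a_i}\), through \(P_{b_i}\), into \(P_{c_i}\) and on to its last vertex; I take \(\mathcal{B}_i\) to be a subpath of this, viewed as the union of an in-branching (the initial stretch, containing all of \(P_{a_i}\)) and an out-branching (the terminal stretch, containing a suffix of \(P_{c_i}\)), the root being chosen so that all incoming connectors of \(\mathcal{B}_i\) land on \(P_{a_i}\) and all outgoing ones leave from that suffix of \(P_{c_i}\). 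The reason three paths are needed per branch set, rather than two, is precisely to leave enough room — after the trimming above — to realise this while keeping all the \(\mathcal{B}_i\) pairwise disjoint. For the connector between \(\mathcal{B}_i\) and \(\mathcal{B}_j\) (\(i \neq j\)) I would take \(\hat{L}_{c_i,a_j}\): it starts on the out-spine \(P_{c_i}\) of \(\mathcal{B}_i\) and ends on the in-spine \(P_{a_j}\) of \(\mathcal{B}_j\), and by cleanness together with the pairwise disjointness of \(\mathcal{L}\) it is internally disjoint from all the branch sets and from the other connectors.

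Finally one contracts: each out-branching from its leaves inward (each such arc is butterfly-contractible because the non-root vertices of \(T_{\mathrm{out}}\) receive no arc from outside \(\mathcal{B}_i\)), then each in-branching from its leaves inward, collapsing every \(\mathcal{B}_i\) to its root, and then each connector path to a single arc; the surviving \(k(k-1)\) connector arcs between the \(k\) contracted vertices form \(\biK{k}\). The main obstacle is this last piece of bookkeeping: one must ensure that no connector ever meets a branch set in a vertex of both positive in-degree and positive out-degree inside the chosen subgraph (which would block a butterfly contraction), while simultaneously keeping the branch sets pairwise disjoint and the connectors internally disjoint from everything — and it is exactly to absorb these potential collisions that we pass to the trimmed paths \(\hat{L}\) and allow three paths per branch set. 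The remaining verifications are a routine case analysis on the order in which the relevant paths meet.
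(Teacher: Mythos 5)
The paper gives no proof of this lemma of its own — it simply cites the proof of Lemma~4.8 in \cite{kawarabayashi2022directed} — so there is no in-paper argument to compare yours against; your sketch follows the route one would expect (one triple of $P$-paths per branch set, stitched into a directed path by two of the $\mathcal{L}$-paths, the remaining $\mathcal{L}$-paths as connectors, then butterfly contraction of each branch set from the leaves of its in- and out-branching towards a root placed strictly between the incoming and outgoing attachment points), and the construction is sound. One remark is worth making about the trimming step. Under the intended reading of the hypothesis — which is what the outer proof of \cref{state:clean-path-system-to-web-or-clique} actually produces, namely paths that are already internally disjoint from $\mathcal{P}$ — the trimming does nothing, and the argument goes through verbatim. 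As you have written it, though, the trimming does not quite buy you what you claim: if $L_{a,b}$ were allowed to re-enter $P_a$ or $P_b$, then $\hat{L}_{a,b}$ may attach to $P_b$ \emph{after} the departure point of $\hat{L}_{b,c}$, and a connector $\hat{L}_{c_j,a_i}$ may arrive on $P_{a_i}$ beyond the prefix that $\mathcal{B}_i$ retains; nothing in the pairwise disjointness of $\mathcal{L}$ forces the arrival and departure points to interleave correctly on the $P$'s, so the ``routine case analysis'' you defer would in fact need a further idea. The cleaner move is to state and use the hypothesis in the form in which it is supplied — the $\mathcal{L}$-paths are internally disjoint from every $P_s$ as well as pairwise disjoint — after which the branch sets, connectors, and contractions all behave exactly as you describe.
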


The proof of~\cref{state:clean-path-system-to-web-or-clique} works by iteratively constructing pairwise disjoint paths \(R_r\).
If in some step we cannot construct the desired path \(R_r\), then we argue that some linkage \(\mathcal{L}_{i,j}\) in the clean \pathsystem intersects many paths of some other linkage \(\mathcal{L}_{s,t}\).
Our modifications of the proof of~\cite[Lemma 4.8]{kawarabayashi2022directed} are essentially focused on ensuring that both linkages forming the semi-web described above are large enough for us to apply~\cref{state:semi-web-to-web,lemma:y-split or x-segmentation}.

We define
\begin{align*}
	\Func{k'}{k} & = 2 \binom{3k}{2},
	\\[0em]
	\boundDefAlign{state:clean-path-system-to-web-or-clique}{p}{k}
	\bound{state:clean-path-system-to-web-or-clique}{p}{k} & = 3k,
	\\[0em]
	\boundDefAlign{state:clean-path-system-to-web-or-clique}{\ell}{x, y, q, k}
	\bound{state:clean-path-system-to-web-or-clique}{\ell}{x, y, q, k} & = 
	\left(2 x q \Func{k'}{k} \right)^{2^{\Func{k'}{k} \left(y \left(x - 1\right) + 1\right)} \left(3 x\right)^{\Func{k'}{k}}}.
\end{align*}
Observe that \(\bound{state:clean-path-system-to-web-or-clique}{\ell}{x,y,q,k} \in \PowerTower{2}{\Polynomial{5}{x,y,q,k}}\). 
\begin{lemma}
	\label{state:clean-path-system-to-web-or-clique}
	Let \(x,y, q\) and \(k\) be integers.
	Let \(\mathcal{S} = \Brace{\mathcal{P}, \mathcal{L}, \mathcal{A}}\) be a clean \(\ell'\)-linked \pathsystem of order \(p'\) in a digraph \(D\).
	If \(\ell' \geq \bound{state:clean-path-system-to-web-or-clique}{\ell}{x, y, q, k}\) and \(p' \geq \bound{state:clean-path-system-to-web-or-clique}{p}{k}\), then there are \(\mathcal{P}_1 \subseteq \mathcal{L}_P \in \mathcal{L}\) and \(\mathcal{Q}_1 \subseteq \mathcal{L}_Q \in \mathcal{L}\), where \(\mathcal{L}_P \neq \mathcal{L}_Q\), such that \(D\) contains one of the following:
	\begin{enamerate}{W}{item:clean-path-system-to-web-or-clique:last}
		\item a bidirected clique of order \(k\) as a butterfly minor,
			\label{item:clean-path-system-to-web-or-clique:clique}
		\item \label{item:clean-path-system-to-web-or-clique:split}
			a $(y, q)$-split \(\Brace{\mathcal{P}', \mathcal{Q}'}\) of $(\mathcal{P}_1, \mathcal{Q}_1)$ where \(\End{\mathcal{Q}'}\) is well-linked to \(\Start{\mathcal{Q}'}\), or
		\item \label{item:clean-path-system-to-web-or-clique:segmentation}
			an $(x, q)$-segmentation $(\mathcal{P}', \mathcal{Q}')$ of $(\mathcal{P}_1, \mathcal{Q}_1)$ where \(\End{\mathcal{P}'}\) is well-linked to \(\Start{\mathcal{P}'}\).
			\label{item:clean-path-system-to-web-or-clique:last}
	\end{enamerate}
\end{lemma}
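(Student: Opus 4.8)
The plan is to follow the architecture of the proof of Lemma~4.8 of \cite{kawarabayashi2022directed}, redoing all the quantitative bookkeeping so that every bound that appears is a power tower of height at most two over a polynomial, matching \(\bound{state:clean-path-system-to-web-or-clique}{\ell}{x,y,q,k}\). We may assume \(k \ge 2\), as \(\biK{1}\) and \(\biK{0}\) are trivially butterfly minors of any digraph; then \(p' \ge \bound{state:clean-path-system-to-web-or-clique}{p}{k} = 3k\). As a first step I would replace every linkage \(L_{i,j} \in \mathcal{L}\) by a \(\Start{L_{i,j}}\)-\(\End{L_{i,j}}\)-linkage of the same order that is minimal with respect to \(\mathcal{P}\); since the old \(L_{i,j}\) is vertex-disjoint from each \(P_s\) with \(s \notin \{i,j\}\), no rerouting inside \(\ToDigraph{L_{i,j} \cup \mathcal{P}}\) can reach such a \(P_s\), so the system stays a clean \(\ell'\)-linked path system of order \(p'\). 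I would then discard all but \(3k\) of the paths of \(\mathcal{P}\), keeping \(P_1, \dots, P_{3k}\) and the \(m \coloneqq \Func{k'}{k} = 2\binom{3k}{2}\) linkages of \(\mathcal{L}\) between them.

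The core of the proof is a loop over these \(m\) linkages, processed one at a time in a fixed order, trying to assemble a clean \(1\)-linked path system of order \(3k\) on \(P_1, \dots, P_{3k}\) whose connecting paths are pairwise internally disjoint. At step \(r\) one looks for a path \(R_r\) in \(L_{i_r, j_r}\) (or, to keep the counting clean, a sub-linkage of a scheduled size) disjoint from everything committed so far; by cleanness such a path is then automatically disjoint from the remaining \(P_s\) as well. If this succeeds for all \(m\) pairs, the resulting clean \(1\)-linked path system of order \(3k\) has pairwise disjoint linkage paths, so \cref{state:clean-and-disjoint-path-system-to-clique} (applicable since \(3k \ge \bound{state:clean-and-disjoint-path-system-to-clique}{p}{k}\)) produces a bidirected clique \(\biK{k}\) as a butterfly minor, giving~\ref{item:clean-path-system-to-web-or-clique:clique} (here \(\mathcal{P}_1\) and \(\mathcal{Q}_1\) may be taken to be any two distinct linkages of \(\mathcal{L}\)).

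If at some step \(r\) the required path cannot be found, then every path of \(L_{i_r, j_r}\) meets the union of the committed objects, and a pigeon-holing argument over these fewer than \(m\) objects --- iterated so as to raise the intersection multiplicity from one to at least \(x\) --- yields two distinct linkages \(\mathcal{L}_P, \mathcal{L}_Q \in \mathcal{L}\) and sub-linkages \(\mathcal{P}_1 \subseteq \mathcal{L}_P\), \(\mathcal{Q}_1 \subseteq \mathcal{L}_Q\) such that every path of \(\mathcal{Q}_1\) intersects at least \(x\) paths of \(\mathcal{P}_1\), that is, \((\mathcal{P}_1, \mathcal{Q}_1)\) is a semi-web of degree at least \(x\); rerouting \(\mathcal{P}_1\) inside \(\ToDigraph{\mathcal{P}_1 \cup \mathcal{Q}_1}\) in advance, with a little care to preserve the degree, makes \(\mathcal{P}_1\) additionally minimal with respect to \(\mathcal{Q}_1\). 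Since every linkage of \(\mathcal{L}\) starts and ends inside the well-linked set \(A\), the sets \(\Start{\mathcal{Q}_1}, \End{\mathcal{Q}_1}\) (and \(\Start{\mathcal{P}_1}, \End{\mathcal{P}_1}\)) lie in \(A\), so this semi-web is well-linked. Then \cref{state:semi-web-to-web} turns \((\mathcal{P}_1, \mathcal{Q}_1)\) into a well-linked \((p_1, q)\)-web \((\mathcal{P}, \mathcal{Q})\) with \(p_1 \ge x\), \(\mathcal{P} \subseteq \mathcal{P}_1\), \(\mathcal{Q} \subseteq \mathcal{Q}_1\) and \(\mathcal{P}\) minimal (hence, by \cref{obs:H-minimal-implies-weakly-minimal}, weakly \(|\mathcal{P}|\)-minimal) with respect to \(\mathcal{Q}\), and \cref{lemma:y-split or x-segmentation} with \(c = |\mathcal{P}|\) returns a \((y,q)\)-split or an \((x,q)\)-segmentation of \((\mathcal{P}, \mathcal{Q})\), which inherits well-linkedness from the web. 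In the split case \(\Start{\mathcal{Q}'}, \End{\mathcal{Q}'} \subseteq A\), giving~\ref{item:clean-path-system-to-web-or-clique:split}; in the segmentation case the columns \(\mathcal{P}'\) are whole paths of \(\mathcal{P} \subseteq \mathcal{L}_P\), so \(\Start{\mathcal{P}'}, \End{\mathcal{P}'} \subseteq A\), giving~\ref{item:clean-path-system-to-web-or-clique:segmentation}. Setting \(\mathcal{P}_1 \coloneqq \mathcal{P}\) and \(\mathcal{Q}_1 \coloneqq \mathcal{Q}\) then matches the statement.

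I expect the failure analysis to be the main obstacle: converting ``no connecting path avoids the committed set'' into a \emph{well-linked} semi-web of degree at least \(x\) that lives inside two single linkages of \(\mathcal{L}\) and still has both sides wide enough that composing \cref{state:semi-web-to-web} with \cref{lemma:y-split or x-segmentation} leaves \(q\) usable rows. It is exactly this requirement that forces the scheduled pool sizes along the loop and makes each of the \(m = \Func{k'}{k}\) iterations absorb one copy of the blow-up from each of those two lemmas, which is where the exponent \(2^{\Func{k'}{k}(y(x-1)+1)}(3x)^{\Func{k'}{k}}\) of \(\bound{state:clean-path-system-to-web-or-clique}{\ell}{x,y,q,k}\) comes from. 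Keeping this elementary, rather than quoting \cite{kawarabayashi2022directed} directly, is the chief reason the argument has to be reworked.
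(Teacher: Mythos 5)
Your high-level architecture matches the paper's: iterate over the $k_1 = 2\binom{3k}{2}$ ordered pairs trying to assemble pairwise disjoint connectors, apply \cref{state:clean-and-disjoint-path-system-to-clique} on success, and on failure extract a well-linked semi-web and push it through \cref{state:semi-web-to-web} and \cref{lemma:y-split or x-segmentation}. You also correctly diagnose that the power-tower exponent comes from composing those two lemmas once per iteration. But the failure analysis, which you acknowledge as the main obstacle, is exactly where your outline breaks down, in two places. Your failure condition --- every path of $L_{i_r,j_r}$ meets the union of the \emph{committed} objects, followed by ``iterated pigeon-holing to raise multiplicity from one to at least $x$'' --- does not yield a degree-$\ge x$ semi-web living in two single linkages of $\mathcal{L}$. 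The paper's mechanism is forward-looking: it maintains, at step $r$, a pool $\mathcal{L}^r_{\sigma(q)} \subseteq L_{\sigma(q)}$ of scheduled size $p'(r)$ for every still-\emph{pending} pair, and its Case~2 condition says that every path $P_z$ of the current pool hits at least $p'(r) - f(r{+}1) + 1 = x$ paths of a single pending pool $\mathcal{L}^r_{\sigma(q_z)}$. One pigeon-hole step over the at most $k_1-r$ pending pools then collects $q'(r)$ paths that all hit $\ge x$ paths of the \emph{same} pool. Getting degree $x$ in one shot is precisely what the scheduling $p'(r) = f(r{+}1) + x - 1$ buys you; iterating pigeon-hole over committed single paths does not produce it.

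The second, more serious, gap is your plan to reroute $\mathcal{P}_1$ inside $\ToDigraph{\mathcal{P}_1 \cup \mathcal{Q}_1}$ \emph{after} the semi-web has been found, ``with a little care to preserve the degree.'' Making a linkage $\mathcal{Q}_1$-minimal can drop the semi-web degree, because the rerouted paths may detour around the very $\mathcal{Q}_1$-vertices that produced the intersections, and nothing about minimality forces them to come back. \Cref{state:semi-web-to-web} needs degree and minimality to hold simultaneously for the same linkage. The paper sidesteps this by making minimality an invariant of the iteration: whenever a pending pool $\mathcal{L}^r_{\sigma(q)}$ is re-carved, it is chosen to be $\mathcal{L}^r_{\sigma(r)}$-minimal, so when the failure fires the semi-web $(\mathcal{L}^r_{\sigma(w)}, \mathcal{Q})$ is already minimal (via \cref{state:preserve-minimality-deletion}, since $\mathcal{Q}\subseteq\mathcal{L}^r_{\sigma(r)}$) \emph{and} the degree $x$ was measured against this already-minimal linkage. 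Adopting that forward maintenance in place of post-hoc rerouting is what would close the gap.
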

\begin{proof}
		Let \(k_1 = 2 \binom{3k}{2}\).
		We define functions \(q', f, p'\) and \(q''\) recursively as follows.
		We start by setting
		\begin{align*}
			q'(k_1) & = \bound{state:semi-web-to-web}{q}{x, q, x},
			& & & f(k_1 + 1) & = 0, f(k_1) = q'(k_1) + 1,
			\\[0em]
			p'(k_1) & = x & & \text{ and } & q''(k_1) & = \bound{lemma:y-split or x-segmentation}{q}{p'(k_1), q, x, y, p'(k_1)}.
		\end{align*}

		For \(1 \leq r < k_1\), we set
		\begin{align*}
			p'(r) & = f(r + 1) + x - 1, & & &
			q''(r) & = \bound{lemma:y-split or x-segmentation}{q}{p'(r), q, x, y, p'(r)},
			\\[0em]
			q'(r) & = \bound{state:semi-web-to-web}{q}{p'(r), q''(r), x} & & \text{ and } & 
			f(r) & = (k_1 - r + 1) q'(r) + 1.
		\end{align*}

		By repeatedly applying the functions above, we obtain the following recursive equality, which we use later
			\begin{align*}
				f(r) & = 1 + x \left(k_{1} - r + 1\right) \left(x + f{\left(r + 1 \right)} - 1\right)^{x}
				\\[0em]
				& \quad \cdot
				\left(q \left(x + f{\left(r + 1 \right)} - 1\right) \left(x + q + f{\left(r + 1 \right)} - 1\right)\right)^{2^{y \left(x - 1\right) + 1}}.
			\end{align*} 
   			
		Before proceeding with the proof, we give upper bounds for the functions defined above.
		\begin{claim}
			\label{state:cpstwc:bounds}
			For all \(0 \leq r \leq k_1 - 1\), we have
			\begin{align*}
				f(k_1 - r) \leq \left(2 k_{1} x q\right)^{2^{\left(r + 1\right) \left(y \left(x - 1\right) + 1\right)} \left(3 x\right)^{r + 1}}
			\end{align*} 
   					\end{claim}
		\begin{claimproof}
			We prove the statement iteratively by starting at \(r = 0\).
			\begin{align*}
				f(k_1) = x^{x + 1} q + 1
				\leq \left(2 k_{1} x q\right)^{3 x}
				\leq \left(2 k_{1} x q\right)^{3 x 2^{y \left(x - 1\right) + 1}}.
			\end{align*}
			Hence, the bounds mentioned above hold for \(r = 0\).
			Now assume the bounds hold for some \(r - 1 \in \Set{0, 1, \ldots, k_1 - 2}\).
			We show that they also hold for \(r\).
			To aid readability, we replace \(y ( x - 1) + 1\) with \(w\).
			\begingroup
			\allowdisplaybreaks
			\begin{align*}
				& f(k_1 - r) = x \left(r + 1\right)  \left(x + f{\left(k_{1} - r + 1 \right)} - 1\right)^{x}
				\\[0em]
				& \phantom{f(k_1 - r) = } \cdot
					(q \left(x + f{\left(k_{1} - r + 1 \right)} - 1\right)
					\\[0em]
					& \phantom{f(k_1 - r) = \cdot ( q } \cdot \left(x + q + f{\left(k_{1} - r + 1 \right)} - 1\right) )^{2^{y \left(x - 1\right) + 1}} +1
				\\[0em]
																&{} \leq x \left(r + 1\right) \left(x + \left(2 k_{1} x q\right)^{2^{r w} \left(3 x\right)^{r}}\right)^{x} 
				\\[0em]
				& \quad {} \cdot 
				\left(q \left(x + \left(2 k_{1} x q\right)^{2^{r w} \left(3 x\right)^{r}}\right) \left(x + q + \left(2 k_{1} x q\right)^{2^{r w} \left(3 x\right)^{r}}\right)\right)^{2^{w}} && (\text{induction})
\\[0em]
  & \leq x \left(r + 1\right) \left(2 \left(2 k_{1} x q\right)^{2^{r w} \left(3 x\right)^{r}}\right)^{x} \left(4 q \left(2 k_{1} x q\right)^{2^{r w + 1} \left(3 x\right)^{r}}\right)^{2^{w}} 
				&& (x + q \leq 2 k_1 x q)
\\[0em]
  & \leq \left(2 \left(2 k_{1} x q\right)^{2^{r w} \left(3 x\right)^{r}}\right)^{x} \left(4 k_{1} x q \left(2 k_{1} x q\right)^{2^{r w + 1} \left(3 x\right)^{r}}\right)^{2^{w}}
				&& (xk_1 \leq (x k_1)^{2^w} )
\\[0em]
  & = \left(2 \left(2 k_{1} x q\right)^{2^{r w} \left(3 x\right)^{r}}\right)^{x} \left(2 \left(2 k_{1} x q\right)^{2^{r w + 1} \left(3 x\right)^{r} + 1}\right)^{2^{w}}
\\[0em]
  & = 2^{2^{w} + x} \left(2 k_{1} x q\right)^{2^{w} \left(2^{r w + 1} \left(3 x\right)^{r} + 1\right) + 2^{r w} x \left(3 x\right)^{r}}
\\[0em]
  & \leq \left(2 k_{1} x q\right)^{2^{w} \left(2^{r w + 1} \left(3 x\right)^{r} + 1\right) + 2^{w} + x + 2^{r w} x \left(3 x\right)^{r}}
	&& (2 \leq 2 k_1 x q)
\\[0em]
  & = \left(2 k_{1} x q\right)^{2^{w \left(r + 1\right) + 1} \left(3 x\right)^{r} + 2^{w + 1} + x + 2^{r w} x \left(3 x\right)^{r}}
\\[0em]
  & \leq \left(2 k_{1} x q\right)^{3 \cdot 2^{w \left(r + 1\right)} 3^{r} x^{r + 1}}
	&& (2^{w \left(r + 1\right) + 1} \left(3 x\right)^{r},
\\[0em]
				&&& 2^{r w} x \left(3 x\right)^{r},
\\[0em]
				&&& 2^{w + 1} + x \leq 2^{(r + 1) w} 3^r x^{r + 1})
\\[0em]
 & = \left(2 k_{1} x q\right)^{2^{\left(r + 1\right) \left(y \left(x - 1\right) + 1\right)} \left(3 x\right)^{r + 1}}
				&& (\text{def.~of } w)
			\end{align*}
			\endgroup

			Hence, the claim follows by induction.
		\end{claimproof}

		From~\cref{state:cpstwc:bounds} we have that \(f(1) \leq \bound{state:clean-path-system-to-web-or-clique}{\ell}{x, y, q, k}\).
				Let \((\mathcal{P} = \Brace{P_{1}, P_{2}, \ldots, P_{3k}}, \mathcal{L} = \Brace{\mathcal{L}_{i,j}}, \mathcal{A} = \Brace{A_i^{\text{in}}, A_i^{\text{out}}}) \coloneqq \mathcal{S}\).
		Choose an arbitrary bijection \(\sigma : [k_1] \to \Set{(i,j) \mid 1 \leq i,j \leq 3k, i \neq j}\), where \([k_1] = \Set{1,2,\ldots,k_1}\).

		We iteratively construct linkages \(\mathcal{L}_{i,j}^r\) and paths \(R^r\), where \(1 \leq r \leq k_1\), satisfying the following:
		\begin{enamerate}{L}{prop:cpstwc:last}
		\item \label{prop:cpstwc:disjoint}
			\(R^r\) is a path from \(A_i^{\text{out}}\) to \(A_j^{\text{in}}\), where \(\Brace{i,j} \coloneqq \sigma(r)\), and \(R^r\) does not share any internal vertex with any path in \(\mathcal{P}\) or in any \(\mathcal{L}_{\sigma(q)}^r\) where \(q > r\).
		\item \label{prop:cpstwc:size}
			\(\Abs{\mathcal{L}_{\sigma(r)}^r} = f(r)\),
		\item \label{prop:cpstwc:minimal}
			for all \(r < q \leq k_1\) we have \(\Abs{\mathcal{L}}_{\sigma(q)}^r = p'(r)\), and \(\mathcal{L}_{\sigma(q)}^r\) is \(\mathcal{L}_{\sigma(r)}^r\)-minimal, and
		\item \label{prop:cpstwc:clean}
			for all \(1 \leq i, j \leq 3k\) where \(i \neq j\) and for all \(P \in \mathcal{L}_{\sigma^{-1}(i,j)}^r\) the path \(P\) has no vertex in common with any \(P_t\) for \(i \neq t \neq j\).
			\label{prop:cpstwc:last}
		\end{enamerate}

		We show that, if~\cref{prop:cpstwc:size,prop:cpstwc:minimal,prop:cpstwc:clean} hold on step \(1 \leq r \leq k_1\), then~\cref{prop:cpstwc:disjoint} holds on step \(r\) and if~\cref{prop:cpstwc:size,prop:cpstwc:minimal,prop:cpstwc:clean} hold on step \(1 \leq r < k_1\), then~\cref{prop:cpstwc:size,prop:cpstwc:minimal,prop:cpstwc:clean} also hold on step \(r + 1\).

		For \(r = 1\), we pick \(\mathcal{L}^1_{\sigma(1)} \subseteq \mathcal{L}_{s,t}\) arbitrarily, where $(s, t) = \sigma(1)$, so that \(\Abs{\mathcal{L}^1_{\sigma(1)}} = f(1)\), satisfying~\cref{prop:cpstwc:size} for \(r = 1\).
		Further, for each \(1 < q \leq k_1\), we choose \(\mathcal{L}^1_{\sigma(q)}\) as a $\mathcal{L}^1_{\sigma(1)}$-minimal \(\Start{\mathcal{L}_{\sigma(q)}}\)-\(\End{\mathcal{L}_{\sigma(q)}}\)-linkage in \(\ToDigraph{\mathcal{L}^1_{\sigma(1)} \cup \mathcal{L}_{\sigma(q)}}\) of order \(p'(1)\).
								This satisfies~\cref{prop:cpstwc:minimal} for \(r = 1\).
		Observe that~\cref{prop:cpstwc:clean} is satisfied for \(r = 1\) because \(\mathcal{S}\) is a clean \pathsystem.

		Now assume that~\cref{prop:cpstwc:minimal,prop:cpstwc:size,prop:cpstwc:clean} hold for step \(r \geq 1\).
		We construct the path \(R^r\) as follows.
		
		First, let \(\Brace{i,j} = \sigma(r)\).
		We consider two cases.
		\begin{CaseDistinction}
			\Case{There is a path \(P \in \mathcal{L}^{r}_{i,j}\) which, for each \(r < q \leq k_1\), is internally disjoint from at least \(f(r + 1)\) paths in \(\mathcal{L}^{r}_{\sigma(q)}\).}
			\label{case:cpstwc:web}

			We set \(R^{r} \coloneqq P\), satisfying~\cref{prop:cpstwc:disjoint} for \(r\).
			If \(r = k_1\), we are done with the iteration.
			Otherwise, let \(\Brace{s,t} \coloneqq \sigma(r + 1)\) and let \(\mathcal{L}^{r+1}_{s,t} \subseteq \mathcal{L}^r_{s,t}\) be an \(A_s^{\text{out}}\)-\(A_t^{\text{in}}\)-linkage of order \(f(r + 1) \leq p'(r)\) (satisfying~\cref{prop:cpstwc:size}) such that no path in \(\mathcal{L}^{r+1}_{s,t}\) has an internal vertex in \(\V{P} \cup \bigcup_{r' = 1}^r \V{R^{r'}}\) (towards satisfying~\cref{prop:cpstwc:disjoint} for \(r + 1\)).
			Because~\cref{prop:cpstwc:disjoint} holds for \(r\), we know that \(\mathcal{L}^r_{s,t}\) is internally disjoint from all \(R^{r'}\) with \(1 \leq r' < r\).
			Hence, such a linkage \(\mathcal{L}^{r+1}_{s,t}\) exists.
			Furthermore, as~\cref{prop:cpstwc:disjoint} holds for \(r\), we have that every path in \(\mathcal{L}^{r + 1}_{s,t}\) is disjoint from all \(P \in \mathcal{P} \setminus \Set{P_s, P_t}\) (towards satisfying~\cref{prop:cpstwc:clean}).

			For each \(r + 1 < q \leq k_1\), let \(\Brace{s', t'} = \sigma(q)\) and choose an \(A_{s'}^{\text{out}}\)-\(A_{t'}^{\text{in}}\)-linkage \(\mathcal{L}^{r+1}_{\sigma(q)}\) of order \(p'(r + 1)\) inside \(\ToDigraph{\mathcal{L}^{r}_{\sigma(q)} \cup \mathcal{L}^{r + 1}_{s,t}}\) which satisfies~\cref{prop:cpstwc:clean} such that every path in \(\mathcal{L}^{r+1}_{\sigma(q)}\) has no inner vertex in \(\V{P} \cup \bigcup_{r' = 1}^r \V{\mathcal{L}^r_{\sigma(r')}}\) and which is \(\mathcal{L}^{r+1}_{s,t}\)-minimal (satisfying~\cref{prop:cpstwc:minimal}).
			Thus,~\cref{prop:cpstwc:disjoint,prop:cpstwc:size,prop:cpstwc:minimal,prop:cpstwc:clean} hold for the step~\(r + 1\).

			\Case{For every \(P_z \in \mathcal{L}^{r}_{i,j}\) there is some \(r < q_z \leq k_1\) for which \(P_z\) intersects least \(p'(r) - f(r + 1) + 1 = x\) paths in \(\mathcal{L}^{r}_{\sigma(q_z)}\).}

			Let \(\Brace{i',j'} = \sigma(q_z)\).
			As \(\Abs{\mathcal{L}^r_{i,j}} = f(r) = (k_1 - r - 1) q'(r) + 1\), by the pigeon-hole principle there is a \(r < w \leq k_1\) and a \(\mathcal{Q} \subseteq \mathcal{L}^r_{i,j}\) of order \(q'(r)\) such that all paths in \(\mathcal{Q}\) intersect at least \(x\) paths in \(\mathcal{L}^r_{\sigma(w)}\).
			Hence, \(\Brace{\mathcal{L}^r_{\sigma(w)}, \mathcal{Q}}\) is a \(\Brace{p'(r), q'(r)}\)-semi-web of degree \(x\).
			Finally, as the starting points and endpoints of both \(\mathcal{Q}\) and \(\mathcal{L}^r_{\sigma(w)}\) lie in the well-linked set \(A_i^{\text{out}} \cup A_j^{\text{in}} \subseteq A\), we have that \(\Brace{\mathcal{L}^r_{\sigma(w)}, \mathcal{Q}}\) is also a well-linked semi-web.

			Applying~\cref{state:semi-web-to-web} to \(\Brace{\mathcal{L}^r_{\sigma(w)}, \mathcal{Q}}\) yields a well-linked \(\Brace{p_2, q''(r)}\)-web \(\Brace{\mathcal{P}_2, \mathcal{Q}_2}\) where \(\mathcal{P}_2\) is minimal with respect to \(\mathcal{Q}_2\), where \(p'(r) \geq p_2 \geq x\).
			As \(q''(r) \geq \bound{lemma:y-split or x-segmentation}{q}{p'(r), q, x, y, p'(r)}\) and \(\mathcal{P}_2\) is also \(p_2\)-minimal with respect to \(\mathcal{Q}_2\), we can apply~\cref{lemma:y-split or x-segmentation} to \(\Brace{\mathcal{P}_2, \mathcal{Q}_2}\), obtaining two cases.
			
			If~\cref{lemma:y-split or x-segmentation}\ref{item:y-split or x-segmentation:segmentation} holds, then we satisfy~\cref{item:clean-path-system-to-web-or-clique:segmentation}.
			Otherwise,~\cref{lemma:y-split or x-segmentation}\cref{item:y-split or x-segmentation:split} holds, satisfying~\cref{item:clean-path-system-to-web-or-clique:split}.
		\end{CaseDistinction}

		If Case~2 above never occurs, then we obtain a sequence of pairwise disjoint paths \(\mathcal{R} \coloneqq (R_{1}, R_{2}, \ldots, R_{k_1})\) such that for all \(1 \leq r \leq k_1\), the path \(R_r\) is an \(A_{i}^{\text{out}}\)-\(A_j^{\text{in}}\)-path which is disjoint from \(\mathcal{P}\), where \(\Brace{i,j} = \sigma(r)\).
		By~\cref{state:clean-and-disjoint-path-system-to-clique}, we obtain a bidirected clique of size \(k\) as a butterfly minor, satisfying~\cref{item:clean-path-system-to-web-or-clique:clique}.
\end{proof}

We conclude this section by combining the main statements proven above, yielding the following theorem, which is used later on, both in this paper and in~\cite{COSSII}.
The statement of the theorem is slightly stronger than what is required in this paper, as we only need the \(k\) disjoint cycles from the outcome~\ref{item:high-dtw-to-web:grid} of~\cref{state:high-dtw-to-segmentation-or-split}.
However, we need the statement in this form when improving the bounds of the Directed Grid Theorem.

We define
\begin{align*}
	\Func{\ell'}{x,y,q,k} & = \bound{state:path-system-to-clean-path-system}{\ell}{
				\bound{state:clean-path-system-to-web-or-clique}{p}{2k^2},
				\bound{state:clean-path-system-to-web-or-clique}{\ell}{x,y,q,2k^2}, x},
	\\[0em]
	\boundDefAlign{state:high-dtw-to-segmentation-or-split}{t}{x, y, q, k}
	\bound{state:high-dtw-to-segmentation-or-split}{t}{x, y, q, k} & =
		2(\bound{state:bramble-contains-linked-path-system}{k}{}(
			\Func{\ell'}{x,y,q,k},
				\\[0em] & \phantom{ = 2(\bound{state:bramble-contains-linked-path-system}{k}{}(~}
				\bound{state:path-system-to-clean-path-system}{p}{}(
					\bound{state:semi-web-to-web}{q}{}(
						\Func{\ell'}{x,y,q,k},
							\\[0em] & 
							\phantom{ = 2(\bound{state:bramble-contains-linked-path-system}{k}{}(\bound{state:path-system-to-clean-path-system}{p}{}(~}
							\bound{lemma:y-split or x-segmentation}{q}{}(
								\Func{\ell'}{x,y,q,k},
									q, x, y,
									\\[0em] &
									\phantom{ = 2(\bound{state:bramble-contains-linked-path-system}{k}{}(\bound{state:path-system-to-clean-path-system}{p}{}(\bound{lemma:y-split or x-segmentation}{q}{}(~}
									\Func{\ell'}{x,y,q,k}), x),
							\bound{state:clean-path-system-to-web-or-clique}{p}{2k^2})))).
\end{align*}
Note that \(\bound{state:high-dtw-to-segmentation-or-split}{t}{x,y,q,k} \in \PowerTower{5}{\Polynomial{9}{x,y,q,k}}\).

\begin{theorem}
	\label{state:high-dtw-to-segmentation-or-split}
	Let \(D\) be a digraph.
	Let \(k, x, y, q\) be integers.
	If \(\dtw{D} \geq \bound{state:high-dtw-to-segmentation-or-split}{t}{x,y,q,k}\), then \(D\) contains one of the following
	\begin{enamerate}{D}{item:high-dtw-to-web:last}
	\item
		\label{item:high-dtw-to-web:grid}
		a cylindrical grid of order \(k\) as a butterfly minor,
	\item 
		\label{item:high-dtw-to-web:split}
		a $(y, q)$-split \(\Brace{\mathcal{P}', \mathcal{Q}'}\) of some pair $(\mathcal{P}_1, \mathcal{Q}_1)$ in \(D\), where \(\End{\mathcal{Q}'}\) is well-linked to \(\Start{\mathcal{Q}'}\), or
	\item
		\label{item:high-dtw-to-web:segmentation}
		an $(x, q)$-segmentation $(\mathcal{P}', \mathcal{Q}')$ of some pair $(\mathcal{P}_1, \mathcal{Q}_1)$ in \(D\), where \(\End{\mathcal{P}'}\) is well-linked to \(\Start{\mathcal{P}'}\).
		\label{item:high-dtw-to-web:last}
	\end{enamerate}
\end{theorem}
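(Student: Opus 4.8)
The plan is to assemble the statement by chaining the lemmas proved in this section: \cref{state:dtw-to-bramble} turns high directed tree-width into a bramble of high order, \cref{state:bramble-contains-linked-path-system} turns that bramble into a linked path-system, \cref{state:path-system-to-clean-path-system} turns the path-system into either a well-linked semi-web or a \emph{clean} linked path-system, and finally \cref{state:clean-path-system-to-web-or-clique} (in the clean case) or \cref{state:semi-web-to-web} followed by \cref{lemma:y-split or x-segmentation} (in the semi-web case) delivers one of the three required conclusions. The value $\bound{state:high-dtw-to-segmentation-or-split}{t}{x,y,q,k}$ is defined to be exactly the composition of the bound functions of these lemmas, so that the object produced at each step is automatically large enough to serve as the input to the next.

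Concretely, I would fix the auxiliary parameters $\ell \coloneqq \Func{\ell'}{x,y,q,k}$, $p_2 \coloneqq \bound{state:clean-path-system-to-web-or-clique}{p}{2k}$, $\ell_2 \coloneqq \bound{state:clean-path-system-to-web-or-clique}{\ell}{x,y,q,2k}$, $q'' \coloneqq \bound{lemma:y-split or x-segmentation}{q}{\ell,q,x,y,\ell}$, $q_1 \coloneqq \bound{state:semi-web-to-web}{q}{\ell,q'',x}$ and $p \coloneqq \bound{state:path-system-to-clean-path-system}{p}{q_1,p_2}$; unwinding the definition of $\Func{\ell'}{x,y,q,k}$ gives $\ell = \bound{state:path-system-to-clean-path-system}{\ell}{p_2,\ell_2,x} \geq \ell_2$, and $\bound{state:high-dtw-to-segmentation-or-split}{t}{x,y,q,k} = 2\,\bound{state:bramble-contains-linked-path-system}{k}{\ell,p}$. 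Then \cref{state:dtw-to-bramble} applied to $D$ yields a bramble of order $\bound{state:bramble-contains-linked-path-system}{k}{\ell,p}$, \cref{state:bramble-contains-linked-path-system} yields an $\ell$-linked path-system $\mathcal{S}=(\mathcal{P},\mathcal{L},\mathcal{A})$ of order $p$, and---since $\ell \geq \bound{state:path-system-to-clean-path-system}{\ell}{p_2,\ell_2,x}$ and $p \geq \bound{state:path-system-to-clean-path-system}{p}{q_1,p_2}$, taking $d_1 \coloneqq x$---\cref{state:path-system-to-clean-path-system} yields either (\cref{item:path-system-to-clean-path-system:web}) a well-linked $(\ell,q_1)$-semi-web $(\mathcal{P}_1,\mathcal{Q}_1)$ of degree $x$ with $\mathcal{P}_1 \in \mathcal{L}$, $\mathcal{Q}_1 \subseteq \mathcal{P}$ and $\mathcal{P}_1$ minimal with respect to $\mathcal{Q}_1$, or (\cref{item:path-system-to-clean-path-system:clean}) a clean $\ell_2$-linked path-system of order $p_2$.

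In the clean case, since $\ell_2 = \bound{state:clean-path-system-to-web-or-clique}{\ell}{x,y,q,2k}$ and $p_2 = \bound{state:clean-path-system-to-web-or-clique}{p}{2k}$, I would invoke \cref{state:clean-path-system-to-web-or-clique} with grid parameter $2k$: outcome \cref{item:clean-path-system-to-web-or-clique:clique} gives a large bidirected clique as a butterfly minor of $D$, hence a cylindrical grid of order $k$ as a butterfly minor, which is \cref{item:high-dtw-to-web:grid}, while outcomes \cref{item:clean-path-system-to-web-or-clique:split} and \cref{item:clean-path-system-to-web-or-clique:segmentation} are literally \cref{item:high-dtw-to-web:split} and \cref{item:high-dtw-to-web:segmentation}. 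In the semi-web case, since $q_1 = \bound{state:semi-web-to-web}{q}{\ell,q'',x}$, \cref{state:semi-web-to-web} gives a well-linked $(p_1,q'')$-web $(\mathcal{P},\mathcal{Q})$ with $x \leq p_1 \leq \ell$, $\mathcal{P} \subseteq \mathcal{P}_1$, $\mathcal{Q} \subseteq \mathcal{Q}_1$ and $\mathcal{P}$ minimal with respect to $\mathcal{Q}$; by \cref{obs:H-minimal-implies-weakly-minimal} it is weakly $p_1$-minimal, and monotonicity of the bound of \cref{lemma:y-split or x-segmentation} together with $p_1 \leq \ell$ gives $q'' \geq \bound{lemma:y-split or x-segmentation}{q}{p_1,q,x,y,p_1}$, so \cref{lemma:y-split or x-segmentation} applies to $(\mathcal{P},\mathcal{Q})$ and produces a well-linked $(y,q)$-split or $(x,q)$-segmentation; because the endpoints of all paths of $\mathcal{P} \subseteq \mathcal{P}_1$ and $\mathcal{Q} \subseteq \mathcal{Q}_1$ lie in the well-linked set $A$ of $\mathcal{S}$, the well-linkedness in the exact form demanded by \cref{item:high-dtw-to-web:split}/\cref{item:high-dtw-to-web:segmentation} follows, as in the proof of \cref{state:clean-path-system-to-web-or-clique}.

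The bulk of the remaining work will be bookkeeping rather than a new idea: verifying that each inequality required above holds for the stated parameter choices (immediate from the way $\bound{state:high-dtw-to-segmentation-or-split}{t}{x,y,q,k}$ is built), and separately checking that this function lies in $\PowerTower{5}{\Polynomial{5}{x,y,q,k}}$---which comes down to tracking the power-tower height through the composition, with $\bound{state:clean-path-system-to-web-or-clique}{\ell}{}$ supplying the dominant height-two contribution and the single exponentiation in \cref{lemma:y-split or x-segmentation} plus the polynomial factors of \cref{state:semi-web-to-web}, \cref{state:path-system-to-clean-path-system} and \cref{state:bramble-contains-linked-path-system} keeping the total within height five. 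The one step I would treat most carefully is the semi-web branch: ensuring that the split or segmentation obtained from \cref{state:semi-web-to-web} and \cref{lemma:y-split or x-segmentation} is well-linked in the precise sense required by \cref{item:high-dtw-to-web:split}/\cref{item:high-dtw-to-web:segmentation}, i.e.\ that it is the correct endpoint set---not merely the complementary one---that stays inside $A$.
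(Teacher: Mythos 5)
Your proposal reproduces the paper's proof essentially verbatim: the same chain \cref{state:dtw-to-bramble} $\to$ \cref{state:bramble-contains-linked-path-system} $\to$ \cref{state:path-system-to-clean-path-system}, with the clean branch handled by \cref{state:clean-path-system-to-web-or-clique} and the semi-web branch by \cref{state:semi-web-to-web} followed by \cref{lemma:y-split or x-segmentation}, and the parameter instantiations (your $\ell, p_2, \ell_2, q'', q_1, p$ are the paper's $\ell_2, p_5, \ell_5, q_4, q_3, p_2$) are identical. The one detail you flag for careful treatment---that the web from \cref{state:semi-web-to-web} may have $p_1 < \ell$ so one needs monotonicity of $\bound{lemma:y-split or x-segmentation}{q}{}$ in both $p$ and $c$, and that the well-linkedness passed along must be on the correct linkage---is handled in the paper by exactly the same implicit monotonicity and by tracking well-linkedness of both $\End{\mathcal{P}_4}/\Start{\mathcal{P}_4}$ and $\End{\mathcal{Q}_4}/\Start{\mathcal{Q}_4}$, so there is no substantive difference.
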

\begin{proof}
	Let 
	\(k_6 = 2k^2\),
	\(\ell_5 = \bound{state:clean-path-system-to-web-or-clique}{\ell}{x,y,q,k_6}\),
	\(p_5 = \bound{state:clean-path-system-to-web-or-clique}{p}{k_6}\),
	\(d_3 = x\),
	\(\ell_2 = \bound{state:path-system-to-clean-path-system}{\ell}{p_5, \ell_5, d_3}\),
	\(q_4 = \bound{lemma:y-split or x-segmentation}{q}{\ell_2, q, x, y, \ell_2}\)
	\(q_3 = \bound{state:semi-web-to-web}{q}{\ell_2, q_4, d_3}\),
	\(p_2 = \bound{state:path-system-to-clean-path-system}{p}{q_3, p_5}\),
	\(k_1 = \bound{state:bramble-contains-linked-path-system}{k}{\ell_2, p_2}\).

	Observe that \(\ell_2 = \Func{\ell'}{x,y,q,k} \geq x\) and that \(\bound{state:high-dtw-to-segmentation-or-split}{t}{x,y,q,k} \geq 2k_1\).

	By~\cref{state:dtw-to-bramble}, \(D\) contains a bramble \(\mathcal{B}_1\) of order at least \(k_1\).
	By~\cref{state:bramble-contains-linked-path-system}, \(D\) contains an \(\ell_2\)-linked path-system \(\mathcal{S}_2\) of order \(p_2\).
	By applying~\cref{state:path-system-to-clean-path-system} to \(\mathcal{S}_2\), we obtain two cases.

	\begin{CaseDistinction}
		\Case{\cref{state:path-system-to-clean-path-system}\ref{item:path-system-to-clean-path-system:web} holds.}

		Then \(D\) contains a well-linked \(\Brace{\ell_2, q_3}\)-semi-web \(\Brace{\mathcal{P}_3, \mathcal{Q}_3}\) of degree \(d_3\) where \(\mathcal{P}_3\) is minimal with respect to \(\mathcal{Q}_3\) and \(\mathcal{P}_3 \in \mathcal{L}\).
		In particular, \(\End{\mathcal{P}_3}\) is well-linked to \(\Start{\mathcal{P}_3}\) as \(\Start{\mathcal{L}} \cup \End{\mathcal{L}}\) are vertices of a well-linked set, and \(\End{\mathcal{Q}_3}\) is well-linked to \(\Start{\mathcal{Q}_3}\) as the starting and endpoints of the paths in \(\mathcal{P}\) are vertices of a well-linked set.

		By~\cref{state:semi-web-to-web}, there is some \(p_4\) such that \(\ell_2 \geq p_4 \geq d_3\) and \(D\) contains a well-linked \(\Brace{p_4, q_4}\)-web \(\Brace{\mathcal{P}_4, \mathcal{Q}_4}\) where \(\mathcal{P}_4\) is minimal with respect to \(\mathcal{Q}_4\) and \(\mathcal{P}_4 \subseteq \mathcal{P}_3\), \(\mathcal{Q}_4 \subseteq \mathcal{Q}_3\).
		In particular, \(\mathcal{P}_4\) is also weakly \(p_4\)-minimal with respect to \(\mathcal{Q}_4\), \(\End{\mathcal{P}_4}\) is also well-linked to \(\Start{\mathcal{P}_4}\) and \(\End{\mathcal{Q}_4}\) is also well-linked to \(\Start{\mathcal{Q}_4}\)
		Applying~\cref{lemma:y-split or x-segmentation} to \(\Brace{\mathcal{P}_4, \mathcal{Q}_4}\) yields two cases.
		If~\cref{lemma:y-split or x-segmentation}\ref{item:y-split or x-segmentation:split} holds, then~\cref{item:high-dtw-to-web:split} is satisfied.
		Otherwise,~\cref{lemma:y-split or x-segmentation}\ref{item:y-split or x-segmentation:segmentation} holds, satisfying~\cref{item:high-dtw-to-web:segmentation}.

		\Case{\cref{state:path-system-to-clean-path-system}\ref{item:path-system-to-clean-path-system:clean} holds.}

		That is, \(D\) contains a clean \(\ell_5\)-linked \pathsystem \(\mathcal{S}_5\) of order \(p_5\).
		Applying~\cref{state:clean-path-system-to-web-or-clique} to \(\mathcal{S}_5\)	yields three cases.

		If~\cref{state:clean-path-system-to-web-or-clique}\ref{item:clean-path-system-to-web-or-clique:clique} holds, then \(D\) contains a bidirected clique of order \(k_6\) as a butterfly minor.
		As a cylindrical grid of order \(k\) contains \(k_6\) vertices, \(D\) also contains a cylindrical grid of order \(k\) as a butterfly minor, satisfying \ref{item:high-dtw-to-web:grid}.

		If~\cref{state:clean-path-system-to-web-or-clique}\ref{item:clean-path-system-to-web-or-clique:split} holds, then we obtain a \(\Brace{y,q}\)-split, satisfying \ref{item:high-dtw-to-web:split}.

		If~\cref{state:clean-path-system-to-web-or-clique}\ref{item:clean-path-system-to-web-or-clique:segmentation} holds, then we obtain an \(\Brace{x,q}\)-segmentation, satisfying \ref{item:high-dtw-to-web:segmentation}.
															\end{CaseDistinction}
\end{proof}

\section{Temporal digraphs and routings}
\label{sec:temporal}

In our proof, we are frequently faced with problems of the following form.
We have already constructed two linkages $\PPP$ and $\QQQ$, say.
However, while the paths within the same linkage are disjoint by definition, a pair of paths from different linkages may intersect arbitrarily or not at all.
So the intersection pattern between the two linkages can be arbitrarily complex.
The problem then is to find a subgraph of a specific form in $\bigcup \PPP \cup \bigcup \QQQ$.
Problems of this form occur frequently in this research area, including in the proof of the Directed Grid Theorem~\cite{kawarabayashi2015directed}.

In this section, we develop a framework based on temporal digraphs which allows us to formulate these problems more abstractly.
This abstraction allows us to simplify many arguments and unify proofs by isolating the core ideas common to several of these proofs.
Moreover, our framework allows us to obtain much better bounds and to prove elementary bounds for results that require non-elementary bounds in~\cite{kawarabayashi2015directed}.

There are several different definitions of temporal graphs and temporal walks, each applicable in a different context.
Here, we use the notation from~\cite{CasteigtsHMZ20, Molter20} and adapt it to the directed setting. 
We first define our notion of temporal digraphs and walks, and then discuss how they arise in our context. 

\begin{definition}
	\label{def:temporal-digraph}
	A \emph{temporal digraph} is a pair $T = (V, \AAA)$ consisting of a vertex set $V$ and sequence of arc sets $\mathcal{A} = \Brace{A_1, A_2, \dots A_{\ell}}$ such that $\Layer{T}{t} \coloneqq \Brace{V, A_t}$ is a digraph for all $1 \leq t \leq \ell$. 
	We also refer to $\Layer{T}{t}$ as \emph{layer $t$} of $T$ and call $t$ a \emph{\timestep}.
	The \emph{lifetime} of $D$ is given by $\Lifetime{D} \coloneqq \ell$.
\end{definition}

Next, we define paths and walks in the temporal setting.
A \emph{temporal walk} in a temporal digraph $T$ is required to obey the \say{timeline} of $T$, i.e.~the order in which arcs occur on the walk must respect the \timesteps of $T$. 
In our setting, we need a more restrictive definition that only allows a temporal walk to use at most one arc of each layer.

\begin{definition}
	\label{def:temporal-walks-and-paths}
	A \emph{temporal walk} of length $n$ from $v_0$ to $v_n$ in a temporal digraph $T$ is a sequence $W \coloneqq (v_0, t_0), (v_1, t_1), \ldots, (v_n, t_n)$ such that $(v_i, v_{i+1}) \in A_{t_{i}}$ and $t_i < t_{i+1} \leq \Lifetime{T}$ for all $0 \leq i \leq n - 1$.
	If such a walk exists, we say that $v_0$ \emph{temporally reaches} $v_n$.
	A temporal walk is said to be a \emph{temporal path} if no vertex appears twice in the sequence.
	Finally, we say that $W$ \emph{departs} at $t_0$ and \emph{arrives} at $t_n$, and that $t_n - t_0$ is the \emph{duration} of $W$.
\end{definition}

We often have situations where some linkage \(\mathcal{P}\) intersects pairwise disjoint digraphs \(Q_{1}, \ldots, Q_{q}\).
If the paths in \(\mathcal{P}\) \say{agree} on the order of the digraphs \(Q_i\), then we can construct an auxiliary temporal digraph in the following way.

\begin{definition}
	\label{def:routing-temporal-digraph}
	Let $\mathcal{P}$ be a linkage and let $\mathcal{Q} = \Set{Q_1, Q_2, \dots, Q_{q}}$ be a set of pairwise disjoint digraphs such that each path $P_i \in \mathcal{P}$ can be partitioned as $P_i^1 \cdot P_i^2 \cdot \ldots \cdot P_i^{q} = P_i$ such that  $\V{P_i^j} \cap \V{\mathcal{Q}} \subseteq \V{Q_j}$ for all $1 \leq j \leq q$.
	
	The \emph{routing temporal digraph $\Brace{V, \mathcal{A}}$ of $\mathcal{P}$ through $\mathcal{Q}$} is constructed as follows.
    We set $V = \mathcal{P}$ and for each $1 \leq j \leq q$ we define $A_j = \{(P_a, P_b) \mid P_a, P_b \in \mathcal{P}$ and there is a path from \(\V{P_a}\) to \(\V{P_b}\) inside \(Q_j\) which is internally disjoint from \(\mathcal{P}\)\(\}\).
\end{definition}

We refer the reader to \cref{fig:routing_temporal_digraph} for an illustration of the definition above
and to \cref{sec:obtaining-a-path-of-well-linked-sets} for a discussion of why routing temporal digraphs are defined precisely in this way.

\subsection{\texorpdfstring{$H$}{H}-routings}
\label{sec:H-routings}

Since the connectivity provided by each layer of a temporal digraph can differ considerably, we need some way to describe connectivity involving several layers.
We also want to describe connectivity between specific sets of vertices.
We are particularly interested in deciding when we can construct a path $R$ between two paths $P_1, P_2$ such that $R$ is disjoint from some specific set $\mathcal{P}' \subseteq \mathcal{P}$.

The concept of \emph{$H$-routings}, given in~\cref{def:h-routings} below, allows us to use a digraph $H$ in order to describe which connections we need.
Since both digraphs and temporal digraphs have a concept of paths, we can define $H$-routings for both digraphs and temporal digraphs analogously.

\begin{definition}
	\label{def:h-routings}
	Let $H$ be a digraph, $D$ be a digraph or temporal digraph, and $S\subseteq \V{D}$.
	An \emph{$H$-routing (over $S$)} is a bijection $\varphi : \V{H} \to S$ such that for each $v$-$u$-path $P$ in $H$ we can find a $\varphi(v)$-$\varphi(u)$-path (or temporal path, resp.) in $D$ which is disjoint from $S \setminus \Fkt{\varphi}{\V{P}}$. 
\end{definition}

Note that reachability in temporal digraphs is not transitive, as the example in~\cref{fig:routing_temporal_digraph} demonstrates: in the temporal digraph $T_{1,3} \coloneqq \Brace{\{P_1, P_2, P_3\}, \{A_1, A_3\}}$ containing only the layers $A_1$ and $A_3$, $P_a$ is reachable from $P_c$ and $P_b$ is reachable from $P_a$ but $P_b$ is not reachable from $P_c$.
To get a meaningful concept of $H$-routings, we therefore have to require that not only for every arc in $H$ but also for every path in $H$ there is a temporal path in $T$ with the same start and endpoint.

Compared to other containment relations, \(H\)-routings are rather \emph{weak} in the sense given by the three observations below.
We first recall the definitions of strong immersions and strong minors.

An immersion of a digraph \(H\) in a digraph \(D\) is a function \(\phi\) from vertices and arcs of \(H\) such that
\begin{itemize}
	\item 
		\(\phi(v)\) is a vertex of \(D\) for each \(v \in \V{H}\), and the restriction of φ to \(\V{H}\) is injective.
	\item
		\(\phi(a)\) is a directed path (or cycle) in \(D\) for each \(a ∈ \A{H}\), and if \(b \in \A{H}\) is distinct from \(a\), then \(\phi(a)\) and \(\phi(b)\) are arc-disjoint.
    \item 		
	   If \(a \in \A{H}\) has tail \(u \in \V{H}\) and head \(v \in \V{H}\), then \(\phi(a)\) is a \(\phi(u)\)-\(\phi(v)\)-path, and if \(a\) is a loop, then \(\phi(a)\) is a cycle passing through \(\phi(v)\).
\end{itemize}
An immersion \(\phi\) is \emph{strong} if it additionally satisfies the following condition:
If \(a ∈ \A{H}\) is not incident with \(v ∈ \V{H}\), then \(φ(a)\) does not contain \(φ(v)\).
We say an immersion is \emph{weak} if we do not require it to be strong.

\begin{observation}
If a digraph \(D\) contains \(H\) as a strong immersion, then it also contains an \(H\)-routing.
\end{observation}
\begin{proof}
	Let φ be an immersion of \(H\) in \(D\) and let \(φ'\) be the restriction of φ to \(\V{D}\).
		
	By definition, \(φ'\) is injective.
	Further, for every arc \((u,v) ∈ \A{H}\), the \(φ(u)\)-\(φ(v)\)-path given by \(φ((u,v))\) witnesses the existence of a \(φ'(u)\)-\(φ'(v)\)-path in \(D\) which is disjoint from all vertices in the image of \(φ'\), except for \(u\) and \(v\).
	Hence, \(φ'\) is an \(H\)-routing in \(D\).
\end{proof}

The statement above is not true if we take weak immersions instead, as illustrated in \cref{fig:weak-immersion-vs-H-routing}.

\begin{figure}[!h]
	\centering
	\begin{minipage}[b]{0.38\textwidth}
		\centering
		\begin{tikzpicture}
			\node[vertex,fill = green] (v0) at (0,0) {};
			\node[below = 0.1cm of v0] (l0) {\(v_0\)};
			\node[vertex, fill = blue] (v1) at (1,0) {};
			\node[below = 0.1cm of v1] (l1) {\(v_1\)};
			\node[vertex, fill = purple] (v2) at (2,0) {};
			\node[below = 0.1cm of v2] (l2) {\(v_2\)};
			\node[vertex, fill = red] (v3) at (1,1) {};
			\node[above = 0.1cm of v3] (l3) {\(v_3\)};
			\path[directededge,draw=black] (v0) to (v1);
			\path[directededge,draw=black] (v1) to (v2);
			\path[directededge,draw=black] (v1) to (v3);
			\node (D) at (-0.5,0.5) {\(D\):};

			\begin{scope}[shift={(3.5, 0)}]
			\node[vertex, fill = blue] (u0) at (0,0) {};
			\node[below = 0.1cm of u0] (l0) {\(u_1\)};
			\node[vertex,fill = red] (u1) at (1,0) {};
			\node[below = 0.1cm of u1] (l1) {\(u_3\)};
			\node[vertex, fill = green] (u2) at (0,1) {};
			\node[above = 0.1cm of u2] (l2) {\(u_0\)};
			\node[vertex, fill = purple] (u3) at (1,1) {};
			\node[above = 0.1cm of u3] (l3) {\(u_2\)};
			\path[directededge,draw=black] (u0) to (u1);
			\path[directededge,draw=black] (u2) to (u3);
			\node (H) at (-0.5,0.5) {\(H\):};
			\end{scope}
		\end{tikzpicture}
	\end{minipage}
	\hfill
	\begin{minipage}[b]{0.6\textwidth}
		\caption{The digraph \(D\) contains \(H\) as a weak immersion, as witnessed by the mapping \(u_i \mapsto v_i\).
		However, it does not contain an \(H\)-routing.}
		\label{fig:weak-immersion-vs-H-routing}
	\end{minipage}
\end{figure}

In order to define strong minors, we need to first define a digraph operation.
Let \(D, H\) be two digraphs.
A \emph{strong contraction} of a strongly-connected component \(X\) of \(D\)
is the operation of replacing \(X\) by a single vertex \(x\) and,
for each \(v \in \V{D - X}\),
adding the arc \((v, x)\) if \(\OutN{}{v} \cap X \neq \emptyset\) and
adding the arc \((x, v)\) if \(\InN{}{v}  \cap X \neq \emptyset\).

We say that \(H\) is a \emph{strong minor} of \(D\)
if we can obtain \(H\) from a subgraph of \(D\) 
through strong contractions.

\begin{observation}
	If a digraph \(D\) contains \(H\) as a strong minor,
then it also contains an \(H\)-routing.
\end{observation}
\begin{proof}
	For each vertex in \(v ∈ \V{H}\),
	let \(V_v\) be the set of vertices which were strongly contracted in \(D\)
	when constructing \(H\) as a strong minor.
	Construct a function \(φ : \V{H} \to \V{D}\)
	by fixing some arbitrary \(u_v \in V_v\) for each \(v ∈ \V{H}\) and
	setting \(φ(v) = u_v\).

	Because strongly connected sets of vertices are contracted into a single vertex
	during strong contractions, the sets \(V_v\) defined above are pairwise vertex-disjoint.
	Hence, φ is injective.

	Let \((u, v) ∈ \A{H}\).
	By definition of strong minors, there is some vertex \(u' \in V_u\) and some \(v' \in V_v\)
	such that
	\((u', v') ∈ \A{D}\).
	Since \(V_u\) and \(V_v\) are strongly connected sets,
	there is a path from \(φ(u)\) to \(φ(v)\)
	using only vertices of \(V_u \cup V_v\).
	Thus, \(φ\) is an \(H\)-routing in \(D\).
\end{proof}

\begin{observation}
    If a digraph \(D\) contains \(H\) as a butterfly minor, then it also contains an \(H\)-routing.
\end{observation}
\begin{proof}
    For each vertex \(v \in \V{H}\), let \(V_v\) be the set of vertices of \(D\) which were butterfly contracted into \(v\).
    By definition of butterfly contraction, there is some vertex \(u_v \in V_v\) such that, for all \(x \in V_v\), \(u_v\) can reach \(x\) or \(x\) can reach \(u_v\).
    We then define \(φ(v) = u_v\) for every \(v ∈ \V{H}\).
    The fact that \(φ\) is an \(H\)-routing in \(D\) follows from the fact that the sets \(V_v\) defined above are pairwise vertex-disjoint and the paths connecting \(V_{v}\) to \(V_{u}\) for each arc \((v,u)\) in \(H\) do not intersect other sets \(V_w\) if \(v \neq w \neq u\).
\end{proof}

To see that \(H\)-routings are distinct from the previous containment relations, consider the bidirected star with \(k\) leaves.
It contains a \(\biK{k}\)-routing, but it does not contain even a \(\biK{3}\) as a strong immersion, butterfly minor or strong minor.

We now consider how to apply \(H\)-routings in our setting.
Let us briefly consider the following statement proved by Leaf and Seymour for undirected graphs to motivate our next results. 

\begin{lemma}[{\cite[statement 2.3]{leaf2015tree}}]
	\label{state:many-leaves-or-long-paths}
	Let $r \geq 1$ and $h \geq 3$ be integers, let $G$ be a connected graph with $\Abs{\V{G}} \geq (r + 2)(2h - 5) + 2$.
	Then, $G$ contains one of the following
	\begin{itemize}
		\item a path with $r$ vertices whose internal vertices have degree two in $G$, or
		\item a spanning tree $T$ with at least $h$ leaves.
	\end{itemize}
\end{lemma}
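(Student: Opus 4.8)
The plan is an extremal argument on spanning trees. First I would dispose of the degenerate case: if $\Delta(G)\le 2$ then $G$ is a path or a cycle, and since $\Abs{\V{G}}\ge (r+2)(2h-5)+2>r$ (using $h\ge 3$, so $2h-5\ge 1$), $G$ contains a path on $r$ vertices all of whose internal vertices have degree $2$; so from now on assume $G$ has a vertex of degree at least $3$. Then I would take a spanning tree $T$ of $G$ with the maximum number of leaves. If $T$ has at least $h$ leaves we are done, so assume it has at most $h-1$.

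The second step is pure tree-counting. In any tree $\sum_{v}(\deg_T(v)-2)=-2$, so $\sum_{\deg_T(v)\ge 3}(\deg_T(v)-2)$ equals the number of leaves minus $2$, hence at most $h-3$; therefore $T$ has at most $h-3$ vertices of degree at least $3$, and at most $(h-1)+(h-3)=2h-4$ vertices of degree $\neq 2$. Suppressing all degree-$2$ vertices of $T$ gives a tree on at most $2h-4$ vertices, hence with at most $2h-5$ edges; these edges correspond exactly to the maximal paths of $T$ whose internal vertices all have $T$-degree $2$ (I will call these the \emph{bare segments} of $T$), and the interiors of the bare segments partition the set of degree-$2$ vertices of $T$. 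Since $\Abs{\V{G}}-(2h-4)\ge (r+2)(2h-5)+2-(2h-4)=(r+1)(2h-5)+1$ vertices of $T$ have degree $2$, and they are distributed among at most $2h-5$ bare segments, some bare segment $P=x_0x_1\cdots x_k$ has at least $r+2$ internal vertices, so $\Abs{\V{P}}\ge r+4$.

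The hard part is to promote $P$ from a bare segment of $T$ to a path that is bare in $G$. Suppose some internal vertex $x_i$ of $P$ has a neighbour $u$ in $G$ outside $\{x_{i-1},x_{i+1}\}$. Adding $x_iu$ to $T$ creates a unique cycle $C$; because $x_i$ has $T$-degree $2$, $C$ leaves $x_i$ along $P$, and since the internal vertices of $P$ have $T$-degree $2$, $C$ follows $P$ until it exits at an endpoint of $P$. Deleting from $C$ the edge $x_{i+1}x_{i+2}$ (or, in the boundary cases, a suitably chosen nearby edge of $C$) makes $x_{i+1}$, and usually also $x_{i+2}$, into new leaves while changing the degree of at most one old leaf; a short case check then produces a spanning tree with strictly more leaves than $T$, contradicting the choice of $T$. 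The one configuration in which this exchange is merely \emph{neutral} is when $u$ is itself a leaf of $T$ whose unique $T$-neighbour has $T$-degree exactly $3$; I would handle this by fixing a secondary extremal parameter for $T$ (for instance, among all maximum-leaf spanning trees one maximising the length of a longest bare segment) and tracking how the bare segments transform under the resulting neutral exchanges, which is exactly what forces the coefficient $2h-5$ rather than the naive $h-3$. Once every internal vertex of $P$ is known to have degree $2$ in $G$, the path $P$ is bare in $G$ and has at least $r$ vertices, and any sub-path of $P$ on exactly $r$ vertices is the claimed object.

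I expect this last step — guaranteeing that the leaf-exchange strictly increases the number of leaves in every configuration, in particular circumventing the neutral exchanges through the secondary extremal choice — to be the main obstacle; the counting in the first two steps is routine.
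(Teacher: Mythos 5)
The paper does not prove this lemma: it is quoted verbatim from Leaf and Seymour~\cite{leaf2015tree} (their statement~(2.3)), serving only to motivate the directed analogue \cref{thm:routing star or path}, so there is no in-paper argument to compare your proposal against. Judged on its own, your setup and counting are correct: the identity $\sum_v(\deg_T v-2)=-2$ bounds the branch vertices by $h-3$, hence the vertices of degree $\ne 2$ by $2h-4$ and the bare segments (edges of the suppressed tree) by $2h-5$; the arithmetic giving a bare segment with at least $r+2$ internal vertices checks out, and the degenerate case $\Delta(G)\le 2$ is handled.

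The gap is exactly where you flag it, and it is genuine. Your characterization of when the exchange is neutral (``$u$ a leaf of $T$ whose unique $T$-neighbour has degree \emph{exactly} $3$'') is both incomplete and slightly off. Deleting $x_{i+1}x_{i+2}$ creates \emph{two} new leaves only when $i\le k-3$; when $i=k-2$ and the cycle exits $P$ at $x_k$ (which has $T$-degree $\ge 3$), only $x_{k-1}$ becomes a leaf, and the net change is zero precisely when $u$ was a $T$-leaf. The alternative exchange of deleting $u$'s tree edge is neutral whenever $u$'s $T$-neighbour has degree at least three (not exactly three), and one easily constructs $T$ in which \emph{every} edge of the fundamental cycle gives a neutral or negative exchange, so this is not a fringe case. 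Your proposed remedy --- among maximum-leaf spanning trees take one maximising the length of a longest bare segment --- is not verified to be a correct monovariant: the neutral swap can shorten $P$ while merging or creating bare segments elsewhere (e.g.\ if $\deg_T(x_k)=3$ then $x_k$ drops to degree $2$ and two formerly separate segments coalesce), so it is not established that the exchange process terminates in a contradiction. Finally, the aside that the neutral exchanges are ``exactly what forces the coefficient $2h-5$ rather than the naive $h-3$'' conflates two unrelated sources: the $2h-5$ comes purely from the bare-segment count (leaves plus branch vertices minus one) and would be needed even if every exchange were strictly increasing. As written, the exchange step --- which you correctly identify as the only hard step --- is a plan, not a proof.
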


In the undirected setting, both cases of the previous lemma can be useful regarding the connectivity they provide:
For every pair of leaves, a tree contains a path connecting them without intersecting any other leaves, whereas a long path with internal vertices of degree two yields can have a useful special meaning if the graph is an auxiliary graph describing the intersections between some objects,
as such a path provide a unique way to traverse the objects corresponding to the vertices of the path.

In the directed setting, however, neither out-trees nor in-trees provide any connectivity between their leaves.
To get a statement that we can use in the sequel, we therefore have to find an alternative for the trees used in~\cref{state:many-leaves-or-long-paths}.
We first need the following well-known result about acyclic digraphs, often stated in terms of chains and anti-chains in partial orderings.

\begin{observation}
	\label{lem:DagIS}
	Every acyclic digraph $D$ with more than $\ell \cdot p$ vertices but no $\Pk{p}$ as a subgraph contains a set $X \subseteq \V{D}$ of size $\ell$ such that no vertex in $X$ can reach any other vertex in $X$.
\end{observation}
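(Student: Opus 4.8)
The plan is to use the standard layering argument behind Mirsky's theorem, combined with the pigeonhole principle. The point is that, since $D$ is acyclic, the reachability relation is a strict partial order on $\V{D}$, and a directed path in $D$ is exactly the same thing as a chain of this order; in particular a directed walk in $D$ never repeats a vertex (a repetition would produce a directed cycle). By hypothesis $D$ has no $\Pk{p}$ as a subgraph, so every directed path, equivalently every chain, has at most $p - 1$ vertices.

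First I would define a height function $h \colon \V{D} \to \Set{1, \dots, p-1}$ by letting $h(v)$ be the maximum number of vertices on a directed path of $D$ ending in $v$. This is well-defined and finite because $D$ is acyclic, and $h(v) \leq p - 1$ by the assumption on $D$. The key claim is that each level set $X_i \coloneqq h^{-1}(i)$ is an antichain for reachability: if $u, v \in X_i$ with $u \neq v$ and $u$ reaches $v$, then concatenating a longest directed path ending in $u$ with a $u$-$v$-path gives, using acyclicity so that no vertex is repeated, a directed path ending in $v$ with strictly more than $h(u) = i$ vertices, contradicting $h(v) = i$.

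Then $X_1, \dots, X_{p-1}$ partition $\V{D}$ into at most $p - 1$ antichains. Since $\Abs{\V{D}} > \ell p > (p-1)(\ell - 1)$ (the inequality $\ell p > (p-1)(\ell-1)$ holds because their difference is $p + \ell - 1 \geq 1$), not all of the $p-1$ level sets can have at most $\ell - 1$ vertices, so some $X_i$ has at least $\ell$ vertices. Choosing any $\ell$ of them gives the desired set $X$, since no vertex of an antichain reaches another.

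There is no genuine obstacle here; the only point requiring a moment of care is the repeated use of the fact that in an acyclic digraph a directed walk is automatically a directed path, which is what makes the height function behave exactly like a longest-chain function and makes the concatenation argument legitimate.
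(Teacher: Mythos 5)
Your proof is correct and takes essentially the same approach as the paper: both define a height/level function via longest directed paths ending at each vertex, observe that each level set is a reachability antichain (using acyclicity so that concatenation of a longest path with a reachability path yields a longer path), and then apply the pigeonhole principle. The only cosmetic differences are that you count vertices rather than edges for "length" and you work out the pigeonhole inequality more explicitly; these do not change the argument.
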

\begin{proof}
	For each $i \geq 0$ let $L_i \coloneqq \{ v \in \V{D} \sth \text{the longest path from a source to $v$ in $T$} \text{ has}\allowbreak \text{length } i \}$. 
	As, by assumption, $D$ does not contain a path of length $p$, $L_i = \emptyset$ for all $i \geq p$.
	Furthermore, by construction, no vertex $v \in L_i$ can reach any other vertex $u \in L_i$, as otherwise the longest path from a source to $u$ would be longer than $i$. 
	
	Every vertex of $D$ lies in some $L_i$.
    By the pigeon-hole principle, at least one $L_i$ must contain at least $\ell$ vertices, proving the claim. 
\end{proof}

In the next~\namecref{lemma:many-strong-components-implies-path-or-Kk}, we establish a simple base case where we are guaranteed to either find a long path or a $\biK{k}$-routing. 
\begin{lemma}
	\label{lemma:many-strong-components-implies-path-or-Kk}
	Let $D$ be a strongly connected digraph.
	Let $s \in \V{D}$ be a vertex such that $D - \Set{s}$ contains at least $kp$ strongly connected components.
	Then, $D$ contains one of the following:
	\begin{enamerate}{B}{item:many-strong-components-implies-path-or-Kk:Kk}
		\item \label{item:many-strong-components-implies-path-or-Kk:path}
		a $\Pk{p}$ as a subgraph, or
		\item \label{item:many-strong-components-implies-path-or-Kk:Kk}
		a $\biK{k}$-routing over some $S \subseteq \V{D}$.
	\end{enamerate}
\end{lemma}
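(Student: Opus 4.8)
The plan is to study the condensation $\hat D$ of $D - \{s\}$, whose vertices are the strongly connected components of $D - \{s\}$ and which has an arc $C \to C'$ whenever $D$ has an arc from a vertex of $C$ to a vertex of $C'$; by hypothesis $\hat D$ is acyclic with at least $kp$ vertices, and I would branch on whether $\hat D$ contains $\Pk{p}$ as a subgraph. Suppose first it contains a directed path $C_1 \to C_2 \to \dots \to C_p$ (the case $p \le 1$ being trivial). For each $1 \le l < p$ fix an arc $(a_l, b_{l+1})$ of $D$ witnessing $C_l \to C_{l+1}$, with $a_l \in C_l$ and $b_{l+1} \in C_{l+1}$, and fix an arbitrary $b_1 \in C_1$. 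Since each $C_l$ is strongly connected there is a $b_l$-$a_l$ path $R_l$ inside $C_l$ for $1 \le l < p$, and as the components are pairwise disjoint the concatenation $R_1 \cdot (a_1, b_2) \cdot R_2 \cdots R_{p-1} \cdot (a_{p-1}, b_p)$ is a directed path of $D$ meeting every $C_l$, hence a directed path on at least $p$ vertices. This contains $\Pk{p}$ as a subgraph, giving \cref{item:many-strong-components-implies-path-or-Kk:path}.

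Otherwise $\hat D$ contains no $\Pk{p}$-subgraph, so the pigeonhole argument in the proof of \cref{lem:DagIS} applies: the components of $D - \{s\}$ split into at most $p-1$ levels of pairwise unreachable components, and since there are at least $kp$ of them some level contains at least $k$ components $C_1, \dots, C_k$ with the property that $C_i$ cannot reach $C_j$ in $\hat D$ for all $i \ne j$ — equivalently, for $i \ne j$ there is no path in $D - \{s\}$ from a vertex of $C_i$ to a vertex of $C_j$. Pick a representative $u_i \in C_i$ for each $i$, set $S = \{u_1, \dots, u_k\}$, and let $\varphi \colon \V{\biK{k}} \to S$ send the $i$-th vertex to $u_i$. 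I claim $\varphi$ is a $\biK{k}$-routing over $S$, which gives \cref{item:many-strong-components-implies-path-or-Kk:Kk}. Because $\biK{k}$ is complete, every path of $\biK{k}$ with endpoints mapping to $u_i$ and $u_j$ has vertex set containing both preimages, so it is enough to produce, for each ordered pair $i \ne j$, a single $u_i$-$u_j$ path of $D$ disjoint from $S \setminus \{u_i, u_j\}$ (the trivial one-vertex paths of $\biK{k}$ being routed by their image vertex).

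To build such a path, fix $i \ne j$; using strong connectivity of $D$, take a shortest $u_i$-$s$ path $P_1$ and a shortest $s$-$u_j$ path $P_2$ in $D$. Every vertex of $P_1$ other than $s$ lies in $D - \{s\}$ and is reachable from $u_i$ along $P_1$ inside $D - \{s\}$, so its component is reachable from $C_i$ in $\hat D$; by the choice of $C_1, \dots, C_k$ this component is not $C_l$ for any $l \ne i$, whence $P_1$ meets $S$ only in $u_i$. Symmetrically $P_2$ meets $S$ only in $u_j$. Moreover $P_1$ and $P_2$ share only the vertex $s$, since a common vertex distinct from $s$ would be reachable from $u_i$ and would reach $u_j$, both inside $D - \{s\}$, contradicting that $C_i$ does not reach $C_j$. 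Hence $P_1 \cdot P_2$ is a $u_i$-$u_j$ path of $D$ whose intersection with $S$ is exactly $\{u_i, u_j\}$, which is precisely what the routing requires.

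I expect the antichain case to be the crux. The key realization is that the portion of any path leaving $u_i$ that is traversed before its first visit to $s$ stays inside the set of descendants of $C_i$ in $\hat D$, and it is exactly this confinement — together with $C_1, \dots, C_k$ being pairwise unreachable — that makes those segments automatically avoid all the other representatives $u_l$, so that no explicit rerouting around the $u_l$'s is ever needed; the complementary observation that $P_1$ and $P_2$ meet only in $s$ (again forced by pairwise unreachability) is what keeps their concatenation a genuine path rather than merely a walk.
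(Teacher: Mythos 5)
Your proof is correct and follows essentially the same strategy as the paper's: pass to the condensation of $D - \{s\}$, invoke \cref{lem:DagIS} to extract $k$ pairwise unreachable components, pick representatives, and route each pair through $s$. The only cosmetic difference is that you take shortest $u_i$-$s$ and $s$-$u_j$ paths and verify disjointness directly, whereas the paper first identifies a sink and a source component carrying arcs to and from $s$; both hinge on the same observation that the first leg stays among descendants of $C_i$ and the second among ancestors of $C_j$, which cannot meet any other representative or each other because the chosen components form an antichain.
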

\begin{proof}
	We show that~\cref{item:many-strong-components-implies-path-or-Kk:Kk} holds if~\cref{item:many-strong-components-implies-path-or-Kk:path} does not hold.
	
	Let $T$ be the acyclic digraph of strongly connected components of $D - \Set{s}$.
	As $D$ has no path of length $p$, $T$ also has no such path.
	Thus, by~\cref{lem:DagIS}, $T$ contains a set $X' \subseteq \V{T}$ of size $k$ such that no vertex in $X'$ can reach any other vertex in $X'$ in $T$.
	
	Let $X \subseteq V(D)$ be a set of size $|X| = |X'|$ which contains a vertex $v_i \in \V{C_i}$ for each strong component $C_i \in X'$ of $D - \Set{s}$.
	Let $\varphi : \V{\biK{k}} \to \Set{v_1, v_2, \dots, v_{k}}$ be a bijection.
	We show that $\varphi$ is a $\biK{k}$-routing in $D$.
	
	Let $v_i, v_j \in X$ be two distinct vertices.
	Let $C_t \in \V{T}$ be a sink in $T$ which is reachable from $C_i$ and let $C_r \in \V{T}$ be a source in $T$ which can reach $C_j$.
	
	Since $D$ is strongly connected and $D - \Set{s}$ is not, there is some $u_t \in \V{C_t}$ and some $u_r \in \V{C_r}$ such that $(u_t, s)$ and $(s, u_r)$ are arcs in $D$.
	
	Because $C_i$ can reach $C_t$ and $C_r$ can reach $C_j$ in $T$, there is a $v_i$-$u_t$-path $P_{i,t}$ and a $u_r$-$v_j$-path $P_{r,j}$ in $D$ such that $P_{i,t}$ and $P_{r,j}$ do not intersect any vertex in $X \setminus \Set{v_i, v_j}$.
	Hence, $P_{i,t} \cdot (u_t,s) \cdot (s,u_s) \cdot P_{r,j}$ is a $v_i$-$v_j$-path in $D$ which is disjoint from $X \setminus \Set{v_i, v_j}$.
	
	We conclude that $\varphi$ is a $\biK{k}$-routing in $D$, and so~\cref{item:many-strong-components-implies-path-or-Kk:Kk} holds, as desired.
\end{proof}

We now prove an analogous statement to~\cref{state:many-leaves-or-long-paths} for directed graphs.

\begin{theorem}
	\label{thm:routing star or path}
	Let \boundDef{thm:routing star or path}{n}{k, p} $\bound{thm:routing star or path}{n}{k, p} \coloneqq 2k^2p^3$.
	Every strongly connected digraph $D$ with $\Abs{\V{D}} \geq \bound{thm:routing star or path}{n}{k, p}$  contains one of the following: 
	\begin{enamerate}{Q}{item:routing star or path:Kk}
		\item \label{item:routing star or path:path}
		a $\Pk{p}$ as a subgraph, or
		\item \label{item:routing star or path:Kk}
		a $\biK{k}$-routing over some $S \subseteq \V{D}$.
	\end{enamerate}
\end{theorem}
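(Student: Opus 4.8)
The plan is to give a direct construction; beyond the observation that drives \cref{lemma:many-strong-components-implies-path-or-Kk} — shortest‑path trees in a digraph without a long path are shallow — no further separator or bramble machinery is needed. First I would fix an arbitrary vertex $r$ and take a shortest‑path out‑branching $T^{\mathrm{out}}$ and a shortest‑path in‑branching $T^{\mathrm{in}}$, both rooted at $r$ (these exist because $D$ is strongly connected). If either has depth at least $p-1$, a root‑to‑vertex path in it is a directed path on at least $p$ vertices, so $D$ contains $\Pk{p}$ and we are in case \cref{item:routing star or path:path}. Hence I may assume both $T^{\mathrm{out}}$ and $T^{\mathrm{in}}$ have depth at most $p-2$. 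Since a rooted tree of depth at most $p-2$ in which every vertex has fewer than $m$ children has at most $(p-1)(m-1)^{p-2}$ vertices, choosing the threshold on $\Abs{\V{D}}$ slightly above $(p-1)(2kp)^{p-2}$ forces a vertex $w$ of $T^{\mathrm{out}}$ with at least $m \coloneqq 2kp$ children $c_1,\dots,c_m$.

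Next I would build an auxiliary digraph $H$ on $\{c_1,\dots,c_m\}$ with an arc $c_i \to c_j$ precisely when $c_j$ lies on the $T^{\mathrm{in}}$‑path from $c_i$ to $r$. That path has at most $p-1$ vertices, and neither $c_i$ nor $r$ equals any $c_j$ with $j \neq i$, so every vertex of $H$ has out‑degree at most $p-3$; thus $H$ has at most $(p-3)m$ arcs, and a standard Tur\'an/Caro--Wei argument applied to its underlying graph yields an independent set $I$ with $\Abs{I} \ge m/(2p-5) \ge k$. I fix such an $I$ with $\Abs{I}=k$ and an arbitrary bijection $\varphi \colon \V{\biK{k}} \to I$.

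Finally I would verify that $\varphi$ is a $\biK{k}$‑routing over $I$. Since $\biK{k}$ contains the arc between every ordered pair of distinct vertices, it suffices to produce, for each ordered pair of distinct $c,c' \in I$, a $c$–$c'$ path in $D$ disjoint from $I \setminus \{c,c'\}$: for a longer $u$–$u'$ path $P$ of $\biK{k}$ the set $I \setminus \varphi(\V{P})$ that must be avoided is only smaller, so the same direct path still works. For the pair $(c,c')$, concatenate the $T^{\mathrm{in}}$‑path from $c$ to $r$, the $T^{\mathrm{out}}$‑path from $r$ down to $w$, and the single arc $w \to c'$, and then extract a simple $c$–$c'$ subpath. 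The first segment avoids $I \setminus \{c\}$ by the defining property of $I$ in $H$; the middle segment consists of $w$ together with ancestors of $w$ in $T^{\mathrm{out}}$, none of which is a child of $w$ and hence none of which lies in $I$; and the last arc meets $I$ only in $c'$. Therefore the walk, and so the extracted path, is disjoint from $I \setminus \{c,c'\}$, which establishes \cref{item:routing star or path:Kk}.

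The part I expect to take the most care is the quantitative bookkeeping that fixes the lower bound on $\Abs{\V{D}}$ so that both the high‑branching vertex $w$ and the size‑$k$ independent set of $H$ are guaranteed, together with the structural point that the root‑to‑$w$ path of an out‑branching never descends into a subtree hanging off $w$ — this is exactly what makes the three concatenated pieces avoid the remaining routing vertices. One could alternatively derive the theorem from \cref{lemma:many-strong-components-implies-path-or-Kk} by arguing that $D - w$ has at least $kp$ strong components, but the direct construction above avoids having to control how $D - w$ reconnects through forward arcs.
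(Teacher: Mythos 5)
Your construction is a clean and correct way to produce a $\biK{k}$-routing, but the quantitative threshold it requires is exponential in $p$, whereas the theorem asserts (and the paper needs) a polynomial bound.

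Concretely, your argument forces a high-branching vertex $w$ in a depth-$(p-2)$ shortest-path out-branching by counting vertices: to guarantee some vertex has $m = 2kp$ children, you need $\lvert\V{D}\rvert$ to exceed roughly $\sum_{i=0}^{p-2}(m-1)^{i} \approx (2kp)^{p-2}$, and nothing in the approach circumvents this since a $(2kp-1)$-ary tree of depth $p-2$ realizes the bound without any high-degree vertex. The theorem's threshold $\fktRoutingStar{k}{p}$, on the other hand, is polynomial in $k$ and $p$ --- this can be read off from the paper's own proof, whose invariant \ref{item:routing star or path:D_i-many-vertices} at $i=0$ requires only $\lvert\V{D}\rvert \geq 2kp \cdot kp^{2}$, and also from the downstream application in \cref{thm:poss-to-fence}, where $\bound{thm:poss-to-fence}{w}{p,q}\in O(p^{5}q^{5})$ is derived by invoking \cref{thm:routing star or path} on a digraph of size $2(k_1)^5$ with $k_1 = pq+1$. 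So your argument does not prove the theorem with the stated bound; plugging it in would inflate the fence bound (and everything depending on it) from polynomial to exponential and would damage the tower-height analysis.

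The structural difference is instructive. Your proof ``pays'' at the first step --- it extracts high branching in a single shallow tree --- and then gets the independent set essentially for free via a sparsity/Tur\'an argument on the auxiliary digraph $H$. The paper instead maintains a shrinking chain of strongly connected subdigraphs $D_0 \supsetneq D_1 \supsetneq \dots$ whose sizes are controlled additively (losing at most $kp^2$ vertices per step), together with a careful double-counting argument to show that if none of the three cases applies then many shortest paths are forced through $p$ fixed vertices, contradicting the absence of $\Pk{p}$. This recursive/amortized structure is what keeps the threshold polynomial, and it is also why the paper has to ``control how $D - w$ reconnects through forward arcs,'' the very technicality your note says you wanted to avoid. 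Avoiding it is precisely what costs you the exponential.

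The routing verification itself --- that for an independent set $I$ of your $H$ the walk $c \xrightarrow{T^{\mathrm{in}}} r \xrightarrow{T^{\mathrm{out}}} w \to c'$ meets $I$ only in $\{c,c'\}$, and that handling the single-edge paths of $\biK{k}$ suffices because longer paths only shrink the forbidden set --- is sound, and the observation that ancestors of $w$ in $T^{\mathrm{out}}$ cannot be children of $w$ is exactly the right structural point. So the proposal is a correct proof of a quantitatively weaker statement, not of \cref{thm:routing star or path} as stated.
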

\begin{proof}
	We show that~\cref{item:routing star or path:Kk} holds if~\cref{item:routing star or path:path} does not hold.
	
	We iterate from $1$ to $2kp$, potentially stopping earlier.
	On step $i$, we construct a vertex sequence $X_i$ and a digraph $D_i$ satisfying all of the following.
	\begin{enamerate}{A}{item:routing star or path:D_i-many-vertices}
		\item \label{item:routing star or path:|X|}
		$\Brace{v_1, v_2, \dots, v_{i}} = X_i$ and so $\Abs{X_i} \geq i$,
		\item \label{item:routing star or path:reach-D_i}
		$D_i$ is strongly connected component of $D_{i-1} - \Set{v_{i}}$ (and so $v_i$ is not on $D_i$),
				\item \label{item:routing star or path:v_i-on-D_i}
		for every $1 \leq j \leq i$, $v_j$ lies on $D_{j - 1}$,
				\item	\label{item:routing star or path:D_i-many-vertices}
		$\V{D_i} \geq (2kp - i)kp^2$.
	\end{enamerate}
	
	Start by setting $X_0$ as the empty sequence and $D_0 = D$.
	Clearly,~\cref{item:routing star or path:|X|,item:routing star or path:reach-D_i,item:routing star or path:v_i-on-D_i,item:routing star or path:D_i-many-vertices} hold for 0 (to simplify notation, we set $D_{-1} \coloneqq D$ and replace $\Set{v_0}$ with the empty set so that~\cref{item:routing star or path:reach-D_i} is well-defined for $i=0$).
	On step $i \leq 2kp$, we consider the following cases.
	
	\textit{Case 1.} There is a $v \in V(D_{i-1})$ such that $D_{i-1} - v$ contains at least $kp$ strongly connected components.
	
	As we assume that~\cref{item:routing star or path:path} does not hold, we know from~\cref{lemma:many-strong-components-implies-path-or-Kk} that~\cref{item:routing star or path:Kk} holds, and we are done with the construction and the proof.
	
	\textit{Case 2.} There is a $v \in V(D_{i-1})$ such that $D_{i-1} - v$ is strongly connected.
	
	Then set $X_i \coloneqq X_{i-1} \cdot (v)$ and set $D_{i} \coloneqq D_{i - 1} - v$.
	It is immediate from the choice of $v$ and our assumption on $D_{i-1} - v$ that~\cref{item:routing star or path:v_i-on-D_i,item:routing star or path:|X|,item:routing star or path:reach-D_i,item:routing star or path:D_i-many-vertices} hold for $i$ because they hold for $i-1$.
	
	\textit{Case 3.} There is a $v \in \V{D_i}$ such that the largest strongly connected component $C$ of $D_{i-1} - v$ has at least $\Abs{\V{D_{i-1}}} - kp^2$ many vertices.
	
	Then set $X_i \coloneqq X_{i-1} \cdot (v)$ and $D_i \coloneqq C$.
	Note that $\Abs{\V{D_{i-1}}} - kp^2 \geq (2kp - i)kp^2$ as~\cref{item:routing star or path:D_i-many-vertices} holds for $i-1$.
	Hence, it is again immediate from the choice of $v$ and from
	our assumption over $C$ that~\cref{item:routing star or
		path:v_i-on-D_i,item:routing star or path:|X|,item:routing
		star or path:reach-D_i,item:routing star or
		path:D_i-many-vertices} hold for $i$ as they also hold for
	$i-1$. 
	
	This completes the case distinction above.
	Now, assume towards a contradiction that none of the three cases above apply and $i \leq 2kp$.
	
	For every $v \in V(D_{i - 1})$,  we know that $D_{i-1} - v$ has fewer than $kp$ strong components because \textit{Case 1} does not apply.
	Further, $D_{i-1} - v$ is not strongly connected, as \textit{Case 2} does not apply.
	Finally, we know that each strong component of $D_{i-1} - v$ has fewer than $|\V{D_{i-1}} - kp^2|$ vertices because \textit{Case 3} does not apply.
	
	For each $v \in V(D_{i-1})$, let $\mathcal{C}_v$ be the set of strong components of $D'_v  \coloneqq  D_{i-1} - v$.
	Let $A_v = \{ (u,w) \sth u, w \in V(D'_v)$ and there is no path from $u$ to $w$ in $D'_v \}$.
	Let $n_v = \Abs{V(D'_v)}$.
	
	Now let $v \in \V{D_{i-1}}$ be arbitrary.
	For any two distinct components $C_1, C_2 \in \CCC_v$, there is no path in $D'_v$ from any  $u \in V(C_1)$ to any $w \in V(C_2)$ or vice versa.
	Thus, $A_v$ contains all possible arcs from vertices in $C_1$ to vertices in $C_2$ or all possible arcs from vertices in $C_2$ to vertices in $C_1$.
	Fixing some arbitrary ordering $\Brace{C_1, C_2, \dots, C_{c}}$ of the elements of $\mathcal{C}_v$, we deduce that
	\begin{align*}
		|A_v| \geq \sum_{\substack{C_a, C_b \in \mathcal{C}_v, \\[0em] a < b}} |V(C_a)| \cdot |V(C_b)|.
	\end{align*}
	
	Let $C \in \CCC_v$ be a strong component of $D'_v$ with the maximal number of vertices among all components in $\CCC_v$.
	The previous inequality implies that $|A_v| \geq |V(C)| \cdot \sum_{C_a \in \mathcal{C}_v \setminus \{C\}} |V(C_a)|$.
	
	By assumption $|V(C)| \leq n_v - kp^2 - 1$, and since the strong components of $D'_v$ form a partition of $D'_v$, we obtain that $\sum_{C_a \in \mathcal{C} \setminus \{ C \}}|V(C_a)| \geq kp^2$.
	Note that $n_v - kp^2 - 1 \geq kp^2$ since~\cref{item:routing star or path:D_i-many-vertices} holds for $i < 2kp$.
	Since $D'_v$ contains fewer than $kp$ strong components, we also obtain $|V(C)| \geq (n_v - 1)/kp$.
	From the inequality above, we obtain 
	\begin{align*}
		|A_v| & \geq  |V(C)| \cdot \sum_{C_a \in \mathcal{C} \setminus \{C\}} |V(C_a)|\\[0em]
		& \geq \frac{n_v - 1}{kp} \cdot \sum_{C_a \in \mathcal{C} \setminus \{C\}} |V(C_a)|\\[0em]
		& \geq \frac{n_v - 1}{kp} \cdot kp^2 = (n_v - 1) \cdot p
	\end{align*}
	Hence, $\sum_{v \in V(D_{i-1})}|A_v| \geq n_v \cdot (n_v - 1) \cdot p$.
	Since $A_v$ does not contain any reflexive tuples and there are $n_v(n_v - 1)$ non-reflexive tuples in the set $V(D_{i-1}) \times V(D_{i-1})$, by the pigeon-hole principle we deduce that there are $u, v \in V(D_{i-1})$ and there are $v_1, \dots, v_{p} \in V(D_{i-1})$ such that $(u, v) \in A_{v_j}$ for all $1 \leq j \leq p$.
	Thus, every path from $u$ to $w$ in $D_{i-1}$ must contain each of the vertices $v_1, \dots v_p$ and thus be of length at least $p+1$ (by definition, $(u,w) \not\in A_u \cup A_w$), a contradiction to the assumption that~\cref{item:routing star or path:path} does not hold, that is, that $D$ does not contain a path of length $p$.
	Hence, one of the three cases must apply, and we can complete the construction above.
	
	If at any point during the construction we end up at Case 1,
	then, as argued above,~\cref{item:routing star or path:Kk} is true and we are done.
	Otherwise, we know that~\cref{item:routing star or path:|X|,item:routing star or path:reach-D_i,item:routing star or path:v_i-on-D_i,item:routing star or path:D_i-many-vertices} hold for $2kp$.
	We now show that~\cref{item:routing star or path:Kk} holds.
	
	Let $\Brace{v_1, v_2, \dots, v_{2kp}} \coloneqq X_{2kp}$.
	We inductively construct disjoint sets $A_i, B_i \subseteq \{ v_1, \dots,\allowbreak v_{2pk}\}$ and we construct for each $v_j \in B_i$ a path $P_j^+ \subseteq D_j$ from $v_j$ to $\V{D_{2kp}}$ and a path $P_j^- \subseteq D_j$ from $\V{D_{2kp}}$ to $v_j$ such that $\big(V(P_j^+) \cup V(P_j^-) \big) \cap A_i = \emptyset$ for all $v_j \in B_i$.
	Finally, $|B_i| = i$, $|A_i| \geq 2kp - 2pi$ and the elements of $B_i$ are contained in $X_i$.
	
	We start by setting $A_0 \coloneqq \{ v_1, \dots, v_{2kp}\}$ and $B_0 \coloneqq \emptyset$, which obviously meet the requirements.
	On step $i < k$, let $j$ be minimal such that $v_j \in A_{i-1}$.
	
	Let $P_j^+$ be a shortest path in $D_{j-1}$ from $v_j$ to a vertex in $D_{2kp}$ and let $P_j^-$ be a shortest path from a vertex in $D_{2kp}$ to $v_j$, again in $D_{j - 1}$.
	As $D_{j - 1}$ is strongly connected and $D_{2kp} \subseteq D_{j - 1}$ due to~\cref{item:routing star or path:reach-D_i}, such paths exist.
	Furthermore, $P_j^+$ and $P_j^-$ are both of length at most $p$ and all internal vertices of $P_j^+$ and $P_j^-$ are disjoint from $X_{j-1}$ (and so from $B_{i - 1}$) due to~\cref{item:routing star or path:reach-D_i}.
	
	We define $B_{i} = B_{i - 1} \cup \{ v_j \}$ and $A_{i} = A_{i - 1} \setminus \big( V(P_j^+ \cup P_j^-) \big)$.
	As $|V(P_j^+)|, |V(P_j^-)| < p$ and $|A_{i - 1}| \geq 2kp - 2p(i + 1)$,
	we immediately get that $|A_{i}| \geq 2kp - 2pi$, as required.
	Clearly, the other conditions are satisfied as well.
	
	The construction stops after $k$ steps with a set $B_k$ such that
	for each $v_j \in B_k$ there are paths $P^+_j, P^-_j$ such that $P^+_j$ is a $v_j$-$\V{D_{2kp}}$-path and $P^-_j$ is a $\V{D_{2kp}}$-$v$-path and both paths are internally disjoint from $B_k$.
	
	Let $\varphi : \V{\biK{k}} \to B_k$ be a bijection.
	We show that $\varphi$ is a $\biK{k}$-routing in $D$.
	To see this, let $v_i, v_{j} \in B_k$ and let $P_{i,j}$ be a path in $D_{2kp}$ from the end vertex of $P_i^+$ to the start of $P_j^-$.
	Then $P_i^+ \cup P_{i,j} \cup P_j^-$ contains a $v_i{-}v_j$-path disjoint from $B_k$.
	Hence,~\cref{item:routing star or path:Kk} holds, concluding the proof of the theorem.
\end{proof}

\subsection{Finding \texorpdfstring{$\Pk{k}$}{P\_k}-routings in temporal digraphs}

Of particular interest to us are $H$-routings in temporal digraphs
where $H$ is just a simple path $\Pk{k}$.
This is natural in our context, as, for example, an acyclic grid is nothing but a
\say{horizontal} linkage $\QQQ \coloneqq \Brace{Q_1, \dots, Q_q}$  that intersects a sequence $\PPP \coloneqq \{P_1, \dots, P_p\}$ of pairwise disjoint digraphs $P_i$ in the order $P_1, \dots, P_p$ where each $P_i$ happens to be a simple path intersecting the paths in $\QQQ$ in the order $Q_1, \dots, Q_q$. 

Our next goal, therefore, is to identify properties of the
individual layers of a temporal digraph $T$ that guarantee the existence of
a $\Pk{k}$-routing in $T$.

Recall from~\cref{sec:preliminaries} that a  digraph $D$ is 
\emph{\onewayconnected} if, for every pair $u, v \in V(D)$ of vertices,
$v$ can reach $u$ or $u$ can reach $v$. 
As proved in \cite[Theorem~3.10]{zbMATH03226832}, a digraph $D$ is
unilateral if and only if there is a walk visiting all the vertices of $D$.
	
We need a stronger characterisation of unilateral digraphs for our results, and we need the following classic result of R\'{e}dei to prove it.

\begin{theorem}[\cite{redei1934kombinatorischer}, Theorem 1]
	\label{thm:redei}
	Given a tournament $T$, the number of Hamiltonian paths in $T$ is odd.
\end{theorem}

In particular,~\cref{thm:redei} implies that a tournament always contains at least one Hamiltonian path.

\begin{lemma}
	\label{lemma:one-way connected iff hamiltonian walk}
	A digraph $D$ is \onewayconnected{} if and only if for every $S \subseteq \V{D}$ there is a walk $W$ of length at most $kn$ in $D$ with $S \subseteq \V{W}$, where $k = \Abs{S}$ and $n = \Abs{\V{D}}$.
\end{lemma}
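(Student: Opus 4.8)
The plan is to prove the two implications separately, with essentially all the work in the forward direction.

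For the direction ``a walk through every $S$ implies \onewayconnected'', it suffices to apply the hypothesis to two-element sets. Given $u,v\in\V{D}$, take a walk $W$ with $\{u,v\}\subseteq\V{W}$; the two vertices $u,v$ occur in some order along the sequence of $W$, and the portion of $W$ strictly between their first and second occurrence is a $u$-$v$-walk or a $v$-$u$-walk, hence contains a $u$-$v$-path or a $v$-$u$-path. So $D$ is \onewayconnected.

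For the forward direction, I would first reduce the problem to \emph{ordering} $S$. Consider the condensation $\hat D$ of $D$, obtained by contracting each strongly connected component to a vertex; $\hat D$ is acyclic. Since $D$ is \onewayconnected, for any two components $C,C'$ some vertex of one reaches some vertex of the other in $D$, hence $C$ reaches $C'$ or $C'$ reaches $C$ in $\hat D$; as $\hat D$ is acyclic, its reachability relation is therefore a linear order on the components. Fix a topological ordering $C_1,\dots,C_m$ of $\hat D$ consistent with it; then whenever $i\le j$, every vertex of $C_i$ reaches every vertex of $C_j$ in $D$ (in particular any two vertices of a single component reach each other). Now order the elements of $S$ as $s_1,\dots,s_k$ so that the component containing $s_i$ occurs no later than the one containing $s_{i+1}$; this guarantees $s_i$ reaches $s_{i+1}$ in $D$ for each $1\le i<k$.

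Finally I would assemble the walk: for each $1\le i<k$ pick a $s_i$-$s_{i+1}$-path $P_i$ in $D$ (one exists since $s_i$ reaches $s_{i+1}$), and note that, being a path, $P_i$ has length at most $n-1$. The concatenation $W\coloneqq P_1\cdot P_2\cdots P_{k-1}$ is well defined because $\End{P_i}=s_{i+1}=\Start{P_{i+1}}$, it satisfies $S=\{s_1,\dots,s_k\}\subseteq\V{W}$, and its length is at most $(k-1)(n-1)\le kn\le 2kn$; the cases $k\le 1$ are covered by the trivial (length-$0$ or empty) walk. The main obstacle is the ordering step: processing $S$ in an arbitrary order fails, since from the current endpoint one may only be able to reach already-visited targets, so the argument genuinely needs the linear order on strongly connected components extracted from \onewayconnectedness — and once that order is available the bound $2kn$ is met with room to spare. (Alternatively, the ordering can be produced by observing that the ``reaches'' relation on $S$ contains a tournament, which has a Hamiltonian directed path.)
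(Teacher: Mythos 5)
Your proof is correct, but it takes a genuinely different route from the paper's. The paper argues iteratively: starting from a trivial walk, each new target $v_{i+1}$ is inserted into the current walk $W_i$ by prepending a $v_{i+1}$-$\Start{W_i}$-path, appending an $\End{W_i}$-$v_{i+1}$-path, or, if neither exists, locating an arc $(w_1,w_2)$ of $W_i$ with $w_1$ reaching $v_{i+1}$ and $v_{i+1}$ reaching $w_2$ and rerouting through $v_{i+1}$. (Such an arc exists because once the first two cases fail, unilaterality forces $\Start{W_i}$ to reach $v_{i+1}$ and $v_{i+1}$ to reach $\End{W_i}$, so somewhere along $W_i$ a vertex that reaches $v_{i+1}$ is immediately followed by one reached from $v_{i+1}$.) Each insertion costs at most $2n$ arcs, yielding the bound $2kn$. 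You instead extract the global structure up front: unilaterality makes reachability a total order on the strongly connected components, so $S$ can be sorted into a chain $s_1,\dots,s_k$ with each $s_i$ reaching $s_{i+1}$, and the walk is then simply a concatenation of $k-1$ paths. This is cleaner and even gives the slightly sharper bound $(k-1)(n-1)$; the paper's local-insertion argument, on the other hand, is more elementary and self-contained in that it never passes through the condensation or invokes a Hamiltonian path in a tournament. Both proofs are valid.
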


\begin{proof}
	If there is a walk $W$ in $D$ with $\V{W} = \V{D}$, then $D$ is clearly \onewayconnected.
	
	Now assume $D$ is \onewayconnected{} and let $S = \{v_1, v_2, \dots v_{k}\} \subseteq \V{D}$.
	We construct an auxiliary tournament $T$ with $V(T) =S$ in the following way. 
	Given vertices $v_i, v_j  \in S$, with $i < j$, if there exists in $D$ a path $P_{i,j}$ in $D$ starting in $v_i$ and ending in $v_j$ we add the arc $v_iv_j$ to $E(T)$.
	Otherwise, there must be a path $P_{j,i}$ starting in $v_j$ and ending in $v_i$, and we add the arc $v_jv_i$ to $E(T)$. 
	From~\cref{thm:redei}, $T$ must contain a Hamiltonian path $P$.
	From $P$ we construct the wanted walk $W$ by replacing the arc $v_iv_j$ of $P$ with the path $P_{i,j}$, for all $1 \leq i,j \leq k$.
	Clearly, $W$ is a walk in $D$, and its length is at most $kn$.  
	
	Thus, the walk $W_k$ satisfies the statement of the lemma.
\end{proof}

Finding long walks in unilateral digraphs is easy. The task becomes more complicated in temporal digraphs as the connectivity provided by individual layers may differ significantly.
As we show next, one direction of the previous lemma can be retained in the temporal setting.
Observe that \(\bound{lemma:temporal one-way connected contains walk with many vertices}{\Lifetime{}}{n,k} \in \Oh(k^2 n^{ k n + 2})\).

\begin{lemma}
	\label{lemma:temporal one-way connected contains walk with many vertices}
	Let $\bound{lemma:temporal one-way connected contains walk with many vertices}{\Lifetime{}}{n,k} \coloneqq kn \sum_{i=1}^{kn}n^i$.\boundDef{lemma:temporal one-way connected contains walk with many vertices}{\Lifetime{}}{n,k}
	Let $T$ be a temporal digraph with $n$ vertices where each layer is unilateral, and let $S \subseteq \V{T}$ be a set of size $k$.
	If $\Lifetime{T} \geq \bound{lemma:temporal one-way connected contains walk with many vertices}{\Lifetime{}}{n,k}$, then $T$ contains a temporal walk $W$ with $S \subseteq \V{W}$.
\end{lemma}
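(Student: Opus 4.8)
The plan is to deduce this from its single-layer analogue, \cref{lemma:one-way connected iff hamiltonian walk}: a unilateral digraph on $n$ vertices contains a walk through any $k$-set of length at most $2kn$ --- and in fact at most $2(k-1)n$, as the proof of that lemma shows, so such a walk can be taken to have at most $2kn$ vertices. The obstruction is that such a covering walk lives entirely inside one layer, whereas a temporal walk is allowed to use at most one arc per layer; thus a single-layer covering walk cannot simply be unfolded into a temporal walk. Instead I will show that, over sufficiently many layers, one \emph{fixed} covering walk must recur, as a subgraph, in at least $2kn$ distinct layers --- which exceeds the number of its arcs --- and then realise it temporally by using the $i$-th of these layers for the $i$-th arc, so that the time steps strictly increase.

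Here is how I would carry this out. First, for every time step $1\le t\le\Lifetime{T}$, since $\Layer{T}{t}$ is unilateral, \cref{lemma:one-way connected iff hamiltonian walk} yields a walk $R_t\subseteq\Layer{T}{t}$ with $S\subseteq\V{R_t}$ and $|\V{R_t}|\le 2kn$; fix one such $R_t$ for each $t$ and let $\sigma_t$ denote its sequence of vertices. Second, a vertex sequence of length between $1$ and $2kn$ over $\V{T}$ is one of at most $\sum_{i=1}^{2kn}n^i$ objects, so, since $\Lifetime{T}\ge 2kn\sum_{i=1}^{2kn}n^i$, the pigeonhole principle produces a single sequence $\sigma=(x_1,\dots,x_m)$ with $m\le 2kn$ together with time steps $\ell_1<\ell_2<\dots<\ell_{2kn}$ such that $\sigma_{\ell_j}=\sigma$ for all $1\le j\le 2kn$. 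Third, for each $1\le i\le m-1$ the pair $(x_i,x_{i+1})$ is traversed by the walk $R_{\ell_i}$ and hence is an arc of $\Layer{T}{\ell_i}$; since $\ell_1<\dots<\ell_{m-1}\le\ell_{2kn}\le\Lifetime{T}$, the sequence $(x_1,\ell_1),(x_2,\ell_2),\dots,(x_m,\ell_m)$ is a temporal walk $W$, and $\V{W}=\{x_1,\dots,x_m\}=\V{R_{\ell_1}}\supseteq S$, as required. (If $m=1$ then necessarily $k=1$ and $S=\{x_1\}$, and the one-vertex walk $(x_1,1)$ does the job.)

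The only genuine obstacle is the one flagged above: the non-transitivity of temporal reachability, equivalently the at-most-one-arc-per-layer constraint, and the pigeonhole step is precisely the device that circumvents it, since it locates a covering walk whose arc set is available in far more layers than the walk has arcs. Everything else is bookkeeping: verifying that covering walks from \cref{lemma:one-way connected iff hamiltonian walk} can be taken with at most $2kn$ vertices (so that the number of candidate vertex sequences is $\sum_{i=1}^{2kn}n^i$, matching the stated bound), and noting that a covering walk on $m\le 2kn$ vertices has only $m-1<2kn$ arcs, so $2kn$ recurrences are more than enough to realise it.
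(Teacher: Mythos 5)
Your proof follows essentially the same route as the paper's: apply \cref{lemma:one-way connected iff hamiltonian walk} layer by layer, pigeonhole on the at most $\sum_{i=1}^{2kn}n^i$ possible covering walks to find one that recurs in at least $2kn$ layers, and unfold that recurring walk temporally using a distinct layer per arc. Your treatment is in fact a touch more careful than the paper's about the bookkeeping (noting that the covering walk can be taken with at most $2kn$ vertices so the count of candidate sequences matches, and handling the degenerate $m=1$ case), but the underlying argument is identical.
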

\begin{proof}
	By~\cref{lemma:one-way connected iff hamiltonian walk}, for each $1 \leq i \leq \Lifetime{T}$ there is a walk $W_i$ of length at most $kn$ in $\Layer{T}{i}$ such that $S \subseteq \V{W_i}$.
	Note that there are $\sum_{i=1}^{kn}n^i$ distinct walks of length at most $kn$ over the vertex set of $T$.
	
	As $\Lifetime{T} \geq kn \sum_{i=1}^{kn}n^i$, by the pigeon-hole principle, there is some walk $W'$ which appears on at least $kn$ different layers.
	Let $t_1, t_2, \dots, t_{kn}$ be \timesteps such that each $\Layer{T}{t_i}$ contains the walk $W'$.
	Now set $W \coloneqq \Brace{(v_i, t_i) \mid 1 \leq i \leq kn \text{ and $v_i$ is the $i$th vertex on $W'$}}$.
	
	Since $\V{W} = \V{W'}$, we also have $S \subseteq \V{W}$, as desired.
\end{proof}

The next lemma establishes a special case where a temporal digraph is guaranteed to contain a $\Pk{k}$-routing.
Together with~\cref{lemma:temporal one-way connected contains walk with many vertices}, this implies~\cref{theorem:one-way connected temporal digraph contains P_k routing}.

\begin{lemma}
	\label{lemma:walk with many vertices implies P_k routing}
	Let $D$ be a temporal digraph and $W$ be a temporal walk in $D$.
	If $\Abs{\V{W}} \geq \frac{k^2}{2} $, then $W$ contains a $\Pk{k}$-routing.
\end{lemma}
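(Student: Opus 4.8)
The plan is as follows. Write $\Pk{k}$ as the path $y_1 \to y_2 \to \cdots \to y_k$, so its subpaths are exactly the $y_i \to y_{i+1} \to \cdots \to y_j$ with $i \le j$. First I would observe that it suffices to produce a set $S = \{\varphi(y_1), \dots, \varphi(y_k)\} \subseteq \V{W}$ together with temporal sub-walks $R_1, \dots, R_{k-1}$ of $W$ such that $R_a$ runs from $\varphi(y_a)$ to $\varphi(y_{a+1})$, the last time step used by $R_a$ precedes the first time step used by $R_{a+1}$, and $R_a$ meets $S$ only in $\{\varphi(y_a), \varphi(y_{a+1})\}$. Indeed, for a subpath $y_i \to \cdots \to y_j$ the concatenation $R_i \cdot R_{i+1} \cdots R_{j-1}$ is then a temporal walk from $\varphi(y_i)$ to $\varphi(y_j)$ whose vertex set meets $S$ only in $\{\varphi(y_i), \dots, \varphi(y_j)\}$; shortcutting it at repeated vertices (which is possible for temporal walks, since deleting the segment between two occurrences of a vertex keeps the sequence of time steps increasing) yields the required temporal path disjoint from $S$ minus the image of $\{y_i, \dots, y_j\}$. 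Degenerate one-vertex subpaths are trivial.

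To build the $R_a$ I would enumerate the distinct vertices of $W$ in order of first occurrence as $u_1, \dots, u_m$ with $m \ge k^2 - 1$, writing $f_i$ for the position of the first occurrence of $u_i$, so $f_1 < f_2 < \cdots < f_m$. The key point is that if we can choose landmarks $u_{i_1}, \dots, u_{i_k}$ with $i_1 < \cdots < i_k$ such that \emph{none of $u_{i_1}, \dots, u_{i_{k-1}}$ occurs again before position $f_{i_k}$}, then $\varphi(y_a) := u_{i_a}$ together with $R_a :=$ the sub-walk of $W$ from position $f_{i_a}$ to position $f_{i_{a+1}}$ does the job: $R_a$ visits only vertices whose first occurrence is at most $f_{i_{a+1}}$, hence avoids $u_{i_c}$ for $c \ge a+2$; and it cannot contain $u_{i_c}$ for $c \le a-1$ either, since such a vertex has its first occurrence before $f_{i_a}$ and all its further occurrences at or beyond $f_{i_k} \ge f_{i_{a+1}}$ (and a further occurrence at position $f_{i_{a+1}}$ is impossible, that position carrying $u_{i_{a+1}}$). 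The time-ordering of the $R_a$ is automatic as consecutive segments meet exactly at $f_{i_{a+1}}$.

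It remains to find such landmarks. If at least $k$ vertices of $W$ occur exactly once, they serve directly. Otherwise at most $k-1$ vertices occur once, so at least $m - (k-1) \ge k^2 - k$ vertices occur at least twice; for each such $u_i$ let $a_i$ denote the position of its second occurrence. Since $k^2 - k \ge (k-1)^2 + 1$ for $k \ge 2$ (the cases $k \le 1$ being trivial), \cref{thm:erdos_szekeres} applied to the pairwise distinct integers $a_i$, taken in the order of first occurrence, yields indices $i_1 < \cdots < i_k$ along which $(a_{i_c})_c$ is monotone. If it is decreasing, then $a_{i_1} > \cdots > a_{i_k} > f_{i_k}$, so every $u_{i_c}$ re-occurs only after $f_{i_k}$, and the previous paragraph finishes the proof.

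The main obstacle is the case where $(a_{i_c})_c$ is increasing: now the occurrence intervals of the chosen vertices may overlap, an earlier landmark $u_{i_c}$ may re-occur inside $[f_{i_a}, f_{i_{a+1}}]$, and the plain forward routing can fail. Here I would route $R_a$ instead inside a safe window lying after the last occurrence of $u_{i_{a-1}}$ and before the first occurrence of $u_{i_{a+2}}$ — starting $R_a$ at a suitably late occurrence of $\varphi(y_a)$ to get past the re-occurrences of the earlier landmarks and ending it before the later ones have appeared — and then show that these windows can be made nonempty and the resulting $R_a$ time-ordered; this forces one to select the landmarks in the increasing case more carefully (in particular so that no chosen vertex has an occurrence interval long enough to engulf such a window), which is where the remaining room in the bound $m \ge k^2-1$ beyond the bare Erdős–Szekeres threshold is spent. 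Making this bookkeeping precise is the technical heart of the argument and the step I expect to be hardest.
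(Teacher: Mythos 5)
The gap you flag is real and is not just a bookkeeping nuisance: tracking only the positions $a_i$ of \emph{second} occurrences discards information you need, since a landmark may occur many more times, and in the increasing case nothing constrains where those later occurrences fall — any of them can land inside a window $[f_{i_a},f_{i_{a+1}}]$ that you intend to route through. The budget of $k^2-1$ vertices has already been spent to afford a single Erdős–Szekeres pass, so there is no slack for an iterated version that would also control third, fourth, \ldots\ occurrences. As written, the increasing case does not close, and that is the bulk of the lemma.

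The paper sidesteps this entirely with one structural move that your proposal lacks: open with the dichotomy ``either $W$ already contains a temporal path on $k$ vertices, in which case we are done, or it does not'', and let the second alternative supply all the quantitative control. Concretely, pass to a \emph{minimal} temporal subwalk $W'$ of $W$ with $\V{W'}=\V{W}$, and decompose $W'$ into \emph{returns} — closed excursions $u,\dots,u$ in which every interior vertex appears exactly once — plus the residual vertices lying on no return. The absence of a $\Pk{k}$ caps each return at $k-1$ vertices (its interior is a temporal path) and caps the residual at $k-1$ vertices (it too forms a temporal path), forcing at least $k$ distinct returns; minimality of $W'$ then forces each return to contain a vertex occurring nowhere else on $W'$. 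Those $k$ once-occurring vertices are exactly the clean set of landmarks you were hunting for and only obtained in your easy cases, with no Erdős–Szekeres argument needed at all.
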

\begin{proof}
	If $W$ contains a $\Pk{k}$ as a temporal subpath, then this subgraph also contains a $\Pk{k}$-routing.
	So, assume $W$ does not contain any $\Pk{k}$ as a temporal subpath.
	
	Let $W'$ be a minimal temporal subwalk of $W$ such that $V(W') = V(W)$ and consider a maximal set $S = \Brace{s_1, s_2, \dots, s_{r}}$ of vertices appearing only once on $W'$. 
	
	If $\Abs{S} \geq k$ then $W'$ contains a $\Pk{k}$ routing on $S$.
	Therefore, we can assume $\Abs{S} \leq k -1$. 
	Let us denote by  $W'_i$ with $1 \leq i < r$ the temporal subwalk of $W'$ starting in $s_{i}$ and ending in $s_{i+1}$.
    Further, we denote by $W'_0$ the subwalk of $W'$ starting in  $\Start{W'}$ and ending in $s_1$, and by $W'_r$ the temporal subwalk of $W'$ starting in $s_r$ and ending in $\End{W'}$.
	We claim that, for all $0 \leq i \leq r$, all vertices of $W'_i$ appear once on $W'_i$. 
	Suppose not and consider a vertex $v$ appearing twice on $W'_i$.
    We denote by $W_v$ the temporal subwalk joining the first and second occurrence of $v$ on $W'_i$. 
	Since $W'$ is minimal, $W_v$ must contain a vertex appearing only once on $W'$, otherwise the concatenation of the temporal subwalk of $W'$ going from $\Start{W'}$ to the first occurrence of $v$ in $W'_i$ and from the second occurrence of $v$ in $W'_i$ to $\End{W'}$ would be a shorter subwalk of $W$ containing the same amount of vertices, a contradiction to the minimality of $W'$. 
	Hence, the subwalks $W'_{i}$ do not contain repeated vertices. 
	Since $W$ does not contain any $\Pk{k}$ as a temporal subpath, $\Abs{V(W_i)} \leq k - 1$.
	Further, for all $0 \leq i \leq r$, the internal vertices of $W'_i$ must occur in a subwalk $W'_j$ for some  $j \neq i$ with $0 \leq j \leq r$ and  $\Abs{\bigcup_{0\leq i \leq r}V(W'_i)} \leq \frac{k(k - 2)}{2} + k - 1$. 
	Since $\bigcup_{0\leq i \leq r}V(W'_i) = V(W')$ and $\Abs{V(W')} \geq \frac{k^2}{2}$ this is a contradiction.
\end{proof}

The previous two lemmas immediately imply the following result.
Observe that \(\bound{theorem:one-way connected temporal digraph contains P_k routing}{\Lifetime{}}{n,k} \in \PowerTower{1}{\Polynomial{6}{k, n}}\).

\begin{theorem}
	\label{theorem:one-way connected temporal digraph contains P_k routing}
	Let $\bound{theorem:one-way connected temporal digraph contains P_k routing}{\Lifetime{}}{n, k} \coloneqq \bound{lemma:temporal one-way connected contains walk with many vertices}{\Lifetime{}}{n, \frac{k^2}{2}}$\boundDef{theorem:one-way connected temporal digraph contains P_k routing}{\Lifetime{}}{n, k}. 
	Let $T$ be a temporal digraph where each layer is unilateral.
	If $\Lifetime{T} \geq \bound{theorem:one-way connected temporal digraph contains P_k routing}{\Lifetime{}}{n, k}$ and $n \coloneqq \Abs{\V{T}} \geq \frac{k^2}{2}$, then there is some set $S \subseteq \V{T}$ such that $T$ contains a $\Pk{k}$-routing over $S$. 
\end{theorem}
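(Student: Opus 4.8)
The plan is to chain together \cref{lemma:temporal one-way connected contains walk with many vertices} and \cref{lemma:walk with many vertices implies P_k routing} directly, with the only point requiring care being the choice of how many vertices we force the walk to carry. First I would observe that, since $n \coloneqq \Abs{\V{T}} \geq k^2 - 1$, we may fix an arbitrary set $S_0 \subseteq \V{T}$ with $\Abs{S_0} = k^2 - 1$. The reason $k^2 - 1$ is exactly the right size is that this is precisely the threshold appearing in \cref{lemma:walk with many vertices implies P_k routing}, above which any temporal walk is guaranteed to contain a $\Pk{k}$-routing (via the pigeonhole over the \say{returns} in that proof).

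Next I would apply \cref{lemma:temporal one-way connected contains walk with many vertices} to $T$ together with the set $S_0$. Its hypotheses are satisfied: each layer of $T$ is unilateral by assumption, and $\Lifetime{T} \geq \bound{theorem:one-way connected temporal digraph contains P_k routing}{\Lifetime{}}{n, k} = \bound{lemma:temporal one-way connected contains walk with many vertices}{\Lifetime{}}{n, k^2 - 1} = \bound{lemma:temporal one-way connected contains walk with many vertices}{\Lifetime{}}{n, \Abs{S_0}}$, which is exactly the bound required by that lemma for a distinguished set of size $\Abs{S_0}$. This yields a temporal walk $W$ in $T$ with $S_0 \subseteq \V{W}$, and in particular $\Abs{\V{W}} \geq k^2 - 1$.

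Finally I would invoke \cref{lemma:walk with many vertices implies P_k routing} on the temporal walk $W$: since $\Abs{\V{W}} \geq k^2 - 1$, the walk $W$ contains a $\Pk{k}$-routing over some $S \subseteq \V{W} \subseteq \V{T}$. As any substructure of $W$ is also a substructure of $T$, this set $S$ witnesses the conclusion of the theorem.

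There is essentially no obstacle at this level: all the combinatorial work has been front-loaded into \cref{lemma:temporal one-way connected contains walk with many vertices} (the pigeonhole over the finitely many short walks on $n$ vertices, repeated across the many layers) and into \cref{lemma:walk with many vertices implies P_k routing} (extracting a routing from a sufficiently long walk). The one thing to be careful about in writing up the proof is the bookkeeping of parameters: the quantity playing the role of \say{$k$} in \cref{lemma:temporal one-way connected contains walk with many vertices} is $k^2 - 1$, not $k$, and the displayed definition of $\bound{theorem:one-way connected temporal digraph contains P_k routing}{\Lifetime{}}{n, k}$ is tailored precisely so that this substitution lines up on the nose.
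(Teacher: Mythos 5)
Your proposal is correct and follows essentially the same route as the paper's own proof: fix an arbitrary set of size $k^2-1$, apply \cref{lemma:temporal one-way connected contains walk with many vertices} to obtain a temporal walk covering it, then apply \cref{lemma:walk with many vertices implies P_k routing} to that walk. The parameter bookkeeping you highlight (substituting $k^2-1$ for the size parameter in the first lemma) is exactly what the definition of $\bound{theorem:one-way connected temporal digraph contains P_k routing}{\Lifetime{}}{n,k}$ encodes.
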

\begin{proof}
	Let $S' \subseteq \V{D}$ with $\Abs{S'} = \frac{k^2}{2}$.
	By~\cref{lemma:temporal one-way connected contains walk with many vertices}, $T$ contains a temporal walk $W$ such that $S' \subseteq \V{W}$.
	In particular, $\Abs{\V{W}} \geq \frac{k^2}{2}$.
	By~\cref{lemma:walk with many vertices implies P_k routing}, there is some $S \subseteq \V{W}$ such that $W$ and, hence, $T$ contain a $\Pk{k}$ routing over $S$.
\end{proof}

\subsection{Finding \texorpdfstring{$\Ck{k}$}{C\_k} and \texorpdfstring{$\biPk{k}$}{P\_k}-routings in temporal digraphs}

As discussed at the beginning of the previous subsection, the existence of $\Pk{k}$-routings in a routing temporal digraph relates to the connectivity provided by the columns of an acyclic grid, which only allows routing from top to bottom and from left to right.
If we consider a fence instead of an acyclic grid, the fence allows us to route upwards as well as downwards, as the columns alternate in direction.
Two consecutive columns taken together allow one to go from any row to any other row and in this way resemble a strongly connected digraph like a cycle $\Ck{k}$ or a bidirected $\biPk{k}$ much more than a $\Pk{k}$.
In this section, we aim to find $H$-routings that provide this higher level of connectivity.

We first define 
\begin{align*}
	\boundDefAlign{lemma:walk with many vertices implies choosable P_k or biP_k routing}{s}{k_1, k_2}
	\bound{lemma:walk with many vertices implies choosable P_k or biP_k routing}{s}{k_1, k_2} 
	& \coloneqq 6k_1(k_2)^2 - 8k_1k_2 + 2k_1 - 2(k_2)^2 + 3k_2
\end{align*}
and prove the following technical~\namecref{lemma:walk with many vertices implies choosable P_k or biP_k routing}.
Note that \(\bound{lemma:walk with many vertices implies choosable P_k or biP_k routing}{s}{k_1,k_2} \in \Oh(k_{1} (k_{2})^{2})\).

\begin{lemma}
	\label{lemma:walk with many vertices implies choosable P_k or biP_k routing}
	Let $T$ be a temporal digraph, let $W$ be a temporal walk in $T$, let $k_1, k_2$ be integers, and let $S \subseteq \V{W}$ be a set of size at least $\bound{lemma:walk with many vertices implies choosable P_k or biP_k routing}{s}{k_1, k_2}$. 
	Then there is some $S' \subseteq S$ such that one of the following is true:
	\begin{enamerate}{R}{item:choosable-P_k:P_k}
		\item \label{item:choosable-P_k:biP_k}
		      $W$ contains a $\biPk{k_1}$-routing over $S'$, or
		\item \label{item:choosable-P_k:P_k}
		      there are (possibly arcless) walks $W_1, W_a, W_b, W_c$ in $D$ such that $W_1$ is a subwalk of $W$ leaving and arriving at the same \timesteps as $W$, $W_a \cdot W_b \cdot W_c = W_1$, $W_a$ and $W_c$ are internally disjoint from $S'$, and $W_b$ contains a $\Pk{k_2}$-routing over $S'$ where the first vertex of the $\Pk{k_2}$ is mapped to $\Start{W_b}$ and the last vertex of the $\Pk{k_2}$ is mapped to $\End{W_b}$. 
	\end{enamerate}
\end{lemma}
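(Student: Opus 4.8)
The plan is to study how the vertices of $S$ occur along $W$ and to extract from a long enough such sequence either a \emph{palindromic} sub-pattern $b_1, b_2, \dots, b_{k_1}, \dots, b_{k_1}, \dots, b_2, b_1$, which gives a $\biPk{k_1}$-routing (conclusion~\cref{item:choosable-P_k:biP_k}), or a \emph{sequential} sub-pattern in which $k_2$ vertices occupy pairwise disjoint stretches of $W$, which gives the decomposition of conclusion~\cref{item:choosable-P_k:P_k}. For the latter it is convenient to note a reduction: given $S' \subseteq S$, let $W_b$ be the minimal contiguous sub-walk of $W$ containing all of $S'$, let $W_a$ be the prefix of $W$ ending at $\Start{W_b}$, let $W_c$ be the suffix of $W$ beginning at $\End{W_b}$, and put $W_1 \coloneqq W$; then $W_a \cdot W_b \cdot W_c = W_1$, the walk $W_1$ leaves and arrives at the same time steps as $W$, and $W_a, W_c$ are internally disjoint from $S'$ by minimality of $W_b$. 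So for~\cref{item:choosable-P_k:P_k} it is enough to find $S' \subseteq S$ of size $k_2$ and an ordering $S' = (c_1, \dots, c_{k_2})$ with $\Start{W_b} = c_1$ and $\End{W_b} = c_{k_2}$ such that for all $1 \le i < j \le k_2$ some temporal sub-walk of $W_b$ goes from an occurrence of $c_i$ to an occurrence of $c_j$ meeting no $c_\ell$ with $\ell \notin \{i, \dots, j\}$; such a sub-walk shortens to a temporal path with the same endpoints, since deleting the portion of a temporal walk between two occurrences of one vertex only advances time steps.

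For $v \in S$ write $\mathrm{fst}(v), \mathrm{lst}(v)$ for the first and last time step at which $W$ visits $v$ and set $I_v \coloneqq [\mathrm{fst}(v), \mathrm{lst}(v)]$. The main dichotomy is obtained from the strict partial order $u \prec v$ on $S$ given by $\mathrm{lst}(u) < \mathrm{fst}(v)$. If $\prec$ has a chain $b_1 \prec \dots \prec b_{k_2}$, then $I_{b_1}, \dots, I_{b_{k_2}}$ are pairwise disjoint and occur in this order, and one verifies directly, using only this disjointness, that $(b_1, \dots, b_{k_2})$ satisfies the reduction above — $W_b$ runs from $\mathrm{fst}(b_1)$ to $\mathrm{lst}(b_{k_2})$, and for $i < j$ the sub-walk of $W_b$ from $\mathrm{fst}(b_i)$ to $\mathrm{fst}(b_j)$ avoids every $b_\ell$ with $\ell < i$ (already finished) and $\ell > j$ (not yet started) — giving~\cref{item:choosable-P_k:P_k}. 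Otherwise the longest $\prec$-chain has length at most $k_2 - 1$, so by Mirsky's theorem $S$ is a union of at most $k_2 - 1$ $\prec$-antichains and some antichain $A$ has $|A| \ge |S|/(k_2-1)$. The intervals of an antichain pairwise overlap, hence share a common point $\tau$, so every $b \in A$ is visited by $W$ at some time $\le \tau$ and at some time $\ge \tau$.

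For $b \in A$ let $\alpha(b) \le \tau$ be its last occurrence up to $\tau$ and $\omega(b) \ge \tau$ its first occurrence from $\tau$ on, so that $b$ does not occur strictly between $\alpha(b)$ and $\omega(b)$. Applying \cref{thm:erdos_szekeres} to the permutation that carries the $\alpha$-order of $A$ to its $\omega$-order yields a sub-family $A'$ on which $\alpha$ and $\omega$ vary in opposite senses — a \emph{nest}, where the windows $(\alpha(b),\omega(b))$ are strictly nested — or in the same sense, a \emph{staircase}. A nest is precisely the palindrome wanted for a $\biPk{k_1}$-routing, except that a vertex of $A'$ may occur spuriously outside its window; a cleaning step — for each vertex, discard the boundedly many other vertices whose windows its extra occurrences invade, then re-apply the pigeonhole and \cref{thm:erdos_szekeres} bounds — extracts $k_1$ vertices whose occurrences realise a genuine palindrome, giving~\cref{item:choosable-P_k:biP_k}. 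In the staircase case one either finds, inside the prefix of $W$ up to $\tau$ or the suffix from $\tau$, a $\prec$-chain of length $k_2$ among the vertices of $A'$ and is back in the first case, or peels off a nest by the same argument. Accounting for the loss in the Mirsky step, the applications of \cref{thm:erdos_szekeres}, and the cleaning then yields the claimed bound $\bound{lem-abbr:walk with many vertices implies choosable P_k or biP_k routing}{s}{k_1, k_2}$.

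The main obstacle is exactly this cleaning: nested windows alone do not give a clean palindrome, since a vertex occurring many times may intrude into the windows reserved for others and destroy the disjointness that the $\biPk{k_1}$-routing requires, and likewise the ``staircase'' configuration does not immediately furnish either conclusion. Controlling the number of such intrusions while keeping the bound polynomial is the technical heart of the proof; it helps to first replace $W$ by a minimal temporal sub-walk containing $S$, so that, just as in the proof of \cref{lemma:walk with many vertices implies P_k routing}, each revisit is witnessed by a private vertex and the number of problematic configurations stays controlled by $|S|$.
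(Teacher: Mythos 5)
Your proposal takes a genuinely different route from the paper, but I do not believe it closes. The paper replaces $W$ by a minimal temporal sub-walk containing $S$ and then organises the whole argument around \emph{returns}, with the key technical object being a walk $\hat W_a \cdot \hat W_b \cdot \hat W_c$ on which every vertex of the working set appears \emph{exactly once} in $\hat W_a$ and \emph{exactly once} in $\hat W_c$; Erd\H{o}s--Szekeres is then applied to the two induced total orders, and the ``exactly once'' property makes the resulting palindrome or monotone pattern automatically clean. You instead set up a partial order $u\prec v$ on $S$ via $\mathrm{lst}(u)<\mathrm{fst}(v)$, apply Mirsky and Helly to extract a large antichain with a common time $\tau$, and then apply Erd\H{o}s--Szekeres to the $\alpha$/$\omega$ orderings around $\tau$. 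That is an attractive framing, and your chain case is correct and essentially matches what the paper does in its ``many returns'' sub-case.

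The nest case, however, is where the argument leaves a genuine gap, and you say so yourself. The $\alpha(b),\omega(b)$ occurrences only guarantee that \emph{$b$ itself} is absent between $\alpha(b)$ and $\omega(b)$; they say nothing about extra occurrences of \emph{other} chosen vertices inside that window. Your ``cleaning step'' is asserted, not proved, and I do not see how to make it work: even after passing to a minimal sub-walk, a single vertex $b_\ell$ may have many occurrences and each one may invade a different window, and the witness structure of a minimal sub-walk (one private vertex per return) does not obviously bound the number of \emph{windows} invaded by a fixed $b_\ell$, which is what the pigeonhole you invoke would need. The paper's claim (its Claim 1) sidesteps exactly this by forcing the ``before'' and ``after'' sequences to list each vertex once before Erd\H{o}s--Szekeres is even applied; your $\alpha$/$\omega$ data does not have that property, and I do not think it can be manufactured after the fact by discarding a bounded number of vertices.

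There is also a concrete error in the staircase branch. You propose to find ``a $\prec$-chain of length $k_2$ among the vertices of $A'$'', but $A'$ is a subset of a $\prec$-antichain, so by definition no two of its elements are $\prec$-comparable; a $\prec$-chain of length $\ge 2$ in $A'$ cannot exist. If you mean instead a chain with respect to an order defined only from occurrences in the prefix up to $\tau$ (or the suffix from $\tau$), that is a different relation and its existence still needs an argument, because a staircase in $\alpha$ does not prevent a vertex from having additional occurrences scattered through the prefix. In short: the chain/Dilworth outline is sound, the reduction of conclusion~\ref{item:choosable-P_k:P_k} is sound, but both the nest-cleaning step and the staircase case are open, and the second as written is wrong. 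Compare with the paper's proof, which gets the ``exactly once'' property structurally from returns on a minimal sub-walk and never faces the cleaning problem.
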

\begin{proof}
	To simplify arithmetic steps, we define $k_3 = (k_1 - 1)(k_2 - 1)$ and $s_1 = 2(k_3 + k_2) + k_2 - 1$.
	Note that $(k_2 - 1)(s_1 - 1) + k_2(4k_3 + k_2) = 6k_1(k_2)^2 - 8k_1k_2 + 2k_1 - 2(k_2)^2 + 3k_2 = \bound{lemma:walk with many vertices implies choosable P_k or biP_k routing}{s}{k_1, k_2} \leq \Abs{S}$. 
	
	In the following claim, we identify a base case for the proof, which is used several times later.
	
	\begin{claim}
		\label{claim:split-walk-implies-choosable-Pk-or-biPk-routing}
		Let $\hat{W} = \hat{W}_a \cdot \hat{W}_b \cdot \hat{W}_c$ be a temporal walk inside $W$ such that $\Start{\hat{W}} = \Start{W}$ and $\End{\hat{W}} = \End{W}$ and let $\hat{S} \subseteq \V{\hat{W}_a} \cap \V{\hat{W}_c} \cap S$ be a set such that each vertex of $\hat{S}$ appears exactly once on $\hat{W}_a$ and exactly once on $\hat{W}_c$. 
		If $|\hat{S}| \geq k_3 + 1$, then there is some $S' \subseteq \hat{S}$ such that~\cref{item:choosable-P_k:biP_k} or~\cref{item:choosable-P_k:P_k} holds. 
	\end{claim}
	\begin{claimproof}
		Since each vertex of $\hat{S}$ appears exactly once on $\hat{W}_a$ and exactly once on $\hat{W}_c$, each of these walks induces an ordering over the vertices of $\hat{S}$.
		By~\cref{thm:erdos_szekeres}, we obtain two cases.
		
		\textbf{Case 1:} There is some $S' \subseteq \hat{S}$ of size $k_1$ such that the vertices of $S'$ appear on $\hat{W}_c$ in the reverse order compared to their order on $\hat{W}_a$.
		
		Let $W_a$ be a shortest temporal subpath of $\hat{W}_a$ containing every vertex of $S'$, and let $W_c$ be a shortest temporal subpath of $\hat{W}_c$ containing every vertex of $S'$.
		Note that $\End{W_a} = \Start{W_c}$.
		We show that $W_a \cdot W_c$ contains a $\biPk{k_1}$-routing over $S'$.
		Let $\Set{u_1, u_2, \dots, u_{k_1}}$ be the vertices of the $\biPk{k_1}$ sorted according to their occurrence on the path.
		
		Let $u_i, u_j \in \Set{u_1, u_2, \dots, u_{k_1}}$.
		If $i < j$, then $W_a$ contains a $u_i$-$u_j$-path avoiding $S' \setminus \Set{u_i, \dots, u_j}$.
		If $j > i$, then $W_c$ contains a $u_i$-$u_j$-path avoiding $S' \setminus \Set{u_j, \dots, u_{i}}$.
		Since both $W_a$ and $W_c$ are temporal paths, we have that $W_a \cdot W_c$ contains a $\biPk{k_1}$-routing over $S'$, satisfying~\cref{item:choosable-P_k:biP_k}.
		
		\textbf{Case 2:} There is some $S' \subseteq \hat{S}$ of size $k_2$ such that the vertices of $S'$ appear in $\hat{W}_c$ in the same order as in $\hat{W}_a$.
		
		Let $W_b$ be the shortest temporal subpath of $\hat{W}_a$ containing every vertex of $S'$.
		Let $W_a$ be a temporal $\Start{\hat{W}}$-$\Start{W_b}$-path in $\hat{W}_a$ and let $\hat{W}_c$ be a temporal $\End{W_b}$-$\End{\hat{W}}$-path in $\hat{W}_c$.
		
		Since every vertex of $S'$ appears exactly once in \(W_b\), $W_b$ contains a $\Pk{k_2}$-routing over $S'$ where the first vertex of the $\Pk{k_2}$ is mapped to $\Start{W_b}$ and the last vertex of the $\Pk{k_2}$ is mapped to $\End{W_b}$.
		Further, $W_a$ and $W_c$ are internally disjoint from $S'$.
		Thus, the temporal walk $W_1 \coloneqq W_a \cdot W_b \cdot W_c$ satisfies~\cref{item:choosable-P_k:P_k}.
	\end{claimproof}
	
	Let $W'$ be a minimal temporal subwalk of $W$ such that $S \subseteq \V{W'}$.
	We say that a temporal subwalk $R$ of $W$ is a \emph{return (around a vertex $u \in S$)} if $u$ appears exactly twice on $R$, $R$ starts and ends on $u$ and all vertices of $(\V{R} \cap S) \setminus \Set{u}$ appear exactly once on $R$.
	Note that, by minimality of $W'$, each return $R$ around a vertex $u$ must contain a vertex $u' \in S$ whose only occurrence on $W'$ is on $R$.
	
	Let $R$ be a return in $W'$ maximising the cardinality of $S_1 \coloneqq \V{R} \cap S$.
	We consider two cases.
	
	\textbf{Case 1:} $\Abs{S_1} \leq s_1 - 1$.
	
	We decompose $W'$ into $Q_1 \cdot R_1 \cdot Q_2 \cdot R_2 \cdot \ldots \cdot Q_x \cdot R_x \cdot Q_{x + 1} = W'$, where each $Q_i$ is a temporal walk on which no vertex of $S$ appears twice and each $R_i$ is a return in $W'$.
	By definition of return, such a decomposition is unique.
	We distinguish between two subcases.
	
	\textbf{Case 1.1:} $x \geq k_2$.
	
	For each $1 \leq i \leq k_2$ let $u_i \in \V{R_i} \cap S$ be a vertex which occurs exactly once on $W'$.
	Let $W_a$ be a temporal $\Start{W'}$-$u_1$-path in $W'$, let $W_c$ be a temporal $u_{k_2}$-$\End{W'}$-path in $W'$ and let $W_b$ be a temporal $u_1$-$u_{k_2}$-walk in $W'$ which contains every vertex of $S' \coloneqq \Set{u_1, u_2, \dots, u_{k_2}}$ exactly once.
	
	Because each vertex of $S'$ appears exactly once in \(W_b\), $W_b$ contains a $\Pk{k_2}$-routing over $S'$ where $u_1 = \Start{W_b}$ is the first vertex of the $\Pk{k_2}$ and $u_{k_2} = \End{W_b}$ is the last vertex of the $\Pk{k_2}$.
	Further, $W_a$ and $W_c$ are temporal walks that are internally disjoint from $S'$.
	Hence, $W_1 \coloneqq W_a \cdot W_b \cdot W_c$ satisfies~\cref{item:choosable-P_k:P_k}.
	
	\textbf{Case 1.2:} $x < k_2$.	
	
	Since $\Abs{S} \geq (k_2 - 1)(s_1 - 1) + k_2(4k_3 + k_2)$ and $\Abs{\V{R_i} \cap S} \leq \Abs{S_1} \leq s_1 - 1$ for every $1 \leq i \leq x$, there is some $1 \leq b \leq x$ such that $\Abs{\V{Q_b} \cap S} \geq (\Abs{S} - (k_2 - 1)(s_1 - 1)) / k_2 \geq 4k_3 + k_2$.
	
	Let $t_1$ be the \timestep in which $Q_b$ departs, and let $t_2$ be the \timestep in which $Q_b$ arrives.
	Let $Q_a'$ be a temporal $\Start{W'}$-$\Start{Q_b}$-path in $W'$ arriving at $t_1$ and let $Q_c'$ be a temporal $\End{Q_b}$-$\End{W'}$-path in $W'$ departing at $t_2$.
	Let $S_a = \V{Q_a'} \cap S$, $S_b = \V{Q_b} \cap S$ and $S_c = \V{Q_c'} \cap S$.
	We consider three further subcases.
	
	\textbf{Case 1.2.1:} $\Abs{S_b \setminus (S_a \cup S_c)} \geq k_2$.
	
	Let $S' \subseteq S_b \setminus (S_a \cup S_c)$ be a set of size $k_2$.
	Let $W_b$ be a shortest walk inside $Q_b$ containing every vertex of $S'$ exactly once.
	This is possible by the construction of $Q_b$.
	Let $W_a$ be a temporal $\Start{W'}$-$\Start{W_b}$-walk in $Q_a' \cdot Q_b$ and let $W_c$ be a temporal $\End{W_b}$-$\End{W'}$-walk in $Q_b \cdot Q_c'$.
	
	By construction, we have that $W_a$ and $W_c$ are internally disjoint from $S'$.
	Further, since each vertex of $S'$ appears exactly once in \(W_b\), $W_b$ contains a $\Pk{k_2}$-routing over $S'$ where the first vertex of the $\Pk{k_2}$ is mapped to $\Start{W_b}$ and the last vertex of the $\Pk{k_2}$ is mapped to $\End{W_b}$.
	Hence, $W_1 \coloneqq W_a \cdot W_b \cdot W_c$ satisfies~\cref{item:choosable-P_k:P_k}.
	
	\textbf{Case 1.2.2:} $\Abs{(S_b \cap S_a) \setminus S_c} \geq k_3 + 1$ or $\Abs{(S_b \cap S_c) \setminus S_a} \geq k_3 + 1$.
	
	Without loss of generality, we assume that $\Abs{(S_b \cap S_a) \setminus S_c} \geq k_3 + 1$.
	The other case follows analogously.
	
	Let $S_2 = (S_b \cap S_a) \setminus S_c$.
	Let $\hat{W}_c = Q_b \cdot Q_c'$.
	Each vertex of $\hat{S}$ appears exactly once on $Q_a$ and exactly once on $\hat{W}_c$.
	Hence, by~\cref{claim:split-walk-implies-choosable-Pk-or-biPk-routing}, there is some $S' \subseteq S_2$ such that~\cref{item:choosable-P_k:biP_k} or~\cref{item:choosable-P_k:P_k} holds.
	
	\textbf{Case 1.2.3:} The conditions of \textbf{Case 1.2.1} and \textbf{Case 1.2.2} do not apply.
	
	We first show that $\Abs{S_a \cap S_c} \geq k_3 + 1$.
	Since $\Abs{S_b \setminus (S_a \cup S_c)} \leq k_2 - 1$ and $\Abs{S_b} \geq 4k_3 + k_2$, we have that $\Abs{S_b \cap (S_a \cup S_c)} \geq 4k_3 + 1$.
	Hence, $\Abs{S_b \cap S_a} \geq 2k_3 + 1$ or $\Abs{S_b \cap S_c} \geq 2k_3 + 1$.
	
	Assume, without loss of generality, that $\Abs{S_b \cap S_a} \geq 2k_3 + 1$ holds.
	Because $\Abs{(S_b \cap S_a) \setminus S_c} \leq k_3$, we have that $\Abs{S_a \cap S_c} \geq k_3 + 1$, as desired.
	
	Let $S_2 \subseteq S_a \cap S_c$ be a set of size $k_3 + 1$.
	Since $Q_a'$ and $Q_c$ are temporal paths, each vertex of $S_2$ appears exactly once on $Q_a$ and exactly once on $Q_c$.
	Hence, by~\cref{claim:split-walk-implies-choosable-Pk-or-biPk-routing}, there is some $S' \subseteq S_2$ such that~\cref{item:choosable-P_k:biP_k} or~\cref{item:choosable-P_k:P_k} holds.
	
	\textbf{Case 2:} $\Abs{S_1} \geq s_1$.
	
	Let $Q_a, Q_b$ be two temporal paths inside $W'$ such that $Q_a \cdot R \cdot Q_b$ is a walk starting at $\Start{W'}$ and ending at $\End{W'}$.
	Note that, as $\Start{R} = \End{R}$, $Q_a \cdot Q_b$ is also a temporal walk.
	Let $S_2 \subseteq S_1$ be the vertices of $S_1$ which occur exactly once on $Q_a \cdot R \cdot Q_b$.
	
	\textbf{Case 2.1:} $\Abs{S_2} \geq k_2$.
	
	Let $S' \subseteq S_2$ be a set of size $k_2$.
	Let $W_b$ be the shortest temporal subpath of $R$ which contains every vertex in $S'$.
	As every internal vertex of $R$ appears exactly once on $R$, such a path $W_b$ exists.
	Now let $W_a$ be a temporal $\Start{W'}$-$\Start{W_b}$-path inside $W'$ and let $W_c$ be an temporal $\End{W_b}$-$\End{W'}$-path inside $W'$.
	The temporal paths $W_a$ and $W_c$ are internally disjoint from $S'$ since the only occurrence of the vertices of $S'$ along $W'$ is on $R$.
	Since $W_b$ is a temporal path containing every vertex of $S'$, it also contains a $\Pk{k_2}$-routing $\varphi$ over $S'$.
	By choice of $W_b$, we also have that $\varphi$ maps the first vertex of the $\Pk{k_2}$ to $\Start{W_b}$ and the last vertex of the $\Pk{k_2}$ to $\End{W_b}$.
	By setting $W_1 \coloneqq W_a \cdot W_b \cdot W_c$, we satisfy~\cref{item:choosable-P_k:P_k}.
	
	\textbf{Case 2.2:} $\Abs{S_2} < k_2$.
	
	Because $\Abs{S_1} \geq s_1 = 2(k_3 + k_2) + k_2 - 1$, we have that $\Abs{S_1 \cap (\V{Q_a} \cup \V{Q_b})} = \Abs{S_1 \setminus S_2} \geq 2(k_3 + k_2)$.
	Hence, $\Abs{\V{Q_a} \cap S_1} \geq k_3 + k_2$ or $\Abs{\V{Q_b} \cap S_1} \geq k_3 + k_2$.
	
	We assume without loss of generality that $\Abs{\V{Q_a} \cap S_1} \geq k_3 + k_2$, as the other case follows analogously.
	Since every vertex of $\V{Q_a} \cap S_1$ is either in $\V{Q_b}$ or not, we obtain two subcases.
	
	\textbf{Case 2.2.1:} $\Abs{(\V{Q_a} \cap S_1) \setminus \V{Q_b}} \geq k_2$.
	
	Let $S' \subseteq (\V{Q_a} \cap S_1) \setminus \V{Q_b}$ be a set of size $k_2$ and let $W_b$ be a minimal temporal subpath of $Q_a$ containing every vertex of $S'$.
	Let $W_a$ be a temporal $\Start{W'}$-$\Start{W_b}$-walk in $Q_a \cdot Q_b$ and let $W_c$ be a temporal $\End{W_b}$-$\End{W'}$-walk in $Q_a \cdot Q_b$.
	
	Every vertex of $S'$ appears exactly once in the temporal walk $Q_a \cdot Q_b$.
	Since every occurrence of $S'$ is on $W_b$ and $W_b$ is a temporal path, we have that $W_b$ contains a $\Pk{k_2}$-routing over $S'$.
	Since $W_b$ was chosen minimal, the first vertex of the $\Pk{k_2}$ is mapped to $\Start{W_b}$ and the last vertex of the $\Pk{k_2}$ is mapped to $\End{W_b}$.
	Further, $W_1 = W_a \cdot W_b \cdot W_c$ and $W_a$ and $W_b$ are internally disjoint from $S'$, satisfying~\cref{item:choosable-P_k:P_k}.
	
	\textbf{Case 2.2.2:} $\Abs{\V{Q_a} \cap \V{Q_b} \cap S_1} \geq k_3 + 1$.
	
	Let $S_3 \subseteq \V{Q_a} \cap \V{Q_b} \cap S_1$ be a set of size $k_3 + 1$.
	As $Q_a$ and $Q_b$ are temporal paths, each vertex of $S_3$ appears exactly once in each of those temporal paths.
	Hence, by~\cref{claim:split-walk-implies-choosable-Pk-or-biPk-routing}, there is some $S' \subseteq S_3$ such that~\cref{item:choosable-P_k:biP_k} or~\cref{item:choosable-P_k:P_k} holds.
\end{proof}

The next lemma allows us to transfer the strong connectivity of each individual layer of a temporal digraph to the temporal digraph as a whole. 

\begin{lemma}
	\label{lem:strongly connected temporal digraph contains u-v path}
	Let $T$ be a temporal digraph in which each layer is strongly connected.
	If $\Lifetime{T} \geq \Abs{\V{T}} - 1$, then every $u \in \V{T}$ temporally reaches every $v \in \V{T}$.
\end{lemma}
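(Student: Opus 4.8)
The plan is to fix an arbitrary source vertex $u \in \V{T}$ and follow the \say{reachability frontier} of $u$ as the layers $\Layer{T}{1}, \Layer{T}{2}, \ldots$ are scanned in order of increasing time. Write $n \coloneqq \Abs{\V{T}}$; we may assume $n \geq 2$, the cases $n \leq 1$ being trivial. Define $R_0 \coloneqq \Set{u}$ and, for $t \geq 1$, $R_t \coloneqq R_{t-1} \cup \Set{w \in \V{T} \mid (v,w) \in A_t \text{ for some } v \in R_{t-1}}$. I would prove two claims: (i) every $w \in R_t$ is temporally reached from $u$ by a temporal walk that uses only layers from $\Set{1, \dots, t}$; and (ii) whenever $R_{t-1} \neq \V{T}$ we have $\Abs{R_t} > \Abs{R_{t-1}}$. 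Granting these, since $\Abs{R_0} = 1$ and each of the first $n-1$ layers strictly enlarges the frontier until it equals $\V{T}$, we get $R_{n-1} = \V{T}$; as $\Lifetime{T} \geq n - 1$ the layers $1, \dots, n-1$ all exist, so (i) shows that $u$ temporally reaches every vertex of $T$. Since $u$ was arbitrary, the lemma follows.

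For (i) I would use induction on $t$. For $t = 0$ the length-$0$ walk at $u$ witnesses $u \in R_0$. For $t \geq 1$, if $w \in R_t \setminus R_{t-1}$, choose $v \in R_{t-1}$ with $(v,w) \in A_t$; by the induction hypothesis there is a temporal walk $W_v$ from $u$ to $v$ using only layers $\leq t-1$, and appending the step across $(v,w)$ in layer $t$ produces a temporal walk from $u$ to $w$. This is legitimate because the time $t$ strictly exceeds every time occurring in $W_v$, and $t \leq n-1 \leq \Lifetime{T}$, so the strictly increasing, within-lifetime time condition of \cref{def:temporal-walks-and-paths} is met. A routine by-product of this induction is that each such walk can be taken to be a temporal \emph{path}: the vertices of $W_v$ all lie in $R_{t-1}$, while $w \notin R_{t-1}$, so $w$ does not repeat a vertex of $W_v$.

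For (ii) the essential input is that $\Layer{T}{t}$ is strongly connected. If $R_{t-1}$ is a non-empty proper subset of $\V{T}$ — it is non-empty as $u \in R_{t-1}$, and proper by hypothesis — then in the strongly connected digraph $\Layer{T}{t}$ there must be an arc $(v,w) \in A_t$ with $v \in R_{t-1}$ and $w \notin R_{t-1}$; otherwise no vertex outside $R_{t-1}$ could be reached in $\Layer{T}{t}$ from a vertex inside it, contradicting strong connectivity. Hence $w \in R_t \setminus R_{t-1}$ and $\Abs{R_t} > \Abs{R_{t-1}}$. Combining (i) and (ii) as above closes the proof. I do not expect a genuine obstacle here; the only points that need a little care are the bookkeeping ensuring the layer indices used by the constructed walks never exceed $\Lifetime{T}$ (which is why the frontier must saturate within $n-1$ layers), and verifying the elementary fact that an arc leaves every proper non-empty vertex subset of a strongly connected digraph, which drives step (ii).
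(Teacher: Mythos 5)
Your proof is correct and takes essentially the same approach as the paper: both define the reachability frontier $R_t$, use strong connectivity of each layer to show the frontier gains at least one vertex per layer until it equals $\V{T}$, and conclude after $\Abs{\V{T}} - 1$ layers. The only cosmetic difference is that you define $R_t$ by a one-edge recursion and prove in claim (i) that its elements are temporally reachable, whereas the paper defines $R_i$ directly as the vertices temporally reachable within the first $i$ layers.
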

\begin{proof}
	Let $u \in \V{T}$ and let $n = \Abs{\V{T}} - 1$.
	
	For each $0 \leq i \leq n$ let $R_i$ be the set of vertices of $T$ which $u$ temporally reaches in at most $i$ \timesteps.
	Clearly $u \in R_0$ and so $\Abs{R_0} = 1$.
	Further, $\Abs{R_i} \leq \Abs{R_j}$ if $i \leq j$.
	
	We show that, for every $0 \leq i < n$, if $\Abs{R_i} < \V{T}$, then $\Abs{R_{i+1}} > \Abs{R_{i}}$.
	Let $R_i$ be such a set and let $X = \V{T} \setminus \V{R_i}$.
	Since $\Layer{T}{i+1}$ is strongly connected, there is some $w \in R_i$ and some $v \in X$ such that $(w, v) \in \A{\Layer{T}{i+1}}$.
	By assumption, there is a temporal walk $W$ from $u$ to $w$ within the first $i$ layers in $D$.
	Extending $W$ with the arc $\Brace{w,v}$ allows us to obtain a walk from $u$ to $v$ within the first $i+1$ layers.
	Hence, $\Abs{R_{i+1}} > \Abs{R_i}$.
	
	Since $n \geq \Abs{\V{T} - 1}$ and $\Abs{R_0} = 1$, we have that $\Abs{R_n} = n+1$.
	Thus, $u$ temporally reaches every $v \in \V{T}$.
\end{proof}

\subsection{Making the layers strongly connected}

We are almost ready to prove the main result of this part, which allows
us to construct $\Ck{k}$- or $\biPk{k}$-routings in temporal digraphs
with strongly connected layers. But first, we need the following \namecref{lemma:temporal strongly connected contains walk with many vertices}.

\begin{lemma}
	\label{lemma:temporal strongly connected contains walk with many vertices}
	Let $D$ be a temporal digraph in which each layer is strongly
	connected, let $S \subseteq \V{D}$, let $v \in \V{D}$ and let $s \in S$. 
	If $\Lifetime{D} \geq \Abs{S} \cdot (\Abs{\V{D}} - 1)$, then $D$ contains a
	temporal $v$-$s$-walk $W$ with $S \subseteq \V{W}$.	 
\end{lemma}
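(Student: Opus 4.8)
The plan is to assemble the desired walk from $\Abs{S}$ shorter temporal walks, each confined to its own block of $\Abs{\V{D}}-1$ consecutive layers, and then to concatenate these pieces.

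Write $k \coloneqq \Abs{S}$ and enumerate $S = \Set{s_1, s_2, \dots, s_k}$ so that $s_k = s$; put $s_0 \coloneqq v$. For $1 \leq i \leq k$ let $T_i$ be the temporal digraph on vertex set $\V{D}$ whose layers are $\Layer{D}{(i-1)(\Abs{\V{D}}-1)+1}, \dots, \Layer{D}{i(\Abs{\V{D}}-1)}$, re-indexed to start at time $1$; these layers exist because $\Lifetime{D} \geq k(\Abs{\V{D}}-1)$ by hypothesis. Each layer of $T_i$ is a layer of $D$, hence strongly connected, and $\Lifetime{T_i} = \Abs{\V{D}}-1 = \Abs{\V{T_i}}-1$.

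Now I would apply \cref{lem:strongly connected temporal digraph contains u-v path} to $T_i$ with start vertex $s_{i-1}$ and end vertex $s_i$ to obtain a temporal $s_{i-1}$-$s_i$ walk in $T_i$; shifting all of its time stamps up by $(i-1)(\Abs{\V{D}}-1)$ turns it into a temporal walk $W_i$ in $D$ using only layers from the $i$-th block. (If $s_{i-1}=s_i$ the walk is trivial and this block contributes nothing; such blocks can simply be dropped.) Since the blocks occupy pairwise disjoint, increasing time intervals and $\End{W_i}=s_i=\Start{W_{i+1}}$, the concatenation $W \coloneqq W_1 \cdot W_2 \cdots W_k$ is again a temporal walk: every arc used inside $W_i$ carries a time stamp strictly smaller than every arc used inside $W_{i+1}$, so the strict monotonicity of time stamps along the whole of $W$ required by \cref{def:temporal-walks-and-paths} is automatic. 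By construction $W$ starts at $s_0 = v$, ends at $s_k = s$, and $S = \Set{s_1,\dots,s_k} \subseteq \V{W}$, which is exactly the claim.

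The argument is essentially bookkeeping, and I do not expect a genuine obstacle. The only points that need care are: restricting $D$ to a contiguous block of layers preserves the hypotheses of \cref{lem:strongly connected temporal digraph contains u-v path} (same vertex set, same and hence strongly connected layers, and exactly $\Abs{\V{D}}-1$ of them); trivial single-vertex walks occur harmlessly when $s_{i-1}=s_i$; and, crucially, concatenating temporal walks is \emph{not} legitimate in general, but is so here because the blocks are separated in time, so no time stamp of a later piece can fail to exceed a time stamp of an earlier one.
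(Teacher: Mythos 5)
Your proof is correct and takes essentially the same approach as the paper: partition the lifetime into $\Abs{S}$ blocks of $\Abs{\V{D}}-1$ consecutive layers, apply \cref{lem:strongly connected temporal digraph contains u-v path} within each block to route from $s_{i-1}$ to $s_i$, and concatenate, using that the blocks occupy disjoint and increasing time intervals so the concatenation remains a temporal walk. Your write-up is a bit more careful than the paper's (which has some off-by-one slips in its layer indexing), but the decomposition, the key auxiliary lemma, and the concatenation argument are identical.
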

\begin{proof}
	Let $\Set{ s_1, \dots, s_k } \coloneqq S$ be an arbitrary ordering of $S$ such that $s = s_k$, and let $n \coloneqq \Abs{\V{D}}$.
	We iteratively construct temporal walks $W_1, W_2, \dots, W_{k}$ such
	that $W_i$ is a walk from $v$ to $s_i$ within the first $i \cdot (n - 1)$
	layers and $W_i$ contains $s_1, \dots, s_i$. 
	
	Start by taking some temporal $v$-$s_1$-walk $W_1$ within the first $n - 1$ layers.
	By~\cref{lem:strongly connected temporal digraph contains u-v path}, such a walk exists.
	
	On step $i \geq 2$, let $W'$ be the temporal $s_{i-1}$-$s_i$-walk from
	layer $i \cdot (n - 1) + 1$ to layer $(i+1) \cdot (n - 1)$ in $D$. 
	By~\cref{lem:strongly connected temporal digraph contains u-v path}, such a walk exists.
	Now set $W_{i+1} = W_i \cdot W'$.
	Since $W_i$ arrives on $\End{W_i} = \Start{W'}$ on \timestep $i(n-1)$
	and $W'$ leaves $\Start{W'}$ on \timestep $i(n-1) + 1$, we have that
	$W_{i+1}$ is a temporal $v$-$s_{i+1}$-walk as desired. 
	
	Thus, the walk $W_k$ is a temporal $v$-$s$-walk within the first
	$\Abs{S} \cdot (n - 1)$ layers which contains all vertices of $S$. 
\end{proof}

We are now ready to prove the following result, which guarantees an $H$-routing for some $H \in \Set{ \biPk{k}, \Ck{k} }$ in any temporal digraph of sufficiently large lifetime as long as each layer is strongly connected.
Moreover, we even have some control over the vertex set of the $H$-routing.
Note, however, that we have no control over which of the two possible routings we obtain.

We define the following functions:
\begin{align*}
	\boundDefAlign{theorem:strongly connected temporal digraph contains H routing}{s}{k}
		\bound{theorem:strongly connected temporal digraph contains H routing}{s}{k} & \coloneqq \bound{lemma:walk with many vertices implies choosable P_k or biP_k routing}{s}{k, \bound{lemma:walk with many vertices implies choosable P_k or biP_k routing}{s}{k,(k-1)^2 + 1}},\\[0em]
	\boundDefAlign{theorem:strongly connected temporal digraph contains H routing}{\Lifetime{}}{k}
		\bound{theorem:strongly connected temporal digraph contains H routing}{\Lifetime{}}{n,k} & \coloneqq \bound{theorem:strongly connected temporal digraph contains H routing}{s}{k} + \bound{lemma:walk with many vertices implies choosable P_k or biP_k routing}{s}{k,(k-1)^2 + 1} \cdot \Brace{n - 1}.
\end{align*}
Observe that \(\bound{theorem:strongly connected temporal digraph contains H routing}{s}{k} \in \Oh(k^{11})\) and
\(\bound{theorem:strongly connected temporal digraph contains H routing}{\Lifetime{}}{n,k} \in \Oh(k^{11} + k^{5} n)\).

\begin{theorem}
	\label{theorem:strongly connected temporal digraph contains H routing}
	Let $T$ be a temporal digraph such that $\Layer{T}{i}$ is strongly connected for all $1 \leq i \leq \Lifetime{T}$.
  If $\Lifetime{T} \geq \bound{theorem:strongly connected temporal digraph contains H routing}{\Lifetime{}}{\Abs{\V{T}}, k}$, then for every set $S \subseteq \V{T}$ with $\Abs{S} \geq \bound{theorem:strongly connected temporal digraph contains H routing}{s}{k}$ there is a subset $S' \subseteq S$ with $\Abs{S'} = k$ such that $D$ contains an $H$-routing over $S'$ for some $H \in \Set{ \Ck{k},  \biPk{k}}$.
\end{theorem}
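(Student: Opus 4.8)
The plan is to reduce the statement to a question about two temporal walks, to turn each walk into a genuine routing with the help of \cref{lemma:walk with many vertices implies choosable P_k or biP_k routing}, and to finish with Erd\H{o}s--Szekeres. Since every layer of $T$ is strongly connected, \cref{lemma:temporal strongly connected contains walk with many vertices} is available: inside a window of the timeline of length $m\cdot(\Abs{\V{T}}-1)$ it produces a closed temporal walk from any fixed vertex back to itself whose vertex set contains any prescribed set of $m$ vertices. I would fix $s_{0}\in S$ and invoke this twice, in two disjoint windows with the first entirely before the second, to obtain closed temporal walks $W^{(1)}$ and $W^{(2)}$ from $s_{0}$ to $s_{0}$ such that both $\V{W^{(1)}}$ and $\V{W^{(2)}}$ contain a common set $S_{0}\subseteq S$. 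The hypothesis $\Abs{S}\geq\bound{theorem:strongly connected temporal digraph contains H routing}{s}{k}$ is what guarantees that such an $S_{0}$ exists of the size demanded below, and the lifetime bound $\bound{theorem:strongly connected temporal digraph contains H routing}{\Lifetime{}}{\Abs{\V{T}},k}$ is what is needed to accommodate both windows (together with the bookkeeping discussed at the end).

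Next I would clean up the two passes. Apply \cref{lemma:walk with many vertices implies choosable P_k or biP_k routing} to $W^{(1)}$ with the set $S_{0}$ and parameters $k_{1}=k$, $k_{2}=\bound{lemma:walk with many vertices implies choosable P_k or biP_k routing}{s}{k,(k-1)^{2}+1}$; unwinding the two applications below, this is where the function $\bound{theorem:strongly connected temporal digraph contains H routing}{s}{k}$ enters. If the lemma returns a $\biPk{k}$-routing we are already done, as $\biPk{k}$ is one of the two permitted shapes of $H$. Otherwise it returns, after passing to a subwalk leaving and arriving at $s_{0}$, a decomposition $W^{(1)}=W^{(1)}_{a}\cdot W^{(1)}_{b}\cdot W^{(1)}_{c}$ in which $W^{(1)}_{b}$ carries a $\Pk{k_{2}}$-routing over a set $S'\subseteq S_{0}$ with $\Abs{S'}=k_{2}$ whose endpoints are the first and last routing vertices, and in which $W^{(1)}_{a},W^{(1)}_{c}$ are internally disjoint from $S'$. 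Now apply \cref{lemma:walk with many vertices implies choosable P_k or biP_k routing} again, to $W^{(2)}$ with the set $S'$ and parameters $k_{1}=k$, $k_{2}=(k-1)^{2}+1$, which is legitimate since $\Abs{S'}=\bound{lemma:walk with many vertices implies choosable P_k or biP_k routing}{s}{k,(k-1)^{2}+1}$. A $\biPk{k}$-routing finishes us; otherwise we obtain $W^{(2)}=W^{(2)}_{a}\cdot W^{(2)}_{b}\cdot W^{(2)}_{c}$ with $W^{(2)}_{b}$ carrying a $\Pk{(k-1)^{2}+1}$-routing over a set $S''\subseteq S'$ with $\Abs{S''}=(k-1)^{2}+1$ and with $W^{(2)}_{a},W^{(2)}_{c}$ internally disjoint from $S''$. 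Deleting from $S''$ the vertex $s_{0}$ together with the four endpoints of $W^{(1)}_{b}$ and $W^{(2)}_{b}$ (and enlarging the constants slightly so that still $\Abs{S''}\geq(k-1)^{2}+1$), every vertex of $S''$ is met exactly once by the clean pass inside $W^{(1)}_{b}$ and exactly once by the clean pass inside $W^{(2)}_{b}$.

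These two passes induce two linear orders on $S''$, and by \cref{thm:erdos_szekeres} there is $S'''\subseteq S''$ with $\Abs{S'''}=k$ on which the orders either coincide or are reversed. Write $S'''=(x_{1},\dots,x_{k})$ in the order given by $W^{(1)}_{b}$. If the orders are reversed, then $\varphi(u_{i})=x_{i}$ is a $\biPk{k}$-routing: an increasing subpath of $\biPk{k}$ is realised inside $W^{(1)}_{b}$ and a decreasing one inside $W^{(2)}_{b}$, each via the respective $\Pk{}$-routing and hence avoiding all routing vertices outside the relevant interval. If the orders coincide, then $\varphi(u_{i})=x_{i}$ is a $\Ck{k}$-routing: a forward arc $x_{i}\to x_{j}$ with $i<j$ is realised inside $W^{(1)}_{b}$, while the wrap-around arc $x_{j}\to x_{i}$ with $j>i$ is realised by the temporal path that runs from $x_{j}$ to $\End{W^{(1)}_{b}}$ inside $W^{(1)}_{b}$ (avoiding $x_{1},\dots,x_{j-1}$), then from $\End{W^{(1)}_{b}}$ to $s_{0}$ along $W^{(1)}_{c}$ and from $s_{0}$ to $\Start{W^{(2)}_{b}}$ along $W^{(2)}_{a}$ (both internally disjoint from $S'''$), and finally from $\Start{W^{(2)}_{b}}$ to $x_{i}$ inside $W^{(2)}_{b}$ (avoiding $x_{i+1},\dots,x_{k}$). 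This respects the timeline because $W^{(1)}_{c}$ lies in $W^{(1)}$ and $W^{(2)}_{a}$ in the later walk $W^{(2)}$, and the concatenated path misses precisely $x_{i+1},\dots,x_{j-1}$, as required of a $\Ck{k}$-routing. In both cases we have produced an $H$-routing over the $k$-set $S'''\subseteq S$ for some $H\in\Set{\Ck{k},\biPk{k}}$.

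The delicate point is the $\Ck{k}$ case above: realising every wrap-around arc by a temporal path that avoids the $k-2$ forbidden routing vertices. This is exactly why we take $W^{(1)}$ and $W^{(2)}$ to be \emph{closed} walks through a common vertex $s_{0}$ (so the tail of the first pass can be spliced, through $s_{0}$ and respecting time, onto the head of the second) and why we need the flanking walks supplied by \cref{lemma:walk with many vertices implies choosable P_k or biP_k routing} to be internally disjoint from the routing set. A secondary, purely quantitative obstacle is calibrating the sizes $\Abs{S_{0}}\geq\Abs{S'}\geq\Abs{S''}$ and the lengths of the two timeline windows so that the hypotheses of \cref{lemma:temporal strongly connected contains walk with many vertices} and of both applications of \cref{lemma:walk with many vertices implies choosable P_k or biP_k routing} hold simultaneously within the stated bounds.
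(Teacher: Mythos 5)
Your proposal is correct and follows essentially the same route as the paper: one application of \cref{lemma:temporal strongly connected contains walk with many vertices} per window, two nested applications of \cref{lemma:walk with many vertices implies choosable P_k or biP_k routing} with the same choices of $k_{1},k_{2}$, Erd\H{o}s--Szekeres on the two induced orders, and the same dichotomy (coinciding orders give $\Ck{k}$, reversed orders give $\biPk{k}$). The only cosmetic deviation is that you splice the two passes at a fixed pivot $s_{0}$ via closed walks, whereas the paper lets the second covering walk $W_{3}$ depart at the time and vertex where $W_{1}$ arrives; both devices serve exactly to make the wrap-around temporal path of the $\Ck{k}$-routing time-consistent, and the bookkeeping you flag at the end is the same bookkeeping the paper performs when fixing $k_{1},k_{2}$ and the lifetime bound.
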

\begin{proof}
	Let $k_2 = (k-1)^2 + 1$ and let $k_1 = \bound{lemma:walk with many vertices implies choosable P_k or biP_k routing}{s}{k,k_2}$.
	Let $S_0 \subseteq S$ be a set of size $\bound{lemma:walk with many vertices implies choosable P_k or biP_k routing}{s}{k,k_1}$.
	Note that $\Lifetime{T} \geq (\Abs{S_0} + k_1) \cdot (\Abs{\V{T}} - 1)$.
	
	Let $W_1$ be a temporal walk of minimal length which contains all vertices of $S_0$ within the first $\Abs{S_0} \cdot \Brace{\Abs{\V{T}} - 1}$ layers of $D$.
	By~\cref{lemma:temporal strongly connected contains walk with many vertices}, such a walk $W_1$ exists.
	
	If~\cref{lemma:walk with many vertices implies choosable P_k or biP_k routing}\cref{item:choosable-P_k:biP_k} holds, then $W_1$ contains a $\biPk{k}$-routing over some $S' \subseteq S_0$ and we are done.
	Otherwise,~\cref{lemma:walk with many vertices implies choosable P_k or biP_k routing}\cref{item:choosable-P_k:P_k} holds.
	That is, there is some $S_1 \subseteq S_0$ and there are (possibly arcless) walks $W_2, W_a, W_b, W_c$ in $W_1$ such that $W_2$ is a subwalk of $W_1$ departing and arriving at the same \timesteps as $W_1$, $W_a \cdot W_b \cdot W_c = W_2$, $W_a$ and $W_c$ are internally disjoint from $S_1$, and $W_b$ contains a $\Pk{k_1}$-routing over $S_1$ where the first vertex of the $\Pk{k_1}$ is mapped to $\Start{W_b}$ and the last vertex of the $\Pk{k_1}$ is mapped to $\End{W_b}$.
	Let $\varphi_1$ be the bijection of this $\Pk{k_1}$-routing.
	
	Let $t_1 \leq (\Abs{S_0}\cdot(\V{T} - 1))$ be the \timestep in which $W_1$ arrives and let $W_3$ be a temporal walk departing on $t_1$ and of duration at most $\Abs{S_1} \cdot \Brace{\Abs{\V{T}} - 1}$ which visits all vertices of $S_1$.
	By~\cref{lemma:temporal strongly connected contains walk with many vertices}, such a walk $W_3$ exists.
	
	If~\cref{lemma:walk with many vertices implies choosable P_k or biP_k routing}\cref{item:choosable-P_k:biP_k} holds, then $W_2$ contains a $\biPk{k}$-routing over some $S' \subseteq S_1$ and we are done.
	Otherwise,~\cref{lemma:walk with many vertices implies choosable P_k or biP_k routing}\cref{item:choosable-P_k:P_k} holds.
	That is, there is some $S_2 \subseteq S_1$ and there are (possibly arcless) walks $W_4, W_d, W_e, W_f$ in $W_3$ such that $W_4$ is a subwalk of $W_3$ departing and arriving at the same \timesteps as $W_3$, $W_d \cdot W_e \cdot W_f = W_4$, $W_d$ and $W_f$ are internally disjoint from $S_2$, and $W_f$ contains a $\Pk{k_2}$-routing over $S_2$ where  the first vertex of the $\Pk{k_2}$ is mapped to $\Start{W_f}$ and the last vertex of the $\Pk{k_2}$ is mapped to $\End{W_f}$.
	Let $\varphi_2$ be the bijection of this $\Pk{k_2}$-routing.
	
	By~\cref{thm:erdos_szekeres}, some $S_3 \subseteq S_2$ of size $k$ satisfies one of the following two cases.
	Let $\varphi_1' = \FktRest{\varphi_1}{S_3}$ and $\varphi_2' = \FktRest{\varphi_2}{S_3}$.
	
	\textbf{Case 1:}
	$\varphi_1'$ and $\varphi_2'$ induce two $\Pk{k}$-routings over $S_3$ where the order of the vertices along the $\Pk{k}$ are the same.
	We show that $\varphi_1'$ also induces a $\Ck{k}$-routing in $D$.
	Let $u_1, u_2, \dots, u_{k}$ be the vertices of $\Pk{k}$ sorted according to their order along $\Pk{k}$.
	Let $u_i, u_j \in \V{\Pk{k}}$.
	
	If $i < j$, then \(W_b\) contains a temporal $\varphi'_1(u_i)$-$\varphi'_1(u_j)$-path which is disjoint from $S_3 \setminus \Set{\varphi'_1(u_i), \dots,\allowbreak \varphi'_1(u_j)}$.
	
	If $i > j$, we construct the desired temporal path $P'$ as follows.
	Let $Q_1$ be a temporal $\varphi'_1(u_i)$-$\varphi'_1(u_{k})$-walk in $W_b$ which is disjoint from $S_3 \setminus \Set{\varphi_1'(u_i), \ldots, \varphi_1'(u_{k})}$ such that $\End{Q_1} = \End{W_b} = \Start{W_c}$.
	Since $W_b$ contains a $\Pk{k}$-routing and $\varphi'_1(u_{k}) = \End{W_b}$, such a walk $Q_1$ exists.
	
	Let $Q_2$ be a temporal $\varphi'_1(u_1)$-$\varphi'_1(u_{j})$-walk in $W_e$ which is disjoint from $S_3 \setminus \{\varphi_1'(u_1), \ldots, \varphi'_1(u_{j})\}$ such that $\Start{Q_2} = \Start{W_e} = \End{W_d}$.
	Since $W_e$ contains a $\Pk{k}$-routing and $\varphi'_1(u_{1}) = \End{W_e}$, such a walk $Q_2$ exists.
	
	We now have that $Q_1 \cdot W_c \cdot W_d \cdot Q_2$ is a temporal $\varphi'_1(u_i)$-$\varphi_1'(u_j)$-walk which is disjoint from $S_3 \setminus (\Set{\varphi'_1(u_i), \ldots, \varphi'_1(u_{k})} \cup \Set{\varphi'_1(u_1), \ldots, \varphi'_1(u_{j})})$ in $D$.
	Thus, $Q_1 \cdot W_c \cdot W_d \cdot Q_2$ contains the desired temporal $\varphi'_1(u_i)$-$\varphi'_1(u_j)$-path $P'$.
	Hence, $\varphi'_1$ induces a $\Ck{k}$-routing over $S_3 \subseteq S$ in $D$.
	
	\textbf{Case 2:}
	$\varphi_1'$ and $\varphi_2'$ induce two $\Pk{k}$-routings over $S_3$ where the vertices along the $\Pk{k}$ of $\varphi'_2$ are ordered in reverse compared to those of the $\Pk{k}$ of $\varphi_1'$.
	We show that $\varphi'_1$ induces a $\biPk{k}$-routing over $S_3 \subseteq S$ in $D$.
	Let $u_1, u_2, \dots, u_{k}$ be the vertices of $\Pk{k}$ sorted according to their order along the $\Pk{k}$ for $\varphi'_1$.
	Let $u_i, u_j \in \V{\Pk{k}}$.
	
	If $i < j$, then $W_b$ contains a temporal $\varphi'_1(u_i)$-$\varphi'_1(u_j)$-path which is disjoint from $S_3 \setminus \Set{\varphi'_1(u_i), \ldots,\allowbreak \varphi'_1(u_j)}$.
	
	If $i > j$, we take a temporal $\varphi'_1(u_i)$-$\varphi'_1(u_j)$-path $P'$ which is disjoint from $S_3 \setminus \Set{\varphi'_1(u_j), \ldots,\allowbreak \varphi'_1(u_i)}$ in $W_e$.
	Since $\varphi'_2$ induces a $\Pk{k}$-routing in $W_e$ where the vertices of the $\Pk{k}$ are ordered in reverse when compared to the $\Pk{k}$ of $\varphi'_1$, such a path $P'$ exists.
	Hence, $\varphi'_1$ induces a $\biPk{k}$-routing over $S_3 \subseteq S$ in $D$.
\end{proof}

Our next goal is to relate $H$-routings, for $H \in \{ \Ck{k}, \biPk{k}\}$, to well-linkedness.
Towards this end, we first observe the following.

\begin{observation}
	\label{obs:sequence-A-B-partition}
	Let $A$ and $B$ be disjoint sets of equal cardinality and let $S$ be a sequence containing each element of $A \uplus B$ exactly once.
	Then there are sequences $S_1, S_2$ such that $S_1 \cdot S_2 = S$ and the following holds
	\begin{enamerate}{P}{item:sequence-A-B-partition:balanced}
		\item \label{item:sequence-A-B-partition:end}
		      $S_1$ starts in $A$ and ends in $B$ or starts in $B$ and ends in $A$, and
		\item \label{item:sequence-A-B-partition:balanced}
		      each of $S_1$ and $S_2$ contains as many elements of $A$ as elements of $B$.
	\end{enamerate}
\end{observation}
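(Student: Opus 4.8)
The plan is to find the "balance point" of the sequence $S$. For each prefix of $S$, let $f(i)$ denote the number of elements of $A$ minus the number of elements of $B$ among the first $i$ terms of $S$, so that $f(0)=0$ and $f(|S|)=0$ since $|A|=|B|$. First I would observe that $f$ changes by exactly $\pm 1$ at each step (each term lies in exactly one of $A,B$ since $A\cap B=\emptyset$), so $f$ is a lattice walk returning to $0$. I want to cut $S$ at a point where $f$ returns to $0$, which will immediately give property \ref{item:sequence-A-B-partition:balanced} for both halves.

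The main step is to choose the cut point correctly so that property \ref{item:sequence-A-B-partition:end} also holds, i.e.\ so that $S_1$ has its first and last element in opposite parts. Consider the first term $x$ of $S$; without loss of generality $x\in A$, so $f(1)=1$. Let $j$ be the smallest index with $j\ge 1$ and $f(j)=0$; such $j$ exists because $f(|S|)=0$, and $j\ge 2$. By minimality $f(j-1)\ne 0$, and since $f(1)=1>0$ and $f$ moves in unit steps, $f(j-1)=1$, hence the $j$-th term of $S$ moves $f$ from $1$ down to $0$, so it lies in $B$. Thus setting $S_1$ to be the first $j$ terms and $S_2$ the remaining terms, $S_1$ starts in $A$ (its first term is $x$) and ends in $B$ (its $j$-th term), establishing \ref{item:sequence-A-B-partition:end}. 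The symmetric argument handles $x\in B$. Finally $f(j)=0$ means $S_1$ has equally many $A$- and $B$-elements, and since $f(|S|)-f(j)=0$ the same holds for $S_2$, giving \ref{item:sequence-A-B-partition:balanced}.

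I expect no real obstacle here; the only subtlety is making sure the cut is taken at the \emph{first} return of $f$ to zero rather than an arbitrary one, since an arbitrary return could leave $S_1$ with both endpoints in the same part (for instance if $S$ began with a block entirely inside $A$ followed by a matching block, a later return would fail \ref{item:sequence-A-B-partition:end}). Taking the first return forces $f(j-1)=1$ (resp.\ $-1$), which pins down the part of the last element of $S_1$.
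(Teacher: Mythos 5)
Your proposal is correct and takes essentially the same approach as the paper's proof: the "first return to zero" of your prefix-balance function $f$ is exactly the paper's "shortest prefix of $S$ with equally many elements of $A$ and $B$," and your argument that $f(j-1)=1$ forces the $j$-th term into $B$ is the explicit version of the paper's contradiction argument that an $A$-endpoint would yield an even shorter balanced prefix.
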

\begin{proof}
	Without loss of generality, we assume that $S$ starts at an element of $A$.
	The other case follows analogously by swapping $A$ and $B$.
	
	Let $S_1$ be the shortest prefix of $S$ containing the same number of elements in $A$ and $B$.
    Such a prefix exists, as $S$ itself is a valid choice.
	Since the first element of $S_1$ lies on $A$, its last element must lie on $B$.
	If this were not the case, then $S_1$ would contain a prefix with more elements of $B$ than elements of $A$, which implies that $S_1$ also contains a shorter prefix with as many elements of $A$ as elements of $B$, a contradiction to the choice of $S_1$. 
	Hence,~\cref{item:sequence-A-B-partition:end} holds.
	
	Let $S_2$ be such that $S_1 \cdot S_2 = S$.
	Since both $S_1$ and $S$ contain as many elements of $A$ as elements of $B$, we have that $S_2$ also contains as many elements of $A$ as elements of $B$. 
	Thus,~\cref{item:sequence-A-B-partition:balanced} holds.
\end{proof}

Note that in~\cref{obs:sequence-A-B-partition} if $S$ does not start and end in an element of the same set, then we can always take $S_1$ to be $S$ and $S_2$ to be empty.

The next observation is used when obtaining well-linkedness in the case of a $\Ck{k}$-routing. 

\begin{observation}
	\label{obs:number-cycle-non-negative}
	Let $C$ be a directed cycle and let $f: \V{C} \to \Z$ be a function such that $\sum_{v \in \V{C}}f(v) = 0$.
	Then there is some $v \in \V{C}$ such that for every subpath $P$ of $C$ starting at $v$ we have $\sum_{v \in \V{P}}f(v) \geq 0$. 
\end{observation}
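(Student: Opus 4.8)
The statement is a purely combinatorial fact about directed cycles. Write $C = v_0 v_1 \cdots v_{n-1} v_0$ for the vertices in cyclic order, and abbreviate $f_i \coloneqq f(v_i)$, so $\sum_{i=0}^{n-1} f_i = 0$. Define the partial sums along the cycle starting from $v_0$: set $s_0 = 0$ and $s_j = f_0 + f_1 + \cdots + f_{j-1}$ for $1 \le j \le n$, so that $s_n = 0$. The plan is to pick the index $m$ at which $s_m$ is \emph{minimal} (choosing, say, the largest such $m$ if there are ties, though any choice works), and claim that $v_m$ is the desired starting vertex. This is the classical ``cycle lemma'' / rotation trick, so the proof is short; the only care needed is bookkeeping with the wrap-around.

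\textbf{Key steps.}
First I would reduce to the prefix-sum formulation: a subpath $P$ of $C$ starting at $v_m$ is of the form $v_m v_{m+1} \cdots v_{m+t-1}$ (indices mod $n$) for some $0 \le t \le n$, and $\sum_{v \in V(P)} f(v) = f_m + f_{m+1} + \cdots + f_{m+t-1}$. Second, I would split into the two cases according to whether the subpath wraps around the ``seam'' at $v_0$. If $m + t \le n$, then the sum over $P$ equals $s_{m+t} - s_m \ge 0$ by minimality of $s_m$. If $m + t > n$, write $m + t = n + r$ with $0 \le r \le m$; then, using $s_n = 0$, the sum over $P$ equals $(s_n - s_m) + s_r = s_r - s_m \ge 0$, again by minimality of $s_m$. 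In both cases the sum is nonnegative, so $v_m$ works. A sentence handling the degenerate case $t = 0$ (empty sum, which is $0 \ge 0$) and the trivial case $n = 1$ rounds things off.

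\textbf{Main obstacle.}
There is no genuine mathematical obstacle here — the argument is elementary. The only thing to watch is the indexing convention for ``subpath of $C$ starting at $v$'': I would state explicitly at the outset whether a subpath is allowed to be a single vertex, the empty path, or the whole cycle, and make sure the chosen formulation of $f$-sum matches what is used downstream (the observation is clearly a lemma feeding into the $\Ck{k}$-routing / well-linkedness argument, where one presumably partitions the cyclic sequence of $A$- and $B$-endpoints). If subpaths must be nonempty, nothing changes; if the whole cycle is allowed, its sum is $0 \ge 0$ and is covered by the wrap-around case with $r = m$. The write-up should therefore spend its few lines fixing conventions precisely and then execute the two-case partial-sum comparison above.
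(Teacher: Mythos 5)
Your proof is correct and uses essentially the same idea as the paper: both identify the "lowest point" of the cumulative $f$-sum around the cycle and start at the vertex immediately after it (the classical cycle/rotation lemma). The paper phrases this as choosing a minimum-weight subpath $P$ and arguing by contradiction that the vertex after $\End{P}$ works, whereas your direct prefix-sum verification with the two wrap-around cases is a cleaner write-up of the same argument.
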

\begin{proof}
	The statement is true if $f(v) =0$ for all $v \in \V{C}$.
	So assume otherwise and take a subpath $P$ of $C$ minimising $\sum_{v \in \V{P}}f(v)$.
	Note that the weight of this path is negative and that $P$ is a proper subpath of $C$.
	Let $u$ be the vertex on $C$ after $\End{P}$.
	We claim that $u$ has the desired property.
	Suppose not, and let $P'$ be a negative subpath of $C$ starting in $u$.
	If $\End{P'} \notin \V{P}$, then $P\cdot P'$ is a proper subpath of $C$ with lower weight than $P$, a contradiction.
	Thus $\End{P'} \in \V{P}$.
    As $P$ was chosen to minimise $\sum_{v \in \V{P}}f(v)$, the subpath $P \cap P'$ cannot be of positive weight, and thus we obtain a contradiction to $C$ being of total weight 0.
\end{proof}

We obtain well-linkedness by progressively ``moving'' the paths around the \(\Ck{k}\) or \(\biPk{k}\)-routing.
In both cases, we need to carefully choose the order in which we reroute the paths in the linkage we are constructing, and we have to use the structure of the \(\Ck{k}\) and of the \(\biPk{k}\) differently in order to complete the linkage in both situations.

\begin{lemma}
	\label{state:Ck-or-bidirected-Pk-implies-well-linked}
	Let \(h\) be some integer, \(D\) be a digraph, \(\mathcal{L}\) be a linkage of order \(k\) in \(D\) and \(T\) be the routing temporal digraph of \(\mathcal{L}\) through \(\mathcal{H} \coloneqq \Brace{H_{1}, H_{2}, \ldots, H_{h}}\), where each $H_i$ is a subgraph of $D$.
	If there is some \(R \in \Set{\biPk{k}, \Ck{k}}\) and there are some temporally disjoint subgraphs \(T_{1}, T_{2}, \ldots, T_{k}\) of \(T\) such that for each \(1 \leq i \leq k\) there is an \(R\)-routing \(\varphi_i\) over \(\mathcal{L}\) in \(T_i\) where \(\varphi_i = \varphi_j\) for all \(1 \leq i,j \leq k\), then $\Start{\mathcal{L}}$ is well-linked to $\End{\mathcal{L}}$ in $\ToDigraph{\mathcal{L} \cup \mathcal{H}}$.
\end{lemma}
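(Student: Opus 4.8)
It suffices to fix sets $A' = \Set{\Start{L_i} : i \in I}$ and $B' = \Set{\End{L_j} : j \in J}$ with $I,J \subseteq \Set{1,\dots,k}$ and $\Abs{I} = \Abs{J} = m \le k$, and to produce an $A'$-$B'$-linkage of order $m$ inside $\ToDigraph{\mathcal{L} \cup \mathcal{H}}$. After relabelling we may assume $\varphi(u_l) = L_l$ for a fixed enumeration $u_1, \dots, u_k$ of $\V{R}$; thus $R$ endows $\mathcal{L}$ with a linear order when $R = \biPk{k}$ and with a cyclic order when $R = \Ck{k}$. Since $T_1,\dots,T_k$ are temporally disjoint we may also assume, after reindexing, that every time step of $T_r$ precedes every time step of $T_{r+1}$; call $r$ the \emph{round} of $T_r$. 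The plan is to choose $m$ pairs consisting of an element of $A'$ and an element of $B'$, give each pair a distinct round, and route that pair through the digraphs $H_t$ with $t$ a time step of its round, using the $R$-routing $\varphi$ present in the corresponding $T_r$.

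\textbf{Choosing the matching.}
Regard index $l$ as carrying a \emph{source} token if $l \in I$ and a \emph{sink} token if $l \in J$. In the cyclic case apply \cref{obs:number-cycle-non-negative} to $f(l) \coloneqq [l \in I] - [l \in J]$, which sums to $0$, to obtain a vertex at which the cycle can be cut so that in the resulting linear order every prefix holds at least as many source tokens as sink tokens; in the linear case no cut is needed. Reading the tokens in this order yields a sequence to which \cref{obs:sequence-A-B-partition} applies; iterating it produces a pairing $\pi \colon I \to J$ of source tokens with sink tokens whose \emph{spans} $\Set{[\min(i,\pi(i)), \max(i,\pi(i))] : i \in I}$ are pairwise non-nested (in the cyclic case one moreover arranges that every source precedes its sink, so that the relevant path of $R$ may be taken forward around the cycle). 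Equivalently, one simply matches the $s$-th source token to the $s$-th sink token in this order.

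\textbf{Assigning rounds and routing.}
Non-nestedness of the spans is exactly what makes the following \emph{interaction relation} on pairs acyclic: pair $P$ precedes pair $P'$ whenever some $L_l$ is the start of $P$'s route and lies in the interior of, or is the end of, $P'$'s span, or lies in the interior of $P$'s span and is the end of $P'$'s route. A linear extension assigns the pairs distinct rounds in $\Set{1,\dots,m}$. For a pair with span $[a,b]$ and round $r$, apply the $R$-routing in $T_r$ to the $u_a$-$u_b$ path of $R$, whose vertex set is $\Set{u_a,\dots,u_b}$; this gives a temporal $L_a$-$L_b$-path in $T_r$ meeting no $L_l$ with $l \notin [a,b]$. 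Replacing each of its arcs, which lies in some layer $t$ of $T_r$, by the corresponding $\mathcal{L}$-internally-disjoint path inside $H_t$, and linking these along the $L_l$'s they touch (each such $L_l$ is traversed from one of its $H_t$-pieces to a later one, since the temporal walk respects time), yields a path in $\ToDigraph{\mathcal{L} \cup \mathcal{H}}$ from the right element of $A'$ to the right element of $B'$. A pair matched to itself is routed along $L_l$ directly; non-nestedness guarantees no other route touches such an $L_l$.

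\textbf{Disjointness, and the main obstacle.}
Two routes of rounds $r < r'$ have their $H_t$-segments in pairwise disjoint subgraphs $H_t$ (distinct time steps, and the $H_t$ are pairwise disjoint by definition of a routing temporal digraph), so any clash must occur on a common $L_l$; and because each route is a single temporal walk, the sub-path of $L_l$ it uses is the one spanned by the $H_t$-pieces of $L_l$ at which the walk visits $L_l$, $t$ ranging over time steps of its own round. As $T_r$'s time steps all precede $T_{r'}$'s, the $r$-route's portion of $L_l$ lies strictly before the $r'$-route's portion along $L_l$ (the jump points lie in the pairwise disjoint $H_t$, hence cannot coincide), and the interaction order was chosen so that at each $L_l$ the unique route starting there receives a smaller round than every route passing through, which receives a smaller round than the unique route ending there. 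Hence the $m$ routes are pairwise disjoint and form the required linkage, so $\Start{\mathcal{L}}$ is well-linked to $\End{\mathcal{L}}$. The delicate part throughout is combining the non-nestedness of the matching, the temporal ordering of the $T_r$, and the time-monotonicity of each temporal walk to rule out every possible clash between the $L_l$-portions and the $H_t$-segments — in particular, to verify that the interaction relation is acyclic so that a valid round assignment exists.
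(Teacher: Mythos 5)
Your proposal reorganizes the paper's iterative construction into a declarative one, and the underlying combinatorics is the same. Both arguments cut the cycle via \cref{obs:number-cycle-non-negative}, split balanced sequences via \cref{obs:sequence-A-B-partition}, pair sources with sinks, and route each pair through its own temporally disjoint round. The paper maintains invariants (a three-part decomposition of the cut cycle, together with the running sets $X_i$, $Y_i$ of used sources and sinks) so that the walk $W_i$ can be taken to avoid every not-yet-used source and every already-used sink, which makes disjointness immediate; you instead pre-compute a non-nested matching and an explicit round order, allow each walk to traverse other $L_l$ in its span, and then argue disjointness from the temporal separation of the rounds alone. Your analysis of the $L_l$-segment positions under that ordering is correct and the resulting linkage is the same, so this is a legitimately different organization of the same proof.

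Two gaps. First, iterating \cref{obs:sequence-A-B-partition} in the natural way (pair the endpoints of $S_1$, recurse on $S_1$'s interior and on $S_2$) produces a \emph{nested} (parenthesis) matching, not a non-nested one, so your ``equivalently'' between that iteration and the $s$-th-source-to-$s$-th-sink rule is false. The $s$-th-to-$s$-th rule is the matching you actually need, and it is indeed pairwise non-nested — both $\min(i_s,\pi(i_s))$ and $\max(i_s,\pi(i_s))$ are strictly increasing in $s$ — but the iteration does not describe it and the sentence should be dropped. Second, and more importantly, the acyclicity of your interaction relation is precisely the load-bearing step, and you flag it but do not verify it. It does hold: after the cut every pair has source strictly before sink, so sorting pairs by source index in decreasing order is a linear order that realizes start $\prec$ pass $\prec$ end at every $L_l$ (the ending route at $L_l$ has the smallest source among pairs touching $L_l$, a consequence of non-nestedness); for $\biPk{k}$ one treats ``forward'' and ``backward'' pairs separately after checking that, under the $s$-th-to-$s$-th rule, the two orientations produce spans that are disjoint. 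Without this verification the sketch is incomplete, since a cycle in the interaction relation would make a valid round assignment impossible.
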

\begin{proof}
	Let $A \subseteq \Start{\mathcal{L}}$ and $B \subseteq \End{\mathcal{L}}$ be sets with $n \coloneqq \Abs{A} = \Abs{B}$.
	Let \(\varphi\) be the \(R\)-routing over \(\mathcal{L}\) in each \(T_i\).
	We construct an $A$-$B$-linkage $\mathcal{Q}$ as follows.

	Let $\mathcal{L}^A = \Set{L \in \mathcal{L} \mid \Start{L} \subseteq A}$ and let $\mathcal{L}^B = \Set{L \in \mathcal{L} \mid \End{L} \subseteq B}$.
	Let $\varphi_{A,B} = \FktRest{\varphi}{\mathcal{L}^A \cup \mathcal{L}^B}$.
			Let $\Set{a_1, a_2, \dots, a_{n}} \coloneqq A$ and let $\Set{b_1, b_2, \dots, b_{n}} \coloneqq B$.

	We start by constructing temporal walks $\mathcal{W} \coloneqq \Set{W_1, W_2, \dots, W_{n}}$ in $T$ and by constructing sets $X_0, \dots, X_{n} \subseteq \mathcal{L}^A$ and $Y_0, \dots, Y_{n} \subseteq \mathcal{L}^B$ such that, for each $0 \leq i \leq n$, we have
	\begin{enamerate}{W}{item:routing-implies-well-linked:A-B}
		\item \label{item:routing-implies-well-linked:X-Y}
			$\Abs{X_i} = i = \Abs{Y_i}$,
		\item \label{item:routing-implies-well-linked:W-X-Y}
			$W_i$ is a temporal walk in $T_{i}$ which is disjoint from $(\mathcal{L}^A \setminus X_i) \cup Y_{i-1}$, and
		\item \label{item:routing-implies-well-linked:A-B}
			$\Start{\mathcal{W}} = \mathcal{L}^A$ and $\End{\mathcal{W}} = \mathcal{L}^B$.
	\end{enamerate}	
	We consider two cases.

	\textbf{Case 1:} \(R = \Ck{k}\).

	Let $R' = \Ck{n}$ and note that $\varphi_{A,B}$ is a $\Ck{n}$-routing in each $T_i$.
	Partition $\V{R'}$ into a sequence of subpaths $\mathcal{Q} = \Brace{Q_1, Q_2, \dots, Q_{x}}$ of $R'$ where each $Q_i$ can be decomposed into $Q_i^a \cdot Q_i^b$ such that $\Abs{\V{Q_i^a}} \geq 1$, $\Abs{\V{Q_i^b}} \geq 1$, $\varphi_{A,B}(\V{Q_i^a}) \subseteq \mathcal{L}^A$ and $\varphi_{A,B}(\V{Q_i^b}) \subseteq \mathcal{L}^B$.
	Since $\varphi(\V{R'}) = \mathcal{L}^A \cup \mathcal{L}^B$, such a decomposition exists.
	Now define the function $f : \mathcal{Q} \to \Z$ with $f(Q_i = Q_i^a \cdot Q_i^b) = \Abs{\varphi(Q_i^a) \cap \mathcal{L}^A} - \Abs{\varphi(Q_i^b) \cap \mathcal{L}^B}$.

	From~\cref{obs:number-cycle-non-negative} we know there is some $v \in \V{R'}$ for which every subpath $P$ of $R'$ starting at $v$ satisfies $\Abs{\varphi_{A,B}(\V{P}) \cap \mathcal{L}^A} \geq \Abs{\varphi_{A,B}(\V{P}) \cap \mathcal{L}^B}$.
	Let $Q$ be the subpath of $R'$ starting at $v$ and containing every vertex of $R'$, and let $\Set{u_1, u_2, \dots, u_{n}} \coloneqq \V{Q}$ be an ordering of the vertices of $Q$ according to their occurrence along $Q$.

	We iteratively construct the desired walks $W_i$ and sets $X_i, Y_i$ such that, for each $0 \leq i \leq n-1$, $Q$ can be decomposed as $Q_1^i \cdot Q_2^i \cdot Q_3^i = Q$ satisfying the following properties
	\begin{enamerate}{C}{item:routing-implies-well-linked:C:3}
		\item \label{item:routing-implies-well-linked:C:1}
			$\varphi_{A,B}(\End{Q_1^i}) \in \mathcal{L}^A \setminus X_i$,
		\item \label{item:routing-implies-well-linked:C:2}
			$\varphi_{A,B}(\V{Q_2^i} \setminus \Set{\Start{Q_2^i}}) \subseteq (\mathcal{L}^B \cup X_i) \setminus Y_i$, $\Abs{\V{Q_2^i} \cap (\mathcal{L}^B \setminus Y_i)} \geq 1$, and
		\item \label{item:routing-implies-well-linked:C:3}
			$Y_i \subseteq \varphi_{A,B}(\V{Q_3^i})$ and $\varphi_{A,B}(\V{Q_3^i}) \subseteq Y_i \cup X_i$.
	\end{enamerate}

	Start by setting $X_0 \coloneqq \emptyset$ and $Y_0 \coloneqq \emptyset$.
	By choice of $Q$,~\cref{item:routing-implies-well-linked:C:1},\cref{item:routing-implies-well-linked:C:2},~\cref{item:routing-implies-well-linked:C:3} and~\cref{item:routing-implies-well-linked:X-Y} hold for $i = 0$.

	On step $1 \leq i \leq n$, let $Q_1^{i-1} \cdot Q_2^{i-1} \cdot Q_3^{i-1} = Q$ be a decomposition of $Q$ satisfying~\cref{item:routing-implies-well-linked:C:1},~\cref{item:routing-implies-well-linked:C:2} and~\cref{item:routing-implies-well-linked:C:3} for $i-1$.
	Let $u_a = \End{Q_1^{i-1}}$ and let $u_b \in \V{Q_2^{i-1}} \cap (\mathcal{L}^B \setminus Y_{i-1})$ be the last such vertex on $Q_2^{i-1}$.
	As~\cref{item:routing-implies-well-linked:C:1} and~\cref{item:routing-implies-well-linked:C:2} hold for $i-1$, $u_a \in \mathcal{L}^A \setminus X_{i-1}$ holds and such a vertex $u_b$ exists.

	Let $W_i$ be a temporal $\varphi_{A,B}(u_a)$-$\varphi_{A,B}(u_b)$-walk in $T_{i}$ avoiding $(\mathcal{L}^A \cup \mathcal{L}^B)  \setminus \varphi_{A,B}(\{u_a, \dots,\allowbreak u_{b}\})$.
	Since $\varphi_{A,B}$ is an $R'$-routing in $T_{i}$, such a walk exists.
	Set $X_i = X_{i-1} \cup \Set{\varphi_{A,B}(u_a)}$ and $Y_i = Y_{i-1} \cup \Set{ \varphi_{A,B}(u_b)}$.
	The walk $W_i$ satisfies~\cref{item:routing-implies-well-linked:W-X-Y} because $Y_{i-1} \subseteq Q_3^{i-1}$ and~\cref{item:routing-implies-well-linked:C:3} holds for $i-1$.

	We now show that $X_i, Y_i$ and $W_i$ satisfy the required properties.
	Clearly~\cref{item:routing-implies-well-linked:X-Y} holds for $i$.
	If $i < n$, then $\Abs{\mathcal{L}^A \setminus X_i} \geq 1$.
	As~\cref{item:routing-implies-well-linked:C:2} and~\cref{item:routing-implies-well-linked:C:3} hold for $i-1$, we have $\mathcal{L}^A \setminus X_{i-1} \subseteq \varphi(\V{Q_1^{i-1}})$.

	Let $Q_1^i$ be the shortest subpath of $Q_1^{i-1}$ containing every vertex of $\varphi^{-1}(\mathcal{L}^A \setminus X_{i})$ and let $Q_3^i$ be the $u_b$-$\End{Q}$ subpath of $Q$.
	By construction,~\cref{item:routing-implies-well-linked:C:1} holds for $i$.
	Further, by choice of $u_b$,~\cref{item:routing-implies-well-linked:C:3} holds for $i$.

	Let $Q_2^i$ be the $\End{Q_1^i}$-$\Start{Q_3^i}$ subpath of $Q$.
	Since $Q_1^i$ is a subpath of $H'$ starting at $v$ and ending in a vertex $u_a'$ with $\varphi(u_a') \in \mathcal{L}^A \setminus X_i$, we have that $Q_2^i \cdot Q_3^i$ must contain some vertex of $\varphi^{-1}(\mathcal{L}^B)$.
	Further, as $\varphi(\V{Q_3^i}) \subseteq Y_i \cup X_i$ due to~\cref{item:routing-implies-well-linked:C:3}, we have that $Q_2^i$ contains some vertex of $\mathcal{L}^B \setminus Y_i$.
	Finally, $\mathcal{L}^A \setminus X_i \subseteq \V{Q_1^i}$ and so $\varphi_{A,B}(\V{Q_2^i} \setminus \Set{\Start{Q_2^i}}) \subseteq \mathcal{L}^B \cup (X_i \setminus Y_i)$.
	Hence,~\cref{item:routing-implies-well-linked:C:2} holds for $i$.

	After $n$ steps, it is immediate that~\cref{item:routing-implies-well-linked:A-B} holds by choice of $W_1, W_2, \dots, W_{n}$.

	\textbf{Case 2:} \(\varphi\) is a \(\biPk{k}\)-routing.

	Let $R' = \biPk{n}$ and note that $\varphi_{A,B}$ is a $\biPk{n}$-routing in $T_i$ for each $i$.

	We iteratively construct the desired walks $W_i$ and sets $X_i, Y_i \subseteq \mathcal{L}'$ such that, for each $0 \leq i \leq n-1$, the following statement holds
	\begin{namerise}{P}{item:routing-implies-well-linked:biPk}
	\item \label{item:routing-implies-well-linked:biPk}
		for each strongly connected component $Z_i$ of $R' - \varphi_{A,B}^{-1}(Y_i)$, the set $\varphi_{A,B}(\V{Z_i})$ contains as many elements of $\mathcal{L}^A \setminus X_i$ as elements of $\mathcal{L}^B \setminus Y_i$.
	\end{namerise}

	Start by setting $X_0 \coloneqq \emptyset$ and $Y_0 \coloneqq \emptyset$.
	Clearly,~\cref{item:routing-implies-well-linked:X-Y} and~\cref{item:routing-implies-well-linked:biPk} hold for $i = 0$.

	On step $1 \leq i \leq n$, let $Z$ be a strong component (and hence a subpath) of $R' - \varphi_{A,B}^{-1}(X_{i-1})$ such that $\varphi_{A,B}(\V{Z})$ contains at least one element of $\mathcal{L}^A \setminus X_{i-1}$ and one element of $\mathcal{L}^B \setminus Y_{i-1}$.
	Since~\cref{item:routing-implies-well-linked:X-Y} and~\cref{item:routing-implies-well-linked:biPk} hold for $i-1$, such a strong component exists.

	Because $Z$ is a bidirected path, it induces a sequence over the elements of $\mathcal{L}^A \setminus X_{i-1}$ and of $\mathcal{L}^B \setminus Y_{i-1}$.
	Let $Z'$ be the shortest subpath of $Z$ satisfying $\varphi_{A,B}(\V{Z'}) \cap ((\mathcal{L}^A \setminus X_{i-1}) \cup (\mathcal{L}^B \setminus Y_{i-1})) = \varphi(\V{Z}) \cap ((\mathcal{L}^A \setminus X_{i-1}) \cup (\mathcal{L}^B \setminus Y_{i-1}))$.
	By~\cref{obs:sequence-A-B-partition}, there is a subpath $Z''$ of $Z'$ starting at one of the endpoints of $Z'$ such that one endpoint of $Z''$ is in $\mathcal{L}^A \setminus X_{i-1}$ and the other is in $\mathcal{L}^B \setminus Y_{i-1}$, and both $Z''$ and the rest of $Z'$ contain as many elements of $\mathcal{L}^A \setminus X_{i-1}$ as they contain elements of $\mathcal{L}^B \setminus Y_{i-1}$.
	Let $Z''$ be the shortest such subpath of $Z'$.

	Let $\Set{z_1, z_2, \dots, z_{j}}$ be the vertices of $Z''$ sorted according to their occurrence along $Z''$.
	Without loss of generality, we have $\varphi_{A,B}(z_1) \in \mathcal{L}^B \setminus X_{i-1}$ and $\varphi_{A,B}(z_{j}) \in \mathcal{L}^A \setminus Y_{i-1}$.

	Let $j_a$ be the smallest index such that $\varphi_{A,B}(z_{j_a}) \in \mathcal{L}^A \setminus X_{i-1}$ .
	We set $W_i$ as a temporal $\varphi_{A,B}(z_{j_a})$-$\varphi_{A,B}(z_{1})$-walk in $T_{i}$ which is disjoint from $(\mathcal{L}^A \setminus X_{i-1}) \cup Y_{i-1}$.
	By choice of $j_a$ and because $\varphi_{A,B}$ is an $R'$-routing over $\mathcal{L}^A \cup \mathcal{L}^B$ in $T_{i}$ and $Z$ is a component of $R' - \varphi_{A,B}(X_{i-1})$, such a walk $W_i$ exists.

	We set $X_i = X_{i-1} \cup \Set{\varphi_{A,B}(z_{j_a})}$ and $Y_i = Y_{i-1} \cup \Set{\varphi_{A,B}(z_{1})}$.
	The vertex $z_1$ is an endpoint of $Z''$, \cref{item:routing-implies-well-linked:biPk} holds for $i-1$ and $\varphi_{A,B}(\V{Z''} \setminus \Set{z_1})$ contains one less vertex of $\mathcal{L}^B \setminus Y_{i-1}$ and one less vertex of $\mathcal{L}^A \setminus X_{i-1}$ when compared to $Z$.
	Hence,~\cref{item:routing-implies-well-linked:biPk} holds for $i$.
	Further, $\Abs{X_i} = \Abs{X_{i-1}} + 1 = \Abs{Y_{i-1}} + 1 = \Abs{Y_i}$, and so~\cref{item:routing-implies-well-linked:X-Y} holds.

	This completes the case distinction above and the construction of $W_1, W_2, \dots, W_{n}$.
	We construct an $A$-$B$-linkage $\mathcal{L}'$ as follows.
	For each $1 \leq i \leq n$, let $L_i^a = \Start{W_i}$, $L^b_i = \End{W_i}$, $a_i = \Start{L_i^a}$ and $b_i = \End{L_i^b}$.
	We construct a path $Q_i = Q_i^a \cdot Q_i^t \cdot Q_i^b$ such that $\ToDigraph{Q_i^a} \subseteq L_i^a - \ToDigraph{\Set{W_{j} \mid s_{i+1} \leq j \leq \Lifetime{T}}}$, $\ToDigraph{Q_i^t} \subseteq \ToDigraph{\Set{H_{j} \mid s_i \leq j \leq s_{i+1}}}$ and $\ToDigraph{Q_i^b} \subseteq L_i^b - \ToDigraph{\Set{W_{j} \mid s_{1} \leq j \leq s_i}}$.

	Since~\cref{item:routing-implies-well-linked:W-X-Y} holds, each arc of $W_i$ corresponds to some path in $D$ which is disjoint from $\Set{L_j^a \mid i < j \leq n} \cup \Set{L_j^b \mid 1 \leq j < i}$.
	Furthermore, $W_i$ corresponds to some path $Q_i^t$ in $\ToDigraph{\Set{W_{j} \mid t_i \leq j \leq t_{i+1}}}$.

	We set $Q_i^a$ as the subpath of $L_i^a$ ending at $\Start{Q_i^t}$ and we set $Q_i^b$ as the subpath of $L_i^b$ starting at $\End{Q_i^t}$.
	By construction, $Q_i \coloneqq Q_i^a \cdot Q_i^t \cdot Q_i^b$ is an $a_i$-$b_i$-path in $\ToDigraph{\mathcal{L}} \cup \ToDigraph{\mathcal{W}}$ which is disjoint from all $Q_j$ for $1 \leq j < i$.
	Hence, $\mathcal{Q} \coloneqq \Set{Q_i \mid 1 \leq i \leq n}$ is an $A$-$B$-linkage as desired.

	Because we can construct such an $A$-$B$-linkage for any choice of $A,B$, we have that $\Start{\mathcal{L}'}$ is well-linked to $\End{\mathcal{L}'}$ in $\ToDigraph{\mathcal{L} \cup \mathcal{W}}$, as desired.
\end{proof}

We can now show how to obtain well-linkedness from the routing temporal digraph of some linkage $\mathcal{L}$.
The idea is to use~\cref{theorem:strongly connected temporal digraph contains H routing} to obtain many $\Ck{k}$ and $\biPk{k}$-routings.
With the pigeon-hole principle, we get many equal routings.
We then use~\cref{obs:sequence-A-B-partition,obs:number-cycle-non-negative} in each case to argue that certain sets are well-linked.
We start by defining
\begin{align*}
	\boundDefAlign{lemma:routing-implies-well-linked}{\ell}{k}
	\bound{lemma:routing-implies-well-linked}{\ell}{k} & \coloneqq
	  \bound{theorem:strongly connected temporal digraph contains H routing}{s}{k}, \\[0em]
	\boundDefAlign{lemma:routing-implies-well-linked}{h}{k}
  \bound{lemma:routing-implies-well-linked}{h}{k} & \coloneqq \bound{theorem:strongly connected temporal digraph contains H routing}{\Lifetime{}}{\bound{lemma:routing-implies-well-linked}{\ell}{k}, k} \cdot 2k
		\binom{\bound{theorem:strongly connected temporal digraph contains H routing}{s}{k}}{k}
	 k!.
\end{align*}
Note that \(\bound{lemma:routing-implies-well-linked}{\ell}{k} \in \Oh(k^{11})\) and \(\bound{lemma:routing-implies-well-linked}{h}{k} \in \PowerTower{1}{\Polynomial{2}{k}}\).
Using the pigeon-hole principle, we can combine~\cref{theorem:strongly connected temporal digraph contains H routing} and~\cref{state:Ck-or-bidirected-Pk-implies-well-linked} to obtain the desired statement.

\begin{proposition}
	\label{lemma:routing-implies-well-linked}
    Let $k$ be an integer, $h \geq \bound{lemma:routing-implies-well-linked}{h}{k}$, $D$ be a digraph, $\mathcal{L}$ be a linkage of order $\bound{lemma:routing-implies-well-linked}{\ell}{k}$ in $D$ and $T$ be the routing temporal digraph of $\mathcal{L}$ through $\mathcal{H} \coloneqq \Set{H_1, \dots, H_h}$, where each $H_i$ is a subgraph of $D$.
	If each layer $\Layer{T}{i}$ is strongly connected, then there exists some $\mathcal{L}' \subseteq \mathcal{L}$ of order $k$ such that $\Start{\mathcal{L}'}$ is well-linked to $\End{\mathcal{L}'}$ in $\ToDigraph{\mathcal{L} \cup \mathcal{H}}$.
\end{proposition}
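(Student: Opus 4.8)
The plan is to cut the lifetime of $T$ into many disjoint windows, extract an $H$-routing from each window via \cref{theorem:strongly connected temporal digraph contains H routing}, force many of these routings to coincide by the pigeon-hole principle, and finally feed the resulting family of equal routings into \cref{state:Ck-or-bidirected-Pk-implies-well-linked}.

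Concretely, write $s \coloneqq \bound{lemma:routing-implies-well-linked}{\ell}{k} = \bound{theorem:strongly connected temporal digraph contains H routing}{s}{k}$, $b \coloneqq \bound{theorem:strongly connected temporal digraph contains H routing}{\Lifetime{}}{s, k}$ and $N \coloneqq 2k \binom{s}{k} k!$, so that the hypothesis gives $\Lifetime{T} = h \geq \bound{lemma:routing-implies-well-linked}{h}{k} = bN$ and $\Abs{\mathcal{L}} = s = \Abs{\V{T}}$. First I would partition the time steps $\Set{1, \dots, h}$ into $N$ consecutive intervals $I_1, \dots, I_N$, each of length at least $b$, and for each $m$ let $T^{(m)}$ be the temporal subdigraph of $T$ induced by the layers whose index lies in $I_m$. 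Each layer of $T^{(m)}$ is strongly connected and $\Lifetime{T^{(m)}} \geq b = \bound{theorem:strongly connected temporal digraph contains H routing}{\Lifetime{}}{\Abs{\V{T^{(m)}}}, k}$, so applying \cref{theorem:strongly connected temporal digraph contains H routing} to $T^{(m)}$ with $S \coloneqq \mathcal{L} = \V{T}$ (which has size $s \geq \bound{theorem:strongly connected temporal digraph contains H routing}{s}{k}$) produces a graph $R_m \in \Set{\Ck{k}, \biPk{k}}$, a subset $\mathcal{L}_m \subseteq \mathcal{L}$ with $\Abs{\mathcal{L}_m} = k$, and an $R_m$-routing $\varphi_m$ over $\mathcal{L}_m$ in $T^{(m)}$.

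Next I would pigeon-hole over the triples $(R_m, \mathcal{L}_m, \varphi_m)$. There are two choices for $R_m$, at most $\binom{s}{k}$ choices for the $k$-element subset $\mathcal{L}_m$ of $\mathcal{L}$, and, once $R_m$ and $\mathcal{L}_m$ are fixed, at most $k!$ choices for the bijection $\varphi_m \colon \V{R_m} \to \mathcal{L}_m$; hence there are at most $2\binom{s}{k}k!$ distinct triples. Since $N = 2k\binom{s}{k}k!$, at least $k$ of the windows, say $I_{m_1}, \dots, I_{m_k}$, carry the same triple $(R, \mathcal{L}', \varphi)$, where $R \in \Set{\Ck{k}, \biPk{k}}$, $\mathcal{L}' \subseteq \mathcal{L}$ has order $k$, and $\varphi$ is an $R$-routing over $\mathcal{L}'$ in each of the temporally disjoint temporal digraphs $T^{(m_1)}, \dots, T^{(m_k)}$. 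Applying \cref{state:Ck-or-bidirected-Pk-implies-well-linked} to the order-$k$ linkage $\mathcal{L}'$, the subgraphs $\mathcal{H}$, the common choice $R$, the temporally disjoint subgraphs $T^{(m_1)}, \dots, T^{(m_k)}$ and the single routing $\varphi$, we conclude that $\Start{\mathcal{L}'}$ is well-linked to $\End{\mathcal{L}'}$ in $\ToDigraph{\mathcal{L}' \cup \mathcal{H}}$. Since $\ToDigraph{\mathcal{L}' \cup \mathcal{H}}$ is a subgraph of $\ToDigraph{\mathcal{L} \cup \mathcal{H}}$ and well-linkedness is inherited by supergraphs, $\Start{\mathcal{L}'}$ is well-linked to $\End{\mathcal{L}'}$ in $\ToDigraph{\mathcal{L} \cup \mathcal{H}}$, which is exactly the conclusion sought.

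The step I expect to need the most care is the final application of \cref{state:Ck-or-bidirected-Pk-implies-well-linked}: that lemma is phrased for the routing temporal digraph of the order-$k$ linkage itself, i.e.\ for $\RTD{\mathcal{L}', \mathcal{H}}$, whereas the routings $\varphi$ obtained above are witnessed inside windows of $\RTD{\mathcal{L}, \mathcal{H}}$, and the temporal paths realising them may traverse paths of $\mathcal{L} \setminus \mathcal{L}'$. One therefore has to argue that such a routing descends to a genuine $R$-routing over $\mathcal{L}'$ in the corresponding window of $\RTD{\mathcal{L}', \mathcal{H}}$: the key observations are that deleting paths from a linkage only enlarges the arc sets of its routing temporal digraph, and that a detour of a routing path through a path $L_c \in \mathcal{L} \setminus \mathcal{L}'$ leaves a walk in $\ToDigraph{\mathcal{L} \cup \mathcal{H}}$ which is internally disjoint from $\mathcal{L}'$ and can be re-read as the (possibly longer) connectivity that $\RTD{\mathcal{L}', \mathcal{H}}$ records between the two $\mathcal{L}'$-paths flanking $L_c$, still respecting the one-arc-per-layer discipline. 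Pinning down this bookkeeping is the crux; once it is in place, the rest of the argument is the routine counting above.
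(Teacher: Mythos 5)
Your proof matches the paper's: both partition the lifetime into $2k\binom{s}{k}k!$ windows, extract a $\Ck{k}$- or $\biPk{k}$-routing from each window via \cref{theorem:strongly connected temporal digraph contains H routing} applied to $S = \mathcal{L}$, pigeon-hole on the triple (target graph, $k$-subset of $\mathcal{L}$, bijection), and conclude via \cref{state:Ck-or-bidirected-Pk-implies-well-linked}.

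The subtlety you flag in your last paragraph is legitimate, but the descent you sketch is false. A temporal walk in a window of $\RTD{\mathcal{L}, \mathcal{H}}$ that passes through some $L_c \in \mathcal{L} \setminus \mathcal{L}'$ at consecutive layers $t_0 < t_1$ translates to a walk in $D$ that traverses a segment of $L_c$ running from $H_{t_0}$ to $H_{t_1}$; that walk is not contained in any single $H_j$, so it does not yield an arc of $\RTD{\mathcal{L}', \mathcal{H}}$ between the flanking $\mathcal{L}'$-paths and the one-arc-per-layer discipline is \emph{not} respected. (It is also not true that the layers of $\RTD{\mathcal{L}', \mathcal{H}}$ inherit strong connectivity from those of $\RTD{\mathcal{L}, \mathcal{H}}$: the in- and out-ports on a deleted path $L_c$ within one layer may be mis-ordered.) The intended reading, implicit in the paper's proof, is simpler and goes the other way: one runs the proof of \cref{state:Ck-or-bidirected-Pk-implies-well-linked} with the routings living in windows of $\RTD{\mathcal{L}, \mathcal{H}}$ itself, and the $A$-$B$-linkage it constructs then consists of arc-realisations inside the $H_j$'s together with segments of whichever paths of $\mathcal{L}$ the temporal walks visit, hence lies in $\ToDigraph{\mathcal{L} \cup \mathcal{H}}$ — exactly the ambient digraph required by the conclusion of \cref{lemma:routing-implies-well-linked}. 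Do not try to confine the witness linkage to $\ToDigraph{\mathcal{L}' \cup \mathcal{H}}$; that stronger claim need not hold and is not needed.
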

\begin{proof}
	Let $k_1 = 2k \binom{\bound{theorem:strongly connected temporal digraph contains H routing}{s}{k}}{k} k!$.
	Define $s_1 = 1$ and for each $1 \leq i \leq k_1$ define $s_i = (i - 1) \cdot \bound{theorem:strongly connected temporal digraph contains H routing}{\Lifetime{}}{\bound{lemma:routing-implies-well-linked}{\ell}{k}, k} + 1$.
	For each $1 \leq i \leq k_1$ let $T_i$ be the temporal subgraph of $T$ from \timestep $s_i$ to $s_{i+1} - 1$.
	Note that $\Lifetime{T_i} = s_{i+1} - s_i = \bound{theorem:strongly connected temporal digraph contains H routing}{\Lifetime{}}{\bound{lemma:routing-implies-well-linked}{\ell}{k},k}$ and that $\Abs{\mathcal{L}} = \Abs{\V{T_i}} = \bound{theorem:strongly connected temporal digraph contains H routing}{s}{k} = \bound{lemma:routing-implies-well-linked}{\ell}{k}$.

	By~\cref{theorem:strongly connected temporal digraph contains H routing} each $T_i$ contains a $\Ck{k}$-routing $\varphi_i$ or a $\biPk{k}$-routing $\varphi_i$ over some set $\mathcal{L}_i \subseteq \mathcal{L}$ of size $k$.
	As there are $k_1 = 2k \binom{\bound{theorem:strongly connected temporal digraph contains H routing}{s}{k}}{k} k!$ temporal digraphs $T_i$, there is some $I \subseteq \Set{1, \dots, k_1}$ of size $k$ and some $H \in \Set{\Ck{k} ,\biPk{k}}$ such that, for every $i,j \in I$, both $T_i$ and $T_j$ have an $H$-routing $\varphi \coloneqq \varphi_i = \varphi_j$ over $\mathcal{L}' \coloneqq \mathcal{L}_i = \mathcal{L}_j$.
	By~\cref{state:Ck-or-bidirected-Pk-implies-well-linked}, \(\Start{\mathcal{L}'}\) is well-linked to \(\End{\mathcal{L}'}\) in \(\ToDigraph{\mathcal{L}'} \cup \mathcal{H}\).
\end{proof}

\section{Paths of order-linked sets and acyclic grids}
\label{sec:order-linked}

The previous section contains some discussion of the similarities between $\Pk{k}$-routings in routing temporal digraphs and acyclic grids without yet constructing such a grid or any grid-like structure.
We now develop a more abstract framework to model these intuitive observations.
This enables us to lift specific properties of acyclic grids to a more abstract setting.
The techniques and results we develop in this section play an important role both in obtaining the path of well-linked sets, as in~\cref{thm:high_dtw_to_POSS_plus_back-linkage}, and also in our proof of the Directed Grid Theorem in~\cite{COSSII}.

\begin{figure}[!ht]
	\begin{center}
			\includegraphics{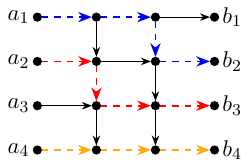}
			\caption{An illustration of $r$-shifts in acyclic grids.
            The $2$-shift $(b_2, b_3, b_4)$ of $(a_1, a_2, a_4)$ is routable in the grid as illustrated here.
            However the $3$-shift $(b_2, b_3, b_4)$ of $(a_1, a_2, a_3)$ is not	routable in the grid.}
			\label{fig:r-shift-grid}
		\end{center}
	\end{figure}
	\DeferTask{figure out how to centre the caption}

	To motivate the following definitions, consider the acyclic grid illustrated in~\cref{fig:r-shift-grid}.
	Suppose we want to connect some vertex $a_i$ on the left of the grid to a vertex $b_j$ on the right.
	As in an acyclic grid we can never route upwards,
	connecting \(a_i\) to \(b_j\) is possible if and only if $i \leq j$.
	
	Let $A$ be the ordered set containing the left-most vertices of the grid, i.e.~$\Set{a_1, \dots, a_4}$ in the example in~\cref{fig:r-shift-grid}, ordered from top to bottom and let $B$ be the ordered set containing the vertices at the right, i.e.~$\{b_1, \dots, b_4\}$ in the example, again ordered from top to bottom.
	
	Suppose we are given a subset $A' \subseteq A$ and an equal-sized subset $B' \subseteq B$.
	Under what conditions can we connect $A'$ to $B'$ by a linkage $\LLL$ in the grid?
    As discussed, we can only connect $a_i \in A'$ to $b_j \in B'$ if $i \leq j$.
    Furthermore, as the grid is planar, if $\{ a_{i_1}, \dots, a_{i_l}\}$ are the vertices of $A'$ ordered by their order in $A$ and likewise $\{b_{j_1}, \dots, b_{j_l}\}$ are the ordered vertices of $B'$, then we have to connect $a_{i_s}$ to $b_{j_s}$, for all $1 \leq s \leq l$.
    This implies that  $i_s \leq j_s$ for all $1 \leq s \leq l$.
	But even if $A'$ and $B'$ satisfy this condition, it may still be impossible to connect $A'$ to $B'$.
    As the example in~\cref{fig:r-shift-grid} demonstrates, there may not be enough columns in the grid to route all paths downwards that connect pairs $a_{i_s},b_{j_s}$ with $i_s < j_s$.
    So we may have to restrict the number of pairs $(a_{i_s}, b_{j_s})$ for which we allow $i_s < j_s$.
	
	This idea is formalised in the next definition by the concept of \emph{$r$-shifts}.
	
	\begin{definition}
		\label{def:A-shift}
		Let $A = (a_1,\dots, a_n)$ and $B = (b_1,\dots, b_m)$ be ordered sets.
		Let $r \in \N$, let $A'$ be an ordered subset of $A$ and let $B'$ be an ordered subset of $B$ such that $\Abs{A'} = \Abs{B'}$. 
		We say that $B'$ is an \emph{$r$-shift of $A'$} if there is a bijection $\pi : A' \rightarrow B'$ such that 
		\begin{enumerate}
			\item for all $a_i \in A'$ we have that $\pi(a_i) = b_j$ implies $i \leq j$;
			\item there are at most $r$ vertices $a_i \in A'$ with $\pi(a_i) \neq b_i$; and
			\item $\pi$ is order preserving, that is, for all $a_i, a_j \in A'$, if $a_i \leq_A a_j$, then $\pi(a_i) \leq_B \pi(a_j)$.
		\end{enumerate}
	\end{definition}
	
	In the example of~\cref{fig:r-shift-grid}, $(b_2, b_3, b_4)$ is a $2$-shift of $(a_1, a_2, a_4)$ and
	a $3$-shift of $(a_1, a_2, a_3)$.
	We are interested in finding pairs of equal-sized ordered sets $A$ and $B$ inside a digraph D for which,
	given a subset $A' \subseteq A$,
	we can find, inside \(D\), a linkage from $A'$ to any possible $r$-shifts $B' \subseteq B$ of $A'$.
  This property is formalised in the following definition. 
	Recall from~\cref{sec:preliminaries} that we may consider a linkage $\LLL$ as a function $\LLL \sth \Start{\LLL} \rightarrow \End{\LLL}$ where $\LLL(a)$ is the endpoint of the path in $\LLL$ starting at $a$. 
	\begin{definition}
		\label{def:order-linkedness}
		Let $H$ be a digraph, let $A = \Brace{a_1, \dots, a_n}, B = \Brace{b_1, \dots, b_m} \subseteq \V{H}$ be ordered sets, and let $r \in \N$. 
		We say that $A$ is $r$-\emph{order-linked} to $B$ in $H$ if for every $A' \subseteq A$ and every $B' \subseteq B$ with $\Abs{A'} = \Abs{B'}$ where $B'$ is an $r$-shift of $A'$ witnessed by the bijection $\pi$ there is an $A'$-$B'$-linkage $\mathcal{L}$ in $H$ satisfying $\Fkt{\pi}{a} = \Fkt{\mathcal{L}}{a}$ for all $a \in A'$. 
		
		For (unordered) sets $A,B \subseteq \V{H}$, we say that $A$ is $r$-\emph{order-linked} to $B$ in $H$ if there exist orderings $A_1$ and $B_1$ of $A$ and $B$, respectively, such that $A_1$ is $r$-\emph{order-linked} to $B_1$ in $H$. 
	\end{definition}
	
	To give an example, it is easily seen that, in any acyclic $(r,r)$-grid $(\PPP, \QQQ)$, the set $\Start{\QQQ}$ is $r$-order-linked to $\End{\QQQ}$.
	
	We now define a new type of structure which can be seen as an abstraction of acyclic grids. 
	\begin{definition}[path of $r$-order-linked sets]
		\label{def:path-of-order-linked-sets}
		A \emph{path of $r$-order-linked sets} of width $w$ and length $\ell$ is a tuple $\Brace{\mathcal{S},\mathscr{P}}$ such that
		\begin{enumerate}
			\item $\mathcal{S}$ is a sequence of $\ell + 1$ pairwise disjoint subgraphs $\Brace{S_0,\dots,S_{\ell}}$, which are called \emph{clusters}, 
			\item for every $0 \leq i \leq \ell$ there are disjoint sets $A(S_i),B(S_i) \subseteq \V{S_i}$ of size $w$ such that $A(S_i)$ is $r$-order-linked to $B(S_i)$ in $S_i$, 
			\item $\mathscr{P}$ is a sequence of $\ell$ pairwise disjoint linkages $\Brace{\mathcal{P}_0, \mathcal{P}_1, \dots, \mathcal{P}_{\ell - 1}}$ such that, for every $0 \leq i < \ell$, $\mathcal{P}_i$ is a $B(S_i)$-$A(S_{i+1})$-linkage of order $w$ which is internally disjoint from $S_i$ and $S_{i+1}$ and disjoint from every $S \in \mathcal{S} \setminus \Set{S_i, S_{i+1}}$. 
		\end{enumerate}
		
		By definition, for each $1 \leq i \leq \ell$ there are orderings $\leq_{A(S_i)}$ of $A(S_i)$ and $\leq_{B(S_i)}$ of $B(S_i)$ witnessing the $r$-order-linkedness of $A(S_i)$ and $B(S_i)$ in $S_i$.
		
		We say that $\Brace{\mathcal{S}, \mathscr{P}}$ is \emph{uniform} if for all $1 \leq i \leq \ell$ we can choose orderings $\leq_{A(S_i)}$ and $\leq_{B(S_i)}$ witnessing that $A(S_i)$ is $r$-order-linked to $B(S_i)$ so that for all $0\leq  i < \ell$ and all $b_1,b_2 \in B(S_i)$: if $b_1 \leq_{B(S_i)} b_2$, then $\mathcal{P}_i(b_1) \leq_{A(S_{i+1})} \mathcal{P}_i(b_2)$.
	\end{definition}
	
	The following notation is used frequently later on.
    Given a path of $r$-order-linked sets $(\SSS \coloneqq (S_0, \dots, S_\ell),  \PPPP\coloneqq(\PPP_0, \dots, \PPP_{\ell-1}))$ and indices $0 \leq i \leq j \leq \ell$, we define $(\SSS, \PPPP)[i, j]$ as the path of $r$-order-linked sets from cluster $i$ to cluster $j$.
    That is, $(\SSS, \PPPP)[i, j] \coloneqq \Brace{(S_i, \dots, S_j), (\PPP_i, \dots, \PPP_{j-1})}$.\Symbol{SPij@$(\SSS, \PPPP)[i, j]$} 
	
	To give an example, let $(\PPP, \QQQ)$ be an acyclic $(r, r^2)$-grid, where $\QQQ \coloneqq (Q_1^1, \dots, Q_1^r, Q_2^1, \dots$, $Q_2^r, \dots, Q_r^r)$.
    Then $(\PPP, \QQQ)$ contains a path of $r$-order-linked sets as follows.
		The cluster $S_i$ is obtained as the union of the columns $Q_i^j$, for $1 \leq j \leq r$, and, for each $P \in \PPP$, the subpath $P^i$ of $P$ starting at the first vertex of $P$ on $Q_i^1$ and ending at the last vertex of $P$ on $Q_i^r$.
    We define $A(S_i) \coloneqq \{ \Start{P^i} \sth P \in \PPP\}$ and $B(S_i) \coloneqq \{ \End{P^i} \sth P \in \PPP\}$.
    Finally, the linkages $\PPP_i$, for $0 \leq i < r$, are obtained by taking the subpaths of the rows connecting $S_i$ to $S_{i+1}$ in the obvious way.
	Then $\Brace{(S_0, \dots, S_r), (\PPP_0, \dots, \PPP_{r-1})}$ is a path of $r$-order-linked sets.
	
	As the example shows, we can easily obtain a path of $r$-order linked sets from a sufficiently large acyclic grid.
    Our next goal is to show that the converse is also true, albeit with bigger bounds.
	
	We first observe the following. 
	
	\begin{observation}
		\label{lemma:linkage-inside-pools}
		Let $D = (\mathcal{S} = \Brace{S_0, S_1, \dots, S_{\ell}}, \mathscr{P})$ be a
		path of $0$-order-linked sets of width $w$. 
		For every $0 \leq i < j \leq \ell$, every $A' \in \Set{A(S_i), B(S_i)}$, and
		every $B' \in \Set{A(S_j), B(S_j)}$ there is an $A'$-$B'$-linkage $\mathcal{L}$
		of order $w$ in $D$. 
		Furthermore, for all $i < k < j$ every path in $\mathcal{L}$ must intersect $A(S_k)$ and $B(S_k)$.
	\end{observation}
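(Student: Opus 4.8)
The plan is to prove the observation by induction on $j - i$, using the fact that each cluster $S_k$ is itself $0$-order-linked (so the identity bijection is always routable inside it) and that the connecting linkages $\mathcal{P}_k$ have the right endpoints by definition of a path of $0$-order-linked sets.

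First I would handle the base case $j = i+1$. Given $A' \in \{A(S_i), B(S_i)\}$ and $B' \in \{A(S_{i+1}), B(S_{i+1})\}$, I would route $A'$ to $B(S_i)$ inside $S_i$ (if $A' = A(S_i)$, use that $A(S_i)$ is $0$-order-linked to $B(S_i)$ and the trivial $0$-shift given by the witnessing orderings; if $A' = B(S_i)$, the linkage is trivial), then follow $\mathcal{P}_i$ from $B(S_i)$ to $A(S_{i+1})$, then route $A(S_{i+1})$ to $B'$ inside $S_{i+1}$ in the same way. Since $S_i$, $\mathcal{P}_i$ and $S_{i+1}$ are pairwise (internally) disjoint by definition, concatenating these three linkages of order $w$ yields an $A'$-$B'$-linkage of order $w$ in $D$.

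For the inductive step, assuming the claim for all pairs of clusters at distance less than $j - i$, I would first produce an $A'$-$B(S_{j-1})$-linkage $\mathcal{L}_1$ of order $w$ in $(\mathcal{S},\mathscr{P})[i,j-1]$ by the induction hypothesis, then extend it by $\mathcal{P}_{j-1}$ to reach $A(S_j)$, and finally route inside $S_j$ from $A(S_j)$ to $B'$. Disjointness of $\mathcal{L}_1$ (which lives in $S_i \cup \mathcal{P}_i \cup \dots \cup S_{j-1}$) from $\mathcal{P}_{j-1}$ and from $S_j$ follows from the definitional requirement that $\mathcal{P}_{j-1}$ is internally disjoint from $S_{j-1}$ and $S_j$ and disjoint from all other clusters, and that the clusters are pairwise disjoint. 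For the "furthermore" part, observe that any path $L$ in the constructed linkage $\mathcal{L}$ passes through the segment of $\mathcal{P}_{k}$ (for each $i \le k < j$) and hence, immediately before and after, through a vertex of $B(S_k) \subseteq \V{S_k}$ and a vertex of $A(S_{k+1}) \subseteq \V{S_{k+1}}$; combined with the routing inside $S_k$ from $A(S_k)$ to $B(S_k)$, each intermediate cluster $S_k$ with $i < k < j$ is met by $L$ in both $A(S_k)$ and $B(S_k)$.

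The only mild subtlety — and the closest thing to an obstacle — is checking that the trivial "$0$-shift" used inside each cluster really is admissible: by definition of $r$-order-linkedness with $r = 0$, every vertex $a_i$ of $A(S_k)$ must map to $b_i$ of $B(S_k)$ under the witnessing orderings, so the only routable correspondence inside $S_k$ is the "straight" one, and one must make sure the endpoints handed over by $\mathcal{P}_{k-1}$ and picked up by $\mathcal{P}_k$ are consistently indexed. This is purely bookkeeping: one fixes once and for all the witnessing orderings $\leq_{A(S_k)}$ and $\leq_{B(S_k)}$ for every $k$, routes straight inside each cluster, and notes that the composition of the resulting bijections is again a bijection $A' \to B'$, so the concatenation is a genuine linkage of order $w$. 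No growth in bounds occurs since we only concatenate and never branch.
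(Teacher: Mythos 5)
Your proof is correct and takes essentially the same approach as the paper's: an iterative concatenation of the in-cluster $0$-shift linkages with the connecting $\mathcal{P}_k$'s, going cluster by cluster from $S_i$ to $S_j$ (the paper phrases this as a loop rather than an induction, but the construction is the same). You are in fact slightly more explicit than the paper about the final routing step inside $S_j$ and about the ``furthermore'' clause, and your worry about ``consistent indexing'' is indeed a non-issue since full sets are routed at every stage—exactly the bookkeeping observation you make.
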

	\begin{proof}
		We show the case where $A' = A(S_i)$ and $B' = B(S_j)$.
		The other cases follow analogously.
		
		For each $i \leq t \leq j - 1$ construct sets $A_t, B_t$ and a linkage $\mathcal{L}_t$ as follows.
		Start by setting $A_{i-1} = A'$ and $\mathcal{L}_{i-1}$ as the linkage containing only the vertices of $A'$.
		
		On step $t$, let $B_t$ be a 0-shift of $A_{t - 1}$ and let $\mathcal{R}_t$ be an $A_{t - 1}$-$B_t$-linkage of order $w$ in $S_t$.
		Since $A(S_t)$ is $0$-order-linked to $B(S_t)$, such a linkage $\mathcal{R}_t$ exists.
		Let $\mathcal{R}_t' \subseteq \mathcal{P}_t$ be the set of paths with $\Start{\mathcal{R}_t'} = \End{\mathcal{R}_t}$.
		Now set $A_t = \End{\mathcal{R}_t'}$ and set $\mathcal{L}_t = \mathcal{L}_{t-1} \cdot \mathcal{R}_t \cdot \mathcal{R}_t'$.
		
		It is immediate from the construction that $\mathcal{L}_j$ is an $A'$-$B'$-linkage of order $w$, as desired.
	\end{proof}
	
	We are now ready to show that every path of $1$-order-linked sets
	contains an acyclic grid. Here we make use of our framework of
	$H$-routings in temporal digraphs.
    The idea is to use the
	$\Pk{k}$-routings constructed in~\cref{theorem:one-way connected temporal digraph contains P_k routing} to obtain the columns of the grid.
    We start by defining
	\begin{align*}
		\boundDefAlign{thm:order_linked_to_acyclic_grid}{w}{k}
		\bound{thm:order_linked_to_acyclic_grid}{w}{k} & \coloneqq k^2 - 1,\\[0em]
		\boundDefAlign{thm:order_linked_to_acyclic_grid}{\ell}{k}
		\bound{thm:order_linked_to_acyclic_grid}{\ell}{k} & \coloneqq \Brace{(k^2 -k - 1) \cdot \binom{\bound{thm:order_linked_to_acyclic_grid}{w}{k}}{k} \cdot
				k! + 1} \cdot \bound{lemma:temporal one-way connected contains walk with many vertices}{\Lifetime{}}{\bound{thm:order_linked_to_acyclic_grid}{w}{k},\bound{thm:order_linked_to_acyclic_grid}{w}{k}}.
	\end{align*}
	Observe that \(\bound{thm:order_linked_to_acyclic_grid}{w}{k} \in \Oh(k^{2})\) and 
	\(\bound{thm:order_linked_to_acyclic_grid}{\ell}{k} \in \PowerTower{1}{\Polynomial{7}{k}}\).
	
	\begin{theorem}
		\label{thm:order_linked_to_acyclic_grid}
		Every path of $1$-order-linked sets of width at least $w = \bound{thm:order_linked_to_acyclic_grid}{w}{k}$ and length at least $\bound{thm:order_linked_to_acyclic_grid}{\ell}{k}$ contains an acyclic $(k,k)$-grid.
	\end{theorem}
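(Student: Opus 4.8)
The plan is to read off the $k$ rows of the acyclic grid from a horizontal linkage running through all clusters, and to produce the $k$ columns as $\Pk{k}$-routings in a routing temporal digraph. So I would fix a path of $1$-order-linked sets $(\mathcal{S} = (S_0, \dots, S_{\ell}), \mathscr{P})$ of width $w = k^2 - 1$ and length $\ell \ge \bound{thm-abbr:order_linked_to_acyclic_grid}{\ell}{k}$. Since every $0$-shift is a $1$-shift, $(\mathcal{S}, \mathscr{P})$ is also a path of $0$-order-linked sets, so \cref{lemma:linkage-inside-pools} yields an $A(S_0)$-$B(S_{\ell})$-linkage $\mathcal{L}$ of order $w$ in $\ToDigraph{\mathcal{S} \cup \mathscr{P}}$; moreover, by the construction in that observation, inside each cluster $S_j$ the family $\Set{L \cap S_j : L \in \mathcal{L}}$ realises the identity $0$-shift of $A(S_j)$ onto $B(S_j)$, so the member of $\mathcal{L}$ entering $S_j$ at the $p$-th vertex of $A(S_j)$ leaves it at the $p$-th vertex of $B(S_j)$. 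The $w$ paths of $\mathcal{L}$ are the candidate rows. Now set $T \coloneqq \RTD{\mathcal{L}, \mathcal{S}}$, the routing temporal digraph of $\mathcal{L}$ through $\mathcal{S}$ (\cref{def:routing-temporal-digraph}); then $\Abs{\V{T}} = \Abs{\mathcal{L}} = w = k^2 - 1$ and $\Lifetime{T} = \ell + 1 > \bound{thm-abbr:order_linked_to_acyclic_grid}{\ell}{k}$.

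The key step is to show that every layer $\Layer{T}{j}$ is unilateral; this is exactly where $1$-order-linkedness, rather than mere $0$-order-linkedness, is needed. Take $L_a, L_b \in \mathcal{L}$ and let $p \le q$ be the positions at which they enter $A(S_j)$ (if $p = q$ then $L_a = L_b$). Writing $a_p$ for the $p$-th vertex of $A(S_j)$ and $b_q$ for the $q$-th vertex of $B(S_j)$, the one-element set $\Set{b_q}$ is a $1$-shift of $\Set{a_p}$, so $1$-order-linkedness provides a path $P$ from $a_p$ to $b_q$ inside $S_j$, starting on $L_a$ and ending on $L_b$. Listing the vertices of $P$ lying on $\V{\mathcal{L}}$ in the order they occur along $P$, each consecutive pair is joined by a sub-path of $P$ internally disjoint from $\mathcal{L}$, hence an arc of $\Layer{T}{j}$; concatenating these arcs gives a walk from $L_a$ to $L_b$ in $\Layer{T}{j}$. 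As for every pair of paths of $\mathcal{L}$ one reaches the other in $\Layer{T}{j}$, the layer is unilateral.

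With unilateral layers in hand, I would split the lifetime of $T$ into $(k^2 - k - 1)\binom{w}{k}k! + 1$ consecutive blocks of $\bound{lemma:temporal one-way connected contains walk with many vertices}{\Lifetime{}}{w, w}$ layers each, which is possible because the product equals $\bound{thm-abbr:order_linked_to_acyclic_grid}{\ell}{k}$. By \cref{theorem:one-way connected temporal digraph contains P_k routing} --- applicable since $\Abs{\V{T}} = k^2-1$ --- each block contains a $\Pk{k}$-routing over some $k$-subset of $\mathcal{L}$, i.e.\ an ordered $k$-tuple $(L_{i_1}, \dots, L_{i_k})$ of rows together with a temporal walk living in that block that visits them in this order and internally avoids the other members of the tuple. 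There are only $\binom{w}{k}k!$ such ordered tuples, so by the pigeonhole principle one of them occurs in at least $k$ of the blocks; order those blocks by the cluster indices they span. For each of them, translate the temporal walk back into $\ToDigraph{\mathcal{S}\cup\mathscr{P}}$: each arc, lying in a distinct layer, becomes a path inside the corresponding cluster internally disjoint from $\mathcal{L}$, and consecutive arcs are stitched along the intermediate row, which is legitimate because the walk uses strictly increasing layers and hence strictly later clusters, so no backtracking along a row occurs. This produces, in each block, a path meeting $L_{i_1}, \dots, L_{i_k}$ in this order, each in a single sub-path, internally disjoint from $\mathcal{L}$ and disjoint from the paths obtained in other blocks. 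Taking $\mathcal{Q} \coloneqq (L_{i_1}, \dots, L_{i_k})$ as rows and these $k$ paths as columns $\mathcal{P}$, one verifies the four conditions of \cref{def:acyclic-grid}: the rows appear on each column in the fixed order by choice of the routing, and the columns appear on each row in order because the blocks were sorted by cluster index.

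I expect the unilaterality of the layers to be the main obstacle: the path handed over by $1$-order-linkedness need not itself be internally disjoint from $\mathcal{L}$, and it is the walk-extraction argument above that recovers an $L_a$-$L_b$ walk in $\Layer{T}{j}$. Making this precise, together with the bookkeeping needed to check that the stitched columns cross each chosen row in a single path and that the $k$ columns are pairwise disjoint, is where most of the work lies; the pigeonhole count is then routine, the constants $w = k^2-1$ and $k^2-k-1$ being chosen so that just enough vertices (for \cref{lemma:walk with many vertices implies P_k routing}) and just enough blocks (after the pigeonhole) survive.
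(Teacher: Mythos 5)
Your overall plan — linkage from \cref{lemma:linkage-inside-pools}, routing temporal digraph, unilateral layers, $\Pk{k}$-routings per block, pigeonhole — matches the paper's, and your explicit argument that each layer is unilateral is a correct (and welcome) unpacking of a step the paper leaves implicit. The gap is in how you extract a column from a single block. You assert that the $\Pk{k}$-routing produced by \cref{theorem:one-way connected temporal digraph contains P_k routing} comes ``together with a temporal walk living in that block that visits them in this order and internally avoids the other members of the tuple.'' That is not what a $\Pk{k}$-routing is: by \cref{def:h-routings} it is only a bijection $\varphi$ such that for each subpath $u_i\cdots u_j$ of $\Pk{k}$ there exists \emph{some} temporal $\varphi(u_i)$-$\varphi(u_j)$ path avoiding $S \setminus \varphi(\{u_i,\dots,u_j\})$. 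These paths are produced one pair at a time with no coordination of departure/arrival time steps, so there is no guarantee that the $\varphi(u_j)$-$\varphi(u_{j+1})$ path arrives before the $\varphi(u_{j+1})$-$\varphi(u_{j+2})$ path departs, and hence no guarantee of a single temporal walk through $\varphi(u_1),\dots,\varphi(u_k)$ in order within one block.

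The paper sidesteps this precisely by setting $k_1 = k(k-1)$ (not $k$): after the pigeonhole you get $k(k-1)$ blocks with the \emph{same} routing $\varphi$, and for the $i$-th column one takes the $\varphi(u_j)$-$\varphi(u_{j+1})$ temporal path in a \emph{fresh} block $T'_{(i-1)(k-1)+j}$, so consecutive pieces automatically live in disjoint, increasing ranges of layers and can be stitched along $\varphi(u_j)$. Your own length bound $\bound{thm-abbr:order_linked_to_acyclic_grid}{\ell}{k}$ is calibrated for exactly this: it yields $k(k-1)$ agreeing blocks, not merely $k$, so the ``at least $k$ blocks'' in your pigeonhole statement under-uses what the bound provides and should be replaced by the paper's $k-1$-blocks-per-column scheme. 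One could rescue your single-block-per-column idea by appealing directly to the walk constructed inside the proof of \cref{lemma:walk with many vertices implies P_k routing} rather than to the abstract routing of \cref{theorem:one-way connected temporal digraph contains P_k routing}; that walk does visit the $k$ chosen rows once each in order, and would reduce the number of blocks needed from $k(k-1)$ to $k$ — but that is a modification of the paper's stated lemmas, not an application of them, and as written the step does not follow.
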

	\begin{proof}
		Let $\Brace{\mathcal{S} \coloneqq \Brace{S_0, S_1, \dots, S_{\ell}}, \mathscr{P} \coloneqq \Brace{\mathcal{P}_0, \mathcal{P}_1, \dots, \mathcal{P}_{\ell - 1}}}$ be a path of 1-order-linked sets of width at least $w \coloneqq \bound{thm:order_linked_to_acyclic_grid}{w}{k}$ and length $\ell \geq \bound{thm:order_linked_to_acyclic_grid}{\ell}{k}$.
		Let $D = \ToDigraph{\Brace{\mathcal{S}, \mathscr{P}}}$.
		By~\cref{lemma:linkage-inside-pools}, there is an $A(S_0)$-$B(S_\ell)$-linkage $\mathcal{L}$ of order $w$ in $D$.
		Note that every path in $\mathcal{L}$ must intersect every $A(S_i)$ and every $B(S_i)$.
		
		Let $T$ be the routing temporal digraph of $\mathcal{L}$ through $\mathcal{S}$.
		Since $A(S_i)$ is $1$-order-linked to $B(S_i)$ for every $S_i \in \mathcal{S}$, we have that each layer of $T$ is unilateral.
		
		Let $k_1 = k(k-1)$, let $k_2 = (k_1 - 1) \cdot \binom{w}{k} \cdot k! + 1$.
		For each $1 \leq i \leq k_2$, let $t_i = (i - 1) \cdot \bound{lemma:temporal one-way connected contains walk with many vertices}{\Lifetime{}}{w,w}$ and let $T_i$ be the temporal subgraph of $T$ from \timestep $t_i$ to $t_{i+1} - 1$.
		Note that $\Lifetime{T_i} = \bound{lemma:temporal one-way connected contains walk with many vertices}{\Lifetime{}}{w, w}$ and $\Abs{\V{T_i}} = w$.
		
		By~\cref{theorem:one-way connected temporal digraph contains P_k routing}, each $T_i$ contains a $\Pk{k}$-routing $\varphi_i$.
		Since $\Lifetime{T} \geq \bound{thm:order_linked_to_acyclic_grid}{\ell}{k} = k_2 \cdot \bound{lemma:temporal one-way connected contains walk with many vertices}{\Lifetime{}}{w, w}$, there are at least $k_2$ subgraphs $T_i$.
		By the pigeon-hole principle, there is some set $\mathcal{T} \subseteq \Set{T_{1}, T_{2}, \ldots, T_{k_2}}$ of size $k_1$
		such that
		$\varphi \coloneqq \varphi_i = \varphi_j$ for all $T_i, T_j \in \mathcal{T}$.
		
		Let $\Brace{T'_1, T'_2, \dots, T'_{k_1}} \coloneqq \mathcal{T}$ be sorted according to the corresponding \timesteps, let $\mathcal{Q}$ be the image of $\varphi$.
		
		Let $u_1, u_2, \dots, u_{k}$ be the vertices of the $\Pk{k}$ ordered according to their occurrence on the $\Pk{k}$.
		Let \(\mathcal{P} = \Set{\varphi(u_1), \varphi(u_2), \ldots, \varphi(u_k)}\).
		We construct a sequence $\mathcal{Q}$ of $k$ paths \(Q_{1}, Q_{2}, \ldots, Q_{k}\) where, for each $1 \leq i \leq k$, the path $Q_i$ is constructed as follows.
		
		For each $1 \leq j < k$, let $t_{i,j} = (i - 1) \cdot (k - 1) + j$ and let $R_{i,j}$ be a temporal $\varphi(u_j)$-$\varphi(u_{j+1})$-path in $T'_{t_{i,j}}$ which does not contain any path in $\mathcal{P} \setminus \Set{\varphi(u_j), \varphi(u_{j+1})}$.
		Note that $t_{i,k - 1} = t_{i+1,1} - 1$.
		Since $\varphi$ is a $\Pk{k}$-routing in $T'_{t_{i,j}}$, such a path $R_{i,j}$ exists.
		Finally, $R_{i,j}$ corresponds to a $\V{\varphi(u_j)}$-$\V{\varphi(u_{j+1})}$-path $Q_{i,j,2}$ in $D$.
		Let $Q_{i,j,1}$ be the $\End{Q_{i,j-1,2}}$-$\Start{Q_{i,j,2}}$-path in $\ToDigraph{\varphi(u_j)}$ (to simplify notation, we choose $\End{Q_{i,0,2}}$ as $\Start{Q_{i,1,2}}$).
		
		We now set $Q_i = Q_{i,1,1} \cdot Q_{i,1,2} \cdot Q_{i,2,1} \cdot Q_{i,2,2} \cdot \ldots \cdot Q_{i,k - 1,2}$.
		After constructing all $Q_i$, set $\mathcal{Q} = \Brace{Q_1, Q_2, \dots, Q_{k}}$.
		Note that the paths in $\mathcal{Q}$ are pairwise disjoint.
		It is now immediate from the construction that $(\mathcal{P}, \mathcal{Q})$ is an acyclic $(k, k)$-grid.
	\end{proof}
	
	The previous results show that we can convert an acyclic grid into a path of $r$-order linked sets and vice versa.

    \subsection{Constructing a path of order-linked sets}
    
    We now turn to the problem of actually constructing a path of $r$-order-linked sets in a digraph.
	We show first how to construct a path of $1$-order-linked sets from routing temporal digraphs containing $\Pk{k}$-routings.
    Similar to a column in an acyclic grid, such a $\Pk{k}$-routing allows us to shift one path to its destination without intersecting the other paths in the linkage we construct. 
	
	\begin{lemma}
		\label{lemma:P_k-routing-implies-1-order-linked}
		Let $h,k$ be integers.
		Let $T$ be the routing temporal digraph of some linkage $\mathcal{L}$ through
		a sequence $\Brace{H_1, H_2, \dots, H_{h}}$ of disjoint digraphs. 
		Let $\mathcal{L}' \subseteq \mathcal{L}$ be a linkage of order at most $k$.
		If $T$ contains a $\Pk{k}$-routing on the paths $L_1, L_2, \dots,
		L_{k} \in \mathcal{L}'$, ordered according to their occurrence on the
		$\Pk{k}$-routing, then $A$ is $1$-order-linked to $B$ in
		$\ToDigraph{\mathcal{L}' \cup \bigcup_{i=1}^{h} H_i}$, where $A = \Set{a_i \mid a_i \text{
				is the first vertex of \(L_i\) on \(H_1\)}}$ and $B = \Set{b_i \mid
			b_i \text{ is the last vertex of \(L_i\) on \(H_h\)}}$. 
	\end{lemma}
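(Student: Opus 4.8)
My plan is to unwind the definition of $1$-order-linkedness (\cref{def:order-linkedness}) and, for each admissible pair of sets, build an explicit linkage by combining subpaths of $\mathcal{L}$ with the connectivity supplied by the $\Pk{k}$-routing. Order $A = (a_1,\dots,a_k)$ and $B = (b_1,\dots,b_k)$ so that $a_\ell$ (resp.\ $b_\ell$) lies on $L_\ell$, where the routing sends the $\ell$-th vertex of $\Pk{k}$ to $L_\ell$. Fix $A' = \{a_{i_1},\dots,a_{i_m}\} \subseteq A$ with $i_1<\dots<i_m$ and an ordered $B'\subseteq B$ that is a $1$-shift of $A'$ witnessed by $\pi$; set $i_{m+1} \coloneqq k+1$. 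Since $\pi$ is order-preserving I would first observe that $\pi$ is forced: writing $B' = \{b_{j_1},\dots,b_{j_m}\}$ with $j_1<\dots<j_m$ we must have $\pi(a_{i_s}) = b_{j_s}$, and the $1$-shift condition of \cref{def:A-shift} then gives $j_s = i_s$ for all but at most one index $s = t$, while for that index it forces $i_t \le j_t < i_{t+1}$. If no index is shifted this is the trivial case; otherwise write $j \coloneqq j_t$ with $i_t < j < i_{t+1}$.

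Next I would treat the two kinds of paths separately. For every $s \ne t$, the subpath of $L_{i_s}$ from $a_{i_s}$ to $b_{i_s}$ routes $a_{i_s}$ to $\pi(a_{i_s}) = b_{i_s}$; these are pairwise disjoint since distinct paths of $\mathcal{L}$ are. The work is in routing $a_{i_t}$ to $b_j$. As $i_t \le j$, the subpath of $\Pk{k}$ from its $i_t$-th to its $j$-th vertex yields, by \cref{def:h-routings}, a temporal path $W$ in $T$ from $L_{i_t}$ to $L_j$ that avoids every $L_\ell$ with $\ell \notin [i_t,j]$. I would then translate $W$ into a path of $\ToDigraph{\mathcal{L}\cup\bigcup_i H_i}$ exactly as in the discussion following \cref{def:routing-temporal-digraph}: each arc of $W$ lives in a distinct layer $\tau$ and is realized by a path inside $H_\tau$ internally disjoint from $\mathcal{L}$, and consecutive realizing paths are joined along the intermediate paths $L_\ell$. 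The point making this a genuine path, not merely a walk, is that $W$ is a temporal path, so each $L_\ell$ is entered and left at most once; together with the partition property of $\mathcal{L}$ with respect to $(H_1,\dots,H_h)$ and the strictly increasing layers along $W$, this forces every $L_\ell$-segment traversed to be followed forwards. In particular the $a_{i_t}$-prefix of $L_{i_t}$ and the $b_j$-suffix of $L_j$ attach forward-consistently, since $a_{i_t}$ is the first vertex of $L_{i_t}$ on $H_1$ and $b_j$ the last vertex of $L_j$ on $H_h$. Call the resulting $a_{i_t}$-$b_j$-path $R$.

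Finally I would check that $\{R\}$ together with the paths for $s\ne t$ forms a linkage realizing $\pi$, which is the one place where the $1$-shift hypothesis is genuinely used. By construction $R$ meets $\mathcal{L}$ only inside $\bigcup_{\ell\in[i_t,j]} L_\ell$; since $j < i_{t+1}$, the only index among $i_1,\dots,i_m$ lying in $[i_t,j]$ is $i_t$ itself, so $R$ is disjoint from every $L_{i_s}$ with $s\ne t$, hence from all the other paths of the linkage. The remaining internal parts of $R$ lie in pairwise disjoint $H_\tau$'s and are internally disjoint from $\mathcal{L}$, so nothing is shared. I expect the main obstacle to be exactly this bookkeeping — tracking which paths of $\mathcal{L}$ and which $H_j$'s each piece touches and verifying forward-consistency of all the concatenations — rather than anything conceptually deep; the $\Pk{k}$-routing does the heavy lifting by producing a "clean" temporal path $W$ confined to the index interval $[i_t,j]$.
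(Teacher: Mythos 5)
Your proof is correct and takes essentially the same approach as the paper: for the unshifted indices you route along the paths $L_{i_s}$ themselves, and for the single shifted index you realize the temporal path guaranteed by the $\Pk{k}$-routing (which confines the detour to $\bigcup_{\ell\in[i_t,j]} L_\ell$), using $j<i_{t+1}$ to get disjointness from the other linkage paths. Your bookkeeping of the translation from temporal path to genuine path in $\ToDigraph{\mathcal{L}\cup\bigcup_i H_i}$ is a little more explicit than the paper's, but the argument is the same.
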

	\begin{proof}
		Let $A' \subseteq A$ and $B' \subseteq B$ such that $B'$ is a 1-shift of $A'$.
		Let $\pi: A' \rightarrow B'$ be the bijection witnessing that $B$ is a 1-shift of $A'$.
		If $\pi(a_x) = b_x$ for all $a_x \in A'$, then $\mathcal{L}'$ contains an $A'$-$B'$-linkage $\mathcal{R}$
		such that
		for all $a_x \in A'$ there is an $a_x$-$b_x$-path in $\mathcal{R}$, and so we are done.
		
		Otherwise, let $x \in \Set{1,\dots,k}$ be such that $a_x \in A'$ is the unique vertex such that $\pi(a_x) \neq b_x$ and let $b_y = \pi(a_x)$.
		As $B'$ is a 1-shift of $A'$, we know that $x < y$ and $a_i \not\in A'$ for all $i \in \{x+1, \ldots, y - 1\}$.
		
		We construct an $A'$-$B'$-linkage $\mathcal{R}$ satisfying $\pi(a_x) = \mathcal{R}(a_x)$ for all $a_x \in A'$ as follows.
				For each $a_j \in A'$, let $W_j$ be the temporal walk in the $\Pk{k}$-routing starting in $L_j$ and ending in $L_{j+1}$ and let $W$ be the concatenation of $W_x \cdot W_{x+1} \cdot \ldots \cdot W_{y-1}$.
		The temporal walk $W$ connects $L_x$ to $L_y$ in $T$, and we can assume it starts at \timestep $1$ and ends at \timestep $h$.
		If $\pi(a_j) = b_j$, we add the path $L_j$ to $\mathcal{R}$.
		Construct $L'_x$ as follows.
		
		Let $(v_i, t_i)$ and $(v_j, t_j)$ be two consecutive elements in the sequence of $W$.
		We follow $L_i$ from $H_{t_i}$ to $H_{t_j}$, then take a path $P$ in $H_{t_j}$ connecting $L_i$ to $L_j$.
		By construction of $T$ and because $a_n \not\in A'$ for all $n \in \{x+1, \ldots, y - 1\}$, the path $P$ in $H_{t_j}$ does not intersect any other path of $\mathcal{L}$.
		Further, by definition of $\Pk{k}$-routing, $L_i$ and $L_j$ only intersect $A'$ at $a_x$ or $a_y$.
		Hence, $L'_x$ is disjoint from all $L_i$ we chose earlier.
		Thus, we obtain an $A'$-$B'$-linkage $\mathcal{R}$ such that $\mathcal{R}(a_x) = \pi(a_x)$ for all $a_x \in A'$ as desired.
	\end{proof}
	
	The previous lemma allows us to construct a path of $1$-order-linked sets.
    Next, we show that we can increase the order-linkedness of the clusters at the expense of the length of the path of order-linked sets we obtain.
    The idea is to \say{merge} a set of consecutive clusters of a path of $r$-order-linked sets into a single cluster, increasing the order-linkedness.
	
	This idea is much easier to implement in uniform paths of $r$-order-linked sets than in the general case and yields much better bounds.
  As the uniform case is sufficient for our application, it is the only case considered here. 
	
	The next lemma explains how to construct a single cluster of higher order-linkedness in a path of $r$-order-linked sets by merging the existing clusters. 
	\begin{lemma}
		\label{lemma:increase-order-linkedness}
		Let $r,c,w$ be integers.
		Let $D = \Brace{\mathcal{S} = \Brace{S_0, S_1, \dots, S_{\ell}}, \mathscr{P}}$ be a uniform path of $r$-order-linked sets of width $w$ and length at least $c-1$. 
		Then $A(S_0)$ is $cr$-order-linked to $B(S_\ell)$ in $D$.
	\end{lemma}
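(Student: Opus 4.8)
\textbf{Proof plan for \cref{lemma:increase-order-linkedness}.}

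The plan is to show that we can route any $cr$-shift of $A(S_0)$ to the corresponding subset of $B(S_\ell)$ by decomposing the shift into at most $c$ consecutive stages, each of which is an $r$-shift that we can realise inside one cluster. Fix orderings $\leq_{A(S_i)}$ and $\leq_{B(S_i)}$ as in the definition of uniformity, so that each $\mathcal{P}_i$ maps $B(S_i)$ to $A(S_{i+1})$ in an order-preserving way; identify all these ordered sets with $\{1, 2, \dots, w\}$ via their orderings, so that $\mathcal{P}_i$ becomes the identity map on indices and each cluster $S_i$ is capable of realising any $r$-shift of the index set. Now let $A' \subseteq A(S_0)$ and let $B' \subseteq B(S_\ell)$ be a $cr$-shift of $A'$ witnessed by an order-preserving bijection $\pi$; translating through the identifications, $\pi$ is an order-preserving injection from a set $I \subseteq \{1, \dots, w\}$ to $\{1, \dots, w\}$ with $\pi(i) \geq i$ for all $i \in I$ and with at most $cr$ indices moved (i.e.\ $\pi(i) \neq i$).

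The key step is to factor $\pi$ as a composition $\pi = \pi_c \circ \pi_{c-1} \circ \cdots \circ \pi_1$ of at most $c$ order-preserving maps, each of which is an $r$-shift (moves at most $r$ indices and never decreases an index), so that each $\pi_j$ can be realised inside the cluster $S_{j-1}$ using its $r$-order-linkedness. To do this I would process the at most $cr$ ``displaced'' indices greedily: at each stage, pick up to $r$ of the still-displaced indices and advance them as far as possible toward their final targets without creating an order violation or a collision with an index that is not being moved at this stage. A clean way to make this precise is to observe that $\pi$ has a decomposition into at most $cr$ ``elementary moves'' (each sliding one index up past a block of unoccupied positions), and that elementary moves on disjoint index ranges commute and can be grouped $r$ at a time into $c$ batches; within a batch the moves are order-preserving and pairwise non-interfering, hence their union is a single $r$-shift. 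Then one composes the corresponding linkages: inside $S_0$ route $A'$ to a set $B_1'$ realising $\pi_1$, follow $\mathcal{P}_0$ to $A(S_1)$ (which is the identity on indices, so $B_1'$ becomes the same index set in $A(S_1)$), route $\pi_2$ inside $S_1$, and so on; any stage $\pi_j$ that is the identity is skipped, and a cluster used for the identity stage simply contributes a trivial $0$-shift linkage, which exists by $r$-order-linkedness ($0$-shifts are $r$-shifts). Since there are at most $c$ non-trivial stages and $\ell \geq c-1$, we have enough clusters $S_0, S_1, \dots, S_{c-1}$ to carry this out, and concatenating the per-cluster linkages with the linkages $\mathcal{P}_0, \dots, \mathcal{P}_{c-2}$ (and, if needed, extending through the remaining clusters using \cref{lemma:linkage-inside-pools} via $0$-shifts) yields an $A'$-$B'$-linkage realising $\pi$. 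Hence $A(S_0)$ is $cr$-order-linked to $B(S_\ell)$ in $D$.

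The main obstacle is the combinatorial bookkeeping in the factorisation step: one must argue carefully that a $cr$-shift really does split into $c$ maps each moving at most $r$ indices \emph{while remaining order-preserving and index-nondecreasing at every intermediate stage}, and that the intermediate index sets are legitimate (no two indices collide, the set still has size $w' = |A'|$). The uniformity hypothesis is exactly what makes this manageable, since it guarantees the ``pass-through'' linkages $\mathcal{P}_i$ preserve the index labelling, so the intermediate configurations computed abstractly on $\{1,\dots,w\}$ transfer verbatim from the $B$-set of one cluster to the $A$-set of the next; without uniformity one would have to track a reordering at each transition and the batching argument would become much more delicate. A secondary point to handle is the degenerate case $c = 1$ (then $\pi$ is itself an $r$-shift and a single application of $r$-order-linkedness of $S_0$, followed by routing through the rest of $D$ via $0$-shifts, suffices) and the case where $A'$ or $B'$ is empty, both of which are immediate.
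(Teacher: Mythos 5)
Your overall architecture matches the paper's: use uniformity to identify each $A(S_i)$ and $B(S_i)$ with the index set $\{1,\dots,w\}$ so that the transition linkages $\mathcal{P}_i$ become identities on indices, factor the given $cr$-shift into $c$ consecutive $r$-shifts, realise each one inside a single cluster, and concatenate (extending by $0$-shifts through any remaining clusters). You also correctly flag that the factorisation is where the real work lies. But the factorisation step you sketch has a genuine gap: it is underspecified, and the two recipes you offer can both be made to fail.

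Take $r=1$, $c=2$, and a $2$-shift on $A'=\{1,2\}$ with $\pi(1)=3$, $\pi(2)=4$. Your first recipe is ``pick up to $r$ still-displaced indices and advance them as far as possible without order violation or collision.'' If you happen to pick index $1$, you cannot advance it at all: any position in $\{2,3\}$ either collides with the untouched index $2$ or makes the map fail to be order-preserving, so the stage is trivial and you have wasted a cluster; with only $c$ clusters available this greedy can get stuck. Your alternative recipe, grouping ``elementary moves'' $r$ at a time into $c$ batches, also does not go through as stated: with elementary moves understood as sliding one index past unoccupied positions, the move $1\mapsto 3$ is not admissible until $2$ has vacated position $2$, so the moves do not commute; and with elementary moves understood as single-step slides, the $2$-shift above requires four of them, which cannot be packed into two batches of size $1$. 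Either way the assertion that a $cr$-shift splits into $c$ pairwise-compatible $r$-shifts is exactly what needs a proof, and your plan does not supply one.

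What is missing is the specific greedy the paper uses and which resolves all of this at once: in stage $i$, choose the $r$ \emph{largest} still-displaced elements of $A'$ (largest with respect to $\leq_{A(S_0)}$) and move each one \emph{directly to its final target} $\pi(a)$, leaving all other indices fixed. Because these are the maximal displaced elements, every smaller element is still at a position at most its eventual target; and because $\pi$ is order-preserving and index-nondecreasing, the final targets of the chosen batch lie strictly above everything left behind. Hence the stage map is a legitimate $r$-shift of the current intermediate index set (no collisions, order preserved), it can be realised inside $S_i$ by $r$-order-linkedness, and after $c$ stages all displaced elements sit at $\pi(A')$. In the example this forces you to send $2\mapsto 4$ first and $1\mapsto 3$ second, which works. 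The paper proves the invariant (``after stage $i$, at least $(i+1)r$ elements are at their target'') by induction on this recipe; your sketch identifies the right skeleton and the right obstacle but does not nail the choice that makes the induction close, so as written it is incomplete at the crucial step.
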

	\begin{proof}
		Let $\Brace{\mathcal{P}_0, \mathcal{P}_1, \dots, \mathcal{P}_{\ell - 1}} \coloneqq \mathscr{P}$.
		For each $0 \leq i \leq \ell$, let $\varphi_i : A(S_i) \to B(S_i)$ be the bijection witnessing that $A(S_i)$ is $r$-order-linked to $B(S_i)$.
		
		We define for each $0 \leq i \leq \ell$ two bijections $\alpha_i : A(S_i) \to \Set{1,2,\ldots, w}$ and $\beta_i : B(S_i) \to \Set{1, 2, \ldots, w}$ according to $\leq_{A(S_i)}$ and $\leq_{B(S_i)}$, that is, $\alpha_i(a_j) \leq \alpha_i(a_k)$ holds if and only if $a_j \leq_{A(S_i)} a_k$ holds (and analogously for $\beta_i$).
		In particular, we have $\varphi_i = \beta_i^{-1} \circ \alpha_i$.
		Since $(\mathcal{S}, \mathscr{P})$ is uniform, we also have that $\beta_i(b) = \alpha_{i+1}(\mathcal{P}_i(b))$ for all $0 \leq i \leq \ell - 1$ and all $b \in B(S_i)$.
		
		Let $A' \subseteq A(S_{0})$ and let $B' \subseteq B(S_{\ell})$ be sets of size $k$ such that $B'$ is a $cr$-shift of $A'$ as witnessed by the bijection $\varphi : A' \to B'$.
		We also define $\pi \coloneqq \beta_\ell \circ \varphi \circ \alpha^{-1}_0$.
		
		For each $0 \leq i \leq c - 1$ we construct an $A'$-$B(S_{i})$-linkage $\mathcal{R}_i$ of order $\Abs{A'}$ satisfying the following,
		\begin{enamerate}{L}{item:increase-order-linkedness:pi-hits}
			\item \label{item:increase-order-linkedness:pi-hits}
			$\Abs{\Set{ a \in A' \mid \beta_i(\mathcal{R}_i(a)) = \beta_{\ell}(\varphi(a))}} \geq (i+1)r$.
		\end{enamerate}
		To simplify notation we set $\End{\mathcal{R}_{-1}}$ as $A'$.
		
		On step $i$, let $\mathcal{L}^1_i \subseteq \mathcal{P}_{i - 1}$ be such that $\Start{\mathcal{L}^1_i} = \End{\mathcal{R}_{i-1}}$ and let $\hat{\mathcal{R}}_{i-1} = \mathcal{R}_{i-1} \cdot \mathcal{L}^1_i$.
		
		Choose the largest possible $A'' \subseteq A'$ of size at most $r$ by starting at the largest elements of $A'$ with respect to $\leq_{A(S_0)}$ and proceeding in descending order such that $\alpha_i(\hat{\mathcal{R}}_{i-1}(a)) \neq \pi(a)$ for all $a \in A''$.
		Let $\hat{A} = \hat{\mathcal{R}}_{i-1}(A'')$.
		
		Let $B'' = \beta_i^{-1}(\pi(\alpha_i(\hat{A})) \cup \beta_i^{-1}(\alpha_i(\End{\hat{\mathcal{R}}_{i-1}} \setminus \hat{A}))$.
		Let $\varphi_i' : \End{\hat{\mathcal{R}}_{i-1}} \to B''$ be the bijection defined as follows
		\begin{align*}
			\varphi_i'(a) & \coloneqq
			\begin{cases}
				\beta_i^{-1}(\pi(\alpha_i(a))), & a \in \hat{A}\\[0em]
				\beta_i^{-1}(\alpha_i(a)), & a \in \End{\hat{\mathcal{R}}_{i-1}} \setminus \hat{A}.
			\end{cases}
		\end{align*}
		
		Because $A''$ was constructed by taking the largest elements of $A'$ with respect to $\leq_{A(S_0)}$, we have that $\alpha_i(a) \not\in \pi(\alpha_i(\hat{A}))$ for all $a \in \End{\hat{\mathcal{R}}_{i-1}} \setminus \hat{A}$.
		Hence, the set $B''$ is an $r$-shift of $\End{\hat{\mathcal{R}}_{i-1}}$ witnessed by $\varphi_i'$.
		
		Since $A(S_i)$ is $r$-order-linked to $B(S_i)$ in the cluster $S_i$, there is a linkage $\mathcal{L}_i^2$ in $S_i$ such that $\varphi_i'(\End{\hat{\mathcal{R}}_{i-1}}) = \mathcal{L}_i^2(\End{\hat{\mathcal{R}}_{i-1}})$.
		We now set $\mathcal{R}_i = \hat{\mathcal{R}}_{i-1} \cdot \mathcal{L}_i^2$.
		
		For every $a \in A''$ we now have $\alpha_{i+1}(\mathcal{R}_i(a)) = \varphi(a)$.
		Since~\cref{item:increase-order-linkedness:pi-hits} holds for $i-1$, we also have that~\cref{item:increase-order-linkedness:pi-hits} holds for $i$.
		
		After iterating $c$ steps, we have that $\mathcal{R}_{c-1}(A') = \varphi(A')$ since~\cref{item:increase-order-linkedness:pi-hits} holds for $i = c$ and $B'$ is an $cr$-shift of $A'$.
		Thus, $\mathcal{R}_{c-1}$ is an $A'$-$B'$-linkage.
		Hence, $A(S_0)$ is $cr$-order-linked to $B(S_\ell)$.
	\end{proof}
	
	By applying~\cref{lemma:increase-order-linkedness} repeatedly, we obtain the following
	theorem. 
	
	\begin{theorem}
		\label{lemma:merging path of order-linked sets}
		Every uniform path of $r$-order-linked sets $D = (\mathcal{S} = \Brace{S_0, S_1, \dots, S_{\ell}}, \mathscr{P} = \Brace{\mathcal{P}_0, \mathcal{P}_1, \dots, \mathcal{P}_{\ell - 1}})$ of length at least $c\ell$ and width $w$ contains a uniform path of $cr$-order-linked sets $\Brace{\mathcal{S}' = \Brace{ S'_0, S'_1, \dots, S'_{\ell}}, \mathscr{P}' = \Brace{ \mathcal{P}'_0, \mathcal{P}'_1, \dots, \mathcal{P}'_{\ell - 1}}}$ of length $\ell$ and width $w$.
        Additionally, for every $0 \leq i \leq \ell$ we have $S_i' \subseteq \SubPOSS{D}{ci}{c(i+1)-1}$, $A(S_i') \subseteq A(S_{ci})$ and $B(S_i') \subseteq B(S_{c(i + 1) - 1})$, and for $0 \leq i < \ell$ we have $\mathcal{P}'_i \subseteq \mathcal{P}_{(c-1)(i+1)}$.
	\end{theorem}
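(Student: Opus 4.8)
The plan is to obtain the theorem by collapsing $c$ consecutive clusters of the given path into a single $cr$-order-linked cluster via \cref{lemma:increase-order-linkedness}, doing this $\ell+1$ times, and then reusing the linkages of $\mathscr{P}$ that lie strictly between two blocks as the new connecting linkages. First I would fix the partition into blocks: for $0 \le i \le \ell$ let $D_i \coloneqq \SubPOSS{D}{ci}{c(i+1)-1}$ be the sub-path of $r$-order-linked sets spanned by the clusters $S_{ci}, S_{ci+1}, \dots, S_{c(i+1)-1}$ together with the members of $\mathscr{P}$ internal to that range; the length hypothesis is exactly what guarantees that all of these clusters exist. Each $D_i$ is a uniform path of $r$-order-linked sets of width $w$ and length $c-1$, since restricting a uniform path of order-linked sets to a consecutive sub-path preserves uniformity with the inherited orderings, so \cref{lemma:increase-order-linkedness} applies to $D_i$ and yields that $A(S_{ci})$ is $cr$-order-linked to $B(S_{c(i+1)-1})$ in $\ToDigraph{D_i}$. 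I would then set $S_i' \coloneqq \ToDigraph{D_i}$ with $A(S_i') \coloneqq A(S_{ci})$ and $B(S_i') \coloneqq B(S_{c(i+1)-1})$, and for $0 \le i < \ell$ take $\mathcal{P}_i'$ to be the member of $\mathscr{P}$ joining $B(S_{c(i+1)-1})$ to $A(S_{c(i+1)})$.

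Next I would check that $\Brace{\mathcal{S}' = \Brace{S_0', \dots, S_\ell'}, \mathscr{P}' = \Brace{\mathcal{P}_0', \dots, \mathcal{P}_{\ell-1}'}}$ is a path of $cr$-order-linked sets of width $w$ and length $\ell$. The clusters $S_i'$ are pairwise disjoint, being unions of pairwise disjoint clusters and linkages of the original path; $A(S_i')$ and $B(S_i')$ have size $w$ and are $cr$-order-linked in $S_i'$ by the previous step; and each $\mathcal{P}_i'$ is already a $B(S_i')$-$A(S_{i+1}')$-linkage of order $w$ that is internally disjoint from $S_i'$ and $S_{i+1}'$ and disjoint from every other $S_j'$, because as a member of $\mathscr{P}$ it was internally disjoint from the two original clusters it joins and disjoint from all remaining original clusters. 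The containment assertions in the statement then hold directly from the construction: $S_i'$ is by definition $\ToDigraph{\SubPOSS{D}{ci}{c(i+1)-1}}$, $A(S_i')$ is literally $A(S_{ci})$, $B(S_i')$ is $B(S_{c(i+1)-1})$, and $\mathcal{P}_i'$ is a member of $\mathscr{P}$.

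For the uniformity of $(\mathcal{S}', \mathscr{P}')$ I would take as witnessing orderings $\leq_{A(S_i')} \coloneqq \leq_{A(S_{ci})}$ and $\leq_{B(S_i')} \coloneqq \leq_{B(S_{c(i+1)-1})}$. The key observation is that these are precisely the orderings under which \cref{lemma:increase-order-linkedness} produces the $cr$-order-linkedness of the block, as one reads off from its proof, where the required bijection for a block is built from the endpoint orderings $\alpha_0$ and $\beta_\ell$ of that block. With this choice, the uniformity requirement ``$b_1 \leq_{B(S_i')} b_2$ implies $\mathcal{P}_i'(b_1) \leq_{A(S_{i+1}')} \mathcal{P}_i'(b_2)$'' is nothing but the uniformity of the original path $(\mathcal{S}, \mathscr{P})$ applied to the single linkage $\mathcal{P}_i' \in \mathscr{P}$, so no extra argument is needed.

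Once \cref{lemma:increase-order-linkedness} is available, the whole argument is essentially bookkeeping, and I expect the only delicate points to be (i) keeping the block indices and the indices of the connecting linkages consistent with the exact containment statement, and (ii) making sure the orderings that certify $cr$-order-linkedness of each super-cluster are the same orderings used on the connecting linkages, which is why I would invoke the internal construction of \cref{lemma:increase-order-linkedness} rather than merely its statement.
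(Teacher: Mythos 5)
Your construction matches the paper's proof essentially step for step: partition the clusters into $\ell+1$ consecutive blocks of $c$ clusters, apply \cref{lemma:increase-order-linkedness} to each block to promote the order-linkedness from $r$ to $cr$, and reuse the cross-block members of $\mathscr{P}$ as the new connecting linkages; your care about uniformity (inheriting the orderings $\leq_{A(S_{ci})}$ and $\leq_{B(S_{c(i+1)-1})}$) is sound and spells out what the paper dismisses as immediate. Two small indexing remarks. First, with blocks $S_i' = \SubPOSS{D}{ci}{c(i+1)-1}$ the connecting linkage you correctly identify is $\mathcal{P}_{c(i+1)-1}$, whereas the theorem statement (and the paper's own proof, which writes $\mathscr{P}' = (\mathcal{P}_{c-1}, \mathcal{P}_{2(c-1)}, \ldots, \mathcal{P}_{(c-1)\ell})$) claims $\mathcal{P}'_i \subseteq \mathcal{P}_{(c-1)(i+1)}$; these agree only at $i=0$, so the printed index is a typo that your argument quietly fixes. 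Second, your sentence that ``the length hypothesis is exactly what guarantees that all of these clusters exist'' overstates what length $\geq c\ell$ buys: block $\ell$ needs clusters up to index $c(\ell+1)-1$, so the hypothesis should really be $\geq c(\ell+1)-1$ (this is an off-by-one already present in the paper, but worth flagging rather than asserting that the stated bound suffices).
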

	\begin{proof}
		For each $0 \leq i \leq \ell$ let $S_i' = \SubPOSS{D}{ci}{c(i+1) - 1}$ and set $A(S_i') \coloneqq A(S_{ci})$ and $B(S_i') \coloneqq B(S_{c(i+1) - 1})$.
		Note that each $S_i'$ is a path of $r$-order-linked sets of width $w$ and length $c-1$.
		From~\cref{lemma:increase-order-linkedness}, we have that $A(S_i')$ is $cr$-order-linked to $B(S_i')$ in $S_i'$.
		
		Let $\mathscr{P}' \coloneqq \Brace{\mathcal{P}_{c - 1}, \mathcal{P}_{2(c - 1)}, \ldots, \mathcal{P}_{(c - 1)\ell}}$.
		It is immediate that $(\mathcal{S}' \coloneqq \Brace{S'_0, S'_1, \dots, S'_{\ell}}, \mathscr{P}')$ is a uniform path of $cr$-order-linked sets of width $w$ and length $\ell$ satisfying the requirements in the statement.
	\end{proof}
	
	\section{Paths of well-linked sets and fences}
	\label{sec:well-linked}
	
	The previous section introduces \emph{path of $r$-order-linked sets} as a suitable abstraction of acyclic grids.
	We now want to extend this idea to find a similar abstraction of fences as well.
	
	The main difference between a fence and an acyclic grid $(\PPP, \QQQ)$ is that if we are interested in routing from left to right, that is, from $\Start{P}$ to $\End{P}$, then, if $(\PPP, \QQQ)$ is an acyclic grid, the \(\Start{\mathcal{P}}\) is only order-linked to \(\End{\mathcal{P}}\), whereas in a fence these two sets are well-linked.
  Consequently, we modify this requirement of the path of $r$-order-linked sets to obtain a suitable abstraction of fences.

	\begin{definition}[path of well-linked sets]
		\label{def:path-of-well-linked-sets}
		A \emph{path of well-linked sets}\Index{path of well-linked sets} of width $w$ and length $\ell$ is a tuple $\Brace{\mathcal{S},\mathscr{P}}$ such that
		\begin{enumerate}             
			\item $\mathcal{S}$ is a sequence of $\ell + 1$ pairwise disjoint subgraphs $\Brace{S_0,\dots,S_{\ell}}$, called \emph{clusters},
			\item for every $0 \leq i \leq \ell$ there are disjoint sets $A(S_i),B(S_i) \subseteq \V{S_i}$ of size $w$ such that $A(S_i)$ is well-linked to $B(S_i)$ in $S_i$, and
			\item $\mathscr{P}$ is a sequence of $\ell$ pairwise disjoint linkages $\Brace{\mathcal{P}_0, \mathcal{P}_1, \dots, \mathcal{P}_{\ell - 1}}$ such that, for every $0 \leq i < \ell$, $\mathcal{P}_i$ is a $B(S_i)$-$A(S_{i+1})$-linkage of order $w$ which is internally disjoint from $S_i$ and $S_{i+1}$ and is disjoint from every $S \in \mathcal{S} \setminus \Set{S_i, S_{i+1}}$. 
		\end{enumerate}
		We call $(\SSS, \PPPP)$ \emph{strict} if within each cluster $S_i$ every vertex $v \in V(S_i)$ occurs on a path from $A(S_i)$ to $B(S_i)$ in $S_i$.
		
		As before we define $(\SSS, \PPPP)[i, j] \coloneqq \Brace{(S_i, \dots, S_j), (\PPP_i, \dots, \PPP_{j-1})}$, where $0 \leq i \leq j \leq \ell$. 
	\end{definition}
	\begin{figure}[!ht]
		\includegraphics{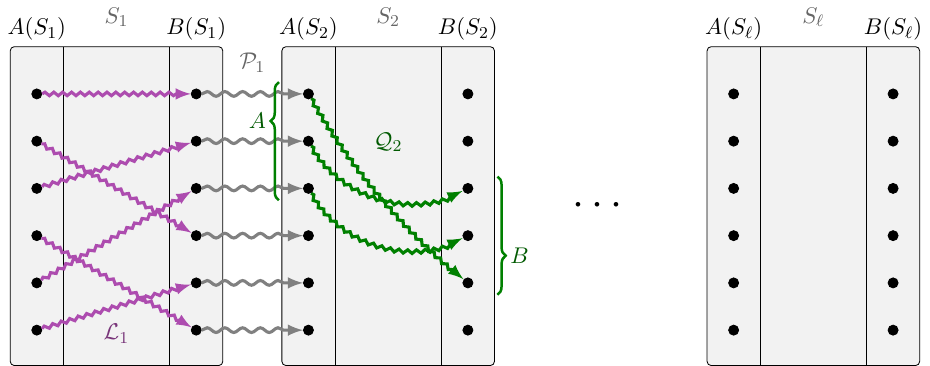}
		\caption{A path of well-linked sets of width $w$ and length $\ell$.
			The linkage $\mathcal{L}_1$ connects all of $\Fkt{A}{S_1}$ to all of $\Fkt{B}{S_1}$ while there are also linkages from every subset of $\Fkt{A}{S_i}$ to every subset of $\Fkt{B}{S_i}$ as $\mathcal{Q}_2$ in $S_2$ illustrates for example.}
		\label{fig:poss}
	\end{figure}
	
	While acyclic grids and fences may look quite different, Reed et al.~proved~\cite{reed1996packing} that it is possible to construct a fence from any given acyclic grid. 
	
	\begin{lemma}[{\cite[statement (4.7)]{reed1996packing}}]
		\label{obs:acyclic-grid-to-fence}
		Every acyclic $(pq + 1, pq +1)$-grid contains a $(p,q)$-fence.
	\end{lemma}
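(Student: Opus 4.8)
\textbf{Proof plan for \cref{obs:acyclic-grid-to-fence}.}
The plan is to take an acyclic $(pq+1,pq+1)$-grid $(\mathcal{P},\mathcal{Q})$ with columns $\mathcal{P}=\Brace{P_1,\dots,P_{pq+1}}$ and rows $\mathcal{Q}=\Brace{Q_1,\dots,Q_{pq+1}}$ and build a $(p,q)$-fence inside it. The key difference between a fence and an acyclic grid is that a fence has \emph{alternating} columns: the odd-indexed columns go ``downward'' (cycles occur in the order $C_1,\dots,C_k$) while the even-indexed columns go ``upward'' (the reverse order). In an acyclic grid all columns go the same way. So the fence's downward columns can be taken directly from the $\mathcal{P}$-columns, but each upward column of the fence must be synthesised from pieces of the rows of $\mathcal{Q}$: we route along a row to the left, then down a column, then along another row, etc., forming a ``staircase'' that, viewed between two consecutive downward columns, crosses all $q$ rows from bottom to top.

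First I would select the rows that will serve as the $q$ rows of the fence. I would use $2p$ of the columns $P_{i}$ as the ``downward'' columns (the odd-indexed columns of the fence), say $P_{j_1},\dots,P_{j_{2p}}$ with $j_1<\dots<j_{2p}$, chosen with enough spacing that between $P_{j_{2t-1}}$ and $P_{j_{2t}}$ there are at least $q$ further columns available; since $pq+1$ columns suffice to leave room, a counting argument gives the spacing. For the rows, I would keep $q$ of the rows $Q_1,\dots,Q_{pq+1}$, chosen so that they are the ``topmost'' usable ones relative to where the staircase columns will live; again a pigeonhole/spacing argument over $pq+1$ rows yields $q$ rows with enough room. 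Then, for each even column $P_{fence}$ of the fence sitting between $P_{j_{2t-1}}$ and $P_{j_{2t}}$, I would construct a monotone staircase: start on the chosen bottom row at a column strictly between the two downward columns, go up along that column to the next chosen row, move right to the next free column, go up again, and so on, so that the staircase meets each of the $q$ chosen rows exactly once and in reverse (top-to-bottom) order relative to the downward columns. Because the acyclic grid's rows and columns only need to be \emph{crossed} in the prescribed order, and because we allotted $q$ distinct intermediate columns, these staircases are pairwise vertex-disjoint and disjoint from the downward columns except at the intended crossing points.

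After the construction I would verify the four conditions of \cref{def:fence}: that the $2p$ columns (alternating genuine columns and staircases) together with the $q$ rows pairwise intersect in single subpaths, that the crossings appear in the correct order along each row (left to right, which holds because the downward columns and the staircase segments are interleaved in the correct horizontal order), and that along each column the rows are crossed in increasing order for the genuine columns and in decreasing order for the staircases (which is exactly how the staircase was built). The subpath-intersection condition follows from the fact that an acyclic grid is ``grid-like'': every $P_i\cap Q_j$ is already a single path, and restricting/concatenating along these paths keeps the intersections single paths.

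The main obstacle I expect is the bookkeeping of the disjointness and ordering of the staircases: one must choose the intermediate columns and the entry/exit rows for the $p$ staircases simultaneously so that no two staircases share a vertex and each still crosses all $q$ chosen rows in the reverse order — this is where the quadratic bound $pq+1$ is consumed (roughly $p$ staircases $\times$ $q$ rows of horizontal room, plus a constant). Everything else is essentially a careful but routine translation of ``pick disjoint monotone lattice paths in a large enough grid''; the cited statement of Reed et al.\ does precisely this, so I would follow their staircase construction and just make the spacing bookkeeping explicit.
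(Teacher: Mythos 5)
Your plan breaks at its central step: the ``upward'' columns you propose to synthesise do not exist as directed paths in an acyclic grid. You describe staircases that ``go up along that column to the next chosen row'' and ``route along a row to the left,'' but every column of an acyclic grid visits the rows in the fixed order $Q_1, \dots, Q_q$ and every row visits the columns in the fixed order $P_1, \dots, P_p$; consequently, in the digraph $\bigcup \mathcal{P} \cup \bigcup \mathcal{Q}$ no directed walk can ever return from $Q_j$ to a smaller-indexed $Q_{j'}$, nor from $P_i$ to a smaller-indexed $P_{i'}$. (The paper says this explicitly: ``acyclic grids only allow to route from top to bottom and left to right.'') The path you want your staircase to trace simply is not there.

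The flaw is structural, not merely a matter of presentation: because you fix the fence's rows to be $q$ of the grid's own rows $Q_j$, every directed path that crosses several of them --- in particular every candidate fence column --- must cross them in the same order $Q_1, \dots, Q_q$, so no choice of columns can achieve the alternating orientation the fence definition requires. In the construction of Reed, Robertson, Seymour, and Thomas (which the paper only cites and does not reproduce), \emph{none} of the fence's paths are rows or columns of the grid: both the $q$ fence rows and the $2p$ fence columns are diagonal staircase paths, all moving only down and right. The $q$ row staircases are nested along the main diagonal so that their relative planar order is reversed between the upper-left and lower-right ends of the grid, and the $2p$ column staircases are placed alternately along that diagonal; a column near the upper-left pierces the nest in one order while a column further along pierces it in the opposite order. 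That is how the alternation is realised without ever moving up or left, and it is also where the $pq+1$ bound is spent: the $2p$ columns each occupy roughly $q$ of the $2pq+1$ antidiagonals of a $(pq+1) \times (pq+1)$ grid. The missing idea in your plan is exactly this: you must abandon the grid's own rows and build \emph{both} families as down-right staircases, obtaining the alternation from their diagonal placement rather than from a reversal of direction, which the acyclicity forbids.
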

	
	Next, we show that a similar relation to the one proved in the previous lemma is also true for paths of order-linked sets and paths of well-linked sets.
	
	\begin{lemma}
		\label{proposition:order-linked to path of well-linked sets}
				Let $\bound{proposition:order-linked to path of well-linked sets}{w}{w, \ell} \coloneqq w(\ell + 1)$.
        \boundDef{proposition:order-linked to path of well-linked sets}{w}{w, \ell} 
		Every path of $w$-order-linked sets $(\mathcal{S} = \Brace{S_0, S_1, \dots, S_{\ell}}$, $\mathscr{P} = \Brace{\mathcal{P}_0, \mathcal{P}_1, \dots, \mathcal{P}_{\ell - 1}})$ of width at least $\bound{proposition:order-linked to path of well-linked sets}{w}{w, \ell}$ and length at least $\ell$ contains a path of well-linked sets $\Brace{\mathcal{S}' = \Brace{S'_0, S'_1, \dots, S'_{\ell}}, \mathscr{P}' = \Brace{\mathcal{P}'_0, \mathcal{P}'_1, \dots, \mathcal{P}'_{\ell - 1}}}$ of width $w$ and length $\ell$.
		Further, for every $0 \leq i \leq \ell$ we have $A(S_i') \subseteq A(S_i)$, $B(S_i') \subseteq B(S_i)$, $S_i' \subseteq S_i$ and for every $0 \leq i < \ell$ we have $\mathcal{P}_i' \subseteq \mathcal{P}_i$.
	\end{lemma}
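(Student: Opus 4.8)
The plan is to keep the first $\ell+1$ clusters $S_0,\dots,S_\ell$ and the linkages $\mathcal{P}_0,\dots,\mathcal{P}_{\ell-1}$ of the given path of $w$-order-linked sets, to set $S_i' \coloneqq S_i$, and inside each $S_i$ to choose a $w$-element ordered set $A(S_i')\subseteq A(S_i)$ and a $w$-element ordered set $B(S_i')\subseteq B(S_i)$ together with the sublinkage $\mathcal{P}_i' \coloneqq \{P\in\mathcal{P}_i : \Start{P}\in B(S_i')\}$. With such a choice all the disjointness requirements of \cref{def:path-of-well-linked-sets}, the equality $|\mathcal{P}_i'|=w$, and the containments $A(S_i')\subseteq A(S_i)$, $B(S_i')\subseteq B(S_i)$, $S_i'\subseteq S_i$, $\mathcal{P}_i'\subseteq\mathcal{P}_i$ are immediate; the only things still to be arranged are that $\End{\mathcal{P}_i'}=A(S_{i+1}')$ and that $A(S_i')$ is well-linked to $B(S_i')$ in $S_i$.

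The first step is to isolate the order-theoretic fact that does the work. Let $\leq_{A(S_i)}$ and $\leq_{B(S_i)}$ be orderings witnessing the $w$-order-linkedness of $A(S_i)$ and $B(S_i)$ in $S_i$ (\cref{def:order-linkedness}). I claim that if every element of $A(S_i')$ precedes every element of $B(S_i')$ in the respective orderings, then $A(S_i')$ is well-linked to $B(S_i')$ in $S_i$. Indeed, given equal-sized $A''\subseteq A(S_i')$ and $B''\subseteq B(S_i')$, the bijection $\pi$ sending the $s$-th smallest element of $A''$ to the $s$-th smallest element of $B''$ is order preserving, moves at most $|A''|\leq w$ vertices, and, by the separation hypothesis, the position of each $\pi(a)$ in $B(S_i)$ is at least the position of $a$ in $A(S_i)$; hence $B''$ is a $w$-shift of $A''$ (\cref{def:A-shift}), and $w$-order-linkedness supplies the required $A''$-$B''$-linkage in $S_i$. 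A small check shows this separation condition is in fact forced (apply it to singletons), so the lemma is equivalent to a selection problem on indices.

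Concretely, write $A(S_i)=(a^i_1,\dots,a^i_W)$ and $B(S_i)=(b^i_1,\dots,b^i_W)$ with $W\geq w(\ell+1)$, and let $\sigma_i$ be the permutation of $\{1,\dots,W\}$ with $\mathcal{P}_i(b^i_m)=a^{i+1}_{\sigma_i(m)}$. I must choose index sets $\mathcal{A}_0,\mathcal{B}_0,\dots,\mathcal{A}_\ell,\mathcal{B}_\ell$, each of size $w$, with $\max\mathcal{A}_i\leq\min\mathcal{B}_i$ for every $i$ and $\sigma_i(\mathcal{B}_i)=\mathcal{A}_{i+1}$ for $i<\ell$; then $A(S_i')\coloneqq\{a^i_m:m\in\mathcal{A}_i\}$ and $B(S_i')\coloneqq\{b^i_m:m\in\mathcal{B}_i\}$ complete the construction (and $\mathcal{P}_i'$ as above then has $\End{\mathcal{P}_i'}=A(S_{i+1}')$). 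I would carry this out greedily: start with $\mathcal{A}_0\coloneqq\{1,\dots,w\}$; having fixed $\mathcal{A}_i$ with $\max\mathcal{A}_i=c_i$, choose $\mathcal{B}_i$ as the $\sigma_i$-preimage of the $w$ smallest values in $\sigma_i(\{c_i,c_i+1,\dots,W\})$, which lies in the admissible window $\{c_i,\dots,W\}$ and, by a short count on the $W-c_i+1$ distinct values involved, lifts the ceiling $c_{i+1}$ by only about $w$; for $i=\ell$ any $w$-set in $\{c_\ell,\dots,W\}$ will do.

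The crux — and the only genuinely delicate point — is the bookkeeping that shows the ceiling never outruns the width budget $W\geq w(\ell+1)$ over the $\ell$ transitions, so that every window stays of size at least $w$ and all choices remain legal. This is exactly the directed analogue of the counting in Reed, Robertson, Seymour and Thomas's passage from an acyclic grid to a fence (\cref{obs:acyclic-grid-to-fence}): one exploits that $\mathcal{A}_0$ at cluster $0$ and $\mathcal{B}_\ell$ at cluster $\ell$ are both unconstrained on one side, and that the ceiling recursion is additive with step essentially $w$ (the constraint $\max\mathcal{A}_i\leq\min\mathcal{B}_i$ allowing equality is what keeps the loss per step down), so running the sweep from $S_0$ forward (or, if one wants the selected sets sitting low in the last cluster, backward from $S_\ell$) keeps everything within $w(\ell+1)$ positions. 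Everything else — the inheritance of disjointness and the stated containments — then falls out of the construction directly.
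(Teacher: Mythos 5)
Your reduction --- identify the index-separation condition $\max\mathcal{A}_i\le\min\mathcal{B}_i$ as sufficient (via $w$-shifts) for well-linkedness, convert the lemma into a combinatorial choice of index sets with $\sigma_i(\mathcal{B}_i)=\mathcal{A}_{i+1}$, and select $\mathcal{B}_i$ greedily from the admissible window --- is sound and is essentially the paper's own route. (The paper is in fact slightly less greedy: it pins $A(S_i')$ into the lowest $(i{+}1)w$ positions and $B(S_i')$ above position $iw$, rather than tracking the sharper ceiling $c_i$; the key claim that separation forces every equal-sized pair of subsets into a $w$-shift, and hence a linkage, is correct and matches the paper.)

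The gap is the bookkeeping, which you rightly call the crux and then leave as a ``short count.'' Carrying it out: the complement of $\sigma_i(\{c_i,\dots,W\})$ in $\{1,\dots,W\}$ has only $c_i-1$ elements, so the $w$-th smallest value in the image is at most $c_i+w-1$, giving $c_{i+1}\le c_i+(w-1)$ and $c_\ell\le w+\ell(w-1)$. But the final window $\{c_\ell,\dots,W\}$ must still hold a $w$-set, which requires $W\ge c_\ell+w-1 = (\ell+2)w-(\ell+1)$, and this exceeds $w(\ell+1)$ whenever $w>\ell+1$. So ``keeps everything within $w(\ell+1)$ positions'' is not a short count: it fails in that regime. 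Nor is the failure slack in the greedy. If every $\sigma_i$ is the identity (realizable, e.g., with each cluster an acyclic $(w,W)$-grid and consecutive clusters linked by the identity linkage), any admissible choice yields a chain $\mathcal{A}_0\le\mathcal{B}_0=\mathcal{A}_1\le\cdots\le\mathcal{B}_\ell$ of $\ell+2$ $w$-element index sets sharing at most one element between consecutive pairs, which needs at least $(\ell+2)w-(\ell+1)$ positions. To close the argument you need either a slightly larger width bound (roughly $w(\ell+2)$) or the side hypothesis $w\le\ell+1$; notably, the paper's own write-up has the same unresolved tension at the last cluster, where $A(S_\ell')$ is constrained only to the lowest $(\ell+1)w$ positions while $B(S_\ell')$ is taken as the $w$ highest, with no guarantee the former sits below the latter when the width is exactly $w(\ell+1)$.
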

	\begin{proof}
				Recall that $\leq_{A(S_i)}$ is the order of the vertices of $A(S_i)$ witnessing that $A(S_i)$ is $w$-order-linked to $B(S_i)$ in $S_i$.
		Further, let $\pi_i$ be the bijection witnessing this property.

		In the following we construct the sets $A_i'$ and $B_{i-1}'$ for each $1 \leq i \leq \ell$ such that $A_i'$ is a subset of the smallest $(i+1)w$ elements of $\leq_{A(S_i)}$ and	$\mathcal{P}_i$ contains a $B_{i-1}'$-$A_i'$-linkage of order $\Abs{A_i'}$ (see~\cref{fig:order-linked-to-well-linked} for an illustration).

        \begin{figure}[!ht]
        \resizebox{\textwidth}{!}{		\includegraphics{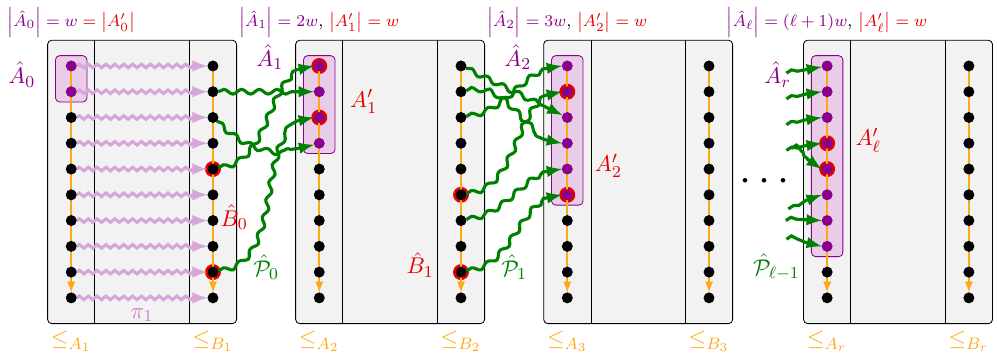}}
		\caption{A path of well-linked sets of width $w$ and length $\ell$.
			The linkage $\mathcal{L}_1$ connects all of $\Fkt{A}{S_1}$ to all of $\Fkt{B}{S_1}$ while there are also linkages from every subset of $\Fkt{A}{S_i}$ to every subset of $\Fkt{B}{S_i}$ as $\mathcal{Q}_2$ in $S_2$ illustrates for example.}
		\label{fig:order-linked-to-well-linked}
	\end{figure}

        First let $\hat{A}_0$ be the $w$ smallest elements of $\leq_{A(S_0)}$.
		For $0 < i \leq \ell$, let $\hat{A}_i$ be the $(i + 1)w$ smallest elements of $\leq_{A(S_i)}$.
		Since $\Abs{A(S_i)} \geq w(\ell + 1)$ and $i \leq \ell$, such a set exists.
  
		Now, let $\hat{\mathcal{P}}_{i-1}$ be the paths of $\mathcal{P}_{i-1}$ such that $\End{\hat{\mathcal{P}}_{i-1}} = \hat{A}_i$.
        Since $\hat{A}_{i-1}$ contains the smallest $iw$ elements of $\leq_{A(S_{i-1})}$, there is some $B_{i-1}' \subseteq \Start{\hat{\mathcal{P}}_{i-1}}$ of size $w$ such that $\pi_{i-1}(a) \leq_{B(S_{i-1})} b$ for all $a \in \hat{A}_{i-1}$ and all $b \in B_{i-1}'$.

        Finally, choose $A'_{i} \coloneqq \End{\hat{\mathcal{P}}_{i-1}}$ for all $0 < i \leq \ell$ and $A'_0 \coloneqq \hat{A}_0$.
  
		        As $A'_i \subseteq \hat{A}_i$ for all $0 \leq i \leq \ell$, we have $\pi_{i-1}(a) \leq_{B(S_{i-1})} b$ for all $a \in A_{i-1}'$ and all $b \in B_{i-1}'$.
		Hence, for every $A' \subseteq A_{i-1}'$ and every $B' \subseteq B_{i-1}'$ with $\Abs{A'} = \Abs{B'}$ we have that $B'$ is an $r$-shift of $A'$.
		Thus, $A_{i-1}'$ is well-linked to $B_{i-1}'$ in $S_{i - 1}$.
		Let $S_{i - 1}'$ be a minimal subgraph of $S_{i - 1}$ in which $A_{i-1}'$ is well-linked to $B_{i-1}'$.
		We set $A(S_{i-1}') \coloneqq A_{i-1}'$ and $B(S_{i-1}') \coloneqq B_{i-1}'$.
		Choose $\mathcal{P}'_{i-1} \subseteq \mathcal{P}_{i-1}$ such that $\Start{\mathcal{P}'_{i-1}} = B_{i-1}'$ and set $A_{i}' \coloneqq \End{\mathcal{P}'_{i-1}}$.
		
		After constructing all sets above, we choose $B_{\ell}'$ as the $w$ largest elements of $\leq_{B(S_{\ell})}$.
		As argued above, the set $A_{\ell}'$ is well-linked to $B_{\ell}'$ in some $S_i' \subseteq S_i$, where we choose $S_i'$ minimal.
		
		We set $\mathcal{S}' = \Brace{S'_0, S'_1, \dots, S'_{\ell}}$ and $\mathscr{P}' = \Brace{\mathcal{P}'_0, \mathcal{P}'_1, \dots, \mathcal{P}'_{\ell - 1}}$.
		It is immediate from the construction above that $\Brace{\mathcal{S}', \mathscr{P}'}$ is a path of well-linked sets of width $w$ and length $\ell$ satisfying the conditions in the statement.
	\end{proof}

    \subsection{Paths of well-linked sets contain fences}
	Next, we show that every path of well-linked sets contains a fence.
	Towards this end, we first show that the well-linkedness of $A(S_i)$ to $B(S_i)$ within an individual cluster $S_i$  can be preserved when going from one cluster to the next, i.e.~the set $A(S_i)$ is also well-linked to every  $A(S_j)$ and $B(S_j)$ for clusters $S_j$ with $j > i$ appearing later on the path of well-linked sets.
	
	\begin{lemma}
		\label{lem:poss-simple-routing}
		Let $(\SSS \coloneqq (S_0, \dots, S_{\ell}), \PPPP \coloneqq (\mathcal{P}_0, \dots, \mathcal{P}_{\ell-1}))$ be a path of well-linked sets of width $w$ and length $\ell$.
        Then for every $0 \leq i < j \leq \ell$, for each $A' \in \Set{A(S_i), B(S_i)}$ and for each $B' \in \Set{B(S_j), A(S_j)}$ we have that $A'$ is well-linked to $B'$ in $\SubPOSS{\Brace{\mathcal{S}, \mathscr{P}}}{i}{j}$. 
	\end{lemma}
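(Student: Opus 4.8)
The plan is to argue directly, iterating through the clusters (equivalently, by induction on $j-i$), without detouring through grids. Fix $A'\in\Set{A(S_i),B(S_i)}$ and $B'\in\Set{B(S_j),A(S_j)}$, and fix arbitrary equal-sized subsets $A''\subseteq A'$ and $B''\subseteq B'$ of size $k$; the goal is to produce an $A''$-$B''$-linkage of order $k$ inside $\ToDigraph{\SubPOSS{(\SSS,\PPPP)}{i}{j}}=S_i\cup\dots\cup S_j\cup\mathcal{P}_i\cup\dots\cup\mathcal{P}_{j-1}$. I would assemble this linkage as a telescoping concatenation that walks from cluster $j$ down to cluster $i$, alternating two kinds of moves: a \emph{cluster move} through some $S_k$, where we invoke that $A(S_k)$ is well-linked to $B(S_k)$ \emph{inside} $S_k$ (\cref{def:well-linked}), and a \emph{bridge move} across a linkage $\mathcal{P}_k$. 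The crucial asymmetry is that each $\mathcal{P}_k$ is a linkage of order $w=\Abs{B(S_k)}=\Abs{A(S_{k+1})}$, so $\Start{\mathcal{P}_k}=B(S_k)$, $\End{\mathcal{P}_k}=A(S_{k+1})$, and $\mathcal{P}_k$ realises one fixed bijection $B(S_k)\to A(S_{k+1})$ --- a bridge move is therefore forced --- whereas a cluster move is flexible: for every $k$-subset of the source side $A(S_k)$ and every $k$-subset of the sink side $B(S_k)$ there is a linkage between them inside $S_k$.

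Concretely, one maintains a chain of $k$-element interface sets. Beginning at $B''$: if $B'=B(S_j)$ one first performs a cluster move through $S_j$, freely choosing a landing set $Y_j\subseteq A(S_j)$ together with a $Y_j$-$B''$-linkage in $S_j$; if $B'=A(S_j)$ one sets $Y_j:=B''$. Now iterate: from $Y_k\subseteq A(S_k)$ cross $\mathcal{P}_{k-1}$ backwards to the forced set $Z_{k-1}:=\mathcal{P}_{k-1}^{-1}(Y_k)\subseteq B(S_{k-1})$, carrying along the corresponding sub-linkage of $\mathcal{P}_{k-1}$, then perform a cluster move through $S_{k-1}$ to a freely chosen $Y_{k-1}\subseteq A(S_{k-1})$, and continue. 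Every step preserves the size $k$ and neighbouring pieces share their interface sets, so the concatenation of all chosen sub-linkages is automatically an $A''$-$B''$-linkage once the chain is made to terminate at exactly $A''$. This is where the single unit of flexibility is spent: at least one cluster on the route is a genuine cluster move whose free set we may dictate --- this is immediate when the source side is $A(S_i)$, when the target side is $B(S_j)$, or when $j\ge i+2$, and the remaining degenerate instance ($j=i+1$ with the route consisting of the single bridge $\mathcal{P}_i$) is verified directly --- and we use this freedom to steer the chain onto $A''$, either as the landing set of the final cluster move through $S_i$ (when $A'=A(S_i)$) or, when $A'=B(S_i)$, by taking the landing set of the cluster move just past $\mathcal{P}_i$ to be $\mathcal{P}_i(A'')$, so that crossing $\mathcal{P}_i$ backwards reaches $A''$.

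The one point that needs genuine care --- and, I expect, the only real obstacle --- is verifying that the assembled pieces form a \emph{linkage}, i.e.\ that they are pairwise vertex-disjoint apart from the shared interface vertices. This is exactly what the disjointness clauses in the definition of a path of well-linked sets are designed to supply: the cluster sub-linkages chosen at different steps live inside the pairwise disjoint subgraphs $S_i,\dots,S_j$; each $\mathcal{P}_k$ is internally disjoint from every cluster and the $\mathcal{P}_i,\dots,\mathcal{P}_{j-1}$ are pairwise disjoint linkages; and within a single cluster a linkage is by definition a set of pairwise disjoint paths. Hence two distinct pieces can meet only at consecutive interface sets, where they meet in matched endpoints, so their union is a linkage of order $k$ contained in $\ToDigraph{\SubPOSS{(\SSS,\PPPP)}{i}{j}}$. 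Since $A''\subseteq A'$ and $B''\subseteq B'$ were arbitrary of equal size, $A'$ is well-linked to $B'$ in $\SubPOSS{(\SSS,\PPPP)}{i}{j}$.
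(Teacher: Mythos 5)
Your proposal follows essentially the same plan as the paper: a telescoping concatenation alternating flexible cluster moves (well-linkedness of $A(S_k)$ to $B(S_k)$ inside $S_k$) with forced bridge moves across the $\mathcal{P}_k$, and the only structural difference is that you iterate from $j$ back to $i$ while the paper inducts on $j-i$ from $i$ forward. Your explicit accounting of the pairwise disjointness of the assembled pieces is a genuine plus; the paper dismisses it with ``Clearly.''

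The one misstep is the claim that the degenerate instance ($j=i+1$, $A'=B(S_i)$, $B'=A(S_j)$) can be ``verified directly.'' It cannot: the only route from $B(S_i)$ to $A(S_{i+1})$ inside $\SubPOSS{\Brace{\mathcal{S},\mathscr{P}}}{i}{i+1}$ is the single forced bridge $\mathcal{P}_i$, which realises one fixed bijection, and nothing in the definition of a path of well-linked sets lets you route from $A(S_{i+1})$ back into $A(S_{i+1})$ through $S_{i+1}$. Concretely, if $S_{i+1}$ is itself just a linkage of $w$ vertex-disjoint paths and one takes $A''=\{b\}\subseteq B(S_i)$ and $B''=\{c\}\subseteq A(S_{i+1})$ with $\mathcal{P}_i(b)\neq c$, there is no $b$-$c$ path at all, so well-linkedness fails. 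You correctly identified this as the one configuration with no spendable cluster move --- the right conclusion is that this configuration must be excluded from the statement, not that it follows directly. For what it is worth, the paper's own proof glosses over exactly the same case (``the other cases follow analogously''), and the lemma is only ever invoked in the paper with $j-i\ge 2$ or with a non-degenerate endpoint pair, so the gap is benign in context.
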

	\begin{proof}
		We show the case where $A' = A(S_i)$ and $B' = B(S_j)$.
		The other cases follow analogously.
		
		Let $X \subseteq A'$ and $Y \subseteq B'$ be sets of size $k$.
		We prove by induction on $j - i$ that there is an $X$-$Y$-linkage of order $k$ in $\Brace{\mathcal{S}, \mathscr{P}}$.
		
		If $j-i = 1$, then let $B_i \subseteq B(S_i)$ be a set of size $k$ and let $A_{j} \subseteq A(S_j)$ the set of size $k$ such that $\mathcal{P}_i(B_i) = A_j$.
		Since $A(S_i)$ is well-linked to $B(S_i)$ in $S_i$, there is an $A'$-$B_i$-linkage $\mathcal{R}_i$ of order $k$ in $S_i$.
		Similarly, there is an $A_j$-$B'$-linkage $\mathcal{R}_j$ in $S_j$.
		Let $\mathcal{R}_i' \subseteq \mathcal{P}_i$ be the paths of $\mathcal{P}_i$ such that $\Start{\mathcal{R}_i'} = \End{\mathcal{R}_i}$.
		Clearly, $\mathcal{R}_i \cdot \mathcal{R}_i' \cdot \mathcal{R}_j$ is an $A'$-$B'$-linkage of order $k$.
		
		Now consider the case where $j - i > 1$.
		Choose any subset $B_i \subseteq B(S_{i})$ of order $\Abs{A'} = k$.
		As before there is an $A'$-$B_i$-linkage $\mathcal{R}_1$ of order $k$ in $S_i$.
		Let $\mathcal{R}_2 \subseteq \mathcal{P}_i$ be the linkage with $\Start{\mathcal{R}_2} = \End{\mathcal{R}_1}$.
		Note that $\End{\mathcal{R}_2} \subseteq A(S_{i+1})$.
		By induction, there is an $\End{\mathcal{R}_2}$-$B'$-linkage $\mathcal{R}_3$ of order $k$, and so $\mathcal{R}_1 \cdot \mathcal{R}_2 \cdot \mathcal{R}_3$ is an $A'$-$B'$-linkage of order $k$, as desired.
	\end{proof}
	
	Before we can use~\cref{thm:routing star or path} to construct our fence, we need to lift the connectivity provided by \(H\)-routings on digraphs to temporal digraphs.

	\begin{lemma}
		\label{statement:lifting-H-routings-to-temporal-digraphs}
		Let \(T\) be a temporal digraph with \(\Lifetime{T} \geq \Abs{\V{T}}\) and let \(H\) be a digraph.
		If there is a function \(\varphi : \V{H} \to \V{T}\) which is an \(H\)-routing for every layer of \(T\), then \(\varphi\) is an \(H\)-routing for \(T\).
	\end{lemma}
	\begin{proof}
		Let \(P\) be a path in \(H\).
		Let \(u = \varphi(\Start{P})\) and \(v = \varphi(\End{P})\).
				Let \(X = \Image{\varphi} \setminus \varphi(\V{P})\).
		We show that there is a temporal \(u\)-\(v\)-path in \(T\) which is disjoint from \(X\).
		For each \(1 \leq i \leq \Lifetime{T}\) let \(V_i\) be the set of vertices temporally reachable from \(u\) in at most \(i\) \timesteps when starting at \timestep one without visiting any vertex in \(X\).
		To simplify notation, let \(V_0 = \emptyset\).
		We prove that, if \(v \notin V_i\), then \(V_{i} \neq V_{i - 1}\).

		For \(i = 1\) the statement is trivial, as \(u \in V_1 \setminus V_0\).
		Now assume \(i > 1\).
		Let \(L\) be the \(u\)-\(v\)-path witnessing that \(\varphi\) is an \(H\)-routing in \(\Layer{T}{i}\).
		If \(\V{L} \subseteq V_{i - 1}\), then we are done and \(u\) can temporally reach \(v\) in \(T\) without intersecting \(X\).

		Otherwise, as \(u \in V_{i - 1}\), there is some \(w_1 \in V_{i - 1} \setminus \V{L}\) and some \(w_2 \in \V{L} \setminus V_{i - 1}\) such that \((w_1, w_2) \in \A{\Layer{T}{i}}\).
		Let \(R\) be a temporal \(u\)-\(w_1\)-path in \(T\) ending on \timestep \(i - 1\) without intersecting \(X\).
		Then by adding the arc \((w_1, w_2)\) on \timestep \(i\) to \(R\), we obtain a temporal \(u\)-\(w_2\)-path.
		Further, as \(R\) comes from the \(H\)-routing, it does not contain any vertex from \(X\).
		Hence, \(w_2 \in V_{i} \setminus V_{i - 1}\).

		Since \(\Lifetime{T} \geq \Abs{V}\), after \(t = \Abs{V}\) steps we have that \(v \in V_{t}\), and so there is a temporal \(u\)-\(v\)-path in \(T\) which does not intersect \(X\).
	\end{proof}

	We now apply our framework of $\Pk{k}$-routings in temporal digraphs to construct a fence from a path of well-linked sets.
    The idea is first to construct an acyclic grid using $\Pk{k}$-routings and then apply~\cref{obs:acyclic-grid-to-fence} to obtain a fence. 
	We define
	\begin{align*}
		\boundDefAlign{thm:poss-to-fence}{w}{p,q} 
		\bound{thm:poss-to-fence}{w}{p,q} 
		& \coloneqq \bound{thm:routing star or path}{n}{pq + 1, pq + 1},
		\\[0em]
		\bound{thm:poss-to-fence}{\ell}{p,q}
		& \coloneqq (pq(pq + 1)\bound{thm:poss-to-fence}{w}{p,q} + (pq + 1)!) \cdot \binom{\bound{thm:poss-to-fence}{w}{p,q}}{pq + 1} + pq(pq + 1) 
	\end{align*}
	Observe that \(\bound{thm:poss-to-fence}{w}{p,q} \in \Oh(p^{5} q^{5})\) and \(\bound{thm:poss-to-fence}{\ell}{p,q} \in \PowerTower{1}{\Polynomial{15}{p, q}}\).

	\begin{figure}[!ht]
        \resizebox{\textwidth}{!}{		\includegraphics{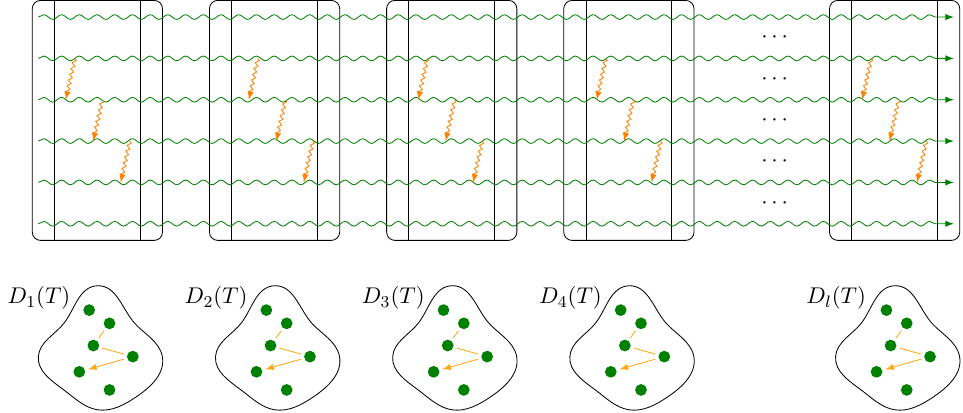}}
		\caption{
			For each cluster $S_i$, we obtain a digraph $D_{i+1}(T)$ from the temporal digraph $T$.
			Every $D_i$ contains a path of length $k_1$ or a $\biK{k_1}$-routing, which means it contains a $\Pk{k_1}$-routing in any case.
 			As there are enough clusters, we can find $k_4-1$ agreeing on the vertices and their order, shown in orange.}
		\label{fig:poss-to-fence}
	\end{figure}

	\begin{theorem}
		\label{thm:poss-to-fence}
		Every path of well-linked sets $\Brace{\mathcal{S} = (S_{0}, S_{1}, \ldots, S_{\ell}), \mathscr{P}}$ of width at least $\bound{thm:poss-to-fence}{w}{p,q}$ and length $\ell \geq \bound{thm:poss-to-fence}{\ell}{p,q}$ contains a $(p,q)$-fence \((\mathcal{P}, \mathcal{Q})\).
		Moreover, \(\Start{\mathcal{P}} \subseteq A(S_0)\) and \(\End{\mathcal{P}} \subseteq B(S_\ell)\).
	\end{theorem}
	\begin{proof}
		Let $k_2 = pq + 1$,
		$k_1 = \bound{thm:routing star or path}{n}{k_2, k_2}$,
		\(\ell_7 = k_1 k_2 (k_2 - 1)\),
		\(\ell_6 = \binom{k_1}{k_2}\),
		\(\ell_5 = \ell_6 \ell_7\),
		\(\ell_4 = k_2(k_2 - 1)\),
		\(\ell_3 = \binom{k_1}{k_2} \cdot (k_2)!\),
		\(\ell_2 = \ell_3 + \ell_4\), and
		\(\ell_1 = \ell_2 + \ell_5\).
		Note that \(\bound{thm:poss-to-fence}{\ell}{p,q} = \ell_1\) and \(\bound{thm:poss-to-fence}{w}{p,q} = k_1\).

		Let $D = \ToDigraph{\Brace{\mathcal{S}, \mathscr{P}}}$.
		Let $\Brace{S_0, S_1, \dots, S_{\ell}} = \mathcal{S}$ and let $\mathcal{L}$ be an $A(S_0)$-$B(S_\ell)$-linkage of order $k_1$ within $\Brace{\mathcal{S}, \mathscr{P}}$.
		By~\cref{lem:poss-simple-routing}, such a linkage exists.

		Let \(T\) be the routing temporal digraph of \(\mathcal{L}\) through \(\mathcal{S}\) (see~\cref{fig:poss-to-fence}).
		Since $A(S_i)$ is well-linked to $B(S_i)$ and every path in $\mathcal{L}$ must intersect both $A(S_i)$ and $B(S_i)$ for every $S_i \in \mathcal{S}$, we have that every layer $\Layer{T}{t}$ is strongly connected.

		By~\cref{thm:routing star or path}, every $\Layer{T}{t}$ contains a \(\Pk{k_2}\) subgraph or a $\biK{k_2}$-routing \(\varphi_t\).

		We define a concatenation operation for temporal digraphs as follows.
		Given two temporal digraphs \(T_1 = (V_1, \mathcal{A}_1)\) and \(T_2 = (V_2, \mathcal{A}_2)\),
		we define \(T_1 \cdot T_2 \coloneqq (V_1 \cup V_2, \mathcal{A}_1 \cdot \mathcal{A}_2)\).

		We show that we can decompose \(T\) into
		\begin{equation*}
			T = T_{1,1} \cdot T_{1,2} \cdot \ldots \cdot T_{1,k_2 - 1} \cdot T_{2,1} \cdot \ldots \cdot T_{2,k_2 - 1} \cdot \ldots \cdot T_{k_2,1} \cdot \ldots \cdot T_{k_2, k_2 - 1}
		\end{equation*}
		in such a way that there exists vertices \(u_{1}, u_{2}, \ldots, u_{k_2} \in \V{T}\) such that, for each \(1 \leq i \leq k_2\) and each \(1 \leq j < k_2\), there is a temporal \(u_j\)-\(u_{j + 1}\)-path \(R_{i,j}\) in \(T_{i,j}\) which does not intersect any vertex \(u_a \notin \{u_j, u_{j + 1}\}\).
		As \(\Lifetime{T} \geq \ell_1 = \ell_2 + \ell_5\), we obtain two cases.
        
		\begin{CaseDistinction}
			\Case{There are \(\ell_2\) layers of \(T\) which contain some path on \(k_2\) vertices.}

			There are \(\ell_3 = \binom{k_1}{k_2} \cdot (k_2)!\) many different choices for the path \(\Pk{k_2}\).
			As \(\ell_2 = \ell_3 \cdot \ell_4\), there are \(\ell_4\) layers of \(T\) with exactly the same path \(\Pk{k_2}\).

			Let \(u_{1}, u_{2}, \ldots, u_{k_2}\) be the vertices of the \(\Pk{k_2}\) above sorted according to their order along the path.
			Let \(t_{1,1}, t_{1,2}, \ldots, t_{k_2, k_2 - 1}\) be the \timesteps in which \(\Layer{T}{t_{i,j}}\) contains the path \(\Pk{k_2}\) above.
			Hence, we can decompose \(T\) into temporal digraphs \(T_{i,j}\) where each \(T_{i,j}\) starts at \timestep \(t_{i,j}\) and ends before the next \timestep in the sequence above (or at the end of \(T\)).

			For each \(1 \leq i \leq k_2\) and each \(1 \leq j < k_2\), we choose the arc \((u_{j}, u_{j + 1})\) in \(\Layer{T}{t_{i,j}}\) as the desired path \(R_{i,j}\).
		
			\Case{There are \(\ell_5\) layers of \(T\) which contain some \(\biK{k_2}\)-routing.}
			
			There are \(\ell_6 = \binom{k_1}{k_2}\) different choices for the \(\biK{k_2}\)-routing \(\varphi_t\).
			As \(\ell_5 = \ell_6 \cdot \ell_7\), there are, by the pigeon-hole principle, \(\ell_7\) layers \(I = \{t_1, t_2, \ldots, t_{\ell_7}\}\) with the same \(\biK{k_2}\)-routing \(\varphi\).
			We sort \(I\) such that \(t_a < t_b\) if \(a < b\).

			Let \(f(i,j) = (i - 1)(k_1(k_2 - 1)) + jk_1\).
			Note that \(f(i,j + 1) - f(i,j) = k_1\) and that \(f(i+1,1) - f(i,k_2 - 1) = k_1\).
			For each \(1 \leq i \leq k_2\) and each \(1 \leq j \leq k_2 - 1\), let \(I_{i,j} = \{t_{f(i,j)}, \ldots, t_{f(i,j) + k_1 - 2}\}\) and let \(T_{i,j}\) be the temporal subdigraph of \(T\) obtained by taking only the \timesteps in \(I_{i,j}\).
			As \(\varphi\) is a \(\biK{k_2}\)-routing for every layer of \(T_{i,j}\), by~\cref{statement:lifting-H-routings-to-temporal-digraphs} we have that \(\varphi\) is a \(\biK{k_2}\)-routing for \(T_{i,j}\).
			We choose \(R_{i,j}\) as the temporal \(u_{j}\)-\(u_{j+1}\)-path given by \(\varphi\) in \(T_{i,j}\).
			As \(\varphi\) is a \(\biK{k_2}\)-routing in \(T_{i,j}\), the temporal path \(R_{i,j}\) does not intersect any \(u_a\) other than \(u_j\) and \(u_{j + 1}\).
		\end{CaseDistinction}

		Now, we construct an acyclic \((k_2, k_2)\)-grid as follows.
		Observe that the vertices \(u_{1}, u_{2}, \ldots, u_{k_2}\) defined above correspond to paths \(U_{1}, U_{2}, \ldots, U_{k_2}\) in \(D\).
		Further, $R_{i,j}$ corresponds to a $\V{U_j}$-$\V{U_{j+1}}$-path in \(D\), which we call $Q_{i,j,2}$.
		Define \(\mathcal{P} = \Set{U_{1}, U_{2}, \ldots, U_{k_2}}\).
		Let $Q_{i,j,1}$ be the $\End{Q_{i,j-1,2}}$-$\Start{Q_{i,j,2}}$-path in $U_j$ (to simplify notation, we choose $\End{Q_{i,0,2}}$ as $\Start{Q_{i,1,2}}$).
		
		We now set $Q_i = Q_{i,1,1} \cdot Q_{i,1,2} \cdot Q_{i,2,1} \cdot Q_{i,2,2} \cdot \ldots \cdot Q_{i,k_2 - 1,2}$.
		After constructing all $Q_i$, set $\mathcal{Q} = \Brace{Q_1, Q_2, \dots, Q_{k_2}}$.
		Note that the paths in $\mathcal{Q}$ are pairwise disjoint.
		
		It is immediate from the construction that $(\mathcal{P}, \mathcal{Q})$ is an acyclic $(k_2, k_2)$-grid.
		By~\cref{obs:acyclic-grid-to-fence}, $\ToDigraph{\mathcal{P} \cup \mathcal{Q}}$ contains a $(p,q)$-fence, as desired.
	\end{proof}

    \subsection{Routing through a path of well-linked sets}
    
	We close this section by exhibiting various routing properties of paths of well-linked sets used later.
	We first observe the following simple property.
	
	\begin{observation}
		\label{obs:large-linkage-containing-vertex-well-linked}
		Let $D$ be a digraph and $A, B \subseteq \V{D}$ be sets in $D$ such that $A$ is well-linked to $B$.
		Let $v \in \V{D}$ be a vertex on some $A$-$B$-path.
		Then there is an $(A \cup \Set{v})$-$B$-linkage $\mathcal{L}$ of order $\Abs{A}$ such that $v \in \Start{\mathcal{L}}$.
	\end{observation}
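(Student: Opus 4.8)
The plan is to take a single $A$-$B$-linkage of order $\Abs{A}$ supplied directly by the definition of well-linkedness and then reroute one of its paths so that it begins at $v$, using the $A$-$B$-path through $v$ as the patch for the surgery. If $v \in A$ the statement is immediate: by well-linkedness applied with $A' = A$ there is an $A$-$B$-linkage $\mathcal{L}$ of order $\Abs{A}$ (such a linkage exists only when $\Abs{B} \ge \Abs{A}$, which must hold, for otherwise the statement is vacuous), and since its $\Abs{A}$ paths are pairwise disjoint and each starts in $A$, we get $\Start{\mathcal{L}} = A \ni v$. So assume $v \notin A$, fix such an $\mathcal{L}$ (so $\Start{\mathcal{L}} = A$ and $\End{\mathcal{L}} \subseteq B$), and fix an $A$-$B$-path $P$ with $v \in \V{P}$. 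Let $b^\star$ be the first vertex of $P$, at or after $v$ along $P$, that lies in $B$ (this exists since $\End{P} \in B$, and $b^\star = v$ when $v \in B$), and let $P_2$ be the $v$-$b^\star$-subpath of $P$.

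Now distinguish two cases. If $v \in \V{\mathcal{L}}$, then, $\mathcal{L}$ being a linkage, $v$ lies on a unique path $L_j \in \mathcal{L}$; replacing $L_j$ by its subpath from $v$ to $\End{L_j}$ leaves the other paths untouched and the new path is a subpath of $L_j$, so the result is a family of $\Abs{A}$ pairwise disjoint paths from $A \cup \Set{v}$ to $B$ in which $v$ is now a start vertex, as required. If $v \notin \V{\mathcal{L}}$, let $w$ be the first vertex of $P_2$, read starting from $v$, that lies in $\V{\mathcal{L}} \cup \Set{b^\star}$; this exists because $b^\star$ lies on $P_2$. Should $w \notin \V{\mathcal{L}}$, then $w = b^\star$ and $P_2$ avoids $\mathcal{L}$ entirely, so $\mathcal{L} \cup \Set{P_2}$ is a linkage of order $\Abs{A}+1$ from $A \cup \Set{v}$ to $B$, and discarding any one of the paths of $\mathcal{L}$ yields the claimed order-$\Abs{A}$ linkage with $v$ a start. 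Otherwise $w$ lies on some $L_j \in \mathcal{L}$; letting $Q$ be the path that follows $P_2$ from $v$ to $w$ and then follows $L_j$ from $w$ to $\End{L_j}$, the choice of $w$ as the \emph{first} vertex of $P_2$ on $\V{\mathcal{L}}$ ensures that $Q$ meets the paths of $\mathcal{L}$ only inside $\V{L_j}$ and has no repeated vertex, so replacing $L_j$ by $Q$ in $\mathcal{L}$ again produces $\Abs{A}$ pairwise disjoint paths from $A \cup \Set{v}$ to $B$ with $v$ a start vertex.

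I do not anticipate a genuine obstacle: the substantive content is only the disjointness bookkeeping in the second case, namely checking that the spliced path $Q$ is simple and disjoint from the untouched paths of $\mathcal{L}$, which is precisely what the ``first intersection'' choice of $w$ delivers, together with a routine verification that the degenerate configurations ($v \in B$, $P_2$ a single vertex, or $w$ or $v$ coinciding with an endpoint of a path of $\mathcal{L}$) are all handled verbatim by the same argument. Beyond the definition of well-linkedness and elementary path surgery nothing is required; in particular \cref{thm:menger} is not needed, although one could alternatively derive the existence of an order-$\Abs{A}$ linkage from $A \cup \Set{v}$ to $B$ via Menger (since every $(A\cup\Set{v})$-$B$-separator is an $A$-$B$-separator, hence of size at least $\Abs{A}$) and then argue, by the same rerouting, that such a linkage may be taken to use $v$.
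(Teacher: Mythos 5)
Your proof is correct and follows essentially the same approach as the paper: take an $A$-$B$-linkage of order $\Abs{A}$ and reroute one path through $v$ by following the $v$-$B$ subpath of $P$ until its first intersection with the linkage, then switching onto the hit path. You are somewhat more explicit than the paper about the degenerate cases ($v \in A$, $v$ already on the linkage, the $v$-$B$ subpath avoiding the linkage entirely), which the paper's one-paragraph argument glosses over but which cause no real difficulty.
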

	\begin{proof}
		Let $\mathcal{R}$ be some $A$-$B$-linkage of order $\Abs{A}$.
		Let $P$ be some $A$-$B$-path containing $v$ and let $P'$ be the $v$-$B$-subpath of $P$.
		Let $P''$ be the largest subpath of $P'$ with $\Start{P''} = \Start{P'}$ which is internally disjoint from $\mathcal{R}$ and let $R \in \mathcal{R}$ be the path of $\mathcal{R}$ intersecting $P''$.
		Finally, let $R'$ be the $\End{P''}$-$\End{R}$ subpath of $R$.
		It is now immediate that $\mathcal{R}' \coloneqq (\mathcal{R} \setminus \Set{R}) \cup \Set{P'' \cdot R'}$ is a linkage of order $\Abs{R}$ with $v \in \Start{\mathcal{R}'}$.
	\end{proof}
	
	When working with paths of well-linked sets later on, we are often in a situation where we are given two equal-sized sets $X$ and $Y$ of vertices in a path of well-linked sets $(\SSS, \PPPP)$ and we want to find a linkage connecting $X$ to $Y$ within $(\SSS, \PPPP)$.
    In the next~\namecref{lem:linkage_inside_poss}, we identify several cases in which these linkages are guaranteed to exist.
    This~\namecref{lem:linkage_inside_poss} is frequently applied in the next steps of the proof. 
	
	\begin{lemma}
        \label{lem:linkage_inside_poss}
        Let $\Brace{\mathcal{S} = \Brace{S_0, S_1, \dots, S_{\ell}}, \mathscr{P} = \Brace{\mathcal{P}_0, \mathcal{P}_1, \dots, \mathcal{P}_{\ell - 1}}}$ be a path of well-linked sets of width  $w$ and length $\ell$.
		Let $X,Y \subseteq \V{(\mathcal{S}, \mathscr{P})}$ such that $\Abs{X} = \Abs{Y} = k$.
		Let $f : X \cup Y \rightarrow \N$ be a function such that $v \in S_{\Fkt{f}{v}} \cup \mathcal{P}_{\Fkt{f}{v}}$ for all $v \in X \cup Y$.
		There is an $X$-$Y$-linkage $\mathcal{L}$ in $(\mathcal{S}, \mathscr{P})$ if $\Fkt{f}{x} \leq \Fkt{f}{y} - 2$ for all $x \in X$ and all $y \in Y$ and at least one of the following is true:
		\begin{enamerate}{N}{last-item-linkage-inside-POSS}
			\item there are $0 \leq i < j \leq \ell$ such that $X \subseteq B(S_{i})$ and $Y \subseteq A(S_{j})$,
                \label{case:linkage_inside_poss_B_A}
			\item $\Abs{\Fkt{f}{x_1} - \Fkt{f}{x_2}} \geq 2$ for all $x_1, x_2 \in X$ with $x_1 \neq x_2$ and there is some $0 \leq i \leq \ell$ such that $Y \subseteq A(S_{i})$,
		          \label{case:linkage_inside_poss_scattered_A}
			\item $\Abs{\Fkt{f}{y_1} - \Fkt{f}{y_2}} \geq 2$ for all $y_1, y_2 \in Y$ with $y_1 \neq y_2$ and there is some $0 \leq i \leq \ell$ such that $X \subseteq B(S_{i})$, or
		          \label{case:linkage_inside_poss_B_scattered}
			\item $\Abs{\Fkt{f}{x_1} - \Fkt{f}{x_2}} \geq 2$ for all $x_1, x_2 \in X$ with $x_1 \neq x_2$ and $\Abs{\Fkt{f}{y_1} - \Fkt{f}{y_2}} \geq 2$ for all $y_1, y_2 \in Y$ with $y_1 \neq y_2$.
                \label{case:linkage_inside_poss_scattered_scattered}
                \label{last-item-linkage-inside-POSS}
		\end{enamerate}
		Furthermore, $\mathcal{L}$ is contained inside $\SubPOSS{\Brace{\mathcal{S}, \mathscr{P}}}{i}{j}$ where $i$ is minimal with $S_i$ containing a vertex from $X$ and $j$ is maximal with $S_j$ containing a vertex from $Y$.
	\end{lemma}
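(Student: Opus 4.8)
The plan is to reduce all four cases to the well‑linkedness statement of \cref{lem:poss-simple-routing} together with one ``monotone routing'' argument that threads several strands through the clusters simultaneously, from left to right. First note that in each case $k\le w$: in (L1)--(L3) this is forced because $X$ or $Y$ is contained in an $A$‑ or $B$‑set, and in (L4) it holds because every vertex of $X$ precedes every vertex of $Y$, so any $X$--$Y$ path must cross a connecting linkage $\mathcal{P}_m$, which has order $w$. Case (L1) is then immediate: there $X\subseteq B(S_i)$ and $Y\subseteq A(S_j)$ with $i<j$ (in fact $j\ge i+2$), and \cref{lem:poss-simple-routing} gives that $B(S_i)$ is well‑linked to $A(S_j)$ in $(\mathcal{S},\mathscr{P})[i,j]$, which yields an $X$--$Y$-linkage of order $k$ there; since $X\subseteq V(S_i)$ and $Y\subseteq V(S_j)$, the indices $i,j$ are exactly the extremal ones demanded in the last sentence of the statement.

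For the remaining cases I would isolate the following auxiliary fact. Let $(\mathcal{S}',\mathscr{P}')$ be a path of well‑linked sets of width $w$, let $X'=\Set{x_1',\dots,x_k'}$ with $x_a'\in A(S'_{p_a})$ and $p_1\le\dots\le p_k$ (with distinct vertices whenever two indices coincide), let $Y'=\Set{y_1',\dots,y_k'}$ with $y_a'\in B(S'_{q_a})$ and $q_1\le\dots\le q_k$, and assume $k\le w$ and $p_a\le q_a$ for all $a$; then there is an $X'$--$Y'$-linkage in $(\mathcal{S}',\mathscr{P}')[p_1,q_k]$. I would prove this by sweeping over the clusters $m=p_1,\dots,q_k$, maintaining a family of pairwise disjoint partial paths, one per ``active'' strand, strand $a$ being active for $m\in[p_a,q_a]$; at cluster $S'_m$ the active strands enter $A(S'_m)$ at a set of size $r_m:=\Abs{\Set{a:p_a\le m\le q_a}}\le k\le w$ (the ones activated at $m$ at their $x_a'$, the continuing ones at the $\mathcal{P}'_{m-1}$-images of their exit vertices from $S'_{m-1}$), and I would route \emph{all} of them through $S'_m$ at once using that $A(S'_m)$ is well‑linked to $B(S'_m)$, choosing the exit set inside $B(S'_m)$ so that it contains the endpoints $\Set{y_a':q_a=m}$ that must terminate there and so that $\mathcal{P}'_m$ carries its ``free'' part to vertices of $A(S'_{m+1})$ disjoint from the $x_a'$ activating at $m+1$. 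A short counting argument — the index sets involved are disjoint subsets of $\Set{1,\dots,k}$ — shows such a choice always exists, and since the statement only asks for \emph{some} $X'$--$Y'$-linkage, the arbitrary re‑matching of strands produced inside a cluster does no harm; bijectivity of the overall $X'$‑to‑$Y'$ correspondence is preserved cluster by cluster.

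To deduce (L2), (L3) and (L4) I would push the spread‑out endpoints one step toward the middle and then invoke the auxiliary fact. Replace each $x\in X$ by the far endpoint $x'$ of a forward path out of $x$: if $x$ already lies in $A(S_{f(x)})$ take $x'=x$, otherwise $x$ lies on a strand of $\mathcal{P}_{f(x)}$ (or in $B(S_{f(x)})$) and we follow that strand to a vertex $x'\in A(S_{f(x)+1})$; this anchors $X'$ in $A$‑sets at indices $p_a\in\Set{f(x_a),f(x_a)+1}$. Symmetrically replace each $y\in Y$ by a vertex $y'\in B(S_{q_a})$ reached by one backward step, $q_a\in\Set{f(y_a)-1,f(y_a)}$; in (L2), where $Y\subseteq A(S_j)$, simply take $Y'=\mathcal{P}_{j-1}^{-1}(Y)\subseteq B(S_{j-1})$, and in (L3), where $X\subseteq B(S_i)$, take $X'=\mathcal{P}_i(X)\subseteq A(S_{i+1})$. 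The hypothesis $f(x)\le f(y)-2$ guarantees $p_a\le q_a$ for every $a$ (indeed every $p_a$ lies strictly below every $q_b$), the ``spread by at least $2$'' hypotheses make the $p_a$, resp.\ the $q_a$, strictly increasing, and $k\le w$ was already observed. Applying the auxiliary fact to $(X',Y')$ and concatenating with the one‑step segments gives an $X$--$Y$-linkage; because distinct $x_a$ (resp.\ $y_a$) are separated by at least two indices, their one‑step segments use pairwise disjoint clusters and linkages, and the only way such a segment meets the middle linkage is at its anchor vertex, which is precisely the interface already accounted for in the sweep. Tracking indices shows the whole linkage stays between the first cluster meeting $X$ and the last cluster meeting $Y$, giving the ``furthermore'' conclusion; Menger's theorem (\cref{thm:menger}) can be used in place of some of the explicit routing but is not essential.

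The main obstacle is the simultaneous cluster‑by‑cluster routing in the auxiliary fact: at every cluster one must pick an exit set that simultaneously (i)~realises the strands that have to terminate there, (ii)~interfaces correctly through $\mathcal{P}'_m$ with the strands continuing into the next cluster, and (iii)~never requires more than $w$ disjoint paths, and one must then check that the re‑matchings forced inside clusters still compose to a bijection between $X'$ and $Y'$. A minor subtlety is that a vertex of $X\cup Y$ lying in the interior of a cluster need not be routable to that cluster's $A$‑ or $B$‑set; this is harmless because in every application such a vertex either already lies in an $A$‑ or $B$‑set or sits on one of the connecting linkages, and is therefore routable one step forward or backward exactly as used above.
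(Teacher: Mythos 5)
Your overall strategy matches the paper's: case (L1) falls straight out of \cref{lem:poss-simple-routing}, and the remaining cases are handled by sweeping left to right through the clusters and growing the linkage as you go. The main organizational difference is that you package the sweep as a single, general auxiliary routing fact and then reduce (L2)--(L4) to it by pushing every endpoint one step into a nearby $A$- or $B$-set, whereas the paper handles each scattered endpoint \emph{inside its own cluster} using \cref{obs:large-linkage-containing-vertex-well-linked}: for a vertex $x_r$ lying anywhere on an $A(S_r)$--$B(S_r)$ path, that observation directly produces an $(A(S_r)\cup\{x_r\})$--$B(S_r)$-linkage of order $k_r+1$ with $x_r$ as a start vertex, so the paper simply concatenates these cluster-linkages with the intervening $\mathcal{P}$'s, growing the order by one per scattered vertex. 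For (L4) the paper also separates the gather phase ($\mathcal{X}$ into $B(S_{r_x})$) from the distribute phase ($A(S_{r_y})$ onto $Y$) and glues them with \cref{lem:poss-simple-routing}, which avoids the simultaneous activate/terminate bookkeeping of your sweep. Your auxiliary fact is somewhat more general than needed and its ``short counting argument'' would need to be written out with care, but the idea is sound.

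The one place where you actually need to change something is the ``minor subtlety'' at the end. You flag that a vertex in the interior of a cluster may not be routable to that cluster's $A$- or $B$-set, and you wave this away by appealing to ``every application.'' That is not a valid move inside a proof of the lemma itself, which is stated for arbitrary $v\in S_{f(v)}\cup\mathcal{P}_{f(v)}$. The correct resolution, which the paper uses implicitly, is that the path of well-linked sets can be assumed strict (see the footnote to \cref{lemma:back-linkage intersects cluster by cluster}), so every vertex of a cluster lies on some $A$--$B$ path, and \cref{obs:large-linkage-containing-vertex-well-linked} then does exactly the job your ``push one step forward/backward'' move is trying to do, but without leaving the cluster. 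Replacing your deflection with a direct appeal to strictness and that observation closes the only real hole in the argument.
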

	\begin{proof}
		The case where~\cref{case:linkage_inside_poss_B_A} holds follows directly from~\cref{lem:poss-simple-routing}.
		
		If~\cref{case:linkage_inside_poss_scattered_A} holds, we construct an $X$-$A(S_{i-1})$-linkage of order $k$ as follows.
		We first rename the vertices of $X$ such that $x_r \in \V{S_r}$ for all $x_r \in X$.
		For each $x_r \in X$, let $k_r$ be the number of vertices in $X$ which appear before $x_r$ along $\Brace{\mathcal{S}, \mathscr{P}}$.
		
		By~\cref{obs:large-linkage-containing-vertex-well-linked}, there is an $(A(S_{r}) \cup \Set{x_{r}})$-$B(S_{r})$-linkage $\mathcal{R}_r$ of order $k_r + 1$ in $S_{r}$ such that $x_{r} \in \Start{\mathcal{R}_r}$.
		If $k_r > 0$, then by~\cref{lem:poss-simple-routing} there is an $\End{\mathcal{R}_{r-1}}$-$\Start{\mathcal{R}_{r}}$-linkage $\mathcal{L}_{r-1}$ of order $k_r = \Abs{\mathcal{R}_{r}} - 1$.
		
		Clearly, the concatenation of all $\mathcal{R}_r$ and all $\mathcal{L}_r$ above (in the only order possible) yields an $X$-$B(S_{r'})$-linkage of order $\Abs{X}$, where $r'$ is the smallest index such that all vertices of $X$ appear before $S_{r'}$ along $\Brace{\mathcal{S}, \mathscr{P}}$.
		Now, by~\cref{lem:poss-simple-routing}, we have an $\End{\mathcal{R}_{r' - 1}}$-$Y$-linkage of order $k$, as desired. 		
		The proof for the case where~\cref{case:linkage_inside_poss_B_scattered} holds is analogous to the one of where~\cref{case:linkage_inside_poss_scattered_A} holds and so we omit it.
		
		If~\cref{case:linkage_inside_poss_scattered_scattered} holds, let $r_y$ be the smallest index such that $S_{r_y}$ contains a vertex of $Y$ and let $r_x$ be the largest index such that $S_{r_x}$ contains a vertex of $X$.
		
		Construct linkages $\mathcal{R}_r$ and $\mathcal{L}_r$ as in the proof of the case when~\cref{case:linkage_inside_poss_scattered_A} holds.
		Let $\mathcal{X}$ be the linkage obtained by concatenating all $\mathcal{R}_r$ and all $\mathcal{L}_r$ (in the only possible order) belonging to the vertices of $X$.
		Similarly, let $\mathcal{Y}$ be the linkage obtained by concatenating all $\mathcal{R}_r$ and all $\mathcal{L}_r$ (in the only possible order) belonging to the vertices of $Y$.
		
		Note that $\End{\mathcal{X}} \subseteq B(S_{r_x})$ and that $\Start{\mathcal{Y}} \subseteq A(S_{r_y})$.
		Hence, by~\cref{lem:poss-simple-routing} there is an $\End{\mathcal{X}}$-$\Start{\mathcal{Y}}$-linkage of order $k$, as desired. 	\end{proof}
	
	The last statement we prove in this section helps us to deal with a situation in which we already have a path of well-linked sets $(\SSS \coloneqq (S_0, \dots, S_\ell), \PPPP)$ but we would like to restrict the system so that it \say{starts} at a specific set $A \subseteq A(S_0)$ and ends at some fixed set $B \subseteq B(S_\ell)$. 
	
	\begin{observation}
		\label{obs:restricting_width_poss}
		Let $\Brace{\mathcal{S} = \Brace{S_0, S_1, \dots, S_{\ell}}, \mathscr{P} = \Brace{\mathcal{P}_0, \mathcal{P}_1, \dots, \mathcal{P}_{\ell - 1}}}$ be a path of well-linked sets of width at least $w$ and length $\ell$.
		Let $A_{0} \subseteq A(S_0)$ and $B_{\ell} \subseteq B(S_{\ell})$ with $\Abs{A(S_{0})} = \Abs{B_\ell} = w$.
		Then, $\Brace{\mathcal{S}, \mathscr{P}}$ contains a path of well-linked sets $(\mathcal{S}' = \Brace{S'_0, S'_1, \dots, S'_{\ell}}, \mathscr{P}' = \Brace{\mathcal{P}'_0, \mathcal{P}'_1, \dots, \mathcal{P}'_{\ell - 1}})$ of width $w$ and length $\ell$ such that $B(S_\ell') = B_\ell$, $A(S_0') = A_0$, $S'_i \subseteq S_i$ for all $0 \leq i \leq \ell$ and $\mathcal{P}'_i \subseteq \mathcal{P}_i$ for all $0 \leq i < \ell$.
	\end{observation}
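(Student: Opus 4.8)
The plan is to prove this by a straightforward restriction argument: keep each cluster essentially unchanged, cut the prescribed sets $A_0$ and $B_\ell$ down to be the $A$-set of the first cluster and the $B$-set of the last cluster, and propagate a choice of width-$w$ subsets through the linkages $\mathcal{P}_i$. The one fact that makes everything go through is that well-linkedness is inherited by equal-sized subsets: if $A$ is well-linked to $B$ in a digraph $D$ and $A_1 \subseteq A$, $B_1 \subseteq B$ with $\Abs{A_1} = \Abs{B_1}$, then $A_1$ is well-linked to $B_1$ in $D$, since for any $A_1' \subseteq A_1$ and $B_1' \subseteq B_1$ of equal size one has $A_1' \subseteq A$ and $B_1' \subseteq B$, so \cref{def:well-linked} applied to $A$ and $B$ directly supplies the required $A_1'$-$B_1'$-linkage in $D$.

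I would then construct $\Brace{\mathcal{S}', \mathscr{P}'}$ by processing the clusters from left to right. Set $A(S_0') \coloneqq A_0$. For $i = 0, 1, \dots, \ell-1$, assuming a width-$w$ set $A(S_i') \subseteq A(S_i)$ has been fixed, pick any $w$-element subset $B(S_i') \subseteq B(S_i)$, let $\mathcal{P}_i' \subseteq \mathcal{P}_i$ consist of the $w$ paths of $\mathcal{P}_i$ whose first vertex lies in $B(S_i')$ (possible since $\Start{\mathcal{P}_i} = B(S_i) \supseteq B(S_i')$), and set $A(S_{i+1}') \coloneqq \End{\mathcal{P}_i'} \subseteq A(S_{i+1})$. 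After the loop, set $B(S_\ell') \coloneqq B_\ell$. For each $0 \leq i \leq \ell$ the sets $A(S_i')$ and $B(S_i')$ are width-$w$ subsets of $A(S_i)$ and $B(S_i)$ respectively, hence disjoint (as $A(S_i) \cap B(S_i) = \emptyset$), and by the inheritance fact together with the well-linkedness of $A(S_i)$ to $B(S_i)$ in $S_i$, the set $A(S_i')$ is well-linked to $B(S_i')$ in $S_i$; I let $S_i'$ be a minimal subgraph of $S_i$ with this property.

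Finally I would check that $\Brace{\mathcal{S}' = \Brace{S_0', \dots, S_\ell'}, \mathscr{P}' = \Brace{\mathcal{P}_0', \dots, \mathcal{P}_{\ell-1}'}}$ satisfies \cref{def:path-of-well-linked-sets} with width $w$ and length $\ell$: the $S_i'$ are pairwise disjoint because $S_i' \subseteq S_i$ and the $S_i$ are; the well-linkedness inside each cluster was arranged in the previous step; and each $\mathcal{P}_i'$ is a $B(S_i')$-$A(S_{i+1}')$-linkage of order $w$ which is internally disjoint from $S_i', S_{i+1}'$ and disjoint from every other $S_j'$, all inherited from the corresponding property of $\mathcal{P}_i$ relative to the $S_j$ together with $S_j' \subseteq S_j$. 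The required identities $A(S_0') = A_0$, $B(S_\ell') = B_\ell$, $S_i' \subseteq S_i$ and $\mathcal{P}_i' \subseteq \mathcal{P}_i$ then hold by construction. I do not expect a real obstacle here; the only things to get right are the subset-inheritance of well-linkedness, the bookkeeping that the disjointness conditions survive the restriction, and the (harmless) asymmetry of fixing $A_0$ at the very first cluster and $B_\ell$ only at the very last, so that no single linkage $\mathcal{P}_i'$ is forced to connect both prescribed sets.
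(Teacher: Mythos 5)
Your proof is correct and follows essentially the same route as the paper's: fix $A_0$, pick arbitrary width-$w$ subsets $B(S_i') \subseteq B(S_i)$, propagate $A(S_{i+1}')$ through the restricted linkages $\mathcal{P}_i' \subseteq \mathcal{P}_i$, fix $B_\ell$ at the end, and invoke the subset-inheritance of well-linkedness to get well-linkedness inside each restricted cluster. The only (immaterial) difference is that the paper chooses $S_i'$ as a \emph{maximal} subgraph in which $A(S_i')$ is well-linked to $B(S_i')$ and every vertex lies on an $A(S_i')$-$B(S_i')$-path (so the result is strict), whereas you choose a \emph{minimal} such subgraph; both satisfy the observation as stated.
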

	\begin{proof}
		For each $0 \leq i < \ell$ choose some $B_{i} \subseteq B(S_i)$ of size $w$ and let $\mathcal{P}'_i \subseteq \mathcal{P}_i$ be such that $\Start{\mathcal{P}_i'} = B_{i}$.
		For each $1 \leq i \leq \ell$ let $A_{i} = \mathcal{P}_{i-1}'(B_{i-1})$.
		
		For each $0 \leq i \leq \ell$ let $S_i' \subseteq S_i$ be a maximal subgraph of $S_i$ such that $A_{i}$ is well-linked to $B_{i}$ in $S_i'$ and for each $v \in \V{S_i'}$ there is some $A_{i}$-$B_{i}$-path $P$ in $S_i'$ containing $v$.
		Clearly, if no such path $P$ exists for some vertex $v$, then we can remove $v$ from $S_i'$ while preserving the property that $A_{i}$ is well-linked to $B_{i}$.
		Hence, such a subgraph $S_i'$ exists.
		We then set $A(S_i') \coloneqq A_{i}$ and $B(S_i') \coloneqq B_{i}$.
		
		By construction, $\Brace{\Brace{S_0', \dots, S_{\ell}'}, \Brace{\PPP_0', \dots, \PPP_{\ell-1}'}}$ is a path of well-linked sets of width $w$ and length $\ell$, as desired.
	\end{proof}

	\section{Constructing a path of well-linked sets}
	\label{sec:constructing-pows}
	
	We show how to obtain a path of well-linked sets from splits and segmentations by using the results from~\cref{sec:temporal}, where we defined the \emph{routing temporal digraph} of a linkage $\mathcal{L}$ through a sequence of disjoint digraphs $H_1, H_2, \dots, H_{t}$.

	In order to construct the routing temporal digraph, the linkage $\mathcal{L}$ must intersect all $H_i$ in an \emph{ordered} fashion.
	This means that, if one of the linkages in a web $(\mathcal{H}, \mathcal{V})$ is \emph{ordered} with respect to the other, then we can construct such a routing temporal digraph.
	This leads us to the following definition of \emph{ordered web} (see~\cref{fig:ordered-web} for an example of an ordered web).

	\begin{definition}
		\label{def:ordered_web}
		Let $\Brace{\mathcal{H}, \mathcal{V}}$ be an $\Brace{h,v}$-web.
		We say that $\Brace{\mathcal{H}, \mathcal{V}}$ is an \emph{ordered web} if there is an ordering of $\mathcal{V} = \Brace{V_1, V_2, \dots, V_{v}}$ for which each path $H \in \mathcal{H}$ can be decomposed into $H = H_1 \cdot H_2 \cdot \ldots \cdot H_{v}$ such that $H_i$ intersects $V_j$ if and only if $i = j$.
	\end{definition}

    \begin{figure}[!ht]
        \centering
        \includegraphics{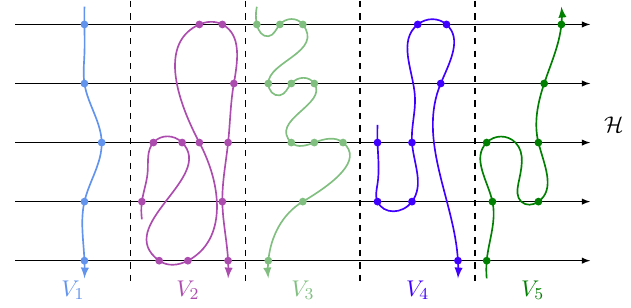}
        \caption{A $(5,3)$-ordered web $(\mathcal{H},\Set{V_1,V_2,V_3,V_4,V_5})$.}
        \label{fig:ordered-web}
    \end{figure}

  \subsection{From ordered webs}

	Next, we show how to construct a path of $1$-order-linked sets from an ordered web using the framework of $H$-routings developed in~\cref{sec:temporal}.
	The statement we present is more general than what we need here,
	but we require this stronger version in \cite{COSSII}.
	We start by defining
	\begin{align*}
		\boundDefAlign{lemma:ordered-web-to-path-of-order-linked-sets-with-forward-linkage}{h}{w}
		\bound{lemma:ordered-web-to-path-of-order-linked-sets-with-forward-linkage}{h}{w} & \coloneqq w^2 - 1,\\[0em]
		\boundDefAlign{lemma:ordered-web-to-path-of-order-linked-sets-with-forward-linkage}{v}{w,\ell}
		\bound{lemma:ordered-web-to-path-of-order-linked-sets-with-forward-linkage}{v}{w,\ell} & \coloneqq 
		(w \ell \cdot \binom{\bound{lemma:ordered-web-to-path-of-order-linked-sets-with-forward-linkage}{h}{w}}{w} \cdot w! + 1) \cdot \bound{theorem:one-way connected temporal digraph contains P_k routing}{\Lifetime{}}{w,\bound{lemma:ordered-web-to-path-of-order-linked-sets-with-forward-linkage}{h}{w}} - 1.
	\end{align*}
	Observe that \(\bound{lemma:ordered-web-to-path-of-order-linked-sets-with-forward-linkage}{v}{w,\ell} \in \PowerTower{1}{\Polynomial{13}{\ell, w}}\).
	
	\begin{lemma}
		\label{lemma:ordered-web-to-path-of-order-linked-sets-with-forward-linkage}
		Let $\Brace{\mathcal{H}, \mathcal{V}}$ be an ordered $(h,v)$-web where $h = \bound{lemma:ordered-web-to-path-of-order-linked-sets-with-forward-linkage}{h}{w}$ and $v \geq \bound{lemma:ordered-web-to-path-of-order-linked-sets-with-forward-linkage}{v}{w,\ell}$.
		Then $(\mathcal{H}, \mathcal{V})$ contains a path of $w$-order-linked sets $\Brace{\mathcal{S} = \Brace{ S_0, S_1, \dots, S_{\ell} }, \mathscr{P}}$ of width $w$ and length $\ell$ with the following additional properties.
		\begin{itemize}
			\item There is a $\Start{\mathcal{H}}$-$\End{\mathcal{H}}$-linkage $\mathcal{L} = \mathcal{L}_1 \cdot \mathcal{L}_2 \cdot \mathcal{L}_3$ of order $w$ contained in $\mathcal{H}$ such that $\mathcal{L}_2$ is an $A(S_0)$-$B(S_\ell)$-linkage and both $\mathcal{L}_1$ and $\mathcal{L}_3$ are internally disjoint from $\Brace{\mathcal{S}, \mathscr{P}}$.
			\item There is a linkage $\mathcal{X} \subseteq \mathcal{V}$ of order $\ell + 1$ and a bijection $\pi : \mathcal{S} \rightarrow \mathcal{X}$ such that $A(S_i) \subseteq \V{\pi(S_i)}$ and $\V{\pi(S_i)} \cap \V{(\mathcal{S}, \mathscr{P})} \subseteq \V{S_i}$ for each $0 \leq i \leq \ell$.
		\end{itemize}
	\end{lemma}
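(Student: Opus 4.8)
The plan is to realise the web $(\mathcal{H},\mathcal{V})$ as a routing temporal digraph, harvest many $\Pk{w}$-routings from it, glue these into a \emph{uniform} path of $1$-order-linked sets, and then merge consecutive clusters to boost the order-linkedness to $w$. Since $(\mathcal{H},\mathcal{V})$ is an ordered web with ordering $\mathcal{V}=(V_1,\dots,V_v)$, every $H\in\mathcal{H}$ decomposes as $H=H^1\cdots H^v$ with $\V{H^t}$ meeting $\mathcal{V}$ only in $V_t$, so the routing temporal digraph $T\coloneqq\RTD{\mathcal{H},\mathcal{V}}$ (see \cref{def:routing-temporal-digraph}) is well defined; it has $\Abs{\V{T}}=\Abs{\mathcal{H}}=h=w^2-1$ vertices and lifetime $v$. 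Traversing $V_t$ from its first to its last vertex meets every path of $\mathcal{H}$ and projects to a walk of $\Layer{T}{t}$ through all of $\V{T}$, so every layer of $T$ is unilateral. First I would split the time steps of $T$ into consecutive blocks, each spanning at least $\bound{theorem:one-way connected temporal digraph contains P_k routing}{\Lifetime{}}{h,w}$ columns; by the choice of $v\geq\bound{lemma:ordered-web-to-path-of-order-linked-sets-with-forward-linkage}{v}{w,\ell}$ there are at least $w\ell\cdot\binom{h}{w}\cdot w!$ of them. The temporal sub-digraph $T_j$ of $T$ on the $j$-th block is the routing temporal digraph of the corresponding restrictions of $\mathcal{H}$ through the columns of that block, so it has $h=w^2-1$ vertices, unilateral layers, and the required lifetime; hence \cref{thm-abbr:one-way connected temporal digraph contains P_k routing} gives a $\Pk{w}$-routing in $T_j$ over some $w$ paths of $\mathcal{H}$ (here $h=w^2-1\geq w^2-1$ is what is needed). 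A $\Pk{w}$-routing is determined by the ordered $w$-tuple formed by its image, so there are at most $\binom{h}{w}\cdot w!$ of them; by the pigeon-hole principle a fixed ordered tuple $(L_1,\dots,L_w)$ of paths of $\mathcal{H}$ carrying a fixed $\Pk{w}$-routing occurs in at least $w\ell$ blocks, say in blocks $\mathcal{B}_0,\dots,\mathcal{B}_{w\ell-1}$ in time order.

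For each such block $\mathcal{B}_i$ let $U_i$, $W_i$ be its first and last column. Let $S_i'$ be the union of the columns of $\mathcal{B}_i$ together with, for each $H\in\mathcal{H}$, the subpath of $H$ between its first vertex on $U_i$ and its last vertex on $W_i$. Put $A(S_i')\coloneqq\Set{\text{first vertex of }L_r\text{ on }U_i \mid 1\leq r\leq w}$ and $B(S_i')\coloneqq\Set{\text{last vertex of }L_r\text{ on }W_i \mid 1\leq r\leq w}$, both ordered by $r$. By \cref{lem-abbr:P_k-routing-implies-1-order-linked}, applied to the restrictions of $\mathcal{H}$ through the columns of $\mathcal{B}_i$ with the common $\Pk{w}$-routing on $L_1,\dots,L_w$, the set $A(S_i')$ is $1$-order-linked to $B(S_i')$ in $S_i'$. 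For $0\leq i<w\ell-1$ let $\mathcal{P}_i'$ consist of the segment of each $L_r$ from its last vertex on $W_i$ to its first vertex on $U_{i+1}$: this is a $B(S_i')$-$A(S_{i+1}')$-linkage of order $w$ which, lying on $L_1,\dots,L_w$ strictly between $\mathcal{B}_i$ and $\mathcal{B}_{i+1}$, is internally disjoint from $S_i'$, from $S_{i+1}'$, and from every other cluster. The clusters occupy pairwise disjoint ranges of columns, so $D'\coloneqq(\mathcal{S}'=(S_0',\dots,S_{w\ell-1}'),\mathscr{P}'=(\mathcal{P}_0',\dots,\mathcal{P}_{w\ell-2}'))$ is a path of $1$-order-linked sets of width $w$ and length $\geq w\ell$ (possibly after discarding one end block to absorb an off-by-one); it is \emph{uniform} because $\mathcal{P}_i'$ sends the $r$-th element of $B(S_i')$ to the $r$-th element of $A(S_{i+1}')$, which respects the chosen orderings.

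Now apply \cref{lemma:merging path of order-linked sets} with $c=w$ and $r=1$ to $D'$, obtaining a path of $w$-order-linked sets $(\mathcal{S}=(S_0,\dots,S_\ell),\mathscr{P})$ of width $w$ and length $\ell$ with $S_i\subseteq\SubPOSS{D'}{wi}{w(i+1)-1}$, $A(S_i)\subseteq A(S_{wi}')$, $\mathcal{P}_i\subseteq\mathcal{P}_{(w-1)(i+1)}'$, and $B(S_\ell)$ equal to the set of last vertices of $L_1,\dots,L_w$ on the last column of the last chosen block (using that the widths agree, so the claimed inclusions are equalities). For the first additional property, take $\mathcal{L}\coloneqq\Set{L_1,\dots,L_w}\subseteq\mathcal{H}$, a $\Start{\mathcal{H}}$-$\End{\mathcal{H}}$-linkage of order $w$, and split each $L_r$ at its unique vertex of $A(S_0)=A(S_0')$ and at its unique vertex of $B(S_\ell)$; this gives $\mathcal{L}=\mathcal{L}_1\cdot\mathcal{L}_2\cdot\mathcal{L}_3$ with $\mathcal{L}_2$ an $A(S_0)$-$B(S_\ell)$-linkage, while $\mathcal{L}_1$ lies before the first chosen block and $\mathcal{L}_3$ after the last one, so both are internally disjoint from all clusters and all linkages $\mathcal{P}_i$, hence from $(\mathcal{S},\mathscr{P})$. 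For the second property, set $\pi(S_i)\coloneqq U_{wi}$ and $\mathcal{X}\coloneqq\Set{\pi(S_i)\mid 0\leq i\leq\ell}\subseteq\mathcal{V}$: these are $\ell+1$ distinct columns, $A(S_i)=A(S_{wi}')\subseteq\V{\pi(S_i)}$, and $\V{\pi(S_i)}$ meets $(\mathcal{S},\mathscr{P})$ only inside $S_i$ because no cluster other than $S_i$ and no linkage $\mathcal{P}_{i'}$ touches the column $U_{wi}$.

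The main obstacle is the pigeon-hole step that forces \emph{all} chosen blocks to carry the same $\Pk{w}$-routing on the same $w$ paths $L_1,\dots,L_w$: this is exactly what makes every connecting linkage $\mathcal{P}_i'$ a union of subpaths of $L_1,\dots,L_w$, what makes $D'$ uniform (so that \cref{lemma:merging path of order-linked sets} applies), and what lets us read off $\mathcal{L}$, $\mathcal{X}$ and $\pi$. Everything else — pairwise disjointness of the $S_i'$, internal disjointness of $\mathcal{P}_i'$, $\mathcal{L}_1$ and $\mathcal{L}_3$, and the containment $\V{\pi(S_i)}\cap\V{(\mathcal{S},\mathscr{P})}\subseteq\V{S_i}$ — is routine bookkeeping of which column indices each piece occupies, together with the index tracking supplied by \cref{lemma:merging path of order-linked sets}.
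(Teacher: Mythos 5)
Your proposal follows essentially the same route as the paper's proof: turn the ordered web into a routing temporal digraph, chop the time line into blocks of length $\bound{theorem:one-way connected temporal digraph contains P_k routing}{\Lifetime{}}{h,w}$, harvest a $\Pk{w}$-routing from each block via \cref{thm-abbr:one-way connected temporal digraph contains P_k routing}, pigeon-hole to force a common image tuple $(L_1,\dots,L_w)$, assemble a \emph{uniform} path of $1$-order-linked sets from the agreeing blocks using \cref{lem-abbr:P_k-routing-implies-1-order-linked}, boost to $w$-order-linked via \cref{lemma:merging path of order-linked sets}, and then read off $\mathcal{L}$, $\mathcal{X}$, $\pi$ from the $L_r$'s and the first columns of the chosen blocks. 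This matches the paper's argument step for step; your cluster boundaries ($U_i$ to $W_i$) differ superficially from the paper's (first intersection with the block's start to first intersection with the next block's start), but that is only bookkeeping.

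The one place to tighten is the block count: to apply \cref{lemma:merging path of order-linked sets} with merging factor $w$ and output length $\ell$, you need the intermediate path of $1$-order-linked sets to have length at least $w\ell$, i.e.\ $w\ell+1$ clusters, hence $w\ell+1$ agreeing blocks. You state ``at least $w\ell$ blocks'' and label them $\mathcal{B}_0,\dots,\mathcal{B}_{w\ell-1}$, which is one short of what you need, and the parenthetical ``discarding one end block'' goes the wrong way (you need one more, not one fewer). This is a plain off-by-one: the hypothesis $v\geq\bound{lemma:ordered-web-to-path-of-order-linked-sets-with-forward-linkage}{v}{w,\ell}$ in fact supplies enough columns for $w\ell\binom{h}{w}w!+1$ blocks, so pigeon-hole gives $w\ell+1$ agreeing ones and the merging goes through as stated. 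With that count corrected, the argument is sound and coincides with the paper's.
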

	\begin{proof}
		Let $\ell_1 \coloneqq w \ell + 1$ and let $\ell_2 \coloneqq (\ell_1 - 1) \cdot \binom{h}{w} \cdot w! + 1$.
		
        We define $f(i) \coloneqq (i-1)\cdot \bound{theorem:one-way connected temporal digraph contains P_k routing}{\Lifetime{}}{w, h} + 1$ and observe that $f(i) - f(i-1) = \bound{theorem:one-way connected temporal digraph contains P_k routing}{\Lifetime{}}{w,h}$.
		Let $\Brace{V_1 \cdot V_2 \cdot \ldots \cdot V_{v}} \coloneqq \mathcal{V}$ be an ordering of $\mathcal{V}$ witnessing that $\Brace{\mathcal{H}, \mathcal{V}}$ is an ordered web.
		Observe that $f(\ell_2 + 1) - 1 = \bound{lemma:ordered-web-to-path-of-order-linked-sets-with-forward-linkage}{v}{w,\ell} \leq v$.
		
		Decompose $\mathcal{H}$ into $\mathcal{H} = \mathcal{H}^0 \cdot \mathcal{H}^1 \cdot \ldots \cdot \mathcal{H}^{\ell_2}$, where $\Start{\mathcal{H}_0} = \Start{\mathcal{H}}$, $\End{\mathcal{H}_{\ell_2}} = \End{\mathcal{H}}$ and, for each $1 \leq i \leq \ell_2 - 1$, the sublinkage $\mathcal{H}_i$ starts at the first intersections of $\mathcal{H}$ with $V_{f(i)}$ and ends at the first intersections of $\mathcal{H}$ with $V_{f(i+1)}$.
		For each $1 \leq t \leq \ell_2$, let $\mathcal{V}^t \coloneqq \Brace{V_{f(t)}, V_{f(t) + 1}, \ldots, V_{f(t) + \bound{theorem:one-way connected temporal digraph contains P_k routing}{\Lifetime{}}{w, h} - 1}}$ and let $T_t$ be the routing temporal digraph of $\mathcal{H}$ through $\mathcal{V}^{t}$.
		
		Each layer of each $T_i$ is unilateral since every path in $\mathcal{H}$ intersects every path in $\mathcal{V}$.
		As $\Lifetime{T_i} = \bound{theorem:one-way connected temporal digraph contains P_k routing}{\Lifetime{}}{h, w}$, by~\cref{theorem:one-way connected temporal digraph contains P_k routing} each $T_i$ contains some $\Pk{w}$-routing $\varphi_i$ over some paths of $\mathcal{H}$.
		
		There are at most $\binom{h}{w} \cdot w!$ distinct $\Pk{w}$-routings $\varphi_i$.
		Hence, by the pigeon-hole principle, there is a subset $\mathcal{T} = \Set{T_{t_1}, T_{t_2}, \dots, T_{t_{\ell_1}}}$ of the temporal digraphs above of size $\ell_1$ such that $\varphi \coloneqq \varphi_i = \varphi_j$ for all $T_i, T_j \in \mathcal{T}$.
		
		Let $\Brace{u_1, u_2, \dots, u_{w}}$ be the vertices of $\Pk{w}$ sorted according to their order along $\Pk{w}$.
		For each $i \in \{1, \ldots, \ell_1\}$ let
		\begin{align*}
			S_i' & = \ToDigraph{\mathcal{H}^{t_{i}} \cup \mathcal{V}^{t_{i}}},\\[0em]
			A(S_i') &= \Set{ a_{i,j} \mid 1 \leq j \leq w \text{ and } a_{i,j} \text{ is the first vertex of $\varphi(u_j)$ on $V_{f'(t_{i})}$}} \text{ and }\\[0em]
			B(S_i') &= \Set{ b_{i,j} \mid 1 \leq j \leq w \text{ and } b_{i,j} \text{ is the last vertex of $\varphi(u_j)$ on $V_{f'(t_i + 1) - 1}$}}.
		\end{align*}
		Let $T'_i$ be the routing temporal digraph of $\mathcal{H}^i$ through $\mathcal{V}^i$.
		Since $T'_i$ is isomorphic to $T_i$, the bijection $\varphi$ induces a $\Pk{w}$-routing on $T_i'$ as well.
		By~\cref{lemma:P_k-routing-implies-1-order-linked}, each $A(S_i')$ is $1$-order-linked to $B(S_i')$ in $S_i'$.
		By choice of $b_{i,j}$ and $a_{i+1,j}$, the path $\varphi(u_j)$ contains a $b_{i,j}$-$a_{i+1,j}$-path.
		Hence, for each $1 \leq i \leq \ell_1$ there is a $B(S_{i}')$-$A(S_{i+1}')$-linkage $\mathcal{P}_i'$ such that $(\mathcal{S}' \coloneqq \Brace{ S_1', S_2', \dots, S_{\ell_1}' }, \mathscr{P}' \coloneqq \Brace{ \mathcal{P}_1', \mathcal{P}_2', \dots, \mathcal{P}_{\ell_1 - 1}' })$ is a uniform path of $1$-order-linked sets of width $w$ and length $\ell_1 - 1 = \ell w$.
		
		By~\cref{lemma:merging path of order-linked sets}, $\Brace{\mathcal{S}', \mathscr{P}'}$ contains as a subgraph a uniform path of $w$-order-linked sets $(\mathcal{S} = \Brace{ S_0, S_1, \dots, S_{\ell}}, \mathscr{P} = \Brace{ \mathcal{P}_0, \mathcal{P}_1, \dots, \mathcal{P}_{\ell - 1}})$ of length $\ell$ and width $w$.
		Additionally, for every $0 \leq i \leq \ell$ we have $S_i \subseteq \SubPOSS{(\mathcal{S}', \mathscr{P}')}{wi + 1}{w(i+1)}$, $A(S_i) \subseteq A(S_{wi + 1}')$ and $B(S_i) \subseteq B(S'_{w(i + 1)})$, and for $0 \leq i < \ell$ we have $\mathcal{P}_i \subseteq \mathcal{P}'_{(w-1)(i+1) + 1}$.
		
		By construction of each $S_i'$, we have that $A(S_i) \subseteq \V{V_{f(t_{wi + 1})}}$.
		Let $\mathcal{X} = \{V_{f(t_{wi + 1})} \mid 0 \leq i \leq \ell\}$.
		Define the bijection $\pi : \mathcal{S} \rightarrow \mathcal{X}$ as $\pi(S_i) = V_{f(t_{wi + 1})}$.
		Hence, $\mathcal{X}$ is a linkage of order $\ell + 1$ inside $\mathcal{V}$ such that $A(S_i) \subseteq \V{\pi(S_i)}$ for all $0 \leq i \leq \ell$.
		Furthermore, by construction of each $S_i$, it is immediate that $\V{\pi(S_i)} \cap \V{(\mathcal{S}, \mathscr{P})} \subseteq \V{S_i}$ for each $0 \leq i \leq \ell$.
		
		We construct the linkage $\mathcal{L}$ as follows.
		Let $\mathcal{Q}$ be the image of $\varphi$ and, for each $0 \leq i \leq \ell_2$, let $\mathcal{Q}^i \subseteq \mathcal{H}^i$ be the paths of $\mathcal{H}^i$ which are subpaths of $\mathcal{Q}$.
		
		Let $\mathcal{L}_1 \coloneqq \mathcal{Q}^0$, $\mathcal{L}_2 \coloneqq \mathcal{Q}^1 \cdot \mathcal{Q}^2 \cdot \ldots \cdot \mathcal{Q}^{\ell_2}$ and let $\mathcal{L}_3$ be the $B(S_\ell)$-$\End{\mathcal{Q}}$-linkage inside $\mathcal{Q}$.
		By construction, $\mathcal{L}_1 \cdot \mathcal{L}_2 \cdot \mathcal{L}_3$ is a $\Start{\mathcal{H}}$-$\End{\mathcal{H}}$-linkage of order $w$, $\mathcal{L}_2$ is an $A(S_0)$-$B(S_\ell)$-linkage and both $\mathcal{L}_1$ and $\mathcal{L}_3$ are internally disjoint from $(\mathcal{S}, \mathscr{P})$, as desired.
	\end{proof}
	
	Combining the previous~\namecref{lemma:ordered-web-to-path-of-order-linked-sets-with-forward-linkage} and~\cref{proposition:order-linked to path of well-linked sets} allows us to construct a path of well-linked sets from an ordered web.
  At a later part of our proof, we need additional information about how the linkage $\VVV$ intersects the individual clusters of the path of well-linked sets.
	This is captured by the bijection \(\pi\) in the statement of the following result.
	
	We define
	\begin{align*}
		\boundDefAlign{lemma:ordered-web-to-path-of-well-linked-sets-with-side-linkage}{h}{w, \ell}
		\bound{lemma:ordered-web-to-path-of-well-linked-sets-with-side-linkage}{h}{w, \ell} & {}\coloneqq 
		\bound{lemma:ordered-web-to-path-of-order-linked-sets-with-forward-linkage}{h}{w(\ell + 1)},\\[0em]
				\boundDefAlign{lemma:ordered-web-to-path-of-well-linked-sets-with-side-linkage}{v}{w, \ell}
		\bound{lemma:ordered-web-to-path-of-well-linked-sets-with-side-linkage}{v}{w, \ell} & {}\coloneqq 
		\bound{lemma:ordered-web-to-path-of-order-linked-sets-with-forward-linkage}{v}{w(\ell + 1), \ell}.
	\end{align*}
	Note that \(\bound{lemma:ordered-web-to-path-of-well-linked-sets-with-side-linkage}{h}{w,\ell} \in \Oh(w^{2} \ell^{2})\) and \(\bound{lemma:ordered-web-to-path-of-well-linked-sets-with-side-linkage}{v}{w,\ell} \in \PowerTower{1}{\Polynomial{25}{w, \ell}}\).
	
	        	\begin{corollary}
		\label{lemma:ordered-web-to-path-of-well-linked-sets-with-side-linkage}
		Let $\Brace{\mathcal{H},\mathcal{V}}$ be an ordered $(h,v)$-web such that $h \geq \bound{lemma:ordered-web-to-path-of-well-linked-sets-with-side-linkage}{h}{w, \ell}$ and $v \geq \bound{lemma:ordered-web-to-path-of-well-linked-sets-with-side-linkage}{v}{w, \ell}$.
		Then, there is a path of well-linked sets $\Brace{\mathcal{S} = \Brace{ S_0, S_1, \dots, S_{\ell} }, \mathscr{P}}$ of width $w$ and length $\ell$ in $\ToDigraph{\mathcal{H} \cup \mathcal{V}}$ such that $B(S_{\ell}) \subseteq \End{\mathcal{H}}$.
		Additionally, there is a linkage $\mathcal{X} \subseteq \mathcal{V}$ of order $\ell + 1$ and a bijection $\pi : \mathcal{S} \rightarrow \mathcal{X}$ such that $A(S_i) \subseteq \V{\pi(S_i)}$ and $\V{\pi(S_i)} \cap \V{(\mathcal{S}, \mathscr{P})} \subseteq \V{S_i}$ for each $0 \leq i \leq \ell$.
	\end{corollary}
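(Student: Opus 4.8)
The plan is to combine \cref{lemma:ordered-web-to-path-of-order-linked-sets-with-forward-linkage} with \cref{proposition:order-linked to path of well-linked sets}, and then to absorb the trailing part of the forward linkage produced by the former into the last cluster so that the final $B$-set lands inside $\End{\mathcal{H}}$.

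Set $w_0 \coloneqq w(\ell+1) = \bound{proposition:order-linked to path of well-linked sets}{w}{w, \ell}$ and note that, by the definitions above, $\bound{lemma:ordered-web-to-path-of-well-linked-sets-with-side-linkage}{h}{w, \ell} = \bound{lemma:ordered-web-to-path-of-order-linked-sets-with-forward-linkage}{h}{w_0}$ and $\bound{lemma:ordered-web-to-path-of-well-linked-sets-with-side-linkage}{v}{w, \ell} = \bound{lemma:ordered-web-to-path-of-order-linked-sets-with-forward-linkage}{v}{w_0, \ell}$. After discarding surplus horizontal paths I may assume $h = \bound{lemma:ordered-web-to-path-of-order-linked-sets-with-forward-linkage}{h}{w_0}$, since a sub-web of an ordered web is again an ordered web. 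First I would apply \cref{lemma:ordered-web-to-path-of-order-linked-sets-with-forward-linkage} to $\Brace{\mathcal{H}, \mathcal{V}}$ with width parameter $w_0$ and length $\ell$. This yields a path of $w_0$-order-linked sets $\Brace{\mathcal{S}^0 = \Brace{S_0^0, \dots, S_\ell^0}, \mathscr{P}^0}$ of width $w_0$ and length $\ell$; a $\Start{\mathcal{H}}$-$\End{\mathcal{H}}$-linkage $\mathcal{L} = \mathcal{L}_1 \cdot \mathcal{L}_2 \cdot \mathcal{L}_3$ of order $w_0$ inside $\mathcal{H}$ with $\mathcal{L}_2$ an $A(S_0^0)$-$B(S_\ell^0)$-linkage and $\mathcal{L}_1, \mathcal{L}_3$ internally disjoint from $\Brace{\mathcal{S}^0, \mathscr{P}^0}$; and a linkage $\mathcal{X} \subseteq \mathcal{V}$ of order $\ell + 1$ with a bijection $\pi^0 : \mathcal{S}^0 \to \mathcal{X}$ such that $A(S_i^0) \subseteq \V{\pi^0(S_i^0)}$ and $\V{\pi^0(S_i^0)} \cap \V{\Brace{\mathcal{S}^0, \mathscr{P}^0}} \subseteq \V{S_i^0}$ for every $0 \leq i \leq \ell$. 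Since $w \leq w_0$, every path of $w_0$-order-linked sets is in particular a path of $w$-order-linked sets, so \cref{proposition:order-linked to path of well-linked sets} applies to $\Brace{\mathcal{S}^0, \mathscr{P}^0}$ (whose width is exactly $\bound{proposition:order-linked to path of well-linked sets}{w}{w, \ell}$) and produces a path of well-linked sets $\Brace{\mathcal{S}^1 = \Brace{S_0^1, \dots, S_\ell^1}, \mathscr{P}^1 = \Brace{\mathcal{P}_0^1, \dots, \mathcal{P}_{\ell - 1}^1}}$ of width $w$ and length $\ell$ with $S_i^1 \subseteq S_i^0$, $A(S_i^1) \subseteq A(S_i^0)$, $B(S_i^1) \subseteq B(S_i^0)$ and $\mathcal{P}_i^1 \subseteq \mathcal{P}_i^0$.

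Next I would attach to the last cluster the relevant part of $\mathcal{L}_3$. Let $\mathcal{L}_3' \subseteq \mathcal{L}_3$ be the sublinkage whose starting vertices are exactly $B(S_\ell^1)$; this exists because $B(S_\ell^1) \subseteq B(S_\ell^0) = \Start{\mathcal{L}_3}$. Put $\tilde{S}_\ell \coloneqq \ToDigraph{S_\ell^1 \cup \mathcal{L}_3'}$, $A(\tilde{S}_\ell) \coloneqq A(S_\ell^1)$ and $B(\tilde{S}_\ell) \coloneqq \End{\mathcal{L}_3'} \subseteq \End{\mathcal{H}}$. Because $\mathcal{L}_3'$ is internally disjoint from $\Brace{\mathcal{S}^0, \mathscr{P}^0}$ (hence from $S_\ell^0 \supseteq S_\ell^1$) and each of its paths meets $S_\ell^1$ only in its starting vertex, concatenating an $A'$-$B'$-linkage inside $S_\ell^1$ (available since $A(S_\ell^1)$ is well-linked to $B(S_\ell^1)$ there) with the corresponding paths of $\mathcal{L}_3'$ witnesses that $A(\tilde{S}_\ell)$ is well-linked to $B(\tilde{S}_\ell)$ in $\tilde{S}_\ell$. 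I then set $\mathcal{S} \coloneqq \Brace{S_0^1, \dots, S_{\ell - 1}^1, \tilde{S}_\ell}$ and $\mathscr{P} \coloneqq \mathscr{P}^1$; the remaining axioms of a path of well-linked sets hold because $\mathcal{L}_3'$ is internally disjoint from $\Brace{\mathcal{S}^0, \mathscr{P}^0}$ and, away from $B(S_\ell^0)$, from $\mathcal{L}_1 \cup \mathcal{L}_2$, while its endpoints lie in $B(S_\ell^0)$ and in $\End{\mathcal{H}}$, both of which avoid the earlier clusters and the linkages $\mathcal{P}_i^1$. By construction $B(S_\ell) = B(\tilde{S}_\ell) \subseteq \End{\mathcal{H}}$, as required.

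For the side linkage I keep $\mathcal{X}$ and define $\pi$ by $\pi(S_i^1) \coloneqq \pi^0(S_i^0)$ for $i < \ell$ and $\pi(\tilde{S}_\ell) \coloneqq \pi^0(S_\ell^0)$. The inclusion $A(S_i) \subseteq \V{\pi(S_i)}$ is immediate from $A(S_i) \subseteq A(S_i^0) \subseteq \V{\pi^0(S_i^0)}$. For $\V{\pi(S_i)} \cap \V{\Brace{\mathcal{S}, \mathscr{P}}} \subseteq \V{S_i}$, note that a vertex $v$ of $\Brace{\mathcal{S}, \mathscr{P}}$ lies either in $\V{\Brace{\mathcal{S}^1, \mathscr{P}^1}} \subseteq \V{\Brace{\mathcal{S}^0, \mathscr{P}^0}}$ or on $\mathcal{L}_3'$. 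In the first case $v \in \V{\pi^0(S_i^0)} \cap \V{\Brace{\mathcal{S}^0, \mathscr{P}^0}} \subseteq \V{S_i^0}$, and a short case distinction using the pairwise disjointness of the clusters $S_j^0$ and the internal disjointness of each $\mathcal{P}_j^0$ from all clusters forces $v \in \V{S_i^1} \subseteq \V{S_i}$. In the second case $v$ lies on a segment of $\mathcal{L}_3 \subseteq \mathcal{H}$ belonging to the tail region of the construction, whereas every $\pi^0(S_i^0)$ is one of the vertical paths of $\mathcal{V}$ used to build the clusters and sits strictly earlier along $\mathcal{H}$; by the ordered-web structure the two cannot meet, unless $v$ is the starting vertex of an $\mathcal{L}_3'$-path, which lies in $B(S_\ell^1) \subseteq \V{\tilde{S}_\ell}$, so the inclusion holds for $i = \ell$ and the remaining subcases are vacuous. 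The hard part will be precisely this last point: to make it rigorous one has to unwind the construction in \cref{lemma:ordered-web-to-path-of-order-linked-sets-with-forward-linkage} far enough to see that the vertical paths forming $\mathcal{X}$ are disjoint from the trailing segments $\mathcal{L}_3$ of the horizontal linkage; everything else is routine bookkeeping about which subsets survive the two applications and about the disjointness of the extended cluster.
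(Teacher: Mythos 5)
Your proof follows the same two‑step route as the paper's own proof — apply \cref{lemma:ordered-web-to-path-of-order-linked-sets-with-forward-linkage} to obtain a path of order‑linked sets together with the forward linkage $\mathcal{L}=\mathcal{L}_1\cdot\mathcal{L}_2\cdot\mathcal{L}_3$ and the side linkage $\mathcal{X}$, then apply \cref{proposition:order-linked to path of well-linked sets} to pass to a path of well‑linked sets, keeping $\mathcal{X}$ and $\pi$ — but you add a gluing step that the printed proof omits. After the two lemma applications alone, $B(S_\ell)$ lies inside $B(S_\ell^0)$, which by the construction in \cref{lemma:ordered-web-to-path-of-order-linked-sets-with-forward-linkage} sits on an internal vertical path of $\mathcal{V}$, not in $\End{\mathcal{H}}$; the paper's proof never argues the clause $B(S_\ell)\subseteq\End{\mathcal{H}}$ (it only declares the $\mathcal{X}$, $\pi$ conditions immediate). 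Your step of attaching $\mathcal{L}_3'$ to the last cluster and taking $B(\tilde{S}_\ell)=\End{\mathcal{L}_3'}\subseteq\End{\mathcal{H}}$ is exactly what is needed, and it is the same device the paper uses explicitly in the next lemma (\cref{state:ordered-web-to-well-linked-pows}).

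The verification you flag as the hard part is indeed the crux, and your sketch of it is right: because the web is ordered, each $V_j\in\mathcal{X}$ meets only segment $j$ of every horizontal path, while $\mathcal{L}_3$ occupies segments strictly later than the one carrying $B(S_\ell^0)$, hence later than any $V_j\in\mathcal{X}$; the only possible overlap is $\Start{\mathcal{L}_3'}\subseteq B(S_\ell^1)\subseteq\V{\tilde{S}_\ell}$, which lands harmlessly in the last cluster. Your well‑linkedness argument for $\tilde{S}_\ell$ (concatenating an $A'$-$B''$-linkage inside $S_\ell^1$ with the corresponding paths of $\mathcal{L}_3'$, using internal disjointness) and your case analysis for the $\pi$-condition on vertices of $(\mathcal{S}^1,\mathscr{P}^1)$ are both sound. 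In short, this is a correct, and in fact more complete, proof than the one printed.
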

	\begin{proof}
		By~\cref{lemma:ordered-web-to-path-of-order-linked-sets-with-forward-linkage}, $\Brace{\mathcal{H}, \mathcal{V}}$ contains a path of $w$-order-linked sets $(\mathcal{S}' = \Brace{ S'_0, S'_1, \dots, S'_{\ell}}, \mathscr{P}' = \Brace{ \mathcal{P}_0', \mathcal{P}_1', \dots, \mathcal{P}_{\ell - 1}'})$ of width $\ell(w + 1)$ and length $\ell$.
		Further, there is a linkage $\mathcal{X}' \subseteq \mathcal{V}$ of order $\ell + 1$ and a bijection $\pi' : \mathcal{S}' \rightarrow \mathcal{X}'$ such that $A(S_i') \subseteq \V{\pi'(S_i')}$ and $\V{\pi'(S_i')} \cap \V{(\mathcal{S}', \mathscr{P}')} \subseteq \V{S_i'}$ for each $0 \leq i \leq \ell$.
		
		By~\cref{proposition:order-linked to path of well-linked sets}, the path of $w$-order-linked sets $(\mathcal{S}', \mathscr{P}')$ contains a path of well-linked sets $(\mathcal{S} = (S_0, S_1, \dots, S_{\ell}), \allowbreak \mathscr{P} = ( \mathcal{P}_0, \mathcal{P}_1, \dots, \mathcal{P}_{\ell - 1}))$ of length $\ell$ and width $w$.
		Additionally, for each $0 \leq i \leq \ell$ we have that $S_i \subseteq S_i'$ and that $A(S_i) \subseteq A(S_i')$.
		
		Finally, let $\mathcal{X} = \Set{\pi'(S_{i}') \mid 0 \leq i \leq \ell}$ and let $\pi : \mathcal{S} \to \mathcal{X}$ be the bijection given by $\pi(S_i) = \pi'(S_i')$.
		It is immediate that $\mathcal{X}$ and $\pi$ satisfy the desired conditions.
	\end{proof}
	
We can manipulate the path of well-linked sets given by~\cref{lemma:ordered-web-to-path-of-well-linked-sets-with-side-linkage} above to ensure that the extremities of the path of well-linked sets are contained in the extremities of \(\mathcal{H}\).
This is useful when we need the end of the path of well-linked sets to be well-linked to its beginning.

We define
\begin{align*}
	\boundDefAlign{state:ordered-web-to-well-linked-pows}{h}{w, \ell}
	\bound{state:ordered-web-to-well-linked-pows}{h}{w, \ell} & =
		\bound{lemma:ordered-web-to-path-of-well-linked-sets-with-side-linkage}{h}{w, \ell},
	\\[0em]
	\boundDefAlign{state:ordered-web-to-well-linked-pows}{v}{w, \ell}
	\bound{state:ordered-web-to-well-linked-pows}{v}{w, \ell} & =
		\bound{lemma:ordered-web-to-path-of-well-linked-sets-with-side-linkage}{v}{w, \ell + 4w}.
\end{align*}
Note that \(\bound{state:ordered-web-to-well-linked-pows}{h}{w,\ell} \in \Oh(w^{2} \ell^{2} )\) and
\(\bound{state:ordered-web-to-well-linked-pows}{v}{w,\ell} \in \PowerTower{1}{\Polynomial{25}{w, \ell}}\).
\begin{lemma}
	\label{state:ordered-web-to-well-linked-pows}
	Let $\Brace{\mathcal{H}, \mathcal{V}}$ be an ordered $(h,v)$-web.
	If $h \geq \bound{state:ordered-web-to-well-linked-pows}{h}{w, \ell}$ and $v \geq \bound{state:ordered-web-to-well-linked-pows}{v}{w, \ell}$, then $\Brace{\mathcal{H}, \mathcal{V}}$ contains a path of well-linked sets $\Brace{\mathcal{S} = \Brace{ S_0, S_1, \dots, S_{\ell}}, \mathscr{P}}$ of width $w$ and length $\ell$.
	Additionally, \(A(S_0) \subseteq \Start{\mathcal{H}}\) and \(B(S_\ell) \subseteq \End{\mathcal{H}}\).
\end{lemma}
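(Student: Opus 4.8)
### Proof Plan

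The plan is to apply \cref{lemma:ordered-web-to-path-of-well-linked-sets-with-side-linkage} with a slightly larger length parameter, producing a long path of well-linked sets whose last $B$-set already lies in $\End{\mathcal{H}}$, and then to ``trim'' the beginning so that the first $A$-set also lies in $\Start{\mathcal{H}}$. Concretely, first I would invoke \cref{lemma:ordered-web-to-path-of-well-linked-sets-with-side-linkage} with width $w$ and length $\ell' \coloneqq \ell + 4w$ (this is why $\bound{state:ordered-web-to-well-linked-pows}{v}{w,\ell}$ is defined via $\ell + 4w$, and the hypothesis $v \geq \bound{state:ordered-web-to-well-linked-pows}{v}{w,\ell}$ matches $\bound{lemma:ordered-web-to-path-of-well-linked-sets-with-side-linkage}{v}{w, \ell+4w}$ exactly). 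This yields a path of well-linked sets $\Brace{\mathcal{S}'' = \Brace{S_0'', \dots, S_{\ell'}''}, \mathscr{P}''}$ of width $w$ and length $\ell'$, together with a linkage $\mathcal{X} \subseteq \mathcal{V}$ of order $\ell' + 1$ and a bijection $\pi$ with $A(S_i'') \subseteq \V{\pi(S_i'')}$ and $\V{\pi(S_i'')} \cap \V{(\mathcal{S}'', \mathscr{P}'')} \subseteq \V{S_i''}$, and moreover $B(S_{\ell'}'') \subseteq \End{\mathcal{H}}$. So the condition on $B(S_\ell)$ essentially comes for free; the work is all at the front end.

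The key observation for fixing the front is that $A(S_0'')$ sits inside the vertex set of a single path $\pi(S_0'') \in \mathcal{V} \subseteq \mathcal{V}$, and that each column $V \in \mathcal{X}$ starts in $\Start{\mathcal{V}}$. The issue is that $A(S_0'')$ may be in the ``middle'' of $\pi(S_0'')$ rather than at its start, so we cannot directly slide back to $\Start{\mathcal{H}}$ — we are an ordered web, not a folded one, so we only have routing capacity from the start of $\mathcal{H}$ forward. The trick is instead to use the \emph{$\mathcal{H}$-side}: I would take the first $4w$ (or so) clusters $S_0'', \dots, S_{4w-1}''$ and, using \cref{lem:linkage_inside_poss} together with \cref{obs:restricting_width_poss}, re-route so that the new first $A$-set is drawn from the first vertices of $w$ paths of $\mathcal{H}$. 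More precisely, each path in $\mathcal{H}$ decomposes across the columns $V_1, V_2, \dots$; the segment of $\mathcal{H}$ before its intersection with $\pi(S_0'')$ gives, for $w$ suitably chosen paths of $\mathcal{H}$, a linkage $\mathcal{L}_{\text{front}}$ from $\Start{\mathcal{H}}$ to $A(S_0'')$ (this is exactly the linkage $\mathcal{L}_1$ from \cref{lemma:ordered-web-to-path-of-order-linked-sets-with-forward-linkage}, which is internally disjoint from $(\mathcal{S}'', \mathscr{P}'')$). Prepending $\mathcal{L}_{\text{front}}$ to $\mathcal{P}_0''$ (after restricting widths with \cref{obs:restricting_width_poss}) and declaring a new initial cluster, I obtain a path of well-linked sets of width $w$ whose first $A$-set is a subset of $\Start{\mathcal{H}}$; the slack of $4w$ in the length is consumed by this surgery (conservatively — likely one extra cluster genuinely suffices, but $4w$ gives comfortable room and matches the stated bound). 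Throughout, I must check that $\mathcal{L}_{\text{front}}$ is internally disjoint from the retained clusters and from the linkages $\mathcal{P}_i''$, which follows from the disjointness properties guaranteed by \cref{lemma:ordered-web-to-path-of-order-linked-sets-with-forward-linkage} (namely that $\mathcal{L}_1$ is internally disjoint from $(\mathcal{S}, \mathscr{P})$).

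The main obstacle I anticipate is bookkeeping the disjointness: once I reattach the $\mathcal{H}$-prefix linkage $\mathcal{L}_{\text{front}}$ to the front of the path of well-linked sets, I need its paths to avoid all clusters $S_1'', \dots, S_\ell''$ and all the connecting linkages $\mathcal{P}_i''$. Since $\mathcal{L}_{\text{front}}$ is a sub-linkage of $\mathcal{H}$ consisting of initial segments ending at their first intersection with the column $\pi(S_0'')$, and the clusters $S_i''$ for $i \geq 1$ lie in columns strictly to the right of $\pi(S_0'')$ in the ordering of $\mathcal{V}$, the needed disjointness is structural; but writing this out carefully requires tracking exactly which column-indices the various $S_i''$ occupy (these are the $V_{f(t_{wi+1})}$ from the proof of \cref{lemma:ordered-web-to-path-of-order-linked-sets-with-forward-linkage}). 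A clean way to package this is to note that it suffices to restrict attention to a sub-web of $(\mathcal{H}, \mathcal{V})$ obtained by deleting an initial block of columns — i.e. re-invoke \cref{lemma:ordered-web-to-path-of-well-linked-sets-with-side-linkage} on $(\mathcal{H}', \mathcal{V}')$ where $\mathcal{V}'$ is $\mathcal{V}$ with its first few columns removed and $\mathcal{H}'$ is $\mathcal{H}$ truncated accordingly — but this changes $\Start{\mathcal{H}}$, so in the end I think the direct surgery above, with an explicit appeal to the internal disjointness of $\mathcal{L}_1$ in \cref{lemma:ordered-web-to-path-of-order-linked-sets-with-forward-linkage}, is the more honest route. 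Either way, once disjointness is secured, that $A(S_0) \subseteq \Start{\mathcal{H}}$ and $B(S_\ell) \subseteq \End{\mathcal{H}}$ are immediate from the construction, completing the proof.
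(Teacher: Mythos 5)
Your route is genuinely different from the paper's, and in fact mirrors the construction used for the \emph{folded} case (\cref{state:folded-ordered-web-to-well-linked-pows}) rather than the one the paper uses here. The paper never touches $\mathcal{L}_1, \mathcal{L}_3$ in this proof: it works entirely on the column side. After applying \cref{lemma:ordered-web-to-path-of-well-linked-sets-with-side-linkage} with length $\ell_1 = \ell + 4w$, it takes one vertex $a_i$ (the first intersection of $X_i = \pi(S_i')$ with $S_i'$) from each of $S_0', S_2', \dots, S_{2(w-1)}'$, routes a linkage $\mathcal{X}_A$ from $\Start{\mathcal{X}} \subseteq \Start{\mathcal{V}}$ down to $\{a_0, a_2, \dots, a_{2(w-1)}\}$ \emph{inside the columns} $\mathcal{X}$, continues to $A(S_{2w}')$ via \cref{lem:linkage_inside_poss}\ref{case:linkage_inside_poss_scattered_A}, and lets the new $S_0$ be $S_{2w}'$ together with this concatenated linkage; a symmetric construction at the tail uses the $b_i$'s and $\End{\mathcal{X}}$. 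That is exactly what the $+4w$ in $\bound{state:ordered-web-to-well-linked-pows}{v}{w,\ell} = \bound{lemma:ordered-web-to-path-of-well-linked-sets-with-side-linkage}{v}{w, \ell+4w}$ pays for: $2w$ clusters at each end to feed the \cref{lem:linkage_inside_poss} step.

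Your $\mathcal{H}$-side plan faces an obstacle the paper's construction sidesteps. \cref{lemma:ordered-web-to-path-of-well-linked-sets-with-side-linkage} does not expose $\mathcal{L}_1, \mathcal{L}_2, \mathcal{L}_3$ in its statement; only \cref{lemma:ordered-web-to-path-of-order-linked-sets-with-forward-linkage} does, and that lemma hands you a path of \emph{order}-linked sets of width $w(\ell+1)$. Passing through \cref{proposition:order-linked to path of well-linked sets} shrinks the $A$- and $B$-sets to specific subsets, so you would need to re-run the proof of \cref{lemma:ordered-web-to-path-of-well-linked-sets-with-side-linkage} carrying the side linkage along and then check that the sub-linkage of $\mathcal{L}_1$ landing on the shrunk $A(S_0)$ stays internally disjoint from the surviving clusters. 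Your own remark that ``likely one extra cluster genuinely suffices'' for the slack is a tell: your construction would not consume the $4w$ the stated bound provides, which is a strong hint that the paper's construction is a different one. (For what it is worth, the paper's proof as written lands the endpoints in $\Start{\mathcal{V}}, \End{\mathcal{V}}$ rather than the stated $\Start{\mathcal{H}}, \End{\mathcal{H}}$, and the downstream application in \cref{thm:high_dtw_to_POSS_plus_back-linkage} genuinely needs the $\mathcal{H}$-side, so your instinct to work there is not unreasonable; but as a reconstruction of the paper's own argument, your proof diverges.)
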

\begin{proof}
	Let \(\ell_1 = \ell + 4w\).
	By~\cref{lemma:ordered-web-to-path-of-well-linked-sets-with-side-linkage}, $\ToDigraph{\Brace{\mathcal{H}, \mathcal{V}}}$ contains a path of well-linked sets $(\mathcal{S}' = \Brace{ S'_0, S'_1, \dots, S'_{\ell_1}},\allowbreak \mathscr{P}' = \left( \mathcal{P}'_0, \mathcal{P}'_1, \dots, \mathcal{P}'_{\ell_1 - 1}\right))$ of width $w$ and length $\ell_1$.
	Additionally, there is a linkage $\mathcal{X} \subseteq \mathcal{V}$ of order $\ell_1 + 1$ and a bijection $\pi : \mathcal{S}' \rightarrow \mathcal{X}$ such that $A(S_i') \subseteq \V{\pi(S_i')}$ and $\V{\pi(S_i')} \cap \V{(\mathcal{S}', \mathscr{P}')} \subseteq \V{S_i'}$ for each $0 \leq i \leq \ell_1$.

	We construct a path of well-linked sets $(\mathcal{S}, \mathscr{P})$ of width $w$ and length $\ell$ as follows.
	For each $0 \leq i \leq \ell_1$, let $X_i = \pi(S_i')$, let $a_i$ be the first intersection of $X_i$ with $\V{S_i'}$ and let $b_i$ be the last intersection of $X_i$ with $\V{S_i'}$.
	Let $A' = \Set{a_0, a_2, \ldots, a_{2(w - 1)}}$ and $B' = \Set{b_{\ell_1}, b_{\ell_1 - 2}, \ldots, b_{\ell_1 - 2(w - 1)}}$, let $\mathcal{X}_A$ be the $\Start{\mathcal{X}}$-$A'$-linkage of order $w$ inside $\mathcal{X}$ and let $\mathcal{X}_B$ be the $B'$-$\End{\mathcal{X}}$-linkage of order $w$ inside $\mathcal{X}$.

	By~\cref{lem:linkage_inside_poss}\ref{case:linkage_inside_poss_scattered_A}, there is an $A'$-$A(S_{2w}')$-linkage $\mathcal{L}_A$ of order $w$ in $\SubPOSS{(\mathcal{S}', \mathscr{P}')}{0}{2w}$.
    And analogously, there is a $B(S_{\ell_1 - 2w}')$-$B'$-linkage $\mathcal{L}_B$ of order $w$ in $\SubPOSS{(\mathcal{S}', \mathscr{P}')}{\ell_1 - 2w}{\ell_1}$ by \cref{lem:linkage_inside_poss}\ref{case:linkage_inside_poss_B_scattered}.

	Since $\V{X_i} \cap \V{(\mathcal{S}', \mathscr{P}')} \subseteq \V{S_i'}$ for all $0 \leq i \leq \ell_1$, we have that $\mathcal{Y}_A \coloneqq \mathcal{X}_A \cdot \mathcal{L}_A$ is a $\Start{\mathcal{V}}$-$A(S_{2w}')$-linkage of order $w$ and $\mathcal{Y}_B \coloneqq \mathcal{L}_B \cdot \mathcal{X}_B$ is a $B(S_{\ell_1 - 2w}')$-$\End{\mathcal{V}}$-linkage of order $w$.

	Let $S_0 = \ToDigraph{S_{2w}' \cup \mathcal{Y}_A}$, $A(S_0) = \Start{\mathcal{Y}_A}$, $B(S_0) = B(S_{2w}')$, $S_{\ell} = \ToDigraph{S_{2w + \ell}' \cup \mathcal{Y}_B}$, $A(S_{\ell}) = A(S_{2w + \ell}')$ and $B(S_\ell) = \End{\mathcal{Y}_B}$.
	Let $\mathcal{S} = \Brace{S_0, S_{2w + 1}', S_{2w + 2}' \dots, S_{2w + \ell - 1}', S_{\ell}}$ and $\mathscr{P} = (\mathcal{P}_{2w},\allowbreak \mathcal{P}_{2w + 1}, \dots, \mathcal{P}_{2w + \ell - 1})$.
	Clearly, $(\mathcal{S}, \mathscr{P})$ is a path of well-linked sets of width $w$ and length $\ell$.
	Finally, we have $A(S_0) \subseteq \Start{\mathcal{L}_1} \subseteq \Start{\mathcal{V}}$ and $B(S_{\ell}) \subseteq \End{\mathcal{L}_3} \subseteq\End{\mathcal{V}}$.
\end{proof}

	The construction from~\cref{lemma:ordered-web-to-path-of-well-linked-sets-with-side-linkage} above does not guarantee that the paths in $\HHH$ intersect many clusters of the resulting path of well-linked sets.
  The reason is that the layers of the routing temporal digraph constructed from an ordered web are only unilateral and not strongly connected.
  Therefore, there is no guarantee that $\Start{\HHH}$ and $\End{\HHH}$ are well-linked.
	While this does not affect our results,
	this property is certainly useful in other contexts and we show how to obtain it in the next subsection.
	Moreover, in \cite{COSSII} we require this stronger property in some of our steps.

  \subsection{Preserving the horizontal paths in a folded ordered web}
		
	The issue mentioned at the end of the last section is unavoidable
	when working with ordered webs, as the web might be an acyclic grid,
	where it is no longer possible to ``come back'' to a row once we leave it.
	To be able to preserve the paths in the linkage \(\mathcal{H}\),
	we need each path in \(\mathcal{V}\) to provide us a possibility of
	revisiting a path in \(\mathcal{H}\) which we might have intersected earlier in the web.
	This is achieved by the following definition of \emph{folded web}
	(see~\cref{fig:folded-web} for an illustration of the concept).
	
	\begin{definition}
		\label{def:folded-web}
		An $(h,v)$-web $\Brace{\mathcal{H}, \mathcal{V}}$ is a \emph{folded web} if every $V_i \in \mathcal{V}$ can be split as $V_i^a \cdot V_i^b \coloneqq V_i$ such that both $V_i^a$ and $V_i^b$ intersect all paths of $\mathcal{H}$.
  \end{definition}
	
	\begin{figure}[!ht]
	    \centering
	    \includegraphics{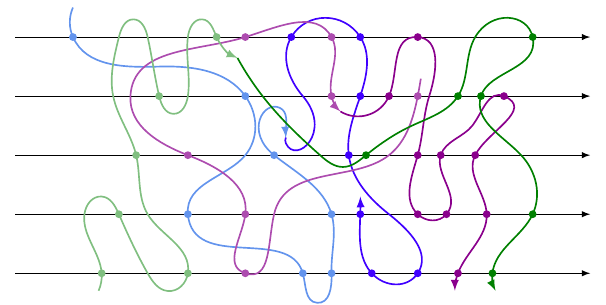}
	    \caption{A folded $\Brace{5,3}$-web.}
	    \label{fig:folded-web}
	\end{figure}

	Folded ordered webs correspond to splits from~\cref{def:split-edge}\ref{def:split}.
    The example shown in~\cref{fig:split_to_folded_order_web} illustrates the connection between splits and folded ordered webs, which we make precise in the following observation.
	
 	 	   \begin{figure}[!ht]
      \centering
      \includegraphics[scale=1.2]{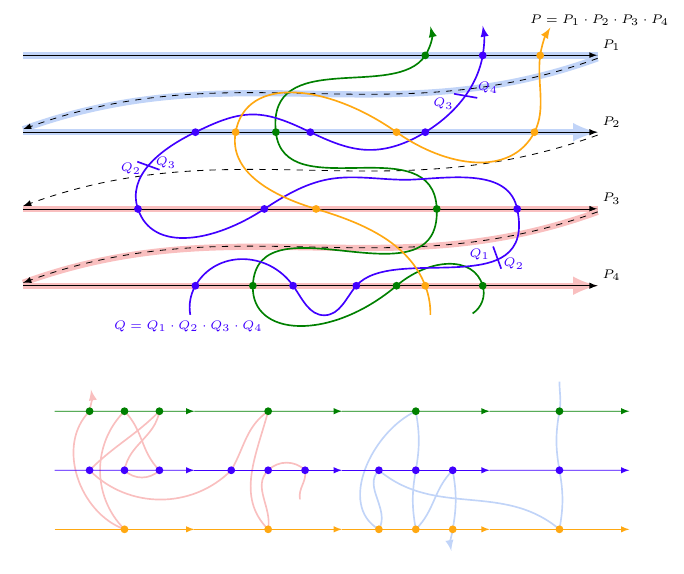}
      \caption{An example of how a $(3,2)$ folded ordered web is obtained from a $(4,3)$-split.}
      \label{fig:split_to_folded_order_web}
  \end{figure}
	
	\begin{observation}
		\label{obs:split-to-folded-web}
		Let $\Brace{\mathcal{P}', \mathcal{Q}'}$ be a $\Brace{2p, q}$-split of $\Brace{\mathcal{P}, \mathcal{Q}}$.
		Then there is some $\mathcal{P}''$ containing only subpaths of $\mathcal{P}$ such that $\Brace{\mathcal{Q}', \mathcal{P}''}$ is a folded ordered $(q, p)$-web.
	\end{observation}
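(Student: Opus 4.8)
\textbf{Proof plan for \cref{obs:split-to-folded-web}.}
The plan is to unwind the definition of a $(2p,q)$-split $(\mathcal{P}', \mathcal{Q}')$ of $(\mathcal{P}, \mathcal{Q})$ given in \cref{def:split-edge}\ref{def:split} and read off the folded ordered web structure directly. Recall that a $(2p,q)$-split comes with a distinguished path $P \in \mathcal{P}$ and edges $e_1, \dots, e_{2p-1} \in E(P) \setminus E(\mathcal{Q}^\star)$ decomposing $P$ as $P = P_1 e_1 P_2 \cdots e_{2p-1} P_{2p}$, so that $\mathcal{P}' = (P_1, \dots, P_{2p})$, and every $Q \in \mathcal{Q}'$ decomposes as $Q = Q_1 e'_1 \cdots e'_{2p-1} Q_{2p}$ with $\emptyset \neq V(Q) \cap V(P_i) \subseteq V(Q_{2p+1-i})$ for all $1 \leq i \leq 2p$. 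So each $Q \in \mathcal{Q}'$ meets all $2p$ subpaths $P_1, \dots, P_{2p}$, and it meets them in the order $P_{2p}, P_{2p-1}, \dots, P_1$ along $Q$ (because $V(Q) \cap V(P_i)$ lives in $Q_{2p+1-i}$, and the $Q_j$ appear in order along $Q$).

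First I would set $\mathcal{P}'' := \mathcal{P}'$ — that is, take the $2p$ subpaths $P_1, \dots, P_{2p}$ of the single path $P \in \mathcal{P}$ as the intended ``$\mathcal{P}''$'' of the statement; these are indeed all subpaths of a path in $\mathcal{P}$. Now I claim $(\mathcal{Q}', \mathcal{P}'')$ is a folded ordered $(q, p)$-web. For the web property: every path in $\mathcal{P}''$ is one of the $P_i$, and every $Q \in \mathcal{Q}'$ intersects every $P_i$ since $V(Q) \cap V(P_i) \neq \emptyset$; this gives a $(|\mathcal{P}''|, |\mathcal{Q}'|) = (2p, q)$-web in the sense of the definition of web, which when we ask for it to be a $(q, p)$-web we reinterpret with $\mathcal{Q}'$ as the ``$\mathcal{H}$'' side of order $q$ and $\mathcal{P}''$ as the ``$\mathcal{V}$'' side; a $(q,p)$-web then only requires $2p$ paths on the $\mathcal{V}$-side, which we have (the count $2p$ versus $p$ is the usual fence-vs-grid doubling, matching how \cref{def:ordered_web} and \cref{def:folded-web} are phrased). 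For the ordered property: fix the ordering $\mathcal{P}'' = (P_1, \dots, P_{2p})$ as the ordering of the $\mathcal{V}$-side; then each $Q \in \mathcal{Q}'$ decomposes as $Q = Q_1 \cdots Q_{2p}$ (after reindexing $Q_j \mapsto Q_{2p+1-j}$ so the labels align with the $P_i$) where $Q_i$ meets $P_j$ iff $i = j$, exactly the condition in \cref{def:ordered_web}. For the folded property: each $P_i$ contains the subpath $P_i \cap Q$ for every $Q$, but more to the point we must split each $P_i \in \mathcal{P}''$ as $P_i^a \cdot P_i^b$ with both halves meeting all of $\mathcal{Q}'$ — and here I would pair up the $2p$ subpaths, i.e. group $P_{2j-1}, P_{2j}$ into a single cluster for $1 \leq j \leq p$, whose two halves $P_{2j-1}$ and $P_{2j}$ each meet every $Q \in \mathcal{Q}'$. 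This realises the $(q,p)$ folded ordered web on the $p$ merged columns.

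The main obstacle here is purely bookkeeping: getting the index conventions (the reversal $i \leftrightarrow 2p+1-i$ coming from the split definition, and the factor-of-two between the $2p$ subpaths of $P$ and the ``$p$'' in the web's parameter) to line up consistently with \cref{def:ordered_web} and \cref{def:folded-web}, and checking that no path of $\mathcal{Q}'$ accidentally revisits a $P_i$ outside its designated block $Q_{2p+1-i}$ — but this last point is exactly the containment $V(Q) \cap V(P_i) \subseteq V(Q_{2p+1-i})$ guaranteed by the split, so it is immediate. There is no hard combinatorial content; the observation is essentially a translation of vocabulary, and the proof is a few lines once the correspondence is spelled out.
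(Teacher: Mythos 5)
Your proposal arrives at the paper's construction, but the exposition contradicts itself on the way there. You begin by declaring $\mathcal{P}'' := \mathcal{P}'$, i.e.\ the collection of all $2p$ subpaths $P_1, \dots, P_{2p}$ of the distinguished $P \in \mathcal{P}$, and assert this already realises a folded ordered $(q,p)$-web. It does not: with this $\mathcal{P}''$ there are $2p$ paths on the $\mathcal{V}$-side, which is not a $(q,p)$-web, and the folded condition of \cref{def:folded-web} would fail in general, since each individual $P_i$ meets a given $Q \in \mathcal{Q}'$ only inside the single block $Q_{2p+1-i}$ of $Q$'s split decomposition, so nothing forces $P_i$ to decompose into two halves each meeting every $Q$. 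You then recover the correct idea in your folded-property paragraph by pairing $P_{2j-1}, P_{2j}$ into $p$ merged columns --- but at that point $\mathcal{P}''$ has silently been redefined, and the word ``cluster'' understates what is required: \cref{def:folded-web} needs each $V_i$ to be a single path admitting a decomposition $V_i^a \cdot V_i^b$, so the merged column must be the concatenated path $P_{2j-1} \cdot e_{2j-1} \cdot P_{2j}$, reinstating the split edge $e_{2j-1}$ of $P$ that the $(2p,q)$-split discarded. That concatenation is exactly what the paper does, setting $\mathcal{P}'' := \{P'_1, \dots, P'_p\}$ with $P'_j := P_{2j-1} \cdot e_{2j-1} \cdot P_{2j}$ (still subpaths of $\mathcal{P}$), fold $P'_j = P_{2j-1} \cdot P_{2j}$, and ordered-web ordering coming from the index reversal inherited from the split. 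So the key idea you found is right and matches the paper's; you should delete the false start and define $\mathcal{P}''$ once, directly, in terms of these concatenated $P'_j$.
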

	\begin{proof}
		Let $\Brace{ P_1, P_2, \dots, P_{2p}} \coloneqq \mathcal{P'}$ be and ordering of $\mathcal{P}'$ witnessing that $\Brace{\mathcal{P}', \mathcal{Q}'}$ is a $\Brace{2p, q}$-split.
		For each $1 \leq i \leq p$ let $P_i' = P_{2i - 1} \cdot e_{2i - 1} \cdot P_{2i}$, where $e_{2i - 1}$ is the arc inside $\mathcal{P}'$ such that $P_{2i - 1} \cdot e_{2i - 1} \cdot P_{2i}$ is a subpath of $\mathcal{P}'$.
		Let $\mathcal{P}'' = \Brace{P'_1, P'_2, \dots, P'_{p}}$.
		Now $\Brace{\mathcal{Q}', \mathcal{P}''}$ is a folded ordered $(q, p)$-web, which can be seen by partitioning $P_i'$ into $P_{2i-1}$ and $P_{2i}$.
	\end{proof}
	
		Next, we show that if we construct a path of well-linked sets starting from an ordered web that is also folded, then we can construct the path of well-linked sets in a way that the paths in $\HHH$ are guaranteed to intersect the individual clusters of the resulting path of well-linked sets.
    The idea of the construction is similar to the proof of~\cref{lemma:ordered-web-to-path-of-order-linked-sets-with-forward-linkage}, but now we can use~\cref{theorem:strongly connected temporal digraph contains H routing} in the construction, which yields the extra properties we need.
	
	We define 
	\begin{align*}
		\boundDefAlign{lemma:folded-web-to-pows}{h}{w}
		\bound{lemma:folded-web-to-pows}{h}{w} & \coloneqq
		\bound{lemma:routing-implies-well-linked}{\ell}{w},
		\\[0em]
		\boundDefAlign{lemma:folded-web-to-pows}{v}{w, \ell}
		\bound{lemma:folded-web-to-pows}{v}{w, \ell} & \coloneqq
		\bound{lemma:routing-implies-well-linked}{h}{w} \Big(\ell \binom{\bound{lemma:folded-web-to-pows}{h}{w}}{w} + 1 \Big).
	\end{align*}
	We observe that \(\bound{lemma:folded-web-to-pows}{h}{w} \in \Oh(w^{11})\) and 
	\(\bound{lemma:folded-web-to-pows}{v}{w,\ell} \in \PowerTower{1}{\Polynomial{2}{w, \ell}}\).
	
	\begin{lemma}
		\label{lemma:folded-web-to-pows}
		Let $\Brace{\mathcal{H}, \mathcal{V}}$ be a folded ordered $(h,v)$-web.
		If $h \geq \bound{lemma:folded-web-to-pows}{h}{w}$ and $v \geq \bound{lemma:folded-web-to-pows}{v}{w, \ell}$, then $\Brace{\mathcal{H}, \mathcal{V}}$ contains a path of well-linked sets $\Brace{\mathcal{S} = \Brace{ S_0, S_1, \dots, S_{\ell}}, \mathscr{P}}$ of width $w$ and length $\ell$.
		Additionally, there is a $\Start{\mathcal{H}}$-$\End{\mathcal{H}}$-linkage $\mathcal{L} = \mathcal{L}_1 \cdot \mathcal{L}_2 \cdot \mathcal{L}_3$ using only arcs of $\mathcal{H}$ such that $\mathcal{L}_2$ is an $A(S_0)$-$B(S_\ell)$-linkage of order $w$ inside $\Brace{\mathcal{S}, \mathscr{P}}$ and $\mathcal{L}_1$ and $\mathcal{L}_3$ are internally disjoint from $\Brace{\mathcal{S}, \mathscr{P}}$.
	\end{lemma}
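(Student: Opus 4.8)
The plan is to mimic the proof of \cref{lemma:ordered-web-to-path-of-order-linked-sets-with-forward-linkage}, but to exploit the extra structure of a folded ordered web in order to obtain strongly-connected layers in the relevant routing temporal digraphs, which lets us apply \cref{lemma:routing-implies-well-linked} to get \emph{well-linked} clusters directly rather than merely order-linked ones. First I would set the abbreviations $\ell_1 := \ell \binom{\bound{lemma:folded-web-to-pows}{h}{w}}{w} + 1$ and fix an ordering $\mathcal{V} = (V_1, V_2, \dots, V_v)$ witnessing that $(\mathcal{H}, \mathcal{V})$ is an ordered web, together with the splittings $V_i = V_i^a \cdot V_i^b$ witnessing that it is folded. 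Then I would carve $\mathcal{V}$ into $\ell_1$ consecutive blocks $\mathcal{V}^1, \mathcal{V}^2, \dots, \mathcal{V}^{\ell_1}$, each of $\bound{lemma:routing-implies-well-linked}{h}{w}$ columns, and correspondingly decompose $\mathcal{H} = \mathcal{H}^0 \cdot \mathcal{H}^1 \cdots \mathcal{H}^{\ell_1}$ at the first intersections with the first column of each block. The key observation is that for the routing temporal digraph $T_t$ of $\mathcal{H}$ through $\mathcal{V}^t$, each layer is \emph{strongly connected}: because every $V_i$ splits as $V_i^a \cdot V_i^b$ with both halves meeting all of $\mathcal{H}$, within $V_i$ one can route from any path $H_a$ to any path $H_b$ of $\mathcal{H}$ (use $V_i^a$ if $H_a$ precedes $H_b$ along $V_i$, otherwise go via $V_i^b$), internally disjoint from the other paths of $\mathcal{H}$.

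Next I would apply \cref{lemma:routing-implies-well-linked} to each $T_t$ — its layer count is exactly $\bound{lemma:routing-implies-well-linked}{h}{w}$ by construction — to obtain for each $t$ a subset $\mathcal{L}_t \subseteq \mathcal{H}$ of order $w$ such that $\Start{\mathcal{L}_t}$ is well-linked to $\End{\mathcal{L}_t}$ in $\ToDigraph{\mathcal{H} \cup \mathcal{V}^t}$. Since there are at most $\binom{\bound{lemma:folded-web-to-pows}{h}{w}}{w}$ choices for the set of $w$ paths underlying $\mathcal{L}_t$, and $v \geq \bound{lemma:folded-web-to-pows}{v}{w,\ell}$ forces at least $\ell_1$ blocks, the pigeon-hole principle yields $\ell + 1$ blocks $t_0 < t_1 < \dots < t_\ell$ all giving the \emph{same} set $\mathcal{Q} \subseteq \mathcal{H}$ of $w$ paths. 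I then define the cluster $S_i := \ToDigraph{\mathcal{H}^{t_i} \cup \mathcal{V}^{t_i}}$ with $A(S_i)$ the first intersections of $\mathcal{Q}$ with the first column of block $t_i$ and $B(S_i)$ the last intersections of $\mathcal{Q}$ with the last column of block $t_i$; well-linkedness of $A(S_i)$ to $B(S_i)$ in $S_i$ follows from the well-linkedness property of $\mathcal{Q}$ restricted to this block (the relevant $\mathcal{V}$-columns all lie inside $S_i$). The connecting linkages $\mathcal{P}_i$ are the subpaths of $\mathcal{Q}$ between consecutive blocks; these are internally disjoint from every cluster and from the other clusters since distinct blocks use disjoint column sets. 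This gives the path of well-linked sets $(\mathcal{S}, \mathscr{P})$. For the final linkage, set $\mathcal{L}_2 := \mathcal{Q}^{t_0 \to t_\ell}$ (the portion of $\mathcal{Q}$ spanning the chosen blocks), $\mathcal{L}_1$ the initial segment of $\mathcal{Q}$ from $\Start{\mathcal{H}}$ up to $A(S_0)$, and $\mathcal{L}_3$ the final segment from $B(S_\ell)$ to $\End{\mathcal{H}}$; by construction $\mathcal{L}_1 \cdot \mathcal{L}_2 \cdot \mathcal{L}_3$ uses only arcs of $\mathcal{H}$, and $\mathcal{L}_1, \mathcal{L}_3$ are internally disjoint from $(\mathcal{S}, \mathscr{P})$ because they live in columns strictly before block $t_0$ or after block $t_\ell$.

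The main obstacle I anticipate is verifying carefully that the clusters are genuinely \emph{disjoint} and that the connecting linkages $\mathcal{P}_i$ satisfy the internal-disjointness requirements of \cref{def:path-of-well-linked-sets} — in particular, that no path of $\mathcal{Q}$ used to build $\mathcal{P}_i$ re-enters a column belonging to some $S_j$. This is where the ordered-web property does the work: since in an ordered web each $H \in \mathcal{H}$ decomposes as $H_1 \cdot H_2 \cdots H_v$ with $H_i$ meeting exactly $V_i$, the portion of any $\mathcal{Q}$-path lying between blocks $t_i$ and $t_{i+1}$ only touches columns with indices in that range, so after removing the interiors of the blocks it is internally disjoint from all clusters. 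A secondary bookkeeping point is matching the block length exactly to $\bound{lemma:routing-implies-well-linked}{h}{w}$ and confirming $v \geq \bound{lemma:routing-implies-well-linked}{h}{w}(\ell_1)$, which is immediate from the definition of $\bound{lemma:folded-web-to-pows}{v}{w,\ell}$. I would also note in passing that the path of well-linked sets produced is automatically strict after passing to a minimal subgraph of each cluster in which $A(S_i)$ stays well-linked to $B(S_i)$, exactly as in the analogous earlier constructions, though strictness is not explicitly claimed in the statement.
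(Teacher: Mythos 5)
Your proposal is correct and follows essentially the same approach as the paper's proof: carve $\mathcal{V}$ into blocks of size $\bound{lemma:routing-implies-well-linked}{h}{w}$, use the folded property to get strongly connected layers in each block's routing temporal digraph, invoke \cref{lemma:routing-implies-well-linked} to extract an order-$w$ sublinkage that is well-linked across the block, and then apply the pigeonhole principle to find $\ell+1$ blocks sharing the same sublinkage, from which the clusters and connecting linkages are assembled directly. The only cosmetic differences are that the paper explicitly passes to a size-$\bound{lemma:folded-web-to-pows}{h}{w}$ sublinkage of $\mathcal{H}$ at the start (so that \cref{lemma:routing-implies-well-linked}, which requires the linkage to have order exactly $\bound{lemma:routing-implies-well-linked}{\ell}{w}$, applies verbatim) and builds each cluster from $\mathcal{L}_{t_i}$ rather than all of $\mathcal{H}^{t_i}$; you should make the first reduction explicit, but neither point changes the substance of the argument.
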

	\begin{proof}
		Assume, without loss of generality, that $h = \bound{lemma:folded-web-to-pows}{h}{w}$, as any $\mathcal{H}' \subseteq \mathcal{H}$ of size $\bound{lemma:folded-web-to-pows}{h}{w}$ also satisfies the assumptions of the statement above.
		
		Let $\ell_1 = \bound{lemma:routing-implies-well-linked}{h}{w}$, and $\ell_2 = \ell \binom{h}{w} + 1$.
		Let $\Brace{V_0, V_1, \dots, V_{v - 1}}$ be an ordering of $\mathcal{V}$ witnessing that $\Brace{\mathcal{H}, \mathcal{V}}$ is a folded ordered web.
		
		For each $1 \leq i \leq \ell_2$ let $\mathcal{H}^i$ be the maximal linkage inside $\mathcal{H}$ such that $\Start{\mathcal{H}^i} \subseteq \V{V_{(i-1) \ell_1}}$ and $\End{\mathcal{H}^i} \subseteq \V{V_{i \ell_1 - 1}}$.
		Additionally, let $T_i$ be the routing temporal digraph of $\mathcal{H}^i$ through $\mathcal{V}^i \coloneqq \Brace{V_{(i - 1) \ell_1}, \ldots, V_{i \ell_1 - 1}}$.
		Because $\Brace{\mathcal{H}, \mathcal{V}}$ is a folded web, for every $0 \leq j \leq \ell_1 - 1$ and every pair of paths $H_a^i, H_b^i \in \mathcal{H}^i$ there is a  subpath of $V_{(i - 1) \ell_1 + j}$ from $\V{H_a^i}$ to $\V{H_b^i}$.
		Hence, each layer of $T_i$ is strongly connected.
		
		By construction, $\Lifetime{T_i} = \ell_1$ and by assumption $\Abs{\V{T_i}} = \Abs{\mathcal{H}} = \bound{lemma:folded-web-to-pows}{h}{w}$.
		By~\cref{lemma:routing-implies-well-linked}, for every $1 \leq i \leq \ell_2$ there is some $\mathcal{L}_i \subseteq \mathcal{H}_i$ of order $w$ such that $\Start{\mathcal{L}_i}$ is well-linked to $\End{\mathcal{L}_i}$ inside $\ToDigraph{\mathcal{H}^i \cup \mathcal{V}^i}$.
		
		By the pigeon-hole principle, there is some $\mathcal{H}' \subseteq \mathcal{H}$ of order $w$ and some $\mathcal{I} \subseteq \Set{1, \ldots, \ell_2}$ of order $\ell + 1$ such that $\mathcal{L}_i$ is a sublinkage of $\mathcal{H}'$ of order $w$ for all $i \in \mathcal{I}$.
		Let $\Brace{t_0, t_1, \ldots, t_{\ell}} \coloneqq \mathcal{I}$ be the ascending order of the elements of $\mathcal{I}$.
		
		For each $0 \leq i \leq \ell$ let $S_i \coloneqq \ToDigraph{\mathcal{L}_{t_i} \cup \mathcal{V}_{t_i}}$ and set $A(S_i) = \Start{\mathcal{L}_{t_i}}$ and $B(S_i) = \End{\mathcal{L}_{t_{i}}}$.
		For each $1 \leq i \leq \ell - 1$ let $\mathcal{P}_i$ be the $\End{\mathcal{L}_{t_i}}$-$\Start{\mathcal{L}_{t_{i + 1}}}$-linkage inside $\mathcal{H}$.
		
		By construction, $A(S_i)$ is well-linked to $B(S_i)$ inside $S_i$ for all $i$.
		This implies that $(\mathcal{S} = \{ S_0, S_1$, $\dots, S_{\ell}\}, \mathscr{P} = \Brace{ \mathcal{P}_0, \mathcal{P}_1, \dots, \mathcal{P}_{\ell - 1}})$ is a path of well-linked sets of width $w$ and length $\ell$.
		Further, $\mathcal{L}$ is an $A(S_0)$-$B(S_\ell)$-linkage of order $w$ inside $\Brace{\mathcal{S}, \mathscr{P}}$ using only arcs of~$\mathcal{H}$.
	\end{proof}

In a way similar to~\cref{state:ordered-web-to-well-linked-pows} above, we can manipulate the path of well-linked sets obtained from~\cref{lemma:folded-web-to-pows} in order to ensure that the extremities of the path of well-linked sets are subsets of the extremities of \(\mathcal{H}\).

\begin{corollary}
	\label{state:folded-ordered-web-to-well-linked-pows}
	Let $\Brace{\mathcal{H}, \mathcal{V}}$ be a folded ordered $(h,v)$-web.
	If $h \geq \bound{lemma:folded-web-to-pows}{h}{w}$ and $v \geq \bound{lemma:folded-web-to-pows}{v}{w, \ell}$, then $\Brace{\mathcal{H}, \mathcal{V}}$ contains a path of well-linked sets $\Brace{\mathcal{S} = \Brace{ S_0, S_1, \dots, S_{\ell}}, \mathscr{P}}$ of width $w$ and length $\ell$.
	Additionally, \(A(S_0) \subseteq \Start{\mathcal{H}}\) and \(B(S_\ell) \subseteq \End{\mathcal{H}}\).
\end{corollary}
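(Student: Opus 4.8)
The plan is to mirror the proof of \cref{state:ordered-web-to-well-linked-pows}, replacing the application of \cref{lemma:ordered-web-to-path-of-well-linked-sets-with-side-linkage} with an application of \cref{lemma:folded-web-to-pows}. First I would invoke \cref{lemma:folded-web-to-pows} with length parameter $\ell_1 = \ell + 4w$ (so one needs $v \geq \bound{lemma:folded-web-to-pows}{v}{w, \ell + 4w}$, which should be absorbed into the stated bound for $v$, possibly after bumping the constant), obtaining a path of well-linked sets $(\mathcal{S}', \mathscr{P}')$ of width $w$ and length $\ell_1$ together with a $\Start{\mathcal{H}}$-$\End{\mathcal{H}}$-linkage $\mathcal{L} = \mathcal{L}_1 \cdot \mathcal{L}_2 \cdot \mathcal{L}_3$ using only arcs of $\mathcal{H}$, where $\mathcal{L}_2$ is an $A(S_0')$-$B(S_{\ell_1}')$-linkage inside $(\mathcal{S}', \mathscr{P}')$ and $\mathcal{L}_1, \mathcal{L}_3$ are internally disjoint from it.

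The key simplification compared to \cref{state:ordered-web-to-well-linked-pows} is that here the \say{side linkage} $\mathcal{L}_1 \cdot \mathcal{L}_3$ already runs through $\Start{\mathcal{H}}$ and $\End{\mathcal{H}}$ directly, so there is no bijection $\pi$ to a linkage $\mathcal{X} \subseteq \mathcal{V}$ to juggle. I would set $S_0 \coloneqq \ToDigraph{S_{2w}' \cup \mathcal{L}_1''}$ where $\mathcal{L}_1''$ is an appropriate $\Start{\mathcal{H}}$-to-$A(S_{2w}')$ linkage obtained by prefixing $\mathcal{L}_1$ with the portion of $\mathcal{L}_2$ running from $A(S_0')$ to $A(S_{2w}')$ (using \cref{lem:poss-simple-routing} to route inside $\SubPOSS{(\mathcal{S}', \mathscr{P}')}{0}{2w}$), and set $A(S_0) \coloneqq \Start{\mathcal{L}_1}$, $B(S_0) \coloneqq B(S_{2w}')$; symmetrically for $S_\ell$ using $\mathcal{L}_3$ and $\SubPOSS{(\mathcal{S}', \mathscr{P}')}{\ell_1 - 2w}{\ell_1}$. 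Then I would take $\mathcal{S} = \Brace{S_0, S_{2w+1}', \ldots, S_{\ell_1 - 1}', S_\ell}$ and the corresponding truncation of $\mathscr{P}$, obtaining a path of well-linked sets of width $w$ and length $\ell$ with $A(S_0) \subseteq \Start{\mathcal{L}_1} \subseteq \Start{\mathcal{H}}$ and $B(S_\ell) \subseteq \End{\mathcal{L}_3} \subseteq \End{\mathcal{H}}$. Actually, since the $2w$-slack at each end is only needed to glue a single linkage (not $w$ disjoint ones with scattered endpoints as in the ordered case), a slack of $2$ at each end would suffice, so I would more likely set $\ell_1 = \ell + 4$ or even just $\ell + 2$; I would check the exact constant against the index arithmetic in \cref{lem:poss-simple-routing} rather than guessing.

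The main obstacle — really the only subtlety — is the bookkeeping that ensures the glued-in pieces of $\mathcal{L}_2$, the clusters $S_i'$, and the internally-disjoint linkages $\mathcal{L}_1, \mathcal{L}_3$ do not interfere with one another, i.e.\ that $S_0$ and $S_\ell$ remain vertex-disjoint from the middle clusters and from each other, and that well-linkedness of $A(S_0)$ to $B(S_0)$ actually holds in $S_0$. The latter follows because $\mathcal{L}_1''$ is a linkage of order $w$ from $A(S_0)$ into $A(S_{2w}')$, concatenated with the full well-linked structure of $S_{2w}'$ from $A(S_{2w}')$ to $B(S_{2w}')$; any $A(S_0)$-to-$B(S_0)$ sublinkage request can be routed along $\mathcal{L}_1''$ and then through $S_{2w}'$. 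Disjointness follows from the guarantees in \cref{lemma:folded-web-to-pows} that $\mathcal{L}_1$ and $\mathcal{L}_3$ are internally disjoint from $(\mathcal{S}', \mathscr{P}')$ and from the disjointness of the clusters in a path of well-linked sets. I do not anticipate any genuinely hard step; this is an adaptation of an existing argument, and the proof should be short.

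\begin{proof}
	Let \(\ell_1 = \ell + 4w\).
	By \cref{lemma:folded-web-to-pows} applied with length parameter \(\ell_1\),
	\(\ToDigraph{\Brace{\mathcal{H}, \mathcal{V}}}\) contains
	a path of well-linked sets
	\((\mathcal{S}' = \Brace{ S'_0, S'_1, \dots, S'_{\ell_1}},
	\mathscr{P}' = ( \mathcal{P}'_0, \mathcal{P}'_1, \dots, \mathcal{P}'_{\ell_1 - 1}))\)
	of width \(w\) and length \(\ell_1\),
	together with a \(\Start{\mathcal{H}}\)-\(\End{\mathcal{H}}\)-linkage
	\(\mathcal{L} = \mathcal{L}_1 \cdot \mathcal{L}_2 \cdot \mathcal{L}_3\)
	using only arcs of \(\mathcal{H}\)
	such that
	\(\mathcal{L}_2\) is an \(A(S_0')\)-\(B(S_{\ell_1}')\)-linkage of order \(w\) inside \(\Brace{\mathcal{S}', \mathscr{P}'}\) and
	\(\mathcal{L}_1\) and \(\mathcal{L}_3\) are internally disjoint from \(\Brace{\mathcal{S}', \mathscr{P}'}\).

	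By \cref{lem:poss-simple-routing}, \(A(S_0')\) is well-linked to \(A(S_{2w}')\)
	in \(\SubPOSS{(\mathcal{S}', \mathscr{P}')}{0}{2w}\), so there is an
	\(A(S_0')\)-\(A(S_{2w}')\) linkage \(\mathcal{M}_A\) of order \(w\) in that subsystem.
	Let \(\mathcal{Y}_A \coloneqq \mathcal{L}_1 \cdot \mathcal{M}_A\), a
	\(\Start{\mathcal{H}}\)-\(A(S_{2w}')\) linkage of order \(w\).
	Analogously, \(B(S_{\ell_1 - 2w}')\) is well-linked to \(B(S_{\ell_1}')\)
	in \(\SubPOSS{(\mathcal{S}', \mathscr{P}')}{\ell_1 - 2w}{\ell_1}\), yielding a
	\(B(S_{\ell_1 - 2w}')\)-\(B(S_{\ell_1}')\) linkage \(\mathcal{M}_B\) of order \(w\);
	set \(\mathcal{Y}_B \coloneqq \mathcal{M}_B \cdot \mathcal{L}_3\), a
	\(B(S_{\ell_1 - 2w}')\)-\(\End{\mathcal{H}}\) linkage of order \(w\).

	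Let \(S_0 = \ToDigraph{S_{2w}' \cup \mathcal{Y}_A}\),
	\(A(S_0) = \Start{\mathcal{Y}_A} = \Start{\mathcal{L}_1}\),
	\(B(S_0) = B(S_{2w}')\),
	\(S_{\ell} = \ToDigraph{S_{2w + \ell}' \cup \mathcal{Y}_B}\),
	\(A(S_{\ell}) = A(S_{2w + \ell}')\) and
	\(B(S_\ell) = \End{\mathcal{Y}_B} = \End{\mathcal{L}_3}\).
	Since \(\mathcal{M}_A\) is contained in \(\SubPOSS{(\mathcal{S}', \mathscr{P}')}{0}{2w}\) and
	\(\mathcal{L}_1\) is internally disjoint from \(\Brace{\mathcal{S}', \mathscr{P}'}\),
	the linkage \(\mathcal{Y}_A\) together with \(S_{2w}'\) witnesses that
	\(A(S_0)\) is well-linked to \(B(S_0)\) in \(S_0\): any sublinkage request from
	a subset of \(A(S_0)\) to an equally-sized subset of \(B(S_0)\) is routed along
	\(\mathcal{Y}_A\) into \(A(S_{2w}')\) and then through the well-linked structure of \(S_{2w}'\).
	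The analogous statement holds for \(S_\ell\).

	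Finally, set
	\(\mathcal{S} = \Brace{S_0, S_{2w + 1}', S_{2w + 2}', \dots, S_{2w + \ell - 1}', S_{\ell}}\) and
	\(\mathscr{P} = (\mathcal{P}_{2w}', \mathcal{P}_{2w + 1}', \dots, \mathcal{P}_{2w + \ell - 1}')\).
	The clusters of \(\mathcal{S}\) are pairwise disjoint, since \(S_0\) and \(S_\ell\)
	only add vertices of \(\mathcal{L}_1 \cup \mathcal{M}_A\) and \(\mathcal{M}_B \cup \mathcal{L}_3\),
	all of which lie either outside \(\Brace{\mathcal{S}', \mathscr{P}'}\) or inside
	\(\SubPOSS{(\mathcal{S}', \mathscr{P}')}{0}{2w}\) respectively
	\(\SubPOSS{(\mathcal{S}', \mathscr{P}')}{\ell_1 - 2w}{\ell_1}\),
	which are disjoint from the clusters \(S_{2w+1}', \dots, S_{2w + \ell - 1}'\).
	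Hence \((\mathcal{S}, \mathscr{P})\) is a path of well-linked sets of width \(w\) and length \(\ell\)
	with \(A(S_0) \subseteq \Start{\mathcal{H}}\) and \(B(S_\ell) \subseteq \End{\mathcal{H}}\).
\qed\end{proof}
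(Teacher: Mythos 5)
Your proof does not establish the corollary as stated because it needs a stronger hypothesis on $v$ than the one you are given. You invoke \cref{lemma:folded-web-to-pows} with length parameter $\ell_1 = \ell + 4w$, which requires $v \geq \bound{lemma:folded-web-to-pows}{v}{w, \ell + 4w}$. But the corollary only supplies $v \geq \bound{lemma:folded-web-to-pows}{v}{w, \ell}$, and the function $\bound{lemma:folded-web-to-pows}{v}{w, \ell} = \bound{lemma:routing-implies-well-linked}{h}{w}\bigl(\ell \binom{\bound{lemma:folded-web-to-pows}{h}{w}}{w} + 1\bigr)$ is strictly increasing in $\ell$. You flag this yourself (\say{possibly after bumping the constant}), but the statement is fixed: you cannot bump the bound, so the proof as written does not go through. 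The later observation that a slack of $2$ (or $4$) at each end might suffice still leaves a gap, since even $\ell + 2$ needs a larger $v$ than stated.

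The missing idea is that for a \emph{folded} ordered web no slack is needed at all. You are treating this as a cut-down version of the proof of \cref{state:ordered-web-to-well-linked-pows}, where the slack of $2w$ per end is genuinely needed because the auxiliary linkage $\mathcal{X} \subseteq \mathcal{V}$ there is aligned to the clusters only via a bijection $\pi$, its relevant endpoints are scattered across distinct clusters, and reaching $A(S_{2w}')$ or $B(S_{\ell_1 - 2w}')$ requires the case \ref{case:linkage_inside_poss_scattered_A} / \ref{case:linkage_inside_poss_B_scattered} machinery of \cref{lem:linkage_inside_poss}. In the folded case, \cref{lemma:folded-web-to-pows} already places $\End{\mathcal{L}_1} = \Start{\mathcal{L}_2} = A(S_0)$ and $\Start{\mathcal{L}_3} = \End{\mathcal{L}_2} = B(S_\ell)$ \emph{exactly}, with $\mathcal{L}_1, \mathcal{L}_3$ internally disjoint from $(\mathcal{S}, \mathscr{P})$. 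So the paper's proof applies the lemma with length $\ell$ as is, then sets $S_0' \coloneqq \ToDigraph{S_0 \cup \mathcal{L}_1}$ with $A(S_0') \coloneqq \Start{\mathcal{L}_1}$, $B(S_0') \coloneqq B(S_0)$, and symmetrically $S_\ell' \coloneqq \ToDigraph{S_\ell \cup \mathcal{L}_3}$ with $A(S_\ell') \coloneqq A(S_\ell)$, $B(S_\ell') \coloneqq \End{\mathcal{L}_3}$. Well-linkedness of $A(S_0')$ to $B(S_0')$ in $S_0'$ is immediate since $\mathcal{L}_1$ is a bijective linkage from $A(S_0')$ onto $A(S_0)$ with internal vertices outside $S_0$, and $A(S_0)$ is well-linked to $B(S_0)$ in $S_0$; disjointness from $\mathcal{P}_0$ and the middle clusters follows from the stated internal disjointness. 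No use of \cref{lem:poss-simple-routing} or \cref{lem:linkage_inside_poss}, no truncation, and the exact stated bound on $v$.
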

\begin{proof}
	By~\cref{lemma:folded-web-to-pows}, $\Brace{\mathcal{V}', \mathcal{H}''}$ contains a path of well-linked sets $\Brace{\mathcal{S} = \Brace{ S_0, S_1, \dots, S_{\ell}}, \mathscr{P}}$ of width $w$ and length $\ell$.
	Additionally, there is a $\Start{\mathcal{V}'}$-$\End{\mathcal{V}'}$-linkage $\mathcal{L} = \mathcal{L}_1 \cdot \mathcal{L}_2 \cdot \mathcal{L}_3$ of order $w$ such that $\mathcal{L}_2$ is an $A(S_0)$-$B(S_\ell)$-linkage of order $w$ inside $\Brace{\mathcal{S}, \mathscr{P}}$ and $\mathcal{L}_1$ and $\mathcal{L}_3$ are internally disjoint from $\Brace{\mathcal{S}, \mathscr{P}}$.

	Set $S_0' = \ToDigraph{S_0 \cup \mathcal{L}_1}$, $A(S_0') = \Start{\mathcal{L}_1}$, $B(S_0') = B(S_0)$, $S_{\ell}' = \ToDigraph{S_\ell \cup \mathcal{L}_3}$, $A(S_{\ell}') = A(S_\ell)$ and $B(S_{\ell}') = \End{\mathcal{L}_3}$.
	Because $\End{\mathcal{L}_1} = A(S_0)$ and $\Start{\mathcal{L}_3} = B(S_{\ell})$, we have that $\Brace{\mathcal{S}' \coloneqq \Brace{S_0', S_1, S_2, \dots, S_{\ell}'}, \mathscr{P}}$ is a path of well-linked sets of width $w$ and length $\ell$ such that $A(S_0') \subseteq \Start{\mathcal{V}}$ and $B(S_0') \subseteq \End{\mathcal{V}}$.
\end{proof}

\subsection{Back to finding cycles}

We conclude this section by showing that digraphs of large directed \treewidth contain a large path of well-linked sets where the last cluster is well-linked to the first.
Here, we use the well-linkedness property to construct our set of pairwise vertex-disjoint cycles.
Later, in~\cite{COSSII}, we use this well-linkedness property to construct the linkage required to \emph{close} the path of well-linked sets into a \emph{cycle of well-linked sets}.

Define 
\begin{align*}
	\Func{v'}{w, \ell} & = \bound{lemma:folded-web-to-pows}{h}{w} + \bound{state:ordered-web-to-well-linked-pows}{v}{w, \ell},
	\\[0em]
	\boundDefAlign{thm:high_dtw_to_POSS_plus_back-linkage}{t}{w, \ell}
	\bound{thm:high_dtw_to_POSS_plus_back-linkage}{t}{w, \ell}
		& = \bound{state:high-dtw-to-segmentation-or-split}{t}{2 \bound{lemma:folded-web-to-pows}{v}{w, \ell}, \bound{lemma:ordered x-segmentation}{p}{\bound{state:ordered-web-to-well-linked-pows}{h}{w, \ell}, \Func{v'}{w, \ell}}, \Func{v'}{w, \ell},  (\ell + 1)w}.	
\end{align*}
Note that \(\bound{thm:high_dtw_to_POSS_plus_back-linkage}{t}{w,\ell} \in \PowerTower{7}{\Polynomial{25}{w, \ell}}\).
\begin{theorem}
    \label{thm:high_dtw_to_POSS_plus_back-linkage}
		Every digraph $D$ with $\dtw{D} \geq \bound{thm:high_dtw_to_POSS_plus_back-linkage}{t}{w, \ell}$ contains a path of well-linked sets $\Brace{\mathcal{S} = \Brace{ S_0, S_1, \dots, S_{\ell}},\mathscr{P}}$ of width $w$ and length $\ell$ such that $B(S_\ell)$ is well-linked to $A(S_0)$ in $D$.
\end{theorem}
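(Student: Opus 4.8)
The plan is to combine \cref{state:high-dtw-to-segmentation-or-split} with the web-to-path-of-well-linked-sets machinery of \cref{state:ordered-web-to-well-linked-pows,state:folded-ordered-web-to-well-linked-pows}, mediated by \cref{obs:split-to-folded-web} and \cref{lemma:ordered x-segmentation}. First I would set the parameters of \cref{state:high-dtw-to-segmentation-or-split} so that whatever structure it produces is large enough for the later steps: take $k \coloneqq (\ell+1)w$ so that a cylindrical grid of order $k$ immediately contains a cylindrical grid of order $w\ell$, hence (by the discussion following the definition of cycle of well-linked sets) a cycle of well-linked sets of width $w$ and length $\ell$, which in particular yields the desired path of well-linked sets with the back-linkage property for free. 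Take $y \coloneqq 2\bound{lemma:folded-web-to-pows}{v}{w,\ell}$ so that a $(y,q)$-split gives, via \cref{obs:split-to-folded-web}, a folded ordered $(q, \bound{lemma:folded-web-to-pows}{v}{w,\ell})$-web; here I must be careful about which coordinate of the split becomes which coordinate of the folded web, since \cref{obs:split-to-folded-web} turns a $(2p,q)$-split of $(\mathcal P,\mathcal Q)$ into a folded ordered $(q,p)$-web $(\mathcal Q', \mathcal P'')$. Take $q \coloneqq \Func{v'}{w,\ell} = \bound{lemma:folded-web-to-pows}{h}{w} + \bound{state:ordered-web-to-well-linked-pows}{v}{w,\ell}$, large enough that the folded web has enough "horizontal" paths for \cref{state:folded-ordered-web-to-well-linked-pows} and enough "vertical" paths after a segmentation. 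Finally take $x$ large enough that, after applying \cref{lemma:ordered x-segmentation} to pass from a segmentation to an \emph{ordered} segmentation, one still has at least $\bound{state:ordered-web-to-well-linked-pows}{h}{w,\ell}$ columns; this is where the $\bound{lemma:ordered x-segmentation}{p}{\bound{state:ordered-web-to-well-linked-pows}{h}{w,\ell}, \Func{v'}{w,\ell}}$ appearing in $\bound{thm:high_dtw_to_POSS_plus_back-linkage}{t}{w,\ell}$ comes from.

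With these parameters fixed, apply \cref{state:high-dtw-to-segmentation-or-split} to $D$. In case \ref{item:high-dtw-to-web:grid} we get a cylindrical grid of order $(\ell+1)w$ as a butterfly minor, hence a cycle of well-linked sets of width $w$ and length $\ell$; extracting the underlying path of well-linked sets, the $B$-set of the last cluster is connected to the $A$-set of the first by the closing linkage $\mathcal P_\ell$, which in particular witnesses that $B(S_\ell)$ is well-linked to $A(S_0)$ (well-linkedness transfers through the linkage since a cylindrical grid is robustly connected; alternatively one simply restricts to the width-$w$ sub-grid and uses that its last column-set is well-linked to its first). In case \ref{item:high-dtw-to-web:split} we have a $(y,q)$-split $(\mathcal P',\mathcal Q')$ with $\End{\mathcal Q'}$ well-linked to $\Start{\mathcal Q'}$; by \cref{obs:split-to-folded-web} this gives a folded ordered $(q, y/2)$-web $(\mathcal Q', \mathcal P'')$ with $q \geq \bound{lemma:folded-web-to-pows}{v}{w,\ell}$ and $y/2 \geq \bound{lemma:folded-web-to-pows}{h}{w}$. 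Note crucially that $\Start{\mathcal H} = \Start{\mathcal Q'}$ and $\End{\mathcal H} = \End{\mathcal Q'}$ for this web (the "horizontal" linkage of the folded web is $\mathcal Q'$), so its endpoint-sets carry the well-linkedness. Apply \cref{state:folded-ordered-web-to-well-linked-pows} to obtain a path of well-linked sets $(\mathcal S,\mathscr P)$ of width $w$ and length $\ell$ with $A(S_0) \subseteq \Start{\mathcal Q'}$ and $B(S_\ell) \subseteq \End{\mathcal Q'}$; since $\End{\mathcal Q'}$ is well-linked to $\Start{\mathcal Q'}$ and well-linkedness is inherited by subsets, $B(S_\ell)$ is well-linked to $A(S_0)$ in $D$, as required.

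In case \ref{item:high-dtw-to-web:segmentation} we have an $(x,q)$-segmentation $(\mathcal P',\mathcal Q')$ with $\End{\mathcal P'}$ well-linked to $\Start{\mathcal P'}$. First apply \cref{lemma:ordered x-segmentation} to extract an ordered $(x', q)$-segmentation, where $x' = \bound{state:ordered-web-to-well-linked-pows}{h}{w,\ell}$; this preserves the well-linkedness of $\End{\mathcal P'}$ to $\Start{\mathcal P'}$ since we only pass to a sublinkage of $\mathcal P'$. An ordered segmentation $(\mathcal P',\mathcal Q')$ is by construction an ordered $(x',q)$-web (each path of $\mathcal P'$ is decomposed into segments hitting $Q_1,\dots,Q_q$ in order), so we may apply \cref{state:ordered-web-to-well-linked-pows} with $\mathcal H = \mathcal P'$ and $\mathcal V = \mathcal Q'$: here $x' = |\mathcal H| \geq \bound{state:ordered-web-to-well-linked-pows}{h}{w,\ell}$ and $q = |\mathcal V| \geq \bound{state:ordered-web-to-well-linked-pows}{v}{w,\ell}$. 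This yields a path of well-linked sets $(\mathcal S,\mathscr P)$ of width $w$ and length $\ell$ with $A(S_0) \subseteq \Start{\mathcal P'}$ and $B(S_\ell) \subseteq \End{\mathcal P'}$, and again since $\End{\mathcal P'}$ is well-linked to $\Start{\mathcal P'}$, restricting to the relevant subsets gives that $B(S_\ell)$ is well-linked to $A(S_0)$ in $D$. The main obstacle I anticipate is bookkeeping: getting the parameters of \cref{state:high-dtw-to-segmentation-or-split} exactly right so that every downstream lemma's hypothesis is met, and in particular correctly tracking which linkage of the split/segmentation plays the role of $\mathcal H$ versus $\mathcal V$ (the split case swaps them relative to the segmentation case), and verifying that the "well-linked" annotation on the split/segmentation is precisely the well-linkedness of the $\Start$ and $\End$ of the linkage that becomes the \emph{horizontal} linkage $\mathcal H$ of the web — since \cref{state:ordered-web-to-well-linked-pows} and \cref{state:folded-ordered-web-to-well-linked-pows} guarantee $A(S_0) \subseteq \Start{\mathcal H}$ and $B(S_\ell) \subseteq \End{\mathcal H}$. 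There is no genuinely new mathematical content beyond what the cited statements already provide; the proof is an orchestration argument.
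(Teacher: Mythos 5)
Your proof proposal takes the same route as the paper's: the same application of \cref{state:high-dtw-to-segmentation-or-split} with the same parameter choices, and the same use of \cref{obs:split-to-folded-web} and \cref{lemma:ordered x-segmentation} to feed \cref{state:folded-ordered-web-to-well-linked-pows} and \cref{state:ordered-web-to-well-linked-pows} in the split and segmentation cases. Those two cases are handled correctly (modulo one slip where you swap the two coordinate bounds of the folded web; your earlier parameter-setting shows you know which coordinate is horizontal and which is vertical, so this is cosmetic).

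The real gap is in the cylindrical-grid case, where the conclusion is not ``for free.'' You claim that the closing linkage $\mathcal{P}_\ell$ of the cycle of well-linked sets witnesses that $B(S_\ell)$ is well-linked to $A(S_0)$. That is false: a \emph{single} linkage of order $w$ realizes one particular matching, but well-linkedness requires an $X$--$Y$-linkage for \emph{every} pair of equal-sized subsets $X \subseteq B(S_\ell)$, $Y \subseteq A(S_0)$, and the closing linkage has no freedom over which endpoint goes where. (There is also an off-by-one: dropping the closing linkage from a cycle of length $\ell$ produces a path of length $\ell-1$, not $\ell$.) The paper's fix is a small but essential construction: from the order-$w(\ell+1)$ grid it extracts a cycle of well-linked sets with one cluster more than the $\ell+1$ needed for a path, and then \emph{absorbs} the two extremal linkages into the endpoint clusters of the path, redefining $A(S_0)$ to be the start of the linkage arriving at the first cluster (which equals the $B$-set of the spare cluster) and $B(S_\ell)$ to be the end of the linkage leaving the last cluster (which equals the $A$-set of the spare cluster). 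Both endpoint sets then lie inside the single spare cluster, inside which its $A$-set is well-linked to its $B$-set by definition of a cycle of well-linked sets, and the required well-linkedness of $B(S_\ell)$ to $A(S_0)$ follows. Your alternative suggestion (appeal to the strong connectivity of the sub-grid) points in the right direction, but it does not by itself produce a path of well-linked sets whose $A(S_0)$ and $B(S_\ell)$ \emph{coincide} with a pair of column sets that are already well-linked to each other; the absorption step is exactly what achieves that alignment.
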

\begin{proof}	
	We define
	\(\ell_1 = \ell + 1\),
	\(h_3 = \bound{lemma:folded-web-to-pows}{v}{w, \ell}\),
	\(h_2 = 2 h_3\),
	\(v_2 = \bound{lemma:folded-web-to-pows}{h}{w} + \bound{state:ordered-web-to-well-linked-pows}{v}{w, \ell}\),
	\(h_5 = \bound{state:ordered-web-to-well-linked-pows}{h}{w, \ell}\),
	\(h_4 = \bound{lemma:ordered x-segmentation}{p}{h_5, v_2}\).
	Observe that \(\bound{thm:high_dtw_to_POSS_plus_back-linkage}{t}{w, \ell} = \bound{state:high-dtw-to-segmentation-or-split}{t}{h_2, h_4, v_2, w \ell_1}\).

	By~\cref{state:high-dtw-to-segmentation-or-split}, we obtain three cases.

	If~\cref{state:high-dtw-to-segmentation-or-split}\ref{item:high-dtw-to-web:grid} holds, then \(D\) contains a cylindrical grid of order \(w \ell_1 \), which in turn contains a path of well-linked sets \(\Brace{\mathcal{S}^1 = \Brace{S^1_{0}, S^1_{1}, \ldots, S^1_{\ell_1}}, \mathscr{P}^1 = \Brace{\mathcal{P}^1_{0}, \mathcal{P}^1_{1}, \ldots, \mathcal{P}^1_{\ell_1 - 1}}}\) of width \(w\) and length \(\ell_1\) together with a \(B(S_{\ell_1})\)-\(A(S_0))\)-linkage \(\mathcal{P}^1_{\ell_1}\) which is internally disjoint from \((\mathcal{S}^1, \mathscr{P}^1)\).
	
	Let \(S_0 = \ToDigraph{\mathcal{P}^1_{\ell_1} \cup S^1_0}\) and \(S_\ell = \ToDigraph{S^1_\ell \cup \mathcal{P}^1_\ell}\).
	Set \(A(S_0) = \Start{\mathcal{P}^1_{\ell_1}}\), \(B(S_0) = B(S^1_0)\), \(A(S_\ell) = A(S^1_\ell\) and \(B(S_\ell) = \End{\mathcal{P}^1_\ell}\).
	For each \(1 \leq i \leq \ell - 1\), set \(A(S_i) = A(S^1_i)\) and \(B(S_i) = B(S^1_i)\).
	It is immediate that \(\Brace{\mathcal{S} \coloneqq \Brace{S_{0}, S^1_{1}, \ldots, S^1_{\ell-1}, S_{\ell}}, \mathscr{P} \coloneqq \Brace{\mathcal{P}^1_{0}, \mathcal{P}^1_{1}, \ldots, \mathcal{P}^1_{\ell}}}\) is a path of well-linked sets of width \(w\) and length \(\ell\).
	Further, as \(B(S_\ell) \subseteq A(S^1_{\ell_1})\) and \(A(S_0) \subseteq B(S^1_{\ell_1})\), we have that \(B(S_\ell)\) is well-linked to \(A(S_0)\), as desired.

	If~\cref{state:high-dtw-to-segmentation-or-split}\ref{item:high-dtw-to-web:split} holds, then \(D\) contains a \(\Brace{h_2, v_2}\)-split \(\Brace{\mathcal{H}_2, \mathcal{V}_2}\) where \(\End{\mathcal{V}_2}\) is well-linked to \(\Start{\mathcal{V}_2}\).
	By~\cref{obs:split-to-folded-web}, there is some \(\mathcal{H}_3 \subseteq \mathcal{H}_2\) of order \(h_3\) such that \(\Brace{ \mathcal{V}_2, \mathcal{H}_3}\) is a folded ordered \(\Brace{v_2, h_3}\)-web.
	Applying~\cref{state:folded-ordered-web-to-well-linked-pows} to \(\Brace{ \mathcal{V}_2, \mathcal{H}_3}\) yields a path of well-linked sets \(\Brace{\mathcal{S} = \Brace{S_{0}, S_{1}, \ldots, S_{\ell}}, \mathscr{P}}\) of width \(w\) and length \(\ell\) such that \(A(S_0) \subseteq \Start{\mathcal{V}_2}\) and \(B(S_\ell) \subseteq \End{\mathcal{V}_2}\).
	As \(\End{\mathcal{V}_2}\) is well-linked to \(\Start{\mathcal{V}_2}\), we have that \(A(S_0)\) is well-linked to \(B(S_\ell)\), as desired.

	Finally, if~\cref{state:high-dtw-to-segmentation-or-split}\ref{item:high-dtw-to-web:segmentation} holds, then \(D\) contains an \(\Brace{h_4, v_2}\)-segmentation \(\Brace{\mathcal{H}_4, \mathcal{V}_4}\) where \(\End{\mathcal{H}_4}\) is well-linked to \(\Start{\mathcal{H}_4}\).
	By~\cref{lemma:ordered x-segmentation}, there is some \(\mathcal{H}_5 \subseteq \mathcal{H}_4\) of order \(h_5\) such that \(\Brace{\mathcal{H}_5, \mathcal{V}_4}\) is an ordered segmentation.
	By definition, \(\Brace{\mathcal{H}_5, \mathcal{V}_4}\) is an ordered web.
	By~\cref{state:ordered-web-to-well-linked-pows}, \(\ToDigraph{\Brace{\mathcal{H}_5, \mathcal{V}_4}}\) contains a path of well-linked sets \(\Brace{\mathcal{S} = \Brace{S_{0}, S_{1}, \ldots, S_{\ell}}, \mathscr{P}}\) of width \(w\) and length \(\ell\) such that \(A(S_0) \subseteq \Start{\mathcal{H}_5}\) and \(B(S_\ell) \subseteq \End{\mathcal{H}_5}\).
	As \(\End{\mathcal{H}_5}\) is well-linked to \(\Start{\mathcal{H}_5}\), we have that \(A(S_0)\) is well-linked to \(B(S_\ell)\), as desired.
\end{proof}

In order to conclude the proof of~\cref{statement:elementary-younger}, we need Statement (3.1) from \cite{reed1996packing}, which requires the following functions:
\begin{align*}
	\Func{n}{k}
	& \coloneqq
	2k^2 - 3k + 2,
	\\[0em]
	\bound{statement:fence-plus-back-linkage-implies-cycles}{r}{k}
	& \coloneqq
	8(\bound{statement:fence-plus-back-linkage-implies-cycles}{r}{\ceil{k/2}} + \Func{n}{k}),
	\\[0em]
	\bound{statement:fence-plus-back-linkage-implies-cycles}{q}{k}
	& \coloneqq
	16k(\bound{statement:fence-plus-back-linkage-implies-cycles}{r}{\ceil{k/2}} + \Func{n}{k}) + 2k + 2\Func{q'}{k},
	\\[0em]
	\Func{q'}{k}
	& \coloneqq
	\bound{statement:fence-plus-back-linkage-implies-cycles}{q}{\ceil{k/2}} + 
	2 \Func{n}{k} + 1.
\end{align*}

\fenceToCycles*

Because \cite{reed1996packing} does not give explicit upper bounds on the functions \(\bound{statement:fence-plus-back-linkage-implies-cycles}{r}{}\) and \(\bound{statement:fence-plus-back-linkage-implies-cycles}{q}{}\) above, we first need to determine these bounds.
To this end, we apply a classical result known as \emph{Master Theorem}, stated below.

\begin{theorem}[{Master Theorem~\cite{CLRS2009ItAc4p94,BHS1980gmsdcr}}]
	\label{statement:master-theorem}
	Let \(a \geq 1\) and \(b \geq 1\), let \(f\) be a function and let \(T(n)\) be defined on non-negative integers by the recurrence \(T(n)  = a T(n / b) + f(n)\), where we interpret \(n / b\) to mean either \(\floor{n / b}\) or \(\ceil{n/b}\).
	Then \(T(n)\) has the following asymptotic bounds.
	\begin{enamerate}{M}{item:master-theorem:last}
	\item
		\label{item:master-theorem:small-f}
		If \(f(n) \in O(n^{\log_b a - \epsilon})\) for some constant \(\epsilon > 0\), then \(T(n) = \Theta(n^{\log_b a})\).
	\item
		\label{item:master-theorem:average-f}
		If \(f(n) \in \Theta(n^{\log_b a})\), then \(T(n) \in \Theta(n^{\log_b a} \log n)\).
	\item
		\label{item:master-theorem:large-f}
		If \(f(n) \in \Omega(n^{\log_b a + \epsilon})\) for some constant \(\epsilon > 0\), and if \(a f(n/b) \leq c f(n)\) for some constant \(c < 1\) and all sufficiently large \(n\), then \(T(n) \in \Theta(f(n))\).
		\label{item:master-theorem:last}
	\end{enamerate}
\end{theorem}

It is fairly straightforward to apply~\cref{statement:master-theorem} in order to obtain polynomial bounds for \(\bound{statement:fence-plus-back-linkage-implies-cycles}{r}{}\) and \(\bound{statement:fence-plus-back-linkage-implies-cycles}{q}{}\).

\begin{observation}
	\label{statement:bounds-for-fence-plus-back-linkage-implies-cycles}
	\(\bound{statement:fence-plus-back-linkage-implies-cycles}{r}{k} \in \Theta(k^3)\) and
	\(\bound{statement:fence-plus-back-linkage-implies-cycles}{q}{k} \in \Theta(k^3)\).
\end{observation}
\begin{proof}
	We apply~\cref{statement:master-theorem}.

	We rewrite \(\bound{statement:fence-plus-back-linkage-implies-cycles}{r}{k}\) in the form \(T_1(k) = a_1 T_1(k/b_1) + f_1(k)\) by defining \(a_1 = 8 = 2^3\), \(b_1 = 2\) and \(f_1(k) = 8 \Func{n}{k} \in O(k^2)\).
	From~\cref{statement:master-theorem}\ref{item:master-theorem:small-f} we thus obtain that \(\bound{statement:fence-plus-back-linkage-implies-cycles}{r}{k} \in \Theta(k^3)\).

	Also, we rewrite \(\bound{statement:fence-plus-back-linkage-implies-cycles}{q}{k}\) in the form \(T_2(k) = a_2 T_2(k / b_2) + f_2(k)\) by defining \(a_2 = 2\), \(b_2 = 2\) and \(f_2(k) = 2k^3 + 10k^2 - 13k + 12 \in O(k^3)\).
	In order to apply~\cref{statement:master-theorem}\ref{item:master-theorem:large-f}, we first show that \(a_2 f_2(k / b_2) < c f_2(k)\) holds for some \(c < 1\) and for all sufficiently large \(k\).

	\begin{align*}
		&& a_2 f_2(k / b_2) & {}< c f_2(k)
		\\[0em]
		\Leftrightarrow && 
		2 f_2(k / 2)  - c f_2(k) & {}< 0
		\\[0em]
		\Leftrightarrow &&
		k^3/2 + 5k^2 - 13 k + 24 - c(2k^3 + 10k^2 - 13k + 12) & {}< 0 
		\\[0em]
		\Leftrightarrow &&
		k^3/2 + 5k^2 - 13 k + 24 -c2k^3 - c10k^2 + c13k - c12 & {}< 0
		\\[0em]
		\Leftrightarrow &&
		(1/2 - 2c)k^3 + (5 - 10c)k^2 + (13c - 13)k - 12c + 24 & {}< 0
	\end{align*}

	The inequality above holds, for example, for \(c = 1/2\) and all \(k \geq 3\).
	By applying~\cref{statement:master-theorem}\ref{item:master-theorem:large-f}, we conclude that \(\bound{statement:fence-plus-back-linkage-implies-cycles}{q}{k} \in \Theta(k^3)\).
\end{proof}

We split the proof of~\cref{statement:elementary-younger} into two cases.
The case where the given digraph has a large directed \treewidth can be handled using the statements proven so far.

We define
\begin{align*}
		\bound{statement:high-dtw-implies-cycles}{t}{k}
		& \coloneqq
		\bound{thm:high_dtw_to_POSS_plus_back-linkage}{t}{\bound{thm:poss-to-fence}{w}{			\bound{statement:fence-plus-back-linkage-implies-cycles}{r}{k},			\bound{statement:fence-plus-back-linkage-implies-cycles}{q}{k}}			, \bound{thm:poss-to-fence}{\ell}{			\bound{statement:fence-plus-back-linkage-implies-cycles}{r}{k},			\bound{statement:fence-plus-back-linkage-implies-cycles}{q}{k}}}.
\end{align*}
Observe that \(\bound{statement:high-dtw-implies-cycles}{t}{k} \in \PowerTower{8}{\Polynomial{43}{k}}\).

\begin{lemma}
	\label{statement:high-dtw-implies-cycles}
	Let \(D\) be a digraph with \(\DTreewidth{D} \geq \bound{statement:high-dtw-implies-cycles}{t}{k}\).
	Then, \(D\) contains \(k\) pairwise vertex-disjoint cycles.
\end{lemma}
\begin{proof}
	Let 
	\(q      = \bound{statement:fence-plus-back-linkage-implies-cycles}{q}{k}\),
	\(r      = \bound{statement:fence-plus-back-linkage-implies-cycles}{r}{k}\),
	\(w_1    = \bound{thm:poss-to-fence}{w}{r, q}\),
	\(\ell_1 = \bound{thm:poss-to-fence}{\ell}{r, q}\).
	Note that \(\bound{statement:high-dtw-implies-cycles}{t}{k} \geq \bound{thm:high_dtw_to_POSS_plus_back-linkage}{t}{w_1, \ell_1}\).

	By~\cref{thm:high_dtw_to_POSS_plus_back-linkage}, \(D\) contains a path of well-linked sets \((\mathcal{S} = (S_{0}, S_{1}, \ldots, S_{\ell_1}), \mathscr{P})\) of width \(w_1\) and length \(\ell_1\).
	Moreover, \(B(S_{\ell_1})\) is well-linked to \(A(S_0)\).

	By~\cref{thm:poss-to-fence},
	the digraph \(\ToDigraph{(\mathcal{S} = (S_{0}, S_{1}, \ldots, S_{\ell_1}), \mathscr{P})}\)
	contains an \((r, q)\)-fence \((\mathcal{P}, \mathcal{Q})\)
	where \(\Start{\mathcal{P}} \allowbreak \subseteq A(S_0)\) and
	\(\End{\mathcal{P}} \subseteq B(S_{\ell_1})\).
	Hence, \(\End{\mathcal{P}}\) is well-linked to \(\Start{\mathcal{P}}\).
	Because \(\Abs{\Start{\mathcal{P}}} = r\),
	there is an \(\End{\mathcal{P}}\)-\(\Start{\mathcal{P}}\)-linkage \(\mathcal{R}\) of order \(r\) in \(D\).

	By~\cref{statement:fence-plus-back-linkage-implies-cycles}, \(D\) contains \(k\) pairwise vertex-disjoint cycles inside \((\mathcal{P}, \mathcal{Q})\) and \(\mathcal{R}\).
\end{proof}

For the case where the digraph has small directed \treewidth, we use the following result due to \cite{amiri2016erdos}.

\begin{customlem}{}{lem:akkw2}[{\cite[Lemma 4.2]{amiri2016erdos}}]
    Let $D$ be a digraph with $\dtw{D} \leq w$. For each strongly connected digraph $H$, the digraph $D$ either has $k$ disjoint copies of $H$ as a topological minor, or contains a set $T$ of at most $k \cdot (w+1)$ vertices such that $H$ is not a topological minor of $D - T$. 
\end{customlem}

Combining~\cref{lem:akkw2,statement:high-dtw-implies-cycles} yields our final main result.

We define \(\bound{statement:elementary-younger}{f}{k}  \coloneqq \bound{statement:high-dtw-implies-cycles}{t}{k}k\) and observe that \(\bound{statement:elementary-younger}{f}{k} \in \PowerTower{8}{\Polynomial{43}{k}}\).

\younger*
\begin{proof}
	Let \(t_1 = \bound{statement:high-dtw-implies-cycles}{t}{k}\).

	Let \(H = \Ck{2}\).
	If \(\dtw{D} \leq t_1 - 1\), then by~\cref{lem:akkw2} \(D\) contains \(k\) vertex-disjoint copies of \(H\) as a topological minor or \(D\) contains a set \(X \subseteq \V{D}\) of size at most \(kt_1\) such that \(D - X\) does not contain \(H\) as a topological minor.
	By choice of \(H\), both cases imply the initial statement, and we are done.

	Otherwise, \(\dtw{D} \geq t_1\).
	By~\cref{statement:high-dtw-implies-cycles}, \(D\) contains \(k\) pairwise vertex-disjoint cycles.
\end{proof}

\section{Conclusion}
\label{sec:conclusion}

While we considerably improve the function for Younger's conjecture, our bounds are likely not tight.
Indeed, the only known lower bound for our function \(\bound{statement:elementary-younger}{f}{k}\) is \(\Omega(k \log k)\), noted by Alon (unpublished, see~\cite{reed1996packing}, but possibly uses the same techniques as the lower bound for the undirected case due to~\cite{posa1965independent}).
It is thus natural to ask how far this function can be improved further.
\begin{question}
	Is there a polynomial \(p\) such that, for every digraph \(D\), there are \(k\) pairwise vertex-disjoint cycles in \(D\) or there is some \(X \subseteq \V{D}\) of size at most \(\Func{p}{k}\) such that \(D - X\) is acyclic?
\end{question}

Similarly, the framework we introduce based on \(H\)-routings, temporal digraphs and paths of well-linked sets is still not fully studied.
In particular, can we improve~\cref{theorem:one-way connected temporal digraph contains P_k routing}?
How large do temporal digraphs need to be to certainly contain some \(\Pk{k}\)-routing?
\begin{question}
		Are there polynomials \(p_1, p_2\) such that every temporal digraph \(T\) with \(\Lifetime{T} \geq p_1(k)\), \(\Abs{\V{T}} \geq p_2(k)\) and where each layer is unilateral and contains a \(\Pk{k}\)-routing?
\end{question}

While acyclic grids and fences contain paths of 1-order-linked sets and paths of well-linked sets, respectively, of roughly the same size, the bounds given by~\cref{thm:order_linked_to_acyclic_grid,thm:poss-to-fence} are very large.
The proofs provided here, however, use the pigeon-hole principle, which tends to be very wasteful.
It seems, thus, plausible that those bounds could be improved.

\begin{question}
	Are there polynomials $p_1, p_2$ such that every uniform path of $p_1(w)$-order-linked sets of  and width $p_1(w)$ and length $p_2(w)$ contains a $(w,w)$-acyclic-grid?
\end{question}

\begin{question}
	Are there polynomials $p_1, p_2$ such that every path of well-linked sets of width $p_1(w)$ and length $p_2(w)$ contains a $(w,w)$-fence?
\end{question}

Finally, we did not consider any questions relating to the computability of $H$-routings in this paper.
While it is simple to verify that deciding whether a (temporal) digraph contains an $H$-routing is in \NP, we do not have an \NP-hardness reduction.

\begin{question}
	Can we decide in polynomial time if a (temporal) digraph $D$ contains an $H$-routing?
\end{question}

Another natural question is how much the framework above helps to obtain better bounds for the Directed Grid Theorem.
At first glance, the path of well-linked sets we obtained may seem to be very close to a cylindrical grid.
Indeed, we can find both a fence and many pairwise vertex-disjoint cycles in a path of well-linked sets.
Alas, the cycles we find may be short and interact poorly with the fence.

The main issue is that, while the end of the path of well-linked sets is itself well-linked to the beginning, the linkages we obtain this way might intersect the path of well-linked sets in arbitrary ways.
While this may quickly lead to many pairwise disjoint cycles if the linkage is going backwards along the path of well-linked sets, these cycles are not \say{concentric} in the way required for the cylindrical grid.

In order to obtain the cylindrical grid, we need to get a \say{back-linkage} from the end of the path of well-linked sets back to its beginning which is internally disjoint from said path of well-linked sets.
Formally, we obtain an object which we call a \emph{cycle of well-linked sets} (of width $w$ and length $\ell$), which is a pair \((\mathcal{S}, \mathscr{P} \cup \Set{\mathcal{P}_{\ell}})\) where \((\mathcal{S}, \mathscr{P})\) is a path of well-linked sets of width $w$ and length $\ell - 1$, and \(\mathcal{P}_\ell\) is a linkage from the $B$-set of the last cluster to the $A$-set of the first cluster that is internally disjoint from \((\mathcal{S}, \mathscr{P})\).

As the proof that we can indeed obtain such a back-linkage (and hence, a cycle of well-linked sets) is not trivial and requires several additional steps, we defer it to~\cite{COSSII}.

\bibliographystyle{alphaurl}
\bibliography{literature.bib}

\end{document}